\documentclass[a4paper,reqno]{amsart}

\usepackage[utf8]{inputenc}
\usepackage[T1]{fontenc}

\usepackage{braket}
\usepackage{epigraph}
\usepackage{microtype}
\usepackage{hyphenat}
\usepackage{amsmath,amssymb,amsfonts,amsthm}
\usepackage{mathtools}
\usepackage{mathrsfs}
\usepackage{dsfont}
\usepackage{siunitx}
\usepackage{cancel}

\usepackage[dvipsnames]{xcolor}
\usepackage{graphicx}
\usepackage[all,cmtip]{xy} 

\usepackage{tikz}
\usepackage{tikz-cd}
\usepackage{pgfplots}
\usepackage{tikz-3dplot}
\usetikzlibrary{calc, shadings, decorations.markings, shapes, arrows}
\pgfplotsset{compat=1.18}
\usepackage{wrapfig}
\usepackage{subfiles}

\usepackage{booktabs}
\usepackage{tabularx}
\usepackage{enumitem}
\usepackage{longtable}
\usepackage{rotating}

\usepackage{xspace}
\usepackage[printonlyused,withpage]{acronym}
\usepackage{lipsum} 
\usepackage{listings}

\usepackage[square, numbers]{natbib}
\citeindextrue

\usepackage[colorlinks=true, linkcolor=blue, citecolor=blue, urlcolor=blue]{hyperref}

\definecolor{light-gray}{gray}{0.95}
\usepackage[color=light-gray,textwidth=2cm]{todonotes}

\def\XXint#1#2#3{{\setbox0=\hbox{$#1{#2#3}{\int}$} 
		\vcenter{\hbox{$#2#3$}}\kern-.5\wd0}}

\renewcommand{\theta}{\vartheta}
\renewcommand{\epsilon}{\varepsilon}

\makeatletter
\renewcommand{\fnum@figure}{\textsc{\figurename~\thefigure}} 
\makeatother

\newtheoremstyle{classicthm}
{12pt}{12pt}{\slshape}{}{\bfseries}{.}{.5em}{}

\theoremstyle{classicthm}
\newtheorem{theoremd}{Theorem}[section]
\newenvironment{theorem}{\begin{theoremd}}{ \end{theoremd}}

\newtheorem{theoremd*}{Theorem}
\newenvironment{theorem*}{\begin{theoremd*}}{ \end{theoremd*}}

\newtheorem{cord}[theoremd]{Corollary}
\newenvironment{cor}{\begin{cord}}{ \end{cord}}

\newtheorem{lemd}[theoremd]{Lemma}

\newtheorem{propd}[theoremd]{Proposition}
\newenvironment{prop}{\begin{propd}}{ \end{propd}}

\newtheorem{definitiond}[theoremd]{Definition}
\newenvironment{definition}{\begin{definitiond}}{ \end{definitiond}}

\DeclareMathOperator{\sdet}{sdet}

\newtheoremstyle{remarkstyle}
{12pt}{12pt}{\upshape}{}{\itshape}{.}{.5em}{} 

\theoremstyle{remarkstyle}

\newtheorem{remd}[theoremd]{Remark}
\newenvironment{rem}{\begin{remd}}{	\end{remd}}

\newtheorem{exampled}[theoremd]{Example}

\newtheorem{ossd}[theoremd]{Observation}

\newtheorem{noted}{Note}[theoremd]

\title[Milnor metric from field theory]{Milnor metric for Morse--Smale flows from field theory} 

\author{Giovanni Molinari}
\address{Perimeter Institute for Theoretical Physics\\
31 Caroline Street North, Waterloo, N2L 2Y5,  Ontario, Canada}
\address{University of Waterloo}
\email{***} 

\author{Michele Schiavina}
\address{Department of Mathematics, University of Pavia, Via Ferrata 5, 27100 Pavia, Italy}
\address{INFN Sezione di Pavia, via Bassi 6, 27100 Pavia, Italy}
\email{michele.schiavina@unipv.it}

\date{\today} 

\begin{document}

\begin{abstract}
We study the partition function of Abelian $BF$ theory with an axial gauge fixing condition given by a Morse--Smale vector field, and we show that it recovers the Milnor metric on the determinant line on the twisted cohomology of the manifold. To do this we use the Batalin--Vilkovisky formalism and, in particular, a BV pushforward to cohomology in two steps. In this context, the Milnor metric serves as the generalisation of the Ruelle dynamical zeta function evaluated at zero for systems containing both closed orbits and critical points. This, together with a classic result due to Schwarz on the partition function of $BF$ 
theory in the Lorenz gauge, provides a field-theoretic realisation of Fried's conjecture, which is true for Morse--Smale flows.

\end{abstract}

\maketitle

\tableofcontents

\section{Introduction}
\label{sec:Intro}

Gauge theories are Lagrangian field theories characterised by their invariance under an infinite dimensional local Lie group.
Quantisation of said models necessitates a careful treatment as the group action complicates standard approaches to all quantisation schemes currently available. 

Among gauge theories, a class of particular interest is given by topological quantum field
theories (TQFTs), which are independent of additional structure on the underlying base manifold, and depend solely on its global topological structure. Beyond their role in condensed matter physics for the description of topological insulators and the fractional quantum Hall effect \cite{ChoMoore}, TQFTs have shown to be a powerful tool for pure mathematics, as they allow us to investigate manifold invariants \cite{Witten88,Witten89} and certain dynamical systems defined on them.

One of the first prominent examples of this connection is represented by the work of Schwarz~\cite{Sch78, Sch79}, who showed that the partition function of a certain class of degenerate quadratic functionals, now known as $BF$ theory, computes the Ray--Singer analytic torsion~\cite{Ray71}. In Schwarz's original construction, the ill-defined Gaussian integral arising from the degeneracy of the functional was handled through a resolution of the kernel of the relevant differential operator by means of a chain complex. Such a procedure is often called ``gauge fixing'', and it allows to interpret the analytic torsion, defined as an alternating product of regularised determinants of Laplacians, as the result of the calculation of the partition function with a particular choice of gauge. (See \cite[Section 4.1]{HadfieldKandelSchiavina20Ruelle} for a summary relevant to our paper.) In modern terms, this fact can be stated precisely within the Batalin--Vilkovisky (BV) formalism~\cite{Batalin83, Batalin84, Sch93}. Gauge-fixing is the choice of a Lagrangian subspace within an appropriate symplectic graded space of fields~\cite{Sch93, Mnev17}. 

In finite dimensions, the partition function is a Gaussian Berezinian integral over the chosen Lagrangian subspace, and the central result underlying the BV formalism is that the result is invariant under continuous deformations of the Lagrangian gauge-fixing subspace, provided that the action function satisfies the so-called quantum master equation.\footnote{This is a Maurer--Cartan condition for the action as an element in an exact Gerstenhaber/BV algebra.} One then gets the statement of ``gauge--fixing invariance'' of the Partition function.\footnote{Or rather local constancy of classes of gauge-fixing conditions, see \cite{Schiavina25}. This also extends to physical observables, see e.g.\ \cite{Cattaneo23}.}

In the infinite-dimensional setting of field theory this statement is far from trivial and requires a careful adaptation to be even formulated, and a verification on a case-by-case basis. Nevertheless, the gauge-fixing independence ``expectation'' for partition functions of field theory, provides the guiding principle that originated the present work. 

Indeed, looking at $BF$ theory within the BV framework, by choosing an auxiliary Riemannian metric $g$ on the manifold, one can define a gauge-fixing condition, usually referred to as the Lorenz/metric gauge, which picks out a Lagrangian subspace given by the image of the Hodge co-differential inside (two copies of) the space of differential forms.\footnote{This is the space of fields, see Section \ref{sec: BF-theory-and-Schwarz}.} In this setting, the partition function of $BF$ theory is shown to recover exactly the Ray--Singer analytic torsion, as detailed in~\cite{HadfieldKandelSchiavina20Ruelle}.

An alternative gauge-fixing condition for the same theory arises when the manifold is equipped with a vector field $V$ giving rise to Morse--Smale flow. The ensuing gauge-fixing condition, hereafter referred to as the axial gauge, is defined via the contraction $\iota_V$, and the associated gauge-fixed partition function encodes the periodic orbit structure of the flow rather than the spectral geometry of the Laplacian. The relationship between these two alternative calculations is the subject of the celebrated Fried's conjecture~\cite{Fried87, Fried86}, which, in its original formulation, relates the value at zero of the Ruelle dynamical zeta function for geodesic flows over hyperbolic manifolds to the Ray--Singer analytic torsion, thereby identifying a dynamical object with a spectral one.

This field-theoretic perspective on Fried's conjecture was first precisely formulated in the context of Anosov flows in~\cite{HadfieldKandelSchiavina20Ruelle}, specifically for Anosov--Reeb vector fields on contact manifolds (with vanishing twisted de Rham cohomology). By leveraging the Atiyah--Bott--Guillemin trace formula~\cite{Guillemin77, Atiyah64, Atiyah68}, the authors demonstrated that the partition function of Abelian $BF$ theory in the (Anosov) axial gauge exactly recovers the value at zero of the Ruelle zeta function. This was later improved by \cite{Schiavina23} to show that one can recover via field-theoretic methods the entirety of the Ruelle zeta function (as a meromorphic function). 


In this paper we will apply similar techniques to the superficially simpler case of Morse--Smale (MS) dynamical systems. If, in contrast to Anosov flows, MS vector fields present a \emph{finite} number of hyperbolic fixed points and hyperbolic closed orbits\footnote{The manifold partitions into stable submanifolds, defined as the set of points whose trajectories converge toward a critical element, i.e.\ a fixed point or a closed orbit.} satisfying a global transversality condition~\cite{Palis82}, in this work we will explore the more involved case of non-acyclic complexes.

 The extension of the analysis of \cite{HadfieldKandelSchiavina20Ruelle} to Morse--Smale flows relies on the spectral theory of the Lie derivative developed in~\cite{DangRiviere17Anisotropic, DangRiviere17Topology, DangRiviere17Resonances}. When acting on the anisotropic Sobolev spaces $\mathcal{H}^k_m(M, \mathcal{E})$ (see Appendix \ref{app:Sobolev}), the operator $-\mathcal{L}_{V, \nabla}^{(k)}$ exhibits a discrete spectrum, whose eigenvalues are called Pollicott--Ruelle resonances. The presence of fixed points ensures that the zero (generalised) eigenvalue is part of the spectrum, leading to the finite-dimensional zero-resonant complex $(C^\bullet_{V, \nabla}(0), d^\nabla)$ whose cohomology is canonically isomorphic to the twisted de Rham cohomology $H^\bullet(M, \mathcal{E})$, see~\cite{DangRiviere17Topology}.

A further key result, established in~\cite{DangRiviere17Topology} and systematically employed in this paper, is a Hodge-type decomposition. For any anisotropic current $\psi \in \mathcal{H}^k_m(M, \mathcal{E})$, we have the chain homotopy equation
\begin{equation}
    \mathbb{I} = \pi_0 + d^\nabla \circ \mathfrak{h}_V + \mathfrak{h}_V \circ d^\nabla,
\end{equation}
where $\pi_0$ is the spectral projector onto the zero-eigenspace and $\mathfrak{h}_V$ is a homotopy operator defined via the contraction and the inverse of the Lie derivative, both w.r.t.\ $V$. This identity induces a direct sum decomposition of the field space into three summands which is strongly reminiscent of the classical Hodge decomposition and which provides the primary mechanism for the construction of the axial gauge Lagrangian, and the setup for homotopy transfer to the generalised $0$-eigenspaces (the image of $\pi_0$). In field-theoretic terms, this is a procedure called BV pushforward~\cite{Cattaneo14, Cattaneo18a}.

Indeed, the non-triviality of the twisted cohomology $H^\bullet(M, \mathcal{E})$ in the Morse--Smale setting marks a substantial departure from the acyclic Anosov case treated in~\cite{HadfieldKandelSchiavina20Ruelle}. To handle the zero-modes associated with the flow's fixed points, we process the partition function via the BV pushforward paradigm, which can be seen as a framework for the Wilsonian strategy of accounting for degrees of freedom across different scales~\cite{Costello11}. 

In this case, we first ``integrate out'' all modes except zero-modes, and then we identify the twisted de Rham cohomology within the cohomology on the resulting residual space, on which effective BV data is given. This sets us up for a second BV pushforward. Note that this approach, in the case of non-Abelian BF theory with Morse--Smale gauge fixing, leads to a field-theoretic realisation of the Fukaya--Morse category \cite{CLMY}, where the structural higher $A_\infty$ relations are recovered by the classical master equation of the effective BV action. Here we are interested in fully computing the partition function (in the Abelian case), and its comparison with a different gauge fixing condition. An older implementation of the axial gauge for Morse vector fields in Chern--Simons theory see \cite{FukayaCS}.

In the first stage of the BV pushforward, we account for the infinite-dimensional subspace $\text{Im}(d_\nabla)\oplus \text{Im}(\iota_V)$---the \emph{fluctuation sector}--- which is a direct summand in anisotropic forms due to the chain homotopy decomposition. The core result used in this step is the identification of the Ruelle zeta function evaluated at zero $\mathfrak{R}_{V, \rho}(0)$ as the flat-regularised superdeterminant of the restricted Lie derivative $\widetilde{\mathcal{L}}_{V, \nabla}^{(k)}$. Using the trace formulas of Guillemin~\cite{Guillemin77} and Dang--Rivi\`ere~\cite{DangRiviere17Topology}, it is shown that:
\begin{equation}
  \mathfrak{R}_{V, \rho}(0) = \prod_{k=0}^{n-1}
  \left[\det\nolimits^\flat \!\left(\widetilde{\mathcal{L}}^{(k)}_{V, \nabla}\right)\right]^{(-1)^{n+k}}.
\end{equation}
The analyticity of these regularised determinants is ensured by the restricted resolvent, which is holomorphic at zero because the restriction to the fluctuation sector removes the singularities associated with the zero resonances. This first integration step then yields an effective action and an associated volume element defined on the finite-dimensional complex $(C^\bullet_{V, \nabla}(0), d^\nabla)$.

In the second stage of the BV pushforward in the axial gauge, we isolate the purely cohomological content of the resonant complex $C^\bullet_{V, \nabla}(0)$. To perform the gauge fixing associated with the second BV pushforward, we need to introduce an auxiliary metric $h_{\mathrm{aux}}$. The freedom of choice we have allows us to choose it such that the canonical isomorphism between the resonant complex and the Thom--Smale complex $(C^\bullet_{\mathrm{TS}}(V, \nabla), d_{\mathrm{TS}})$ is promoted to an isometry. A central result of our analysis, established in Theorem \ref{thm:equality-of-torsion}, is that under this isometric isomorphism the second integration stage yields exactly the combinatorial torsion of the Thom--Smale complex, whose differential encodes the connecting flow lines between critical points~\cite{Hutchings02Reidemeister, Hutchings02, Hutchings99Circle,Dang21}. We are thus able to successfully recover the fixed-point contribution to the axial partition function. 

By combining this fixed-point sector with the closed-orbit contribution encoded by the Ruelle zeta function evaluated at zero (obtained from the first BV-pushforward), the partition function of $BF$ theory in the axial gauge effectively reassembles the Milnor metric on the determinant line of the cohomology~\cite{Shen21}.

The comparison between these two gauge-fixing results suggests that the (conjectural) gauge-fixing independence of the Abelian $BF$ partition function implies the equality between the Ray--Singer and Milnor metrics. This identity, which constitutes a more general version of the Fried's conjecture in the Morse--Smale case, as proven in~\cite{Shen21}, is thus reinterpreted as a manifestation of the requirement that distinct gauge-fixing choices for the same physical theory must yield identical topological invariants.

The paper is structured as follows. 

In Section \ref{sec:BV_intro}, we provide a review of the Batalin--Vilkovisky formalism, focusing on the handling of reducible gauge symmetries and the definition of the BV pushforward as a tool to compute effective actions.

In Section \ref{sec:chain_homotopy}, we introduce the spectral properties of the twisted Lie derivative along a Morse--Smale vector field acting on anisotropic Sobolev spaces, establishing the fundamental chain homotopy equation and characterising the generalised zero-eigenspace.

In Section \ref{sec:algebraic_torsion}, we establish an isometric isomorphism between the zero-resonant complex and the Thom--Smale complex, proving the identity of their numerical torsions and providing a topological interpretation of the resonant states at zero.

In Section \ref{sec: Ruelle-zeta-and-Torsion-as-determinants}, we provide the spectral realisation of the Ruelle zeta function at zero and the definition of the Ray--Singer analytic torsion, which are both expressed as an alternating products of regularised determinants of the Lie derivative and the Laplacian, respectively.

In Section \ref{sec: BF-theory-and-Schwarz}, we introduce the twisted Abelian $BF$ theory and establish its Batalin--Vilkovisky formulation, providing the necessary tools to handle reducible gauge symmetries within a unified graded framework.

In Section \ref{sec: metric-gauge}, we perform the gauge-fixing of the theory in the metric gauge, reviewing the result relating the partition function in the metric/Lorenz gauge with the Ray--Singer metric, thus recovering within the BV framework the results originally established by Schwarz in \cite{Sch78, Sch79} (see also \cite{HadfieldKandelSchiavina20Ruelle}).

In Section \ref{sec: axial-gauge}, we implement the axial gauge-fixing defined by the choice of a Morse--Smale vector field; by means of a double BV pushforward procedure, we demonstrate that the partition function recovers the Milnor metric, which generalises the evaluation of the Ruelle zeta function at zero by incorporating fixed-point contributions. Lastly, we introduce a paradigm to rigorously justify the triviality of the Jacobian determinant associated with the interior product, yielding the identity $\det'(\iota_V)=1$.

Finally, we collect several technical results in the Appendices. Appendix \ref{app:Morse--Smale} reviews the fundamental properties of Morse--Smale flows, while Appendix \ref{app:Sobolev} summarises the functional framework of anisotropic Sobolev spaces and the algebraic properties of Pollicott--Ruelle resonances. Finally, in Appendix \ref{app:torsion_isomorphism}, we briefly recall the basic notions of algebraic torsion for finite-dimensional complexes and prove its invariance under isometric isomorphisms.
\section*{Acknowledgements}
This paper is based on, and an extension of, G.M.'s Master Thesis at the University of Pavia. We would like to thank C.\ Dappiaggi for providing support and valuable comments during the process.


\section{Lagrangian Field Theory and the Batalin--Vilkovisky Formalism}
\label{sec:BV_intro}

This section provides a self-contained introduction to foundations of field theory with symmetries within the Batalin--Vilkovisky (BV) formalism.

The primary purpose of this framework is to enable the rigorous quantisation of gauge theories. By employing homological algebra in the form of a combination of the Koszul--Tate and Chevalley--Eilenberg complexes, we replace the degenerate critical locus of gauge-invariant action functionals with a complex. In finite dimensions, this ensures a well-defined and gauge-fixing independent\footnote{Over smooth families of gauge-fixing conditions.} path integral. While this overview is not intended to be exhaustive, it aims to establish the necessary background and a common language for the subsequent discussion. Extensive details on the results and constructions assumed in this exposition can be found in the established literature on the subject; see, e.g., \cite{And92, DF99,Del17,Blohmann_LFT,SchiavinaSchnitzer25} for a cohomological approach to local Lagrangian field theory, and \cite{Hen90, Sta98, Cattaneo14, Cattaneo23} for the BV formalism.

\subsection{Classical field theory, symmetries, and the quantisation problem}
\label{sec:problem-path-integral}

We begin by establishing the standard framework for classical Lagrangian field theories. We define the space of
classical fields $\mathcal{F}_M\doteq \Gamma(M, \mathcal{E})$ as the space of smooth sections of a
suitable vector bundle $\mathcal{E}$ over a closed manifold $M$. The dynamics of the system is governed by a local action
functional $S_M \colon \mathcal{F}_M \to \mathbb{R}$. The property of locality implies
that $S_M$ can be expressed as the integral of a Lagrangian density $L_M$ depending only
on the fields and a finite number of their derivatives. Formally, $L_M$ is a local
top-form defined on the $k$-th jet bundle $J^k \mathcal{E}$, and the action reads:
\begin{equation}
    S_M[\phi] = \int_M L_M(j^k \phi), \qquad \phi \in \mathcal{F}_M,
\end{equation}
where $j^k \phi$ denotes the $k$-jet prolongation of the section $\phi$. The physical
configurations of the theory correspond to the critical points of the action functional.
The space of such solutions is known as the \emph{Euler--Lagrange locus}:
\begin{equation}
    EL[S_M] = \{ \phi \in \mathcal{F}_M \mid \delta S_M[\phi] = 0 \},
\end{equation}
where $\delta S_M$ is the variational derivative of $S_M$. For a rigorous exposition of the geometric structures underlying local field theory, we refer the reader to the foundational work of Deligne and Freed \cite{DF99} (see also \cite{Del17, Blohmann_LFT}), and to Schiavina and Schnitzer \cite{SchiavinaSchnitzer25} for a recent perspective.

A particularly interesting scenario arises when the action functional enjoys a local
symmetry. Geometrically, this is described by a smooth, involutive distribution
$\mathcal{D}_M \subset T\mathcal{F}_M$ such that $X(S_M) = 0$ for all vector fields $X\in\Gamma(\mathcal{D}_M)$. The presence
of such gauge symmetry implies that the critical points of $S_M$ are highly
degenerate, and the true space of physically inequivalent configurations is the quotient
space $EL[S_M]/\mathcal{D}_M$, which is often singular.

Quantisation aims to transition from the classical commutative algebra of observables
to a noncommutative algebra of quantum operators. A standard formal approach to this
problem relies on the path integral formulation, where the partition function is heuristically though of as the integral:
\begin{equation}
\label{eq:formal_path_integral}
    Z = ``\int_{\mathcal{F}_M} \exp\!\left(\frac{i}{\hbar} S_M[\phi] \right) \mathcal{D}\phi\ ".
\end{equation}
Since a rigorous measure theory on the infinite-dimensional space $\mathcal{F}_M$ is
generally unavailable, the integral in Equation~\eqref{eq:formal_path_integral} is
treated as a formal power series in the Planck constant $\hbar$, obtained via a
stationary phase approximation around the critical points of $S_M$. However, due to the
degeneracy induced by $\mathcal{D}_M$, the Hessian of $S_M$ is not invertible along the
gauge orbits, causing the stationary phase approximation to fail. To achieve a meaningful
perturbative expansion one must break this degeneracy -- a procedure known as gauge fixing
-- which is rigorously handled by the Batalin--Vilkovisky formalism
\cite{Batalin83, Batalin84}.

\subsection{The cohomological approach and the BV complex}
\label{sec:BV-data}

To treat the singularities of the moduli space of classical solutions
$EL[S_M]/\mathcal{D}_M$, we seek a differential graded commutative algebra
$(\mathcal{A}^\bullet, Q)$ such that its cohomology matches the algebra of smooth
functions on the physical moduli space:
\begin{equation}
    H^{-i}(\mathcal{A}^\bullet, Q) = 0 \quad \text{for } i > 0,
    \qquad
    H^0(\mathcal{A}^\bullet, Q) \simeq C^\infty(EL[S_M]/\mathcal{D}_M).
\end{equation}
This resolution is achieved by first resolving the Euler--Lagrange equations through the
introduction of antifields of negative ghost number (Koszul--Tate complex), and then
encoding the invariants of the gauge group action via ghosts of positive ghost number
(Chevalley--Eilenberg complex); we refer to \cite{Sta98, FR13} for a detailed treatment
of these constructions.

The Batalin--Vilkovisky formalism geometrises this algebraic structure \cite{Sch93}.
The algebra $\mathcal{A}^\bullet$ is interpreted as the ring of smooth functions over a
$\mathbb{Z}$-graded vector space $\mathcal{F}_{BV}$, known as the space of BV fields,
whose degree-zero component coincides with the original space of classical fields
$\mathcal{F}_M$. Furthermore, $\mathcal{F}_{BV}$ is endowed with an odd symplectic
structure $\Omega_{BV}$ of ghost number $-1$ and an odd vector field of degree $+1$,
$Q \in \mathfrak{X}[1](\mathcal{F}_{BV})$, which is nilpotent, $[Q, Q] = 0$. The
classical action $S_M$ is extended to a degree-zero functional
$\mathcal{S}_M \colon \mathcal{F}_{BV} \to \mathbb{R}$ that acts as the Hamiltonian
generator of $Q$ with respect to the odd symplectic form:
\begin{equation}
    \iota_Q \Omega_{BV} = \delta \mathcal{S}_M.
\end{equation}
The nilpotency of $Q$ immediately implies that the extended master action $\mathcal{S}_M$
satisfies a fundamental non-linear differential equation, the \emph{Classical Master
Equation} (CME):
\begin{equation}
\label{eq:CME}
    \{\mathcal{S}_M,\, \mathcal{S}_M\}_{BV} = 0,
\end{equation}
where $\{\cdot, \cdot\}_{BV}$ is the odd Poisson bracket naturally induced by
$\Omega_{BV}$.


\subsection{Half-densities, the canonical BV Laplacian, and Lagrangian subspaces}
\label{sec:half_densities}
In finite dimensions, one can evaluate the path integral directly. To exploit the full potential of the BV integration and quantisation framework, it is convenient to phrase it in terms of the integration of half-densities on graded submanifolds. Addressing how this construction can be rigorously generalised to infinite dimensions in certain special cases will be the centre of our subsequent discussion. 

Let us recall a few basic notions.
\begin{definition}
\label{def:density}\label{def:graded_density}
Let $W$ be a finite-dimensional vector space and let $F(W)$ be the set of bases of $W$.
A density of weight $\alpha \in \mathbb{R}$ on $W$ is a map
$\mu \colon F(W) \to \mathbb{R}^+$ such that for all
$g \in \mathrm{GL}(\mathbb{R}^{\dim W})$,
\begin{equation}
    \mu(g \cdot \underline{w}) = |\det(g)|^{\alpha}\, \mu(\underline{w}),
    \qquad \forall\, \underline{w} = (w_1, \dots, w_{\dim W}) \in F(W).
\end{equation}
The space of such densities is denoted $\mathrm{Dens}^{\alpha}(W)$. Standard densities
correspond to $\alpha = 1$, while half-densities correspond to $\alpha = 1/2$.

Let $W^\bullet = \bigoplus_{k \in \mathbb{Z}} W^k$ be a graded vector space. The space
of densities of weight $\alpha \in \mathbb{R}$ is defined as
\begin{equation}
    \mathrm{Dens}^{\alpha}(W^\bullet)
    = \bigotimes_{k \in \mathbb{Z}}
      \bigl(\mathrm{Dens}^{\alpha}(W^k)\bigr)^{(-1)^k},
\end{equation}
where $\bigl(\mathrm{Dens}^{\alpha}(W^k)\bigr)^{-1}$ denotes its topological dual space.
\end{definition}

One of the central results that rests at the foudation of the BV formalism is that the space of half-densities on a
$(-1)$-shifted symplectic graded vector space carries a canonical differential operator
\cite{Khudaverdian2004, Severa06,Severa2026}.

\begin{theorem}
\label{thm:existence_BV_laplacian}
Let $(\mathcal{F}_{BV}, \Omega_{BV})$ be a finite dimensional $(-1)$-shifted symplectic graded vector
space. There exists a canonical, second-order differential operator
\begin{equation}
    \Delta \colon
    \mathrm{Dens}^{1/2}(\mathcal{F}_{BV}) \to \mathrm{Dens}^{1/2}(\mathcal{F}_{BV}),
\end{equation}
known as the BV Laplacian, which has degree $+1$ and squares to zero,
$\Delta^2 = 0$. Moreover, in any global Darboux coordinate system\footnote{Note, albeit non-canonical, they always exist for odd symplectic manifolds.} $(x^i, \xi_i)$ such
that $\Omega_{BV} = \sum_i \delta\xi_i \wedge \delta x^i$, the operator $\Delta$ takes
the local form:
\begin{equation}
    \Delta = \sum_i \frac{\partial^2}{\partial x^i\, \partial \xi_i}.
\end{equation}
\end{theorem}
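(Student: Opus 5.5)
The plan is to define $\Delta$ in Darboux coordinates by the displayed formula and then check that it is independent of the choice of Darboux system. Since $\mathcal{F}_{BV}$ is a graded vector space with a constant symplectic form, I will work with \emph{linear} Darboux coordinates. These are given by a splitting $\mathcal{F}_{BV} = L \oplus L'$ into two transversal Lagrangian subspaces. In this picture $L' \simeq L^*[-1]$ via $\Omega_{BV}$, the $x^i$ are linear coordinates on $L$, and the $\xi_i$ are the dual coordinates shifted in degree.

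\textbf{Step 1: identify half-densities with functions.} A choice of Darboux coordinates gives a reference half-density $\sigma_0 = |\mathcal{D}x|^{1/2}|\mathcal{D}\xi|^{1/2}$. This is well defined as a product of graded half-densities, in the sense of Definition~\ref{def:graded_density}. Every half-density can then be written as $f\,\sigma_0$, and I set $\Delta(f\sigma_0) \doteq (\sum_i \partial^2 f/\partial x^i\partial\xi_i)\,\sigma_0$. The properties that do not involve canonicity are immediate:
\begin{itemize}
\item Each $\xi_i$ has degree $-1-|x^i|$, so $\partial_{x^i}\partial_{\xi_i}$ has degree $+1$.
\item The operator is second order by construction.
\item $\Delta^2=0$: in $\Delta^2$ each term $\partial_{x^i}\partial_{\xi_i}\partial_{x^j}\partial_{\xi_j}$ pairs with the term obtained by swapping $i$ and $j$. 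These two terms cancel by graded commutativity, because each operator $\partial_{x^i}\partial_{\xi_i}$ is odd. The diagonal terms vanish for the same reason, since the square of an odd operator that graded-commutes with itself is zero.
\end{itemize}

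\textbf{Step 2: invariance under change of Darboux coordinates.} Any two linear Darboux systems differ by an element of the graded symplectic group $\mathrm{Sp}(\Omega_{BV})$. I will show that $\sigma_0$ is invariant under this group and that $\Delta$ on functions commutes with it. It is enough to check this on a generating set:
\begin{itemize}
\item \emph{Changes of Lagrangian basis} $x \mapsto Ax$, $\xi \mapsto A^{-T}\xi$. Here the Berezinian is $\mathrm{Ber}(A)\,\mathrm{Ber}(A^{-T})^{\pm}$. The half-density weights combine with the opposite parity of $\xi$ versus $x$ so that the two factors cancel, giving $\sigma_0$ invariant. The operator $\sum_i \partial_{x^i}\partial_{\xi_i}$ is the contraction with the pairing, so it is invariant as well.
\item \emph{Shears} $\xi \mapsto \xi + \partial_x \Psi$ with $\Psi$ quadratic and of degree $-1$ (a generating function). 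These have unit Berezinian. A direct chain-rule computation shows that $\Delta$ changes by $\Delta\Psi$ (constant) times something, and the relevant term vanishes because $\Psi$ is a quadratic function on $L$ alone.
\item \emph{Swaps} exchanging the roles of $x^i$ and $\xi_i$ (partial Fourier-type flips of a Lagrangian factor). These preserve the pairing up to a sign that is absorbed by the ordering in the formula.
\end{itemize}

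\textbf{Step 3: conclude.} By linear algebra, these three types generate $\mathrm{Sp}(\Omega_{BV})$. Hence the operator defined in Step 1 does not depend on the Darboux system and is therefore canonical. The main obstacle I expect is the sign bookkeeping in Step 2, namely checking Berezinian cancellation for half-densities and the sign in the swaps when the coordinates carry arbitrary integer degrees. To keep this under control I would first reduce to a single Darboux pair $(x,\xi)$ of fixed parity and then use multiplicativity of Berezinians. An alternative route, closer to Khudaverdian and \v{S}evera, is to characterise $\Delta$ intrinsically: $\Delta(f\sigma)$ is determined by $\{f,\cdot\}$ and by the divergence with respect to $\sigma$. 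This gives the same formula with no coordinate changes needed, and I would fall back on it if the sign checks become unmanageable.
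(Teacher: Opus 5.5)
There is a genuine gap. Your argument only compares \emph{linear} Darboux systems. The statement, however, asserts the coordinate formula in \emph{any} global Darboux coordinate system, and ``canonical'' is meant in the sense of Khudaverdian and \v{S}evera (whom the paper cites instead of giving a proof): invariance under arbitrary, possibly nonlinear, changes of Darboux coordinates. Graded vector spaces do admit such changes, and there your mechanism fails. That mechanism is an invariant reference half-density together with an operator on functions that is invariant by itself.

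Concretely, on $T^*[-1]\mathbb{R}$, with $x$ of degree $0$ and $\xi$ of degree $-1$, take $y=\phi(x)$, $\eta=\xi/\phi'(x)$. This is a Darboux chart with Berezinian $J=\phi'(x)^2$. Two things go wrong:
\begin{itemize}
\item The coordinate half-density changes by a non-constant factor, $|\mathcal{D}(y,\eta)|^{1/2}=|\phi'|\,|\mathcal{D}(x,\xi)|^{1/2}$.
\item A direct computation gives $\partial_y\partial_\eta f=\partial_x\partial_\xi f+\frac{\phi''}{\phi'}\,\partial_\xi f$, so $\Delta_0$ is not invariant on functions either.
\end{itemize}
The two defects cancel exactly on half-densities. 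In general this is Khudaverdian's result: under a change of Darboux coordinates, $\Delta_0$ shifts by the Hamiltonian vector field of $\log\sqrt{J}$, and $\Delta_0\sqrt{J}=0$. Together these make $f\,|\mathcal{D}(x,\xi)|^{1/2}\mapsto(\Delta_0 f)\,|\mathcal{D}(x,\xi)|^{1/2}$ chart-independent. That identity is the actual content of the theorem, and it cannot be obtained from generators of the linear group.

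Your fallback does not get around this. The divergence with respect to $\sigma$ determines $\Delta_\mu$, i.e.\ $\Delta(f\sigma)$ up to the term $f\,\Delta(\sigma)$. That zeroth-order term is exactly what requires the identity above in order to be well defined.

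Within the linear case there is also a false step, though a harmless one. Your claim that $\sigma_0$ is invariant under $x\mapsto Ax$, $\xi\mapsto A^{-T}\xi$ is wrong. Since the $\xi$-block has reversed parity, its Berezinian enters inverted, so the two factors multiply to $\mathrm{Ber}(A)^2$ rather than cancelling. For a single pair with $x$ even, $x\mapsto ax$, $\xi\mapsto\xi/a$ already has Berezinian $a^2$.

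This reflects the fact that an odd symplectic space has no invariant volume. The constant half-densities on $\mathcal{F}_{BV}=L\oplus L'$ form the line $\mathrm{Dens}(L)$ (cf.\ Proposition~\ref{prop:restriction_half_density}), on which $\mathrm{GL}(L)$ acts nontrivially. Because the factor is constant, it drops out of $\Delta(f\sigma_0)=(\Delta_0 f)\sigma_0$, so you only need linear invariance of $\Delta_0$ on functions. For that no generating set is required: in linear coordinates, $\Delta_0$ is (up to the usual graded signs) the constant-coefficient operator obtained by contracting second derivatives with the inverse of $\Omega_{BV}$, hence manifestly invariant.

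Your shear step also misidentifies the mechanism. The chain rule produces a term of the form $\sum_{i,j}(\partial_{x^i}\partial_{x^j}\Psi)\,\partial_{\xi_j}\partial_{\xi_i}$. Its vanishing is a sign cancellation that depends on matching the convention $\Omega_{BV}=\sum_i\delta\xi_i\wedge\delta x^i$ with the ordering in $\Delta_0$; it has nothing to do with $\Delta\Psi$.

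Step 1 (degree $+1$ and $\Delta^2=0$) is fine. Once repaired, your proposal proves that $\Delta$ is independent of the choice of \emph{linear} Darboux system. That suffices for the linear gauge fixings used later in the paper, but it does not prove the theorem as stated.
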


\begin{rem}
\label{rem:Delta_mu}
A Berezinian measure $\mu \in \mathrm{Dens}(\mathcal{F}_{BV})$ is said to be
\emph{compatible} with $\Omega_{BV}$ if $\Delta(\mu^{1/2}) = 0$. Given such a
compatible measure, one defines a measure-dependent operator
$\Delta_\mu \colon C^\infty(\mathcal{F}_{BV}) \to C^\infty(\mathcal{F}_{BV})$ on
functions via the intertwining relation:
\begin{equation}
\label{eq:intertwining}
    \Delta_\mu(f) \cdot \mu^{1/2} \doteq \Delta(f \cdot \mu^{1/2}),
    \qquad \forall f \in C^\infty(\mathcal{F}_{BV}).
\end{equation}
This operator inherits $\Delta_\mu^2 = 0$ from $\Delta^2 = 0$, and one can verify
directly that it admits the equivalent description
\begin{equation}
\label{eq:BV_Laplacian_mu}
    \Delta_\mu f = -\frac{1}{2}\operatorname{div}_\mu(X_f),
\end{equation}
where $X_f$ is the Hamiltonian vector field of $f$ with respect to $\Omega_{BV}$.
Furthermore, the intertwining relation given by Equation \eqref{eq:intertwining} implies the following
compatibility relations with the algebra and Poisson structures of
$C^\infty(\mathcal{F}_{BV})$:
\begin{align}
\label{eq:BV_Leibniz}
    \Delta_\mu(fg)
    &= (\Delta_\mu f)\,g + (-1)^{|f|} f\,(\Delta_\mu g)
       + (-1)^{|f|}\{f,\, g\}_{BV}, \\[4pt]
\label{eq:BV_Poisson}
    \Delta_\mu\{f,\, g\}_{BV}
    &= \{\Delta_\mu f,\, g\}_{BV}
       + (-1)^{|f|+1}\{f,\, \Delta_\mu g\}_{BV},
\end{align}
making the tuple
$\bigl(C^\infty(\mathcal{F}_{BV}),\;\cdot\;,\;\{\cdot,\cdot\}_{BV},\;\Delta_\mu\bigr)$
into a BV algebra \cite{Cattaneo06}.
\end{rem}

We now introduce the notion of a finite-dimensional Lagrangian subspace, which will be the central
geometric object encoding gauge fixing.

\begin{definition}
\label{def:lagrangian_subspace}
Let $(\mathcal{F}_{BV}, \Omega_{BV})$ be a $(-1)$-shifted symplectic graded vector
space. A subspace $\mathcal{L} \subset \mathcal{F}_{BV}$ is called Lagrangian if it is
isotropic and admits an isotropic complement.\footnote{This is sometimes called a ``split Lagrangian'' subspace, and in infinite dimensions it is stronger than requiring it to be isotropic and coisotropic, a property which is then referred to as Lagrangianity. We will not make use of the weaker notion here.} That is,
\begin{equation}
    \Omega_{BV}\big|_{\mathcal{L}} = 0,
\end{equation}
and there exists another isotropic subspace $\mathcal{L}' \subset \mathcal{F}_{BV}$ such that $\mathcal{F}_{BV} = \mathcal{L} \oplus \mathcal{L}'$.
\end{definition}

\begin{rem}
\label{rem:infinite_dim_lagrangian}
We adopt this specific formulation because it generalises well to infinite dimensions. As we will discuss in Sections \ref{sec: metric-gauge} and \ref{sec: axial-gauge}, requiring an isotropic complement provides the appropriate notion of a Lagrangian submanifold needed to rigorously define gauge fixing in the infinite-dimensional BV formalism.
\end{rem}

Before the BV integral can be properly constructed, we require the following technical
proposition, which serves as the key link making BV integration intrinsically
well-defined \cite{Sch93}.

\begin{prop}
\label{prop:restriction_half_density}
Let $(\mathcal{F}_{BV}, \Omega_{BV})$ be a $(-1)$-shifted symplectic graded vector
space and let $\mathcal{L} \subset \mathcal{F}_{BV}$ be a Lagrangian subspace. Given a
half-density $\xi \in \mathrm{Dens}^{1/2}(\mathcal{F}_{BV})$, its restriction to
$\mathcal{L}$ yields a $1$-density on $\mathcal{L}$:
\begin{equation}
    \xi\big|_{\mathcal{L}} \in \mathrm{Dens}(\mathcal{L}).
\end{equation}
\end{prop}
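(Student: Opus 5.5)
The plan is to reduce the statement to a canonical isomorphism of determinant-type lines, built from the odd symplectic pairing. The intended meaning of ``restriction'' is the composition
\[
\mathrm{Dens}^{1/2}(\mathcal{F}_{BV}) \xrightarrow{\ \sim\ } \mathrm{Dens}^{1/2}(\mathcal{L})\otimes \mathrm{Dens}^{1/2}(\mathcal{F}_{BV}/\mathcal{L}) \xrightarrow{\ \sim\ } \mathrm{Dens}^{1/2}(\mathcal{L})\otimes\mathrm{Dens}^{1/2}(\mathcal{L}) \cong \mathrm{Dens}(\mathcal{L}).
\]
The first arrow is the canonical multiplicativity of densities along the short exact sequence $0\to\mathcal{L}\to\mathcal{F}_{BV}\to\mathcal{F}_{BV}/\mathcal{L}\to 0$. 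The task is therefore to produce a canonical isomorphism $\mathrm{Dens}^{1/2}(\mathcal{F}_{BV}/\mathcal{L})\cong\mathrm{Dens}^{1/2}(\mathcal{L})$.

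\textbf{Step 1: the quotient is identified with $\mathcal{L}^*[1]$.} Since $\mathcal{L}$ is Lagrangian in the sense of Definition~\ref{def:lagrangian_subspace}, the map $\mathcal{F}_{BV}\to\mathcal{L}^*[1]$, $v\mapsto\Omega_{BV}(v,\cdot)|_{\mathcal{L}}$, vanishes on $\mathcal{L}$ by isotropy. It is onto because $\Omega_{BV}$ is nondegenerate. Its kernel is $\mathcal{L}^{\perp}$, and this equals $\mathcal{L}$: choose an isotropic complement $\mathcal{L}'$, so that $\Omega_{BV}$ pairs $\mathcal{L}'$ nondegenerately with $\mathcal{L}$. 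Hence $\mathcal{F}_{BV}/\mathcal{L}\cong\mathcal{L}^*[1]$ canonically; only the existence of $\mathcal{L}'$ is used, not the particular choice. Degreewise this reads $(\mathcal{F}_{BV}/\mathcal{L})^k\cong(\mathcal{L}^{-1-k})^*$, with the shift coming from $\Omega_{BV}$ having degree $-1$.

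\textbf{Step 2: densities of the shifted dual.} For an ungraded vector space, $\mathrm{Dens}^{\alpha}(W^*)\cong\mathrm{Dens}^{\alpha}(W)^{-1}$ canonically, by evaluating on dual bases. Applying this with the graded convention of Definition~\ref{def:graded_density} gives
\[
\mathrm{Dens}^{1/2}(\mathcal{L}^*[1])=\bigotimes_k \mathrm{Dens}^{1/2}\big((\mathcal{L}^{-1-k})^*\big)^{(-1)^k}\cong\bigotimes_j \mathrm{Dens}^{1/2}(\mathcal{L}^{j})^{(-1)^{j}},
\]
where $j=-1-k$. The exponents work out as $-(-1)^k=(-1)^{j}$, so the right-hand side is $\mathrm{Dens}^{1/2}(\mathcal{L})$. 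This sign bookkeeping is the main obstacle. The inversion coming from the dual must cancel exactly against the parity flip coming from the shift by one, so that both factors in the tensor product carry the same weight. This is what distinguishes the $(-1)$-shifted case: for an even symplectic form one would instead get $\mathrm{Dens}^{1/2}(\mathcal{L})\otimes\mathrm{Dens}^{1/2}(\mathcal{L})^{-1}$, which is trivial. I would verify the cancellation carefully, degree by degree.

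\textbf{Step 3: well-definedness.} Finally I would check that the composite depends only on $\mathcal{L}$ and $\Omega_{BV}$. To do this, choose a basis of $\mathcal{L}$ and the $\Omega$-dual basis of $\mathcal{L}'$. These assemble into a Darboux basis $(x^i,\xi_i)$ of $\mathcal{F}_{BV}$, as in Theorem~\ref{thm:existence_BV_laplacian}. In this basis $\xi|_{\mathcal{L}}$ is simply the evaluation of $\xi$ on that basis.

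For independence of the complement, replace $\mathcal{L}'$ by another isotropic complement. Its dual basis differs by adding vectors of $\mathcal{L}$, a unipotent change of basis of Berezinian $1$.

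For independence of the basis of $\mathcal{L}$, apply $g\in\mathrm{GL}(\mathcal{L})$. The dual basis then transforms by $(g^{-1})^{T}$ in complementary degrees. By Step 2 this makes the half-density scale by $|\mathrm{Ber}(g)|^{1/2}\cdot|\mathrm{Ber}(g)|^{1/2}=|\mathrm{Ber}(g)|$, which is exactly the transformation law of a $1$-density on $\mathcal{L}$. This proves $\xi|_{\mathcal{L}}\in\mathrm{Dens}(\mathcal{L})$.
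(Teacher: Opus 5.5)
Your argument is correct. It is the standard one behind the paper's citation of Schwarz; the paper itself gives no proof. The identification $\mathcal{F}_{BV}/\mathcal{L}\cong$ (odd shift of $\mathcal{L}^*$) makes dualising and shifting each invert the density line, so $\mathrm{Dens}^{1/2}(\mathcal{F}_{BV})\cong\mathrm{Dens}^{1/2}(\mathcal{L})^{\otimes 2}\cong\mathrm{Dens}(\mathcal{L})$, and your Darboux-basis check in Step 3 reproduces Schwarz's coordinate prescription (it is automatic once each arrow is canonical).

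There is one harmless slip in Step 1. With the paper's convention that $\Omega_{BV}$ pairs vectors whose degrees sum to $1$ (cf.\ the BF degree count), you should get $(\mathcal{F}_{BV}/\mathcal{L})^k\cong(\mathcal{L}^{1-k})^*$ rather than $(\mathcal{L}^{-1-k})^*$. This changes nothing, since only the parity of the shift enters.
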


Proposition~\ref{prop:restriction_half_density} tells us that a half-density, which
generally cannot be integrated over the entire odd-symplectic space, becomes a canonical
volume form when restricted to a Lagrangian subspace. This defines the BV integral.

\begin{definition}
\label{def:BV_integral}
Let $\mathcal{L} \subset \mathcal{F}_{BV}$ be a Lagrangian subspace. The BV
integral of a half-density $\xi \in \mathrm{Dens}^{1/2}(\mathcal{F}_{BV})$ over
$\mathcal{L}$ is defined as the composition:
\begin{equation}
    \int_{\mathcal{L}} \colon
    \mathrm{Dens}^{1/2}(\mathcal{F}_{BV})
    \xrightarrow{\;\cdot\,|_{\mathcal{L}}\;}
    \mathrm{Dens}(\mathcal{L})
    \xrightarrow{\;\int_{\mathcal{L}}\;}
    \mathbb{C},
    \qquad
    \xi \longmapsto \int_{\mathcal{L}} \xi\big|_{\mathcal{L}}.
\end{equation}
\end{definition}

This construction possesses the following fundamental properties; see \cite{Batalin84, Sch93} for a proof.

\begin{theorem}[Batalin--Vilkovisky--Schwarz]
\label{thm:BVS}
Let $(\mathcal{F}_{BV}, \Omega_{BV})$ be a $(-1)$-shifted symplectic graded vector
space equipped with the canonical BV Laplacian $\Delta$ of
Theorem~\ref{thm:existence_BV_laplacian}.
\begin{enumerate}
    \item[\emph{(I)}] For any half-density
    $\xi \in \mathrm{Dens}^{1/2}(\mathcal{F}_{BV})$ and any Lagrangian subspace
    $\mathcal{L} \subset \mathcal{F}_{BV}$, the integral of a $\Delta$-exact
    half-density vanishes:
    \begin{equation}
        \int_{\mathcal{L}} \Delta\xi = 0,
    \end{equation}
    assuming convergence of the integral.
    \item[\emph{(II)}] Let $\xi \in \mathrm{Dens}^{1/2}(\mathcal{F}_{BV})$ be
    $\Delta$-closed, \textit{i.e.}, $\Delta\xi = 0$. If $\mathcal{L}_0$ and
    $\mathcal{L}_1$ are two Lagrangian subspaces connected by a smooth family of
    Lagrangian subspaces, then:
    \begin{equation}
        \int_{\mathcal{L}_0} \xi = \int_{\mathcal{L}_1} \xi.
    \end{equation}
\end{enumerate}
\end{theorem}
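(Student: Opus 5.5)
The plan is to reduce both statements to a normal form in Darboux coordinates adapted to $\mathcal{L}$, and then use the fact that integration of a total derivative vanishes.

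For (I), I would first use the split Lagrangian structure of Definition~\ref{def:lagrangian_subspace}. Choose an isotropic complement $\mathcal{L}'$, so that $\mathcal{F}_{BV}=\mathcal{L}\oplus\mathcal{L}'$. Since $\Omega_{BV}$ is non-degenerate and both summands are isotropic, the pairing $\Omega_{BV}\colon \mathcal{L}\otimes\mathcal{L}'\to\mathbb{R}[-1]$ is perfect. Hence $\mathcal{F}_{BV}\cong T^*[-1]\mathcal{L}$ as odd symplectic spaces. Pick linear coordinates $x^i$ on $\mathcal{L}$ and dual coordinates $\xi_i$ on $\mathcal{L}'$; these are global Darboux coordinates in which $\mathcal{L}=\{\xi=0\}$. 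Write $\xi = f(x,\xi)\,\sqrt{\mathcal{D}x\,\mathcal{D}\xi}$, with the half-density normalised via Proposition~\ref{prop:restriction_half_density}, so that the restriction is $f(x,0)\,\mathcal{D}x$. Theorem~\ref{thm:existence_BV_laplacian} then gives
\[
\int_{\mathcal{L}}\Delta\xi=\int_{\mathcal{L}}\sum_i \frac{\partial}{\partial x^i}\Bigl(\frac{\partial f}{\partial \xi_i}\Bigr)\Big|_{\xi=0}\,\mathcal{D}x .
\]
Each term is a total derivative in an $x^i$. For even $x^i$, it integrates to zero under the convergence (decay) assumption, by Stokes. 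For odd $x^i$, the Berezin integral of a derivative is identically zero. This proves (I).

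For (II), I would treat a smooth family $\mathcal{L}_t$ infinitesimally. Any Lagrangian near $\mathcal{L}_0$ is the graph of a linear map $\mathcal{L}_0\to\mathcal{L}'$. Lagrangianity of the graph means this map is generated by a quadratic odd function $\Psi_t(x)$ of degree $-1$ (the gauge-fixing fermion), so that $\mathcal{L}_t=\{\xi_i=\partial_i\Psi_t\}$. The symplectomorphism $\phi_t=\exp(X_{\Psi_t})$, which is the shear $(x,\xi)\mapsto(x,\xi+\partial\Psi_t)$, carries $\mathcal{L}_0$ to $\mathcal{L}_t$ and preserves the coordinate half-density. Therefore
\[
\int_{\mathcal{L}_t}\xi=\int_{\mathcal{L}_0}\phi_t^*\xi .
\]
Differentiating in $t$, with $\Psi=\Psi_t'$ the generating function of the tangent vector along the path (re-centred at $\mathcal{L}_t$), gives $\frac{d}{dt}\phi_t^*\xi=\phi_t^*\mathcal{L}_{X_{\Psi}}\xi$.

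The key identity is $\mathcal{L}_{X_\Psi}\xi=\pm\Delta(\Psi\xi)\mp(\Delta\Psi)\xi\mp\Psi\Delta\xi$. This is the half-density version of \eqref{eq:BV_Leibniz} with $\Delta\Psi$ the divergence term. Since $\Psi$ is quadratic in even/odd coordinates of the form $x^ix^j$ weighted appropriately, $\Delta\Psi=0$, because $\Psi$ has no $\xi$-dependence. With $\Delta\xi=0$, this gives $\frac{d}{dt}\int_{\mathcal{L}_t}\xi=\pm\int_{\mathcal{L}_t}\Delta(\Psi\xi)=0$ by (I). Integrating in $t$ over the connecting path then yields the result, covering the path by finitely many charts where the graph description holds.

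The main obstacle is the sign and normalisation bookkeeping in the identity $\mathcal{L}_{X_\Psi}\xi=\pm\Delta(\Psi\xi)$ for half-densities. I would verify it directly in Darboux coordinates using the explicit formula $\Delta=\sum_i\partial_{x^i}\partial_{\xi_i}$ rather than abstractly. I would also check that the shear $\phi_t$ has unit Berezinian, so that no Jacobian factor appears.
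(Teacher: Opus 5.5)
The paper gives no proof of its own here and simply defers to \cite{Batalin84, Sch93}. Your argument is essentially the standard one from Schwarz, and it is correct: for (I), an adapted Darboux chart with $\mathcal{L}=\{\xi=0\}$ followed by Stokes/Berezin; for (II), the gauge-fixing fermion, the unipotent shear $\exp(X_{\Psi})$, and the identity $\mathcal{L}_{X_\Psi}\sigma=\pm\Delta(\Psi\sigma)$ (valid when $\Delta\sigma=0$ and $\Delta\Psi=0$), combined with (I).

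The only points to tighten are the ones you already flag. First, fix the graded sign conventions once, so that the paper's unsigned formula $\Delta=\sum_i\partial_{x^i}\partial_{\xi_i}$ is compatible with the bracket defining $X_\Psi$; with the standard antibracket the terms for even and odd $x^i$ come out with opposite relative signs unless you adjust. Second, read the paper's ``assuming convergence'' as enough decay both for Stokes in (I) and for differentiating under the integral in $t$ in (II).
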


\begin{rem}
Theorem~\ref{thm:BVS} implies that the BV integral of a $\Delta$-closed half-density
depends only on the Lagrangian homotopy class of the subspace $\mathcal{L}$. 
In the infinite-dimensional setting, constructing an explicit Lagrangian homotopy between distinct gauge fixings is generally difficult. Nevertheless, the underlying principle of gauge independence provides the physical motivation for our approach: by explicitly evaluating the partition function in two different gauges and comparing the results, we recover the expected equality between the analytic torsion and the value at zero of the Ruelle zeta function.
\end{rem}

\subsection{Regularised determinants and flat traces}
\label{sec:flat_traces}

In the study of quadratic field theories, the evaluation of the partition function in Equation~\eqref{eq:formal_path_integral} is formally linked to the determinant of the differential operator featured in the action functional. In infinite dimensions, such determinants are generally ill-defined {as integrals} and are instead {defined as} the regularised determinant of the associated differential operator. While the zeta-function regularisation is a prevalent choice in spectral geometry (see, e.g.,~\cite{Sch78, Sch79}), the specific connections with Morse--Smale dynamical systems make it more natural to employ the notion of flat trace regularisation, inspired by the seminal works of Atiyah and Bott~\cite{Atiyah64, Atiyah68}.

We shall not provide a rigorous definition of flat traces on manifolds here, referring
the reader to~\cite{Baladi18} for a detailed treatment. Within this general overview, we
will assume that such traces are well-defined for the differential operators under
consideration, ensuring to properly justify their existence later when specifying our
exact framework.

\begin{definition}
\label{def:flat_determinant}
Let $W$ be a suitable inner product space and $A \colon W \to W$ a differential operator
such that its flat trace $\operatorname{tr}^\flat(e^{-t(A+\lambda)})$ is well-defined
for $\lambda \in \mathbb{C}$. We define the {flat determinant} of $A + \lambda$ as:
\begin{equation}
    \log \det{}^\flat (A + \lambda)
    \doteq -\frac{d}{ds}\bigg|_{s=0}
       \left[ \frac{1}{\Gamma(s)}
              \int_0^\infty t^{s-1}
              \operatorname{tr}^\flat\!\left( e^{-t(A+\lambda)} - \Pi_\lambda \right) dt
       \right],
\end{equation}
where $\Pi_\lambda$ is the spectral projector onto the kernel of $A+\lambda$, provided
the integral converges.
\end{definition}

\begin{rem}
If the operator $A$ is such that the heat kernel $e^{-t(A+\lambda)}$ is trace-class,
the flat trace coincides with the standard functional trace,
$\operatorname{tr}(e^{-t(A+\lambda)}) = \operatorname{tr}^\flat(e^{-t(A+\lambda)})$.
\end{rem}

To handle field theories formulated on $\mathbb{Z}$-graded vector spaces, we extend the 
previous construction to graded operators. For a degree-preserving linear map 
$A$ on a graded vector space $W^\bullet$, we define its {flat superdeterminant} 
as the alternating product of the flat determinants on each homogeneous subspace. 
This provides a formal assignment for Gaussian-type integrals which, provided $A$ 
is non-degenerate on $W^\bullet$, recovers the inverse square root of the 
superdeterminant, consistent with the result obtained for non-degenerate 
Gaussian integration in the finite-dimensional case.

\begin{definition}
\label{def:partition_function_flat}
Let $W^\bullet$ be a $\mathbb{Z}$-graded vector space endowed with an inner product
$(\cdot, \cdot)$ and let $\mathcal{S} \colon W^\bullet \to \mathbb{R}$ be a quadratic
functional of the form $\mathcal{S}[x] = \frac{1}{2} \int_M (x, Ax)$ for some graded
differential operator $A \colon W^\bullet \to W^\bullet$. The formal Gaussian integral
associated with $\mathcal{S}$ is defined through the flat superdeterminant of $A$ as:
\begin{equation}
    \int_{W^\bullet} e^{\frac{i}{\hbar} \mathcal{S}}
    \doteq \bigl| {\sdet}^\flat (A) \bigr|^{-\frac{1}{2}},
\end{equation}
provided that $A$ is non-degenerate on $W^\bullet$.
\end{definition}

The regularisation procedure established here allows one to circumvent the analytical ambiguities associated with non-elliptic operators. However, for quadratic field theories where the operator $A$ possesses a non-trivial kernel, assigning a value to the partition function requires a broader framework. This approach involves partitioning the field space into the degrees of freedom spanning the kernel of $A$, hereafter referred to as residual fields, and the fluctuations. The systematic treatment of this decomposition, which combines flat-trace regularisation with the BV pushforward paradigm, will be the subject of Subsection \ref{sec:inf-dim-gaussian}.

\subsection{Gauge fixing and the Quantum Master Equation}

In the geometric interpretation of the BV formalism, integrating over the entire space $\mathcal{F}_{BV}$ is physically redundant because it overcounts equivalent gauge configurations. This source of redundancy, in infinite dimensions, leads to a divergence ``off the bat'' as the gauge group of which one would be computing the volume is infinite dimensional. Hence, one must restrict the integration domain. A {gauge fixing} corresponds to
the choice of a Lagrangian subspace\footnote{This should be additionally transverse to the group orbits, to count relevant configurations up to a finite number of times.} $\mathcal{L} \subset \mathcal{F}_{BV}$
(Definition~\ref{def:lagrangian_subspace}), and the partition function of the
gauge-fixed theory is defined by (in finite dimensions):
\begin{equation}
    Z(\mathcal{S}_M,\, \mathcal{L}) \doteq Z\!\left(\mathcal{S}_M\big|_{\mathcal{L}}\right)
    = \int_{\mathcal{L}} \exp\!\left(\frac{i}{\hbar} \mathcal{S}_M\right)
      \sqrt{\mu}\,\big|_{\mathcal{L}},
\end{equation}
where $\sqrt{\mu}\big|_{\mathcal{L}}$ is the volume form induced on $\mathcal{L}$ by a
compatible Berezinian $\mu$ on $\mathcal{F}_{BV}$ (cf.\ Remark~\ref{rem:Delta_mu}).

The gauge-fixing independence of $Z(\mathcal{S}_M, \mathcal{L})$ is a direct consequence
of Theorem~\ref{thm:BVS}, applied to the half-density
$\xi = \exp(i\mathcal{S}_M/\hbar)\,\mu^{1/2}$. Indeed, item~(II) of that theorem
guarantees invariance under smooth deformations of $\mathcal{L}$ whenever $\xi$ is
$\Delta$-closed. Using the intertwining relation in Equation \eqref{eq:intertwining} and the BV
algebra relations of Equation \eqref{eq:BV_Leibniz}, one computes directly that the condition
$\Delta\xi = 0$ is equivalent to requiring that $\mathcal{S}_M$ satisfies the
\emph{Quantum Master Equation} (QME):
\begin{equation}
\label{eq:QME}
    \frac{1}{2}\{\mathcal{S}_M,\, \mathcal{S}_M\}_{BV}
    + i\hbar\, \Delta_\mu \mathcal{S}_M = 0.
\end{equation}
Hence, whenever Equation~\eqref{eq:QME} holds, the partition function
$Z(\mathcal{S}_M, \mathcal{L})$ is independent of the choice of gauge fixing.

\begin{rem}
\label{rem:infinite_dim}
It is important to stress that Theorem~\ref{thm:BVS} is strictly valid only in finite
dimensions. In infinite dimensions, a number of complications arise. 
First of all, for \emph{quadratic} functionals on infinite dimensional spaces of fields, this integration is \emph{defined} as the flat superdeterminant of a differential operator
(cf.\ Definition~\ref{def:partition_function_flat}), after evaluation on the gauge-fixing subspace. However, in such an infinite-dimensional context the standard
theorems guaranteeing gauge-fixing independence do not apply directly, and they require adaptation. 
\end{rem}

\subsection{The BV pushforward and Wilsonian effective actions}
\label{sec:BV_pushforward}

In quantum field theory, a powerful technique inspired by the Wilsonian renormalisation
group involves splitting the fields into high-energy (ultraviolet) and low-energy
(infrared) modes. The {BV pushforward} is the rigorous mathematical formalisation (and generalisation) of this prescription \cite{CattaneoRossi05,losev,Costello2007rbv,Mnev2007,Costello11,Cattaneo18a}.

By performing a path integral over a subset of fields (e.g.\ ultraviolet modes), one obtains an effective action, a function of the remaining fields, that encapsulates the
quantum effects of what has been integrated-out.  Concretely, let us suppose that the field space under consideration is decomposed into two odd-symplectic
components. 

\begin{prop}
\label{prop:factorisation_density}
Suppose $\mathcal{F}_{BV} = \mathcal{F}_{BV}' \oplus \mathcal{F}_{BV}''$ with
$\Omega_{BV} = \Omega_{BV}' + \Omega_{BV}''$. Then the space of half-densities
factorises accordingly:
\begin{equation}
    \mathrm{Dens}^{1/2}(\mathcal{F}_{BV})
    \simeq
    \mathrm{Dens}^{1/2}(\mathcal{F}_{BV}')
    \otimes \mathrm{Dens}^{1/2}(\mathcal{F}_{BV}'').
\end{equation}
Moreover, the canonical BV Laplacian splits as $\Delta = \Delta' + \Delta''$, where
$\Delta'$ acts exclusively on the first factor and $\Delta''$ on the second.
\end{prop}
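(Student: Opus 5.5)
The plan is to reduce both claims to linear algebra and then use Darboux coordinates adapted to the splitting.

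\emph{Factorisation of half-densities.} Since $\mathcal{F}_{BV} = \mathcal{F}_{BV}' \oplus \mathcal{F}_{BV}''$, each homogeneous component splits as $\mathcal{F}_{BV}^k = (\mathcal{F}_{BV}')^k \oplus (\mathcal{F}_{BV}'')^k$. For any finite-dimensional vector spaces $W = W'\oplus W''$ there is a canonical isomorphism $\mathrm{Dens}^{\alpha}(W)\simeq \mathrm{Dens}^{\alpha}(W')\otimes\mathrm{Dens}^{\alpha}(W'')$. It sends $\mu'\otimes\mu''$ to the density whose value on the concatenated basis $(\underline{w}',\underline{w}'')$ is $\mu'(\underline{w}')\mu''(\underline{w}'')$. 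This is well defined because block-diagonal changes of basis have $|\det(g'\oplus g'')|^\alpha=|\det g'|^\alpha|\det g''|^\alpha$. Every density on $W$ is determined by its value on one basis, so the map is an isomorphism of one-dimensional spaces. Applying this degreewise with $\alpha=1/2$, and using that tensor products commute with taking duals and with the alternating tensor product in Definition~\ref{def:graded_density}, gives the claimed isomorphism of constant half-densities. For half-densities as sections, i.e.\ functions times a reference half-density, we combine this with $C^\infty(\mathcal{F}_{BV})\simeq C^\infty(\mathcal{F}_{BV}')\hat\otimes C^\infty(\mathcal{F}_{BV}'')$, taken as a completed tensor product in the even directions and polynomial in the odd ones.

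\emph{Splitting of $\Delta$.} Because $\Omega_{BV}=\Omega_{BV}'+\Omega_{BV}''$, with each summand an odd symplectic form on its own factor, we choose Darboux coordinates $(x'^i,\xi'_i)$ on $\mathcal{F}_{BV}'$ and $(x''^j,\xi''_j)$ on $\mathcal{F}_{BV}''$. Their union is a global Darboux system on $\mathcal{F}_{BV}$ with $\Omega_{BV}=\sum_i\delta\xi'_i\wedge\delta x'^i+\sum_j\delta\xi''_j\wedge\delta x''^j$. By Theorem~\ref{thm:existence_BV_laplacian} the canonical operator has the local form
\[
\Delta=\sum_i\frac{\partial^2}{\partial x'^i\partial\xi'_i}+\sum_j\frac{\partial^2}{\partial x''^j\partial\xi''_j}\eqqcolon\Delta'+\Delta''.
\]
Here the half-density is trivialised by the coordinate half-density $\mu'^{1/2}\otimes\mu''^{1/2}$, which is itself a product. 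Each $\Delta'$ involves only primed derivatives, so it acts on the first tensor factor as the canonical BV Laplacian of $(\mathcal{F}_{BV}',\Omega_{BV}')$ and as the identity on the second, up to the Koszul sign $(-1)^{|\cdot|}$ from passing the odd operator across the first factor. Canonicity of $\Delta$, which means independence of the Darboux chart, ensures that this description does not depend on the choices made.

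\emph{Main obstacle.} The main difficulty is bookkeeping rather than substance. One must check the sign conventions so that $\Delta'\otimes 1$ is interpreted with the correct Koszul sign. One must also check that the coordinate half-density is really the product under the isomorphism above, including the inverse powers coming from odd-degree components. The cross terms $\partial^2/\partial x'\partial\xi''$ are absent precisely because $\Omega_{BV}$ has no mixed component. That is the only point where the hypothesis $\Omega_{BV}=\Omega_{BV}'+\Omega_{BV}''$, rather than an arbitrary direct sum, is used.
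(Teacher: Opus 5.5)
The paper states this proposition without proof and treats it as a standard fact of finite-dimensional BV formalism. Your argument is the standard one and is correct: the concatenated-basis isomorphism of constant densities applied degree by degree, then a global Darboux chart formed as the union of Darboux charts on the two symplectically orthogonal summands. The only slip is where you place the Koszul sign: since $\partial^2/\partial x'^i\partial\xi'_i$ differentiates only the left factor, $\Delta'(f'f'')=(\Delta'f')f''$ carries no sign, whereas it is the odd operator $\Delta''$ that must pass across $f'$, giving $\Delta''(f'f'')=(-1)^{|f'|}f'\,\Delta''f''$, i.e.\ $\Delta=\Delta'\otimes 1+1\otimes\Delta''$ with the usual Koszul convention.
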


We are now in a position to introduce the BV pushforward, which acts by
integrating out only the degrees of freedom belonging to the subset
$\mathcal{F}_{BV}''$.

\begin{definition}
\label{def:BV_pushforward}
Let $\mathcal{F}_{BV} = \mathcal{F}_{BV}' \oplus \mathcal{F}_{BV}''$ be a decomposition
as per Proposition~\ref{prop:factorisation_density}, and let
$\mathcal{L}'' \subset \mathcal{F}_{BV}''$ be a Lagrangian subspace. The BV pushforward is the map:
\begin{equation}
\begin{split}
    \int_{\mathcal{L}''} &\colon
    \mathrm{Dens}^{1/2}(\mathcal{F}_{BV})
    \longrightarrow
    \mathrm{Dens}^{1/2}(\mathcal{F}_{BV}'), \\[4pt]
    \int_{\mathcal{L}''} \xi
    &\doteq \xi' \cdot \int_{\mathcal{L}''} \xi'',
    \qquad
    \forall\, \xi = \xi' \otimes \xi''
    \in \mathrm{Dens}^{1/2}(\mathcal{F}_{BV}')
       \otimes \mathrm{Dens}^{1/2}(\mathcal{F}_{BV}'').
\end{split}
\end{equation}
\end{definition}

This construction promotes to a chain map the map that integrates over a subset of the field space.

\begin{theorem}
\label{thm:BV_pushforward_properties}
Let $\int_{\mathcal{L}''}$ be the BV pushforward of
Definition~\ref{def:BV_pushforward}. The following properties hold:
\begin{enumerate}
    \item[\emph{(I)}]
    The pushforward intertwines the BV
    Laplacians:
    \begin{equation}
    \label{eq:chain_map}
        \Delta'\!\left( \int_{\mathcal{L}''}\xi \right)
        = \int_{\mathcal{L}''} \Delta\, \xi,
        \qquad
        \forall\, \xi \in \mathrm{Dens}^{1/2}(\mathcal{F}_{BV}).
    \end{equation}
    \item[\emph{(II)}]
    If
    $\xi \in \mathrm{Dens}^{1/2}(\mathcal{F}_{BV})$ satisfies $\Delta\xi = 0$, and
    $\mathcal{L}_0'', \mathcal{L}_1'' \subset \mathcal{F}_{BV}''$ are two Lagrangian
    subspaces connected by a smooth Lagrangian homotopy, then their pushforwards differ
    only by a $\Delta'$-exact term:
    \begin{equation}
    \label{eq:gauge_indep_pushforward}
        \int_{\mathcal{L}_1''} \xi - \int_{\mathcal{L}_0''} \xi = \Delta' \Psi,
    \end{equation}
    for some $\Psi \in \mathrm{Dens}^{1/2}(\mathcal{F}_{BV}')$.
\end{enumerate}
\end{theorem}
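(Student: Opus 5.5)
The plan is to reduce both statements to the finite-dimensional Batalin--Vilkovisky--Schwarz theorem (Theorem~\ref{thm:BVS}), applied fibrewise in the $\mathcal{F}_{BV}''$ directions. The key ingredient is the factorisation of Proposition~\ref{prop:factorisation_density}. By linearity it suffices to work with sums of split half-densities $\xi=\xi'\otimes\xi''$, and each such term is handled separately. Along the way I would fix global Darboux coordinates $(x^i,\xi_i)$ on $\mathcal{F}_{BV}'$ and $(y^a,\eta_a)$ on $\mathcal{F}_{BV}''$, adapted to the splitting. For a general half-density I would read $\int_{\mathcal{L}''}$ as fibre integration: fix the $\mathcal{F}_{BV}'$ variables, restrict to $\mathcal{L}''$ via Proposition~\ref{prop:restriction_half_density}, and integrate. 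This agrees with Definition~\ref{def:BV_pushforward} on split elements.

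\textbf{Step (I), intertwining.} Write $\Delta=\Delta'+\Delta''$ as in Proposition~\ref{prop:factorisation_density}. The operator $\Delta'$ differentiates only in the $\mathcal{F}_{BV}'$ variables, which are parameters for the fibre integral over $\mathcal{L}''$. So, under the standing convergence assumptions allowing differentiation under the integral sign, it commutes with $\int_{\mathcal{L}''}$; the Koszul signs work out because $\int_{\mathcal{L}''}$ has a definite parity. This gives
\[
\int_{\mathcal{L}''}\Delta'\xi=\Delta'\int_{\mathcal{L}''}\xi .
\]
The term $\int_{\mathcal{L}''}\Delta''\xi$ vanishes by Theorem~\ref{thm:BVS}(I), applied on the odd-symplectic space $\mathcal{F}_{BV}''$ for each fixed value of the $\mathcal{F}_{BV}'$ coordinates. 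Summing the two contributions yields Equation~\eqref{eq:chain_map}.

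\textbf{Step (II), gauge dependence.} Let $\mathcal{L}''_t$, $t\in[0,1]$, be the smooth Lagrangian family. Any Lagrangian subspace near $\mathcal{L}''_t$ is the graph of $d\psi$ for an odd generating function $\psi$ of degree $-1$ on $\mathcal{L}''_t$. So the infinitesimal deformation is generated by some $\psi_t\in C^\infty(\mathcal{F}_{BV}'')$, quadratic in the linear case, and the standard finite-dimensional computation gives
\[
\frac{d}{dt}\int_{\mathcal{L}''_t}\xi=\int_{\mathcal{L}''_t}\{\psi_t,\xi\}''=\pm\int_{\mathcal{L}''_t}\bigl(\Delta''(\psi_t\xi)-\psi_t\Delta''\xi\bigr).
\]
Here the bracket $\{\psi_t,\cdot\}''$ is the Hamiltonian action on half-densities, and the last equality is the half-density version of the identity~\eqref{eq:BV_Leibniz}. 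Next I substitute $\Delta''=\Delta-\Delta'$ and use $\Delta\xi=0$. Since $\psi_t$ depends only on the $\mathcal{F}_{BV}''$ variables, $\Delta'(\psi_t\xi)=\pm\psi_t\Delta'\xi$. Hence the integrand equals $\Delta''(\psi_t\xi)\pm\Delta'(\psi_t\xi)$. The first term integrates to zero by Theorem~\ref{thm:BVS}(I). By Step (I), the second term gives $\pm\Delta'\int_{\mathcal{L}''_t}\psi_t\xi$. Integrating over $t$ then produces Equation~\eqref{eq:gauge_indep_pushforward} with
\[
\Psi=\pm\int_0^1\!\!\int_{\mathcal{L}''_t}\psi_t\,\xi\,dt .
\]

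I expect the main obstacle to be this deformation formula. One must justify that the generating function varies smoothly in $t$, since it is only local in the family and may require subdividing $[0,1]$. One must also track the signs carefully, including the odd parity of $\psi_t$. If the global generating function is problematic, I would first reduce to a linear setting. There one moves $\mathcal{L}''_t$ by a one-parameter family of linear symplectomorphisms $\Phi_t$ of $\mathcal{F}_{BV}''$ with $\Phi_t(\mathcal{L}''_0)=\mathcal{L}''_t$. Such symplectomorphisms are Hamiltonian, generated by quadratic $\psi_t$, which gives the formula directly.
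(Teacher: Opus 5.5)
The paper gives no proof of this theorem. It states it as a standard fact imported from the BV-pushforward literature cited at the start of the subsection (e.g.\ \cite{Cattaneo18a,Mnev19}), and your argument is the standard proof found there. It is correct.

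For (I): write $\Delta=\Delta'+\Delta''$, commute $\Delta'$ past the fibre integral, and kill $\int_{\mathcal{L}''}\Delta''(\cdot)$ by BV--Stokes. The BV--Stokes step should be applied linearly over the coefficient algebra $C^\infty(\mathcal{F}_{BV}')$. That, rather than literally ``fixing values'' of coordinates that are odd or of nonzero degree, is the right reading of your fibrewise step.

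For (II): move $\mathcal{L}''_t$ by a Hamiltonian flow and use $\Delta''\xi=-\Delta'\xi$. Your fallback already removes the smoothness worry you raise. The degree-preserving linear symplectic group acts transitively on each connected component of the space of graded Lagrangian subspaces, which is a product of Grassmannians. Hence a smooth Lagrangian path lifts to a smooth path $\Phi_t$ of linear symplectomorphisms, with quadratic generators $\psi_t$ of degree $-1$.

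One sign should be fixed, and fixing it shortens your argument.
\begin{itemize}
\item For odd $\psi_t$, the Lie derivative of a half-density along $X_{\psi_t}$ is, up to an overall sign, the graded commutator $[\Delta'',\psi_t]$, i.e.\ $\xi\mapsto\Delta''(\psi_t\xi)+\psi_t\Delta''\xi$.
\item The combination with a minus sign that you wrote differs from this by $2\psi_t\Delta''$, so it is not even a first-order operator.
\item Since $\psi_t$ depends only on the $\mathcal{F}_{BV}''$ variables, $[\Delta',\psi_t]=0$. Hence $[\Delta'',\psi_t]=[\Delta,\psi_t]$.
\item For $\Delta\xi=0$ the integrand is therefore simply $\pm\Delta(\psi_t\xi)$.
\end{itemize}
Part (I) then gives $\frac{d}{dt}\int_{\mathcal{L}''_t}\xi=\pm\Delta'\int_{\mathcal{L}''_t}\psi_t\xi$ in one step, with the same $\Psi$ as yours.

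The only extra hypothesis, implicit in the statement, is that $\psi_t\xi$ is integrable on $\mathcal{L}''_t$ locally uniformly in $t$. This is needed so that BV--Stokes and differentiation under the integral sign apply.
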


In the quantum field theory setting, one considers $\xi = \exp(i\mathcal{S}_{BV}/\hbar)\, \mu^{1/2}$ as the half-density, where
$\mathcal{S}_{BV} \colon \mathcal{F}_{BV} \to \mathbb{C}$ is the action functional
and $\hbar$ is a formal parameter. By Remark~\ref{rem:Delta_mu}, the requirement
that $\Delta\xi = 0$ is equivalent to $\mathcal{S}_{BV}$ satisfying Equation \eqref{eq:QME}.

\begin{definition}
\label{def:effective_action}
Suppose $\mathcal{F}_{BV} = \mathcal{F}_{BV}' \oplus \mathcal{F}_{BV}''$ and let
$\mathcal{L}'' \subset \mathcal{F}_{BV}''$ be a gauge-fixing Lagrangian subspace. Let
$\mathcal{S}_{BV}$ be a functional such that
$\xi = \exp(i \mathcal{S}_{BV}/\hbar)\,\mu^{1/2}$ is $\Delta$-closed. The
{effective BV action}
$\mathcal{S}_{BV}' \in C^\infty(\mathcal{F}_{BV}')$ is defined via the BV pushforward
as:
\begin{equation}
\label{eq:effective_action}
    \exp\!\left(\frac{i}{\hbar}\mathcal{S}_{BV}'\right) (\mu')^{1/2}
    \doteq
    \int_{\mathcal{L}''} \exp\!\left(\frac{i}{\hbar}\mathcal{S}_{BV}\right) \mu^{1/2},
\end{equation}
assuming the Berezinian factorises as $\mu = \mu' \otimes \mu''$.
\end{definition}

A consequence of the chain map property in \eqref{eq:chain_map} is that if the
total action $\mathcal{S}_{BV}$ satisfies the QME on the entire space
$\mathcal{F}_{BV}$, the effective action $\mathcal{S}_{BV}'$ automatically satisfies
the QME on the reduced space $\mathcal{F}_{BV}'$. Furthermore, by
Equation \eqref{eq:gauge_indep_pushforward}, changing the gauge-fixing condition, \textit{i.e.} the
Lagrangian subspace $\mathcal{L}''$, leaves the underlying physical observables unchanged. 

\subsection{Partition Function of Degenerate Quadratic Actions}
\label{sec:inf-dim-gaussian}

We now combine the theory of BV pushforward and flat-trace regularisation of differential operators, providing a rigorous definition for the path integral of quadratic functionals in the infinite-dimensional setting. Following the paradigm introduced in Subsection \ref{sec:BV_pushforward}, we address the case where the operator defining the action functional possesses a non-trivial kernel, identifying the resulting partition function as a half-density on the space of residual fields.

Let $A \colon \mathcal{F}_{BV} \to \mathcal{F}_{BV}$ be a linear degenerate operator on the BV field space such that the BV action is given by $\mathbb{S}_{BV}(v) = \frac{1}{2} \langle Av, v \rangle$. To properly define the partition function, we assume that $\mathcal{F}_{BV}$ admits a direct sum decomposition into symplectic subspaces:
\begin{equation}
\label{eq:field_decomposition_residual}
    \mathcal{F}_{BV} = \mathcal{F}_{BV}' \oplus \mathcal{F}_{BV}'',
\end{equation}
where $\mathcal{F}_{BV}'$ is a finite-dimensional subspace containing the kernel\footnote{The finite-dimensional requirement can be relaxed, but we will generally require the zero-modes, i.e. (generalised) eigenvectors of $A$ to be in $\mathcal{F}'_{BV}$. There are multiple choices of residual fields, which generally form a poset and are linked by BV pushforward \cite{Cattaneo18a}.} of $A$, while $\mathcal{F}_{BV}''$ is an infinite-dimensional complement, such that the restricted operator $A'' \doteq A|_{\mathcal{F}_{BV}''}$ is non-degenerate. By calling the fields in $\mathcal{F}_{BV}''$ {fluctuations}, the choice of a gauge-fixing condition is implemented by selecting a Lagrangian subspace $\mathcal{L} \subset \mathcal{F}_{BV}''$. 
In the infinite-dimensional setting, we define the {primed partition function} $\mathcal{Z}'$ as the result of a formal BV pushforward of the half-density $\xi = \exp(i\mathbb{S}_{BV}/\hbar)\mu^{1/2}$ over $\mathcal{L}$. The use of the primed notation emphasises that we are not interested in evaluating the full partition function over the entire space of fields; instead, we ``integrate out'' the infinite-dimensional fluctuations to obtain a residual partition function over the (finite-dimensional) space of residual fields $\mathcal{F}_{BV}'$.

For the quadratic functionals considered in this work, the assignment is given by:
\begin{equation}
\label{eq:residual_partition_function}
    \mathcal{Z}'(A) \coloneqq \exp\left(\frac{i}{\hbar}\mathbb{S}_{BV}'\right) \cdot |\sdet \nolimits^\flat(A'')|^{-1/2} \cdot (\mu')^{1/2},
\end{equation}
where $\mathbb{S}_{BV}' \doteq \mathbb{S}_{BV}|_{\mathcal{F}_{BV}'}$ is the action restricted to the residual fields, $|\sdet \nolimits^\flat(A'')|$ is the flat-trace regularised superdeterminant of the operator restricted to the fluctuations, and $(\mu')^{1/2}$ is a half-density on $\mathcal{F}_{BV}'$. The full numerical partition function of the theory would be recovered only after a subsequent integration over the finite-dimensional space of residual fields. This two-stage integration procedure provides the general mechanism that will be employed in the following chapters to investigate the Abelian $BF$ theory under different gauge-fixing conditions. 

We conclude by noting that, in greater generality like for non quadratic action functionals, the partial partition function of Eq. \eqref{eq:residual_partition_function}, a.k.a. the result of the BV pushforward on $\mathcal{F}''_{BV}$, receives contributions from the whole perturbative series of Feynman diagrams. A couple examples relevant to this paper are \cite{CLMY}, where the corrections are computed for non Abelian $BF$ theory in the Morse-Smale case, and \cite{Schiavina23} where the full partition function is computed for $BF$ theory with a nonlocal yet quadratic interaction term, with a gauge fixing condition given by an acyclic Anosov vector field. 
\section{The Chain Homotopy Equation and Pollicott--Ruelle Resonances}
\label{sec:chain_homotopy}

In this, we present some of the results obtained in \cite{DangRiviere17Topology}. The central goal is to show that the spectrum of the Lie derivative along a Morse--Smale vector field allows one to construct a finite-dimensional chain complex, linking the dynamical features of the flow to the underlying topology of the manifold. However, the Lie derivative acting on the standard space of smooth differential forms does not generally exhibit good spectral properties, and it is thus necessary to suitably enlarge the domain of the operator. By extending the action of the Lie derivative to a specific space with anisotropic Sobolev regularity, one can ensure that it acts as an operator with a well-behaved discrete spectrum, whose eigenvalues are known as Pollicott--Ruelle resonances. Before delving into this spectral analysis, we establish the geometric framework and the associated functional spaces that will be used throughout the remainder of the paper.

\begin{proof}[Conventions and geometric setup]
Let $(M, g)$ be a compact, connected, and oriented Riemannian manifold of dimension $n$ with empty boundary. We consider a $C^\infty$-linearisable Morse--Smale vector field $V \in \mathfrak{X}(M)$, as defined in Appendix \ref{app:Morse--Smale}. See also \cite[Section 3.3]{DangRiviere17Anisotropic} and \cite[Section 3.1]{DangRiviere17Resonances} for more details.

Because $M$ is compact, the flow map $(\phi^t_V)_{t \in \mathbb{R}}$ generated by $V$ is complete, meaning it is globally defined on $M$ for all $t \in \mathbb{R}$. Furthermore, $\mathcal{E} \to M$ denotes a smooth vector bundle endowed with a flat connection $\nabla$ constructed from a unitary representation $\rho \colon \pi_1(M) \to \operatorname{GL}(E)$ as in \cite{MathaiWu11Twisted}. We let $(\Omega^\bullet(M, \mathcal{E}), d^\nabla)$ denote the associated twisted de Rham complex. Given these data, we denote by $\mathcal{H}_m^k(M, \mathcal{E})$ the anisotropic Sobolev spaces of currents of degree $k$ (and order function $m$), whose constitutive details are deferred to Appendix \ref{app:Sobolev}, closely following \cite{DangRiviere17Anisotropic, DangRiviere17Topology}.
\end{proof}

\subsection{Chain Homotopy Equation}
The following theorem outlines a Hodge-type decomposition that holds for anisotropic Sobolev currents of any degree.
A detailed derivation can be found in \cite[Section 4.2]{DangRiviere17Topology}.

\begin{theorem}
\label{thm:chain_homotopy}
Let $\pi_{0} \colon \mathcal{H}_m^\bullet(M, \mathcal{E}) \to C^\bullet_{V, \nabla}(0)$ be the spectral projector onto the generalised zero-eigenspace of the Lie derivative $\mathcal{L}_{V, \nabla}$, as per Definition \ref{def:projector}, and let $\iota_V$ denote the interior product. Defining the operator 
\begin{equation}
    \mathfrak{h}_V^{(k)} \doteq \iota_V \circ \left(\mathcal{L}_{V, \nabla}^{(k)}\right)^{-1} \circ \left(\mathbb{I} - \pi_0^{(k)}\right),
\end{equation}
the following chain homotopy equation holds for all $\psi \in \mathcal{H}^k_m(M, \mathcal{E})$:
\begin{equation}
    \mathbb{I} = \pi_0 + d^{\nabla}\circ\mathfrak{h}_V + \mathfrak{h}_V\circ d^{\nabla}
\end{equation}
\end{theorem}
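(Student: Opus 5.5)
The plan is to derive the identity from Cartan's magic formula together with the spectral decomposition of $\mathcal{L}_{V,\nabla}$ on the anisotropic spaces. First, the twisted Cartan formula $\mathcal{L}_{V,\nabla} = d^\nabla\iota_V + \iota_V d^\nabla$ holds on smooth forms, since $\nabla$ is flat. By density and continuity of $d^\nabla$, $\iota_V$ and $\mathcal{L}_{V,\nabla}$ (as maps $\mathcal{H}^k_m \to \mathcal{H}^{k\pm1}_{m-1}$, or on their natural domains), it extends to currents in $\mathcal{H}^\bullet_m(M,\mathcal{E})$. The order function $m$ is chosen so that the spaces of different degrees are compatible (Appendix \ref{app:Sobolev}).

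Second, I will use the spectral facts from Appendix \ref{app:Sobolev}. The resolvent $(\mathcal{L}^{(k)}_{V,\nabla}+z)^{-1}$ is meromorphic near $z=0$, and the Laurent expansion at $0$ has a singular part whose residue is $\pi_0^{(k)}$ (defined as the contour integral of the resolvent around $0$). Consequently $\mathcal{L}_{V,\nabla}$ commutes with $\pi_0$, preserves the splitting $\operatorname{Im}\pi_0 \oplus \ker\pi_0$, and is invertible on $\ker \pi_0 = \operatorname{Im}(\mathbb{I}-\pi_0)$. Its inverse there is the holomorphic part of the resolvent at $z=0$. This makes $(\mathcal{L}^{(k)}_{V,\nabla})^{-1}(\mathbb{I}-\pi_0^{(k)})$ a well-defined bounded operator, and hence $\mathfrak{h}_V$ is well defined.

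Third, I need commutation relations. Since $d^\nabla$ and $\iota_V$ both commute with $\mathcal{L}_{V,\nabla}$ (as $[d^\nabla,\mathcal{L}_{V,\nabla}]=0$ and $[\iota_V,\mathcal{L}_{V,\nabla}]=\iota_{[V,V]}=0$), they commute with the resolvent for $z$ outside the spectrum. Integrating over the contour, they commute with $\pi_0$, with $\mathbb{I}-\pi_0$, and with the restricted inverse. With these relations the computation is immediate: for $\psi \in \mathcal{H}^k_m$,
\[
d^\nabla \mathfrak{h}_V\psi + \mathfrak{h}_V d^\nabla\psi = (d^\nabla\iota_V+\iota_V d^\nabla)\,\mathcal{L}_{V,\nabla}^{-1}(\mathbb{I}-\pi_0)\psi = \mathcal{L}_{V,\nabla}\mathcal{L}_{V,\nabla}^{-1}(\mathbb{I}-\pi_0)\psi = (\mathbb{I}-\pi_0)\psi .
\]
Rearranging gives the claimed identity.

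The main obstacle is analytic rather than algebraic. The functional setting has to allow these commutations on the relevant domains: the anisotropic spaces in different degrees must be chosen so that $d^\nabla$ and $\iota_V$ map $\mathcal{H}^k_m$ into spaces where the resolvent of degree $k\pm1$ acts with the same discrete spectrum near $0$. The commutation with the resolvent then has to be justified on a dense core of smooth forms and extended by continuity. I would handle this by invoking the intertwining of the resolvents in different degrees, using that the Pollicott--Ruelle spectrum is intrinsic, i.e.\ independent of $m$ in a region $\operatorname{Re} z > -C(m)$, as established in \cite{DangRiviere17Anisotropic, DangRiviere17Topology}. I would then conclude by density, which completes the proof as in \cite[Section 4.2]{DangRiviere17Topology}.
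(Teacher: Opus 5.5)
Your proposal is correct and follows the same route as the paper, which defers the proof to \cite[Section 4.2]{DangRiviere17Topology} and uses Proposition~\ref{prop:operator_toolbox} as the algebraic engine: commutation of $d^\nabla$ and $\iota_V$ with $\mathcal{L}_{V,\nabla}$, with $\pi_0$ and with the reduced inverse, together with invertibility on $\operatorname{Im}(\mathbb{I}-\pi_0)$. Your one-line Cartan computation is exactly how these ingredients combine, and the only redundancy is your density argument for Cartan's formula, since the paper defines $\mathcal{L}_{V,\nabla}$ on $\mathcal{H}^\bullet_m(M,\mathcal{E})$ by that formula in Definition~\ref{def:Lie-derivative}.
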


An immediate and elegant consequence of the previous theorem is that the anisotropic Sobolev spaces split into a direct sum.

\begin{cor}
\label{cor:direct_sum_decomposition}
The anisotropic Sobolev spaces $\mathcal{H}^\bullet_m(M, \mathcal{E})$ admit the direct decomposition:
\begin{equation}
\label{eq:chain-homotopy-decomp}
    \mathcal{H}^\bullet_m(M, \mathcal{E}) = C^\bullet_{V, \nabla}(0) \, \oplus \, \text{Im}\left(d^{\nabla}\big|_{\text{Ker}\pi_0}\right) \, \oplus \, \text{Im}\left(\iota_V\big|_{\text{Ker}\pi_0}\right).
\end{equation}
\end{cor}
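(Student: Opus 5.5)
The plan is to derive the decomposition directly from the chain homotopy equation of Theorem \ref{thm:chain_homotopy}, together with basic commutation properties of the spectral projector. Since $\pi_0$ is the Riesz projector of $\mathcal{L}_{V,\nabla}$ and both $d^\nabla$ and $\iota_V$ commute with $\mathcal{L}_{V,\nabla}$ (Cartan's formula $\mathcal{L}_{V,\nabla}=d^\nabla\iota_V+\iota_V d^\nabla$ and flatness of $\nabla$), they commute with the resolvent and hence with $\pi_0$. Consequently $\operatorname{Ker}\pi_0$ is invariant under $d^\nabla$, $\iota_V$ and $(\mathcal{L}_{V,\nabla})^{-1}$, and $\mathfrak{h}_V$ maps into $\operatorname{Im}(\iota_V|_{\operatorname{Ker}\pi_0})$.

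\emph{Spanning.} For $\psi\in\mathcal{H}^k_m$, Theorem \ref{thm:chain_homotopy} gives $\psi=\pi_0\psi+d^\nabla\mathfrak{h}_V\psi+\mathfrak{h}_V d^\nabla\psi$. The first term lies in $C^\bullet_{V,\nabla}(0)$. The third equals $\iota_V$ applied to $(\mathcal{L}_{V,\nabla})^{-1}(\mathbb{I}-\pi_0)d^\nabla\psi\in\operatorname{Ker}\pi_0$. The second equals $d^\nabla$ applied to $\mathfrak{h}_V\psi$, which lies in $\operatorname{Ker}\pi_0$ because $\iota_V$ commutes with $\pi_0$. So the three subspaces span.

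\emph{Directness.} Suppose $\pi_0 a + d^\nabla b+\iota_V c=0$ with $b,c\in\operatorname{Ker}\pi_0$. Applying $\pi_0$ and using the commutations kills the last two terms, so $\pi_0 a=0$, i.e.\ the resonant component vanishes. It remains to show $d^\nabla b=-\iota_V c$ forces both to vanish. Apply $\iota_V$: using $\iota_V^2=0$ we get $\iota_V d^\nabla b=0$, hence $\mathcal{L}_{V,\nabla}b=d^\nabla\iota_V b$. Apply $d^\nabla$ to the original relation: $(d^\nabla)^2=0$ gives $d^\nabla\iota_V c=0$, hence $\mathcal{L}_{V,\nabla}c=\iota_V d^\nabla c$. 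Now set $\omega\doteq d^\nabla b=-\iota_V c\in\operatorname{Ker}\pi_0$. Then $\iota_V\omega=0$ and $d^\nabla\omega=0$, so $\mathcal{L}_{V,\nabla}\omega=0$. Since $\mathcal{L}_{V,\nabla}$ is invertible on $\operatorname{Ker}\pi_0$ (zero is not in the spectrum of the restriction), $\omega=0$. Equivalently, $\omega\in\operatorname{Ker}\pi_0$ is closed, so $\omega=d^\nabla\mathfrak{h}_V\omega$ by the homotopy formula, and $\mathfrak{h}_V\omega=\iota_V\mathcal{L}^{-1}\omega=\mathcal{L}^{-1}\iota_V\omega=0$.

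The main obstacle is analytic rather than algebraic. One has to justify that $d^\nabla$, $\iota_V$ and $\mathcal{L}_{V,\nabla}^{-1}(\mathbb{I}-\pi_0)$ act boundedly between the relevant anisotropic spaces, possibly with shifted order functions between degrees. One also needs the commutations with $\pi_0$ to hold on their domains, so that the ``images'' in \eqref{eq:chain-homotopy-decomp} are well defined as subspaces of $\mathcal{H}^\bullet_m$. For this I would rely on the Dang--Rivière construction recalled in Appendix \ref{app:Sobolev}, where the Riesz projector is given by a contour integral of the resolvent, which commutes with any operator commuting with $\mathcal{L}_{V,\nabla}$. I would also use their result that $\mathcal{L}_{V,\nabla}$ restricted to $\operatorname{Ker}\pi_0$ is invertible near zero, which is exactly the input already used to define $\mathfrak{h}_V$ in Theorem \ref{thm:chain_homotopy}.
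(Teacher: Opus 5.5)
Your proposal is correct and follows essentially the paper's own proof. The spanning step takes the same three components from the chain homotopy equation, and the directness step splits off $\operatorname{Im}\pi_0$ and then notes that an element of $\operatorname{Im}(d^\nabla|_{\operatorname{Ker}\pi_0})\cap\operatorname{Im}(\iota_V|_{\operatorname{Ker}\pi_0})$ is killed by both $d^\nabla$ and $\iota_V$, hence by $\mathcal{L}_{V,\nabla}$, which is invertible on $\operatorname{Ker}\pi_0$. Your intermediate identities for $\mathcal{L}_{V,\nabla}b$ and $\mathcal{L}_{V,\nabla}c$ are not needed.
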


\begin{proof}
Theorem \ref{thm:chain_homotopy} immediately implies that any $\psi \in \mathcal{H}_m^k(M, \mathcal{E})$ can be written as a sum of three elements:
\begin{equation*}
    \psi = \omega + d^{\nabla}(\alpha) + \iota_V(\beta),
\end{equation*}
where $\omega = \pi_0^{(k)}(\psi) \in C^k_{V, \nabla}(0)$, $\alpha = \mathfrak{h}_V^{(k)}(\psi) \in \text{Ker}(\pi_0^{(k-1)})$, and $\beta = \big(\mathcal{L}_{V, \nabla}^{(k+1)}\big)^{-1}(\mathbb{I}-\pi_0^{(k+1)}) (d^\nabla \psi) \in \text{Ker}(\pi_0^{(k+1)})$. Thus, the space is generated by the sum of these three subspaces. We must now prove that this sum is direct. Because $\pi_0^{(k)}$ is a projector, we have that $\mathcal{H}_m^k(M, \mathcal{E}) = \text{Im}(\pi_0^{(k)}) \oplus \text{Im}(\mathbb{I}- \pi_0^{(k)})$. Therefore, it suffices to show that the second and third subspaces -- both of which lie entirely within $\text{Im}(\mathbb{I}- \pi_0^{(k)})$ -- have a trivial intersection.

Let $\eta \in \text{Im}(d^{\nabla}) \cap \text{Im}(\iota_V) \cap \text{Im}(\mathbb{I}- \pi_0^{(k)})$. Because the operators $d^\nabla$ and $\iota_V$ are nilpotent, it follows immediately that:
\begin{equation*}
    d^{\nabla} \eta = 0 \quad \text{and} \quad \iota_V \eta = 0.
\end{equation*}
Evaluating the Lie derivative of $\eta$:
\begin{equation*}
    \mathcal{L}_{V, \nabla} (\eta) = (d^{\nabla} \circ \iota_V + \iota_V \circ d^{\nabla}) (\eta) = 0.
\end{equation*}
However, the Lie derivative is linear and strictly invertible on $\text{Im}(\mathbb{I}- \pi_0^{(k)})$ as per Proposition \ref{prop:operator_toolbox}, and hence $\eta \equiv 0$. Consequently, the intersection is trivial and the direct sum decomposition holds.
\end{proof}

\subsection{The Generalised Eigenspace at Zero and its Dimension}
\label{subsec:eigenspace_zero}

The generalised eigenspaces for the zero eigenvalue $C^k_{V, \nabla}(0)$ play an important role in the previous decomposition, hence it is crucial to examine their algebraic and topological properties. Specifically, we will observe its deep connection to the twisted de Rham cohomology, which is the foundational reason why dynamical zeta functions can capture topological invariants. We begin by observing that the differential $d^\nabla$ naturally restricts to the zero-eigenspace.

\begin{prop}
\label{prop:resonant_complex}
The sequence of vector spaces
\begin{equation}
\label{eq:resonant_complex_sequence}
    0 \xrightarrow{d^\nabla} C_{V, \nabla}^0(0) \xrightarrow{d^\nabla} C_{V, \nabla}^1(0) \xrightarrow{d^\nabla} \dots \xrightarrow{d^\nabla} C_{V, \nabla}^n(0) \xrightarrow{d^\nabla} 0
\end{equation}
forms a finite-dimensional cochain complex with respect to the restricted differential $d^\nabla$. We denote its $k$-th cohomology group by $H^k(C_{V, \nabla}^\bullet(0), d^\nabla)$.
\end{prop}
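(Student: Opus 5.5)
To prove Proposition \ref{prop:resonant_complex}, we need two facts: $d^\nabla$ maps $C^k_{V,\nabla}(0)$ into $C^{k+1}_{V,\nabla}(0)$, and each $C^k_{V,\nabla}(0)$ is finite-dimensional. The relation $(d^\nabla)^2=0$ is inherited from the twisted de Rham differential, since the connection $\nabla$ is flat. It also continues to hold on anisotropic currents, because $d^\nabla$ acts on them as a differential operator in the distributional sense.

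For invariance, we use the Cartan formula $\mathcal{L}_{V,\nabla}=d^\nabla\iota_V+\iota_V d^\nabla$, already invoked in the proof of Corollary \ref{cor:direct_sum_decomposition}. It gives
\begin{equation*}
d^\nabla\circ\mathcal{L}^{(k)}_{V,\nabla}=d^\nabla\iota_V d^\nabla=\mathcal{L}^{(k+1)}_{V,\nabla}\circ d^\nabla .
\end{equation*}
This holds on the domains in $\mathcal{H}^k_m(M,\mathcal{E})$; we use that $d^\nabla$ maps $\mathcal{H}^k_m$ into $\mathcal{H}^{k+1}_{m'}$ for a suitably shifted order function, following Appendix \ref{app:Sobolev}. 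Now recall that the generalised zero-eigenspace is characterised as $C^k_{V,\nabla}(0)=\{\psi : (\mathcal{L}^{(k)}_{V,\nabla})^{N}\psi=0 \text{ for some } N\}$. If $\psi$ lies in this space, then
\begin{equation*}
(\mathcal{L}^{(k+1)}_{V,\nabla})^N d^\nabla\psi=d^\nabla(\mathcal{L}^{(k)}_{V,\nabla})^N\psi=0,
\end{equation*}
so $d^\nabla\psi\in C^{k+1}_{V,\nabla}(0)$.

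An equivalent route goes through the spectral projector of Definition \ref{def:projector}, written as a Riesz contour integral of the resolvent around $0$. The intertwining $d^\nabla(\mathcal{L}^{(k)}_{V,\nabla}+z)^{-1}=(\mathcal{L}^{(k+1)}_{V,\nabla}+z)^{-1}d^\nabla$ then yields $d^\nabla\pi_0^{(k)}=\pi_0^{(k+1)}d^\nabla$. From this, $\operatorname{Im}\pi_0$ is preserved. This second route is also what makes the restricted complex well defined independently of the choice of order function $m$.

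Finite-dimensionality follows from the discreteness of the Pollicott--Ruelle spectrum, stated in Proposition \ref{prop:operator_toolbox} and Appendix \ref{app:Sobolev}. There, $0$ is an isolated eigenvalue of finite algebraic multiplicity, so $\operatorname{Im}\pi_0^{(k)}$ is finite-dimensional for each $k$. The boundary maps of the sequence are trivially zero, because currents of degree below $0$ or above $n$ vanish.

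The main obstacle is making the intertwining rigorous across different anisotropic spaces. One has to check that $d^\nabla$ is bounded between the relevant spaces $\mathcal{H}^k_m$ and $\mathcal{H}^{k+1}_{m'}$. One also has to check that the resonant states at $0$ do not depend on the choice of order function, which is a result of Dang--Rivi\`ere. We would cite this independence from \cite{DangRiviere17Topology} rather than reprove it.
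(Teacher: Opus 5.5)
Your argument is correct, but its main line is organised differently from the paper's.

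The paper's proof is a one-line consequence of item (II) of Proposition \ref{prop:operator_toolbox}. Writing $\psi_0=\pi_0^{(k)}\psi_0$ gives $d^\nabla\psi_0=\pi_0^{(k+1)}d^\nabla\psi_0$, so invariance is read off from the image of the projector. Finite-dimensionality is simply taken from Definition \ref{def:projector}.

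Your primary route instead combines two ingredients: the intertwining $d^\nabla\mathcal{L}^{(k)}_{V,\nabla}=\mathcal{L}^{(k+1)}_{V,\nabla}d^\nabla$ (item (I), immediate from Cartan's formula), and the description $C^k_{V,\nabla}(0)=\bigcup_N\operatorname{Ker}\bigl(\mathcal{L}^{(k)}_{V,\nabla}\bigr)^N$. This is more algebraic in flavour, but it does not avoid the spectral input.
\begin{itemize}
\item The inclusion $\subseteq$, which you use in degree $k$, is Proposition \ref{prop:Laurent-expansion}.
\item The inclusion $\supseteq$, which you need in degree $k+1$, requires that $\pi_0$ commute with $\mathcal{L}_{V,\nabla}$ and that $\mathcal{L}_{V,\nabla}$ be injective on $\operatorname{Im}(\mathbb{I}-\pi_0)$, i.e.\ items (II)--(III).
\end{itemize}

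Your ``equivalent route'' is exactly the paper's argument. It has the small bonus that you derive item (II) from the resolvent intertwining rather than citing it.

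What your write-up adds is attention to the functional setting. You flag that $d^\nabla$ loses a derivative between anisotropic spaces, so one needs the order-function independence of resonant states (Remark \ref{rem:intrinsic_spectrum}). The paper sidesteps this by taking $d^\nabla$ bounded on a fixed $\mathcal{H}^\bullet_m$, as in Proposition \ref{prop:anisotropic_spaces}.

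Two small corrections. First, this independence is needed equally by both of your routes; the projector version does not by itself remove the $m$-dependence. Second, finite-dimensionality comes from Proposition \ref{prop:discrete-spectrum}, not from Proposition \ref{prop:operator_toolbox}.
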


\begin{proof}
By item (II) of Proposition \ref{prop:operator_toolbox}, the exterior derivative $d^\nabla$ commutes with the spectral projector $\pi_0^{(k)}$. If $\psi_0 \in C^k_{V, \nabla}(0)$, then $\psi_0 = \pi_0^{(k)}(\psi_0)$. Applying the differential yields:
\begin{equation*}
    d^\nabla(\psi_0) = d^\nabla \big( \pi_0^{(k)}(\psi_0) \big) = \pi_0^{(k+1)} \big( d^\nabla(\psi_0) \big).
\end{equation*}
Thus, $d^\nabla(\psi_0)$ lies in the image of $\pi_0^{(k+1)}$, which is exactly $C^{k+1}_{V, \nabla}(0)$. Since $(d^\nabla)^2 = 0$ holds universally on $\mathcal{H}_m^\bullet(M, \mathcal{E})$, the property is trivially inherited by the subspaces $C^\bullet_{V, \nabla}(0)$.
\end{proof}


The fundamental topological insight brought forward in \cite{DangRiviere17Topology} is that the cohomology of this complex exactly recovers the smooth twisted de Rham cohomology of the underlying manifold. We refer the reader to \cite[Theorem 2.1 and Section 4.3]{DangRiviere17Topology} for the proof.

\begin{theorem}
\label{thm:spectral_realization_cohomology}
Let $(C^\bullet_{V, \nabla}(0), d^\nabla)$ be the differential complex defined in Equation \eqref{eq:resonant_complex_sequence}. Then, the spectral projector $\pi_0^{(k)} \colon \Omega^k(M, \mathcal{E}) \to C^k_{V, \nabla}(0)$ induces a canonical isomorphism:
\begin{equation}
    H^k(M, \mathcal{E}) \xrightarrow{\sim} H^k(C^\bullet_{V, \nabla}(0), d^\nabla), \qquad \forall \; 0 \le k \le n.
\end{equation}
\end{theorem}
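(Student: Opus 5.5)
The strategy is to compare the smooth twisted de Rham complex $(\Omega^\bullet(M,\mathcal{E}),d^\nabla)$ with the anisotropic complex $(\mathcal{H}^\bullet_m(M,\mathcal{E}),d^\nabla)$, and then pass to the resonant complex $C^\bullet_{V,\nabla}(0)$ using the chain homotopy equation of Theorem~\ref{thm:chain_homotopy}. The argument has two steps, and we will show that each induces an isomorphism in cohomology.

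\emph{Step 1: $\pi_0$ is a homotopy equivalence on the anisotropic complex.} By item (II) of Proposition~\ref{prop:operator_toolbox}, $\pi_0$ commutes with $d^\nabla$. Hence $\pi_0\colon \mathcal{H}^\bullet_m\to C^\bullet_{V,\nabla}(0)$ is a chain map, and so is the inclusion $j\colon C^\bullet_{V,\nabla}(0)\hookrightarrow\mathcal{H}^\bullet_m$. We have $\pi_0\circ j=\mathbb{I}$ because $\pi_0$ is a projector. Theorem~\ref{thm:chain_homotopy} gives $j\circ\pi_0=\mathbb{I}-(d^\nabla\mathfrak{h}_V+\mathfrak{h}_V d^\nabla)$, so $j\circ \pi_0$ is chain homotopic to the identity. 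It follows that $\pi_0$ induces an isomorphism $H^k(\mathcal{H}^\bullet_m,d^\nabla)\cong H^k(C^\bullet_{V,\nabla}(0),d^\nabla)$. For this we must check that $\mathfrak{h}_V$ maps $\mathcal{H}^k_m$ into $\mathcal{H}^{k-1}_m$ continuously. This follows from the invertibility of $\mathcal{L}_{V,\nabla}$ on $\mathrm{Im}(\mathbb{I}-\pi_0)$ together with the boundedness of $\iota_V$ on the anisotropic spaces, given that the order function is chosen uniformly in the form degree.

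\emph{Step 2: the inclusion of smooth forms induces an isomorphism.} We need the inclusion $\Omega^\bullet(M,\mathcal{E})\hookrightarrow\mathcal{H}^\bullet_m(M,\mathcal{E})$ to be a quasi-isomorphism. Both $\Omega^\bullet$ and the complex of all currents $\mathcal{D}'^\bullet(M,\mathcal{E})$ compute $H^\bullet(M,\mathcal{E})$ by de Rham's theorem for currents, via the regularisation operators $R_\epsilon$ and homotopies $A_\epsilon$ with $\mathbb{I}-R_\epsilon=d^\nabla A_\epsilon+A_\epsilon d^\nabla$. Our plan is to show that $\mathcal{H}_m$ sits between the two in a way compatible with these operators. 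Concretely, $R_\epsilon$ maps $\mathcal{H}_m$ to smooth forms, and $A_\epsilon$, being built from flows of auxiliary vector fields and pseudodifferential smoothing, preserves $\mathcal{H}_m$. With these properties, any $d^\nabla$-closed $\psi\in\mathcal{H}^k_m$ is cohomologous within $\mathcal{H}_m$ to the smooth form $R_\epsilon\psi$; this gives surjectivity. Similarly, a smooth form that is exact in $\mathcal{H}_m$, say $\omega=d^\nabla\beta$, satisfies $\omega=d^\nabla(R_\epsilon\beta+A_\epsilon\omega)$ with smooth primitive; this gives injectivity. Composing with Step 1 shows that $\pi_0\circ j$ restricted to $\Omega^\bullet$ induces the claimed isomorphism $H^k(M,\mathcal{E})\to H^k(C^\bullet_{V,\nabla}(0),d^\nabla)$. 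Canonicity is automatic, since the map is induced by $\pi_0$ alone and does not depend on $\epsilon$.

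\emph{Main obstacle.} The delicate point is Step 2, namely verifying that the de Rham homotopy $A_\epsilon$ preserves the anisotropic space $\mathcal{H}^\bullet_m$. Anisotropic regularity is not invariant under arbitrary diffeomorphisms, because the order function $m$ is adapted to the dynamics of $V$. Our first attempt will be to use a homotopy formula generated by flows $e^{tX}$ of vector fields $X$ close to zero, and to show that for small times these flows change the escape function only slightly. This keeps us inside a slightly larger anisotropic space $\mathcal{H}_{m'}$ with the same resonances near $0$. Since $C^\bullet_{V,\nabla}(0)$ does not depend on the choice of order function in the admissible class, we can then replace $\mathcal{H}_m$ by $\mathcal{H}_{m'}$ throughout. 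If this approach fails, the fallback is to argue directly that $\pi_0$ extends to all currents, with the image independent of $m$, and to apply Step 1 on $\mathcal{D}'^\bullet$.
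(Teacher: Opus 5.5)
Your Step 1 and the injectivity half of Step 2 are correct. Injectivity needs only de Rham's theorem for arbitrary currents: a smooth form that is exact in $\mathcal{D}'^\bullet$ is exact in $\Omega^\bullet$. The paper does not prove the statement itself; it defers to Dang--Rivi\`ere, whose argument has the same architecture as yours (chain homotopy on $\mathcal{H}^\bullet_m$ plus de Rham).

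The gap is surjectivity, and that is where the entire content lies. Given Step 1, the theorem is equivalent to $\Omega^\bullet(M,\mathcal{E})\hookrightarrow\mathcal{H}^\bullet_m(M,\mathcal{E})$ being a quasi-isomorphism; the paper's Remark~\ref{rem:cohomology_isomorphisms} deduces the latter from the theorem. Concretely, you must write a closed $u\in C^k_{V,\nabla}(0)$ as a smooth closed form plus $d^\nabla$ of something on which $\pi_0$ is defined, and you leave this open.

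Your primary plan is only a heuristic. You would need the perturbed order function $m'$ to remain an admissible escape function for the lifted flow, with the thresholds that give discrete spectrum near $0$. You would also need bounds uniform over the whole family of pull-backs entering $A_\epsilon$. In de Rham's construction these are chart translations conjugated into balls, which degenerate at the boundary of each ball. None of this is checked.

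Your fallback fails outright. $\pi_0$ has Schwartz kernel $\sum_j U_j\otimes S_j$, where the co-resonant states $S_j$ are genuine distributions (currents of integration over stable manifolds in the gradient case). So $\langle S_j,\psi\rangle$ is meaningless for a general current $\psi$. There is also no ``Step~1 on $\mathcal{D}'^\bullet$'': on all currents $\mathcal{L}_{V,\nabla}$ has no discrete spectrum. Near a hyperbolic zero with $V=\lambda x\partial_x$, one already has $\mathcal{L}_V x_+^s=\lambda s\,x_+^s$ for a continuum of $s$, which is exactly why anisotropic spaces are needed.

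You can close the gap by replacing de Rham's $A_\epsilon$ with an elliptic homotopy. Choose auxiliary metrics on $M$ and $\mathcal{E}$, let $\mathcal{G}$ be the Green operator of the twisted Hodge Laplacian, and let $P_H$ be the projector onto harmonic forms. Then $\mathbb{I}-P_H=d^\nabla K+Kd^\nabla$ on currents, where $K=d^{\nabla,\dagger}\mathcal{G}$ is a classical pseudodifferential operator of order $-1$ and $P_H$ is smoothing of finite rank.

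Since $\mathcal{H}^k_m=\mathbf{A}_m^{-1}L^2$ for the variable-order operator $\mathbf{A}_m$ of Appendix~\ref{app:Sobolev}, you need $\mathbf{A}_m K\mathbf{A}_m^{-1}$ to be $L^2$-bounded. It is: it is again a pseudodifferential operator of order $\le 0$, in a class of type $(\rho,1-\rho)$ with $\rho<1$. Hence $K$ preserves $\mathcal{H}^\bullet_m$.

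For closed $u\in C^k_{V,\nabla}(0)$ you then get $u=\pi_0u=\pi_0(P_Hu)+d^\nabla\pi_0(Ku)$, with $P_Hu$ smooth and closed, which is the surjectivity you need. More generally, any closed $\psi\in\mathcal{H}^k_m$ is cohomologous within $\mathcal{H}^\bullet_m$ to the smooth form $P_H\psi$, which gives your Step 2 directly.
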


\begin{rem}
\label{rem:cohomology_isomorphisms}
As an immediate consequence of Theorem \ref{thm:spectral_realization_cohomology} and the density of smooth forms in the anisotropic Sobolev spaces, the cohomologies computed on smooth, anisotropic and distributional forms
as well as the finite-dimensional resonant complex are all mutually isomorphic:
\begin{equation}
\label{eq:iso-cohomologies}
    H^k(C^\bullet_{V, \nabla}(0), d^\nabla) \simeq H^k(M, \mathcal{E}) \simeq H^k(\mathcal{H}_m^\bullet(M, \mathcal{E}), d^\nabla) \simeq H^k(\mathcal{D}'^\bullet(M, \mathcal{E}), d^\nabla).
\end{equation}
\end{rem}

This isomorphism implies that computing homological invariants (such as the Euler characteristic) via the resonant complex $(C^\bullet_{V, \nabla}(0), d^\nabla)$ correctly yields the standard topological invariants of the smooth manifold $M$. Furthermore, this complex is finite-dimensional; we substantiate this property below by explicitly computing its dimension and constructing a distinguished basis. For a detailed review of Morse--Smale flows, see \cite{Mol26} and Appendix \ref{app:Morse--Smale}.


Before stating the main theorem, we must introduce a purely topological factor associated with the closed orbits of the flow.

\begin{definition}
\label{def:twist-operator}
Let $\Lambda \subset M$ be a closed orbit of the Morse--Smale vector field $V$. The unstable manifold $W^u(\Lambda)$ defined in Equation \eqref{eq:invariant-manifolds-orbit} is said to be untwisted if it is orientable, and {twisted} otherwise. Consequently, the {twist factor} of the orbit is defined as:
\begin{equation}
\label{eq:twist-operator}
    \Delta(\Lambda) \coloneqq
    \begin{cases}
      +1 & \text{if } W^u(\Lambda) \text{ is untwisted,} \\
      -1 & \text{if } W^u(\Lambda) \text{ is twisted.}
    \end{cases}
\end{equation}
\end{definition}

The following result deals with the dimension of the space $C^k_{V, \nabla}(0)$ and shows under which hypothesis can be evaluated explicitly. See \cite[Section 5.1]{DangRiviere17Resonances} for details.

\begin{prop}
\label{prop:dimension-of-Ck}
If a Morse-Smale vector field $V$ is \emph{non-aligned}, meaning that for every closed orbit $\Lambda \subset M$ of the flow, the twist factor $\Delta(\Lambda)$ is not an eigenvalue of the holonomy map $\rho(\Lambda)$, then the dimension of the space of zero-resonant states is:
\begin{equation}
\label{eq:dim-resonant-space}
\begin{aligned}
    \dim\big(C^k_{V, \nabla}(0)\big) &= N \cdot c_k(V), \\
    c_k(V) &\coloneqq \# \big\{ a \in M \mid V(a) = 0 \; \wedge \; \dim(W^s(a)) = k \big\},
\end{aligned}
\end{equation}
where $N = \operatorname{rk}(\mathcal{E})$ and $W^{s}(a)$ is given by Equation \eqref{eq:unstable-point-app}.
\end{prop}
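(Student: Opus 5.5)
The plan is to compute $\dim C^k_{V,\nabla}(0)$ as the multiplicity of $0$ as a Pollicott--Ruelle resonance of $-\mathcal{L}^{(k)}_{V,\nabla}$. This multiplicity equals the order of vanishing at $z=0$ of a dynamical determinant of the flow, which we read off from the flat-trace (Guillemin/Atiyah--Bott) formula. Concretely, on $\mathcal{H}^k_m(M,\mathcal{E})$ the resolvent $(\mathcal{L}^{(k)}_{V,\nabla}+z)^{-1}$ is meromorphic with finite-rank residues (Appendix \ref{app:Sobolev}). Hence
\[
\zeta_k(z)\doteq \exp\Bigl(-\int_{0}^\infty \tfrac{e^{-tz}}{t}\operatorname{tr}^\flat\bigl(e^{-t\mathcal{L}^{(k)}_{V,\nabla}}\bigr)\,dt\Bigr),
\]
suitably regularised near $t=0$, extends meromorphically, and $\operatorname{tr}^\flat\bigl((\mathcal{L}+z)^{-1}\bigr)$ has residue at $z=0$ equal to $\dim C^k_{V,\nabla}(0)$. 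So it suffices to compute the flat trace of the transfer operator and locate the pole at $0$.

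The key step is the trace formula. For $t>0$, $\operatorname{tr}^\flat(e^{-t\mathcal{L}^{(k)}_{V,\nabla}})$ is a sum of contributions localised at the critical elements, which is legitimate by hyperbolicity and the transversality of the graph of $\phi^{-t}$ to the diagonal. A fixed point $a$ contributes
\[
N\cdot\frac{\operatorname{tr}\Lambda^k\bigl(d\phi^{-t}_V(a)\bigr)^{T}}{\lvert\det(\mathbb{I}-d\phi^{-t}_V(a))\rvert}.
\]
Using $C^\infty$-linearisability, we diagonalise $d_aV$ with Lyapunov exponents and expand this into exponentials $e^{-t\lambda}$. Laplace transforming gives poles at $z=-\lambda$. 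The point is that the exponent $\lambda=0$ occurs exactly once, and only when $k=\dim W^s(a)$. It arises from the $k$-form $dx_s^1\wedge\dots\wedge dx_s^k$ along the stable directions, with multiplicity $\operatorname{rk}\mathcal{E}=N$. Indeed, the determinant factor $\prod_s(1-e^{t\lambda_s})^{-1}\prod_u(1-e^{-t\lambda_u})^{-1}$, after sign bookkeeping, expands as geometric series whose exponents all have strictly positive real part except for the leading term. This yields the contribution $N\,c_k(V)$ to the residue.

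The closed orbits remain, and I expect them to be the main obstacle. The contribution of a closed orbit $\Lambda$ of period $\ell$ is
\[
\sum_{j\ge1}\frac{\ell\,\operatorname{tr}\rho(\Lambda)^j\,\operatorname{tr}\Lambda^k(P^j_\Lambda)}{\lvert\det(\mathbb{I}-P^j_\Lambda)\rvert}\,\delta(t-j\ell),
\]
with $P_\Lambda$ the linearised Poincaré map. Expanding as before, the Laplace transform becomes a sum of terms $-\log\det\bigl(\mathbb{I}-\Delta(\Lambda)\,\rho(\Lambda)\,e^{-\ell z}\,\mu\bigr)$, where $\mu$ ranges over products of Floquet multipliers. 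The sign $\Delta(\Lambda)$ enters because $\operatorname{sign}\det(\mathbb{I}-P_\Lambda)$ together with the orientation of the unstable bundle is governed by whether $W^u(\Lambda)$ is twisted. The only terms that can vanish at $z=0$ are those with $\mu=1$, namely $\det\bigl(\mathbb{I}-\Delta(\Lambda)\rho(\Lambda)e^{-\ell z}\bigr)$. The non-aligned hypothesis states precisely that $\Delta(\Lambda)$ is not an eigenvalue of $\rho(\Lambda)$, so these factors are holomorphic and nonvanishing at $z=0$. Closed orbits therefore contribute no resonance at zero.

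The delicate points are the precise orientation and sign computation giving $\Delta(\Lambda)$, and justifying that the flat trace residue equals the algebraic multiplicity on $\mathcal{H}^k_m$, including the Jordan blocks, which the trace counts correctly. For these I would rely on \cite[Section 5.1]{DangRiviere17Resonances} and the resonance theory in Appendix \ref{app:Sobolev}.
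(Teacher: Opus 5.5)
Your argument is correct in outline, but it follows a genuinely different route from the one the paper relies on. The paper does not prove the proposition itself. It defers to Dang--Rivière and, in Remark~\ref{rem:preferred-basis-PR}, says the count comes from an explicit construction: $N$ currents $U_{a,j}=\delta_0(x_1,\dots,x_k)\,dx_1\wedge\dots\wedge dx_k\otimes e^a_j$ supported on $\overline{W^u(a)}$ for each critical point with $\dim W^s(a)=k$. This is combined with a local analysis near the critical elements showing that nothing else is resonant at $0$, with closed orbits excluded by non-alignment.

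You instead obtain the algebraic multiplicity as the order of vanishing of a dynamical determinant computed from the Guillemin trace formula, and your local computations are right. At a fixed point the only zero exponent comes from the stable $k$-form, with weight $N$. At a closed orbit the only multiplier of modulus one is $\mu=1$, giving the factor $\det(\mathbb{I}-\Delta(\Lambda)\rho(\Lambda)e^{-\ell z})$, which is exactly where non-alignment enters.

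The two routes buy different things. Yours handles Jordan blocks automatically and fits naturally with the trace-formula machinery of Section~\ref{sec: Ruelle-zeta-and-Torsion-as-determinants}. The constructive route produces the basis $\mathscr{B}_k$ that the paper needs later for Proposition~\ref{prop:differential_action_pref_base} and for the isomorphism with the Thom--Smale complex (Theorem~\ref{thm:iso-complex}). A counting argument cannot supply that basis.

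Two points need tightening.

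First, the residue of $\operatorname{tr}^\flat((\mathcal{L}^{(k)}_{V,\nabla}+z)^{-1})$ at $0$ does not equal $\dim C^k_{V,\nabla}(0)$ merely because the resolvent is meromorphic with finite-rank residues, as your ``Hence'' suggests. You must start the time integral at some $t_0>0$, because the fixed points make the flat trace blow up like $t^{-n}$ as $t\to 0$. You also need a wavefront-set bound on the holomorphic part of $e^{-t_0\mathcal{L}^{(k)}_{V,\nabla}}(\mathcal{L}^{(k)}_{V,\nabla}+z)^{-1}$, so that its flat trace is defined and holomorphic near $0$. This Dyatlov--Zworski-type input for Morse--Smale flows is the real analytic content of your proof and should be cited as such.

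Second, on the full space of $k$-forms the local weight of a closed orbit is $\operatorname{tr}\Lambda^k P^j_\Lambda+\operatorname{tr}\Lambda^{k-1}P^j_\Lambda$, because the flow direction contributes an eigenvalue $1$. What you wrote is the weight on $\ker\iota_V$. The extra term yields the same factor $\det(\mathbb{I}-\Delta(\Lambda)\rho(\Lambda)e^{-\ell z})$ one degree higher, so the conclusion stands. Note also that the non-vanishing at $z=0$ of the factors with $|\mu|<1$ uses that $\rho$ is unitary.
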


The assumption of $V$ being non-aligned has a clear geometric interpretation: no vector in the fibre $\mathcal{E}_p$ returns to itself (up to the sign $\Delta(\Lambda)$) after being parallel transported around a closed orbit. This will be assumed throughout the remainder of this work.

\begin{rem}
\label{rem:preferred-basis-PR}
The proof of Proposition \ref{prop:dimension-of-Ck} relies on the explicit construction of a preferred basis for $C^k_{V, \nabla}(0)$. Let $a \in M$ be a critical point such that $\dim(W^s(a)) = k$. In a sufficiently small, simply connected neighbourhood $U$ of $a$, one can introduce local coordinates $(x_1, \dots, x_k, y_1, \dots, y_{n-k})$ such that $W^s(a) \cap U$ is given by $y=0$ and $W^u(a) \cap U$ is given by $x=0$. On this neighbourhood, we can choose a local $\nabla$-parallel frame $\{e_1^a, \dots, e_N^a\}$ for the vector bundle $\mathcal{E}$. The component of the complete basis associated with the critical point $a$ consists of $N$ specific $k$-currents $\{U_{a, j}\}_{j=1}^{N}$, whose distributional support is precisely the closure of the unstable manifold $\overline{W^u(a)}$. In local coordinates, these currents take the explicit form:
\begin{equation}
    U_{a, j} = \delta_0(x_1, \dots, x_k) \, dx_1 \wedge \dots \wedge dx_k \otimes e_j^a(x,y),
\end{equation}
where $\delta_0(x_1, \dots, x_k)$ is the Dirac delta distribution supported at the origin of the stable coordinates (ensuring the support lies on the unstable manifold), and $dx_1 \wedge \dots \wedge dx_k$ provides the $k$-form contribution. The complete basis for $C_{V,\nabla}^k(0)$ is formed by the union of all such states across all critical points $a$ with $\dim(W^s(a))=k$:
\begin{equation}
    \mathscr{B}_k = \big\{ U_{a, j} \mid a \in M, \, V(a)=0, \, \dim(W^s(a))=k, \, j=1, \dots, N \big\}.
\end{equation}
This distinguished basis will be used to construct (non-canonical) isomorphisms with the Thom--Smale complex.
\end{rem}

\begin{rem}
\label{rem:acyclic_case}
If the Morse--Smale vector field $V$ is both {non-singular}, \textit{i.e.}, $c_k(V) = 0$ for all $k$, and {non-aligned}, then, due to Equation \eqref{eq:dim-resonant-space}, $z_0 = 0$ is not a Pollicott--Ruelle resonance. In this setting, the complex $(C^\bullet_{V, \nabla}(0), d^\nabla)$ is trivial, and the spectral projectors vanish identically: $\pi_0^{(k)} \equiv 0$.
\end{rem}

As shown in \cite[Proposition 5.1]{DangRiviere17Topology}, the spaces $C^k_{V, \nabla}(0)$ naturally form a chain complex with respect to the contraction $\iota_V$. Hence, the following result holds true.

\begin{prop}
\label{prop:dimension-homology}
Let $\iota_{V, C}^{(k)} \doteq \iota_V^{(k)}\big|_{C^k_{V, \nabla}(0)} \colon C^{k}_{V, \nabla}(0) \to C^{k-1}_{V, \nabla}(0)$ be the restriction of the contraction operator. Defining the $k$-th homology group of this complex as:
\begin{equation}
    H_k(C_{V,\nabla}^\bullet(0), \iota_{V,C}) \doteq \frac{\text{Ker}\left(\iota_{V, C}^{(k)} \colon C_{V,\nabla}^k(0) \to C_{V,\nabla}^{k-1}(0)\right)}{\text{Im}\left(\iota_{V, C}^{(k+1)} \colon C_{V,\nabla}^{k+1}(0) \to C_{V,\nabla}^k(0)\right)},
\end{equation}
its dimension is identically given by:
\begin{equation}
    \dim(H_k(C_{V,\nabla}^\bullet(0), \iota_{V,C})) = N \cdot c_k(V).
\end{equation}
\end{prop}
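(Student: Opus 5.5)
We need to show that the homology of $(C^\bullet_{V,\nabla}(0),\iota_{V,C})$ has dimension $N c_k(V)$ in degree $k$. Since $\dim C^k_{V,\nabla}(0)=N c_k(V)$ by Proposition \ref{prop:dimension-of-Ck}, it suffices to prove that $\iota_{V,C}$ vanishes identically on $C^\bullet_{V,\nabla}(0)$. Then $\operatorname{Ker}\iota^{(k)}_{V,C}=C^k_{V,\nabla}(0)$ and $\operatorname{Im}\iota^{(k+1)}_{V,C}=0$, so $H_k = C^k_{V,\nabla}(0)$.

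First we check that the restriction makes sense. The interior product $\iota_V$ commutes with $\mathcal{L}_{V,\nabla}$, because $\mathcal{L}_{V,\nabla}\iota_V-\iota_V\mathcal{L}_{V,\nabla}=\iota_{[V,V]}=0$. Hence $\iota_V$ commutes with the spectral projector $\pi_0$, which is a Riesz projector, i.e. a contour integral of the resolvent. This is the analogue of item (II) of Proposition \ref{prop:operator_toolbox}, argued exactly as for $d^\nabla$ in Proposition \ref{prop:resonant_complex}. Therefore $\iota_V$ preserves $C^\bullet_{V,\nabla}(0)$, and $\iota_V^2=0$ makes $(C^\bullet_{V,\nabla}(0),\iota_{V,C})$ a chain complex.

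The main step is the vanishing of $\iota_V$ on the distinguished basis $\mathscr{B}_k$ of Remark \ref{rem:preferred-basis-PR}. By the $C^\infty$-linearisability assumption, near a critical point $a$ with $\dim W^s(a)=k$ we may take coordinates in which $V$ is linear and block diagonal:
\begin{equation*}
V=\sum_i \lambda_i x_i\partial_{x_i}+\sum_j\mu_j y_j\partial_{y_j}.
\end{equation*}
Then
\begin{equation*}
\iota_V U_{a,j}=\delta_0(x)\sum_i (-1)^{i-1}\lambda_i x_i\, dx_1\wedge\cdots\widehat{dx_i}\cdots\wedge dx_k\otimes e^a_j,
\end{equation*}
and each term vanishes because $x_i\delta_0(x)=0$. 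This shows that $\iota_V U_{a,j}$ vanishes on a neighbourhood $U$ of $a$.

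To globalise, note that $\iota_V U_{a,j}$ is again a generalised $0$-eigenvector, with support in $\overline{W^u(a)}$. It is a $(k-1)$-current, so it lies in the span of the basis elements $U_{b,\ell}$ with $\dim W^s(b)=k-1$. Each $U_{b,\ell}$ is determined by its germ near $b$. The supports $\overline{W^u(b)}$ in the relevant critical points $b$ would have to meet the support of $\iota_V U_{a,j}$, but $\iota_V U_{a,j}$ has no transverse Dirac component at any such $b$: it is $V$-invariantly generated from its germ at $a$, where it is zero. So all of its coefficients vanish.

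The main obstacle is making this last step rigorous. To do so, we use the characterisation from \cite{DangRiviere17Topology} that a resonant state is determined by its restriction to small neighbourhoods of the critical points, via the flow: $U_{a,j}=\lim_{t\to\infty}$ of the pushforward of a local germ. Since $\iota_V$ commutes with $\phi^t_{V,*}$, the local vanishing of $\iota_V U_{a,j}$ propagates globally. As a fallback, we can cite \cite[Proposition 5.1]{DangRiviere17Topology} directly, where this vanishing is established.
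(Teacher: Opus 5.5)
Your reduction differs from the paper's logic but is legitimate. The paper gives no argument of its own here: it imports the homology count from \cite[Proposition 5.1]{DangRiviere17Topology}, and only afterwards deduces $\iota_{V,C}\equiv 0$ (Proposition \ref{prop:zero_operator_consequence}) by comparing with $\dim C^k_{V,\nabla}(0)=N c_k(V)$. You reverse the order. That is non-circular provided the vanishing is proved independently, and it uses non-alignment (a standing assumption) both for that dimension and for the basis $\mathscr{B}_k$. The commutation $\iota_V\pi_0=\pi_0\iota_V$ and the local computation at $a$ are fine; the linearisation need not be diagonal, but any linear field preserving the $x/y$ splitting gives the same result.

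The gap is the globalisation, which is the entire content of the vanishing claim. The criterion you invoke, that the supports $\overline{W^u(b)}$ would have to meet $\operatorname{supp}(\iota_V U_{a,j})$, is the wrong test. Whenever there is a flow line from $b$ to $a$ (precisely the connections counted by $d^\nabla\colon C^{k-1}_{V,\nabla}(0)\to C^k_{V,\nabla}(0)$), $\overline{W^u(b)}$ contains $W^u(a)$. Moreover, vanishing of $\iota_V U_{a,j}$ on a chart around $a$ says nothing by itself about its restriction to the rest of $\overline{W^u(a)}$. Saying it is \emph{$V$-invariantly generated from its germ at $a$} presupposes a representation you have not established. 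The one you propose, $U_{a,j}=\lim_{t\to\infty}$ of the flowed cut-off germ, is not available in the paper and is delicate. Non-alignment only removes the resonance $0$ coming from closed orbits; since $\rho$ is unitary, closed orbits can still contribute purely imaginary resonances, so $e^{-t\mathcal{L}_{V,\nabla}}$ applied to your germ need not converge.

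The missing idea is the Morse--Smale ordering of supports, and with it the local computation becomes unnecessary. Any critical point in $\overline{W^u(a)}\setminus\{a\}$ is reached from $a$ by a broken trajectory. By the transversality count of Proposition \ref{prop: dimension-app} (the same count applies to connections through closed orbits), the unstable dimension never increases along a connection and drops strictly at every connection ending at a fixed point. Hence such points have unstable dimension $<n-k=\dim W^u(a)$.

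The basis of $C^{k-1}_{V,\nabla}(0)$ sits at critical points $b$ with $\dim W^u(b)=n-k+1$. So no such $b$ lies in the closed set $\overline{W^u(a)}\supseteq\operatorname{supp}(\iota_V U_{a,j})$, and likewise $b\notin\overline{W^u(b')}$ for $b'\neq b$ of the same index. Write $\iota_V U_{a,j}=\sum_{b',\ell}c_{b',\ell}U_{b',\ell}$ and restrict to a small neighbourhood of $b$ avoiding these closed sets. This gives $0=\delta_0(x)\,dx_1\wedge\cdots\wedge dx_{k-1}\otimes\sum_\ell c_{b,\ell}\,e^b_\ell$, so every $c_{b,\ell}$ vanishes and $\iota_V U_{a,j}=0$.

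Your local computation could only be used through an identity such as $U_{a,j}=\pi_0(\chi\,U^{\mathrm{loc}}_{a,j})$, with $U^{\mathrm{loc}}_{a,j}=\delta_0(x)\,dx_1\wedge\cdots\wedge dx_k\otimes e^a_j$ and $\chi$ a cut-off, since $\pi_0$ commutes with $\iota_V$. But proving that identity needs the structure of $\pi_0$ and the same ordering. With this step filled in, your conclusion $H_k=C^k_{V,\nabla}(0)$, of dimension $Nc_k(V)$, follows. The fallback of citing \cite[Proposition 5.1]{DangRiviere17Topology} is exactly what the paper does, but then the citation, not your argument, carries the proof.
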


As an immediate consequence of Proposition \ref{prop:dimension-of-Ck} and Proposition \ref{prop:dimension-homology}, we obtain the following result.

\begin{prop}
\label{prop:zero_operator_consequence}
If the Morse--Smale vector field $V$ is {non-aligned},
then the contraction operator vanishes identically on the zero-resonant space:
\begin{equation}
    \iota_{V, C}^{(k)} \equiv 0, \quad \forall k.
\end{equation}
\end{prop}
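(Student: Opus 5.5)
The plan is a dimension count on the finite-dimensional chain complex $(C^\bullet_{V,\nabla}(0), \iota_{V,C})$. By Proposition \ref{prop:dimension-homology}, $\iota_{V,C}$ squares to zero and maps $C^k_{V,\nabla}(0)$ into $C^{k-1}_{V,\nabla}(0)$, so we have a finite chain complex. The key identity will be that, for each $k$, the $k$-th homology has the same dimension as the $k$-th chain space, and this forces all boundary maps to vanish.

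Concretely, I would write $r_k \doteq \operatorname{rk}\big(\iota_{V,C}^{(k)}\big)$. Rank--nullity gives $\dim \operatorname{Ker} \iota_{V,C}^{(k)} = \dim C^k_{V,\nabla}(0) - r_k$. Hence
\begin{equation*}
    \dim H_k(C_{V,\nabla}^\bullet(0), \iota_{V,C}) = \dim C^k_{V,\nabla}(0) - r_k - r_{k+1}.
\end{equation*}
Under the non-aligned hypothesis, Proposition \ref{prop:dimension-of-Ck} gives $\dim C^k_{V,\nabla}(0) = N\, c_k(V)$. Proposition \ref{prop:dimension-homology} gives $\dim H_k = N\, c_k(V)$. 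Substituting, we get $r_k + r_{k+1} = 0$ for every $k$. Since ranks are non-negative integers, $r_k = 0$ for all $k$, that is $\iota^{(k)}_{V,C} \equiv 0$.

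The only point needing care is the finite-dimensionality, which legitimises rank--nullity and the subtraction of dimensions. This comes for free from Proposition \ref{prop:dimension-of-Ck}, since $N c_k(V) < \infty$ because there are finitely many critical points. One should also note the boundary cases: $C^{-1}_{V,\nabla}(0) = 0$ and $C^{n+1}_{V,\nabla}(0) = 0$, so $r_0 = 0$ and $r_{n+1} = 0$ trivially and are consistent with the argument. I expect no genuine obstacle. The substantive content sits in the two cited propositions, and the main thing to check is that both are applied under the same non-aligned assumption and with the same grading convention (degree $k$ corresponding to $\dim W^s(a) = k$).
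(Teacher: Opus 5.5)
Your argument is correct and is essentially the paper's proof. Both combine $\dim C^k_{V,\nabla}(0) = N c_k(V)$ with $\dim H_k = N c_k(V)$, then use non-negativity of dimensions to force $\operatorname{Ker}\iota^{(k)}_{V,C} = C^k_{V,\nabla}(0)$ and $\operatorname{Im}\iota^{(k+1)}_{V,C} = 0$; your rank--nullity identity $r_k + r_{k+1} = 0$ is just a tidier packaging of the paper's two inequalities.
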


\begin{proof}
Under the stated hypotheses, Proposition \ref{prop:dimension-of-Ck} ensures that $\dim(C_{V,\nabla}^k(0)) = N \cdot c_k(V)$. By definition, the dimension of the $k$-th homology group is:
\begin{equation}
\label{eq:homology_dim_def}
    \dim(H_k) = \dim(\text{Ker}(\iota_{V, C}^{(k)})) - \dim(\text{Im}(\iota_{V, C}^{(k+1)})).
\end{equation}
Substituting the result of Proposition \ref{prop:dimension-homology} into Equation \eqref{eq:homology_dim_def}, we obtain:
\begin{equation*}
    \dim(C_{V,\nabla}^k(0)) = \dim(\text{Ker}(\iota_{V, C}^{(k)})) - \dim(\text{Im}(\iota_{V, C}^{(k+1)})).
\end{equation*}
Since $\text{Ker}(\iota_{V, C}^{(k)})$ is a subspace of $C_{V,\nabla}^k(0)$, we have $\dim(\text{Ker}(\iota_{V, C}^{(k)})) \le \dim(C_{V,\nabla}^k(0))$, and naturally $\dim(\text{Im}(\iota_{V, C}^{(k+1)})) \ge 0$. The equality can therefore only hold if both of the following conditions are met simultaneously:
\begin{enumerate}
    \item $\dim(\text{Ker}(\iota_{V, C}^{(k)})) = \dim (C_{V,\nabla}^k(0))$,
    \item $\dim (\text{Im}(\iota_{V, C}^{(k+1)})) = 0$.
\end{enumerate}
The first condition proves that the kernel of $\iota_{V, C}^{(k)}$ is the entire space $C_{V,\nabla}^k(0)$. Thus, for any state $\psi \in C_{V,\nabla}^k(0)$, its image under $\iota_V^{(k)}$ is zero, meaning $\iota_{V, C}^{(k)}$ is identically zero.
\end{proof}
\section{Algebraic Torsion of the Resonant Complex}
\label{sec:algebraic_torsion}

In this section, we define and compute the algebraic torsion of the complex of zero-resonant states $(C^\bullet_{V, \nabla}(0), d^\nabla)$. For the fundamental notions of algebraic torsion for finite-dimensional cochain complexes and its invariance properties, the reader is referred to Appendix \ref{app:torsion_isomorphism} and the references therein. We first introduce the Thom--Smale complex generated by the critical points of a Morse--Smale vector field $V$ and subsequently  we prove that it is isometrically isomorphic to the resonant complex. This correspondence ensures that the numerical components of their respective torsions coincide, providing a dynamical interpretation of the spectral data.

\subsection{The Thom--Smale Complex and its Torsion}
\label{subsec:thom_smale_complex}

We begin by introducing a geometric finite-dimensional cochain complex associated with the critical points of a Morse--Smale vector field, which incorporates the twist by the flat vector bundle $\mathcal{E} \to M$. This construction yields a generalisation of the classical Morse complex. For a comprehensive exposition of Morse theory and its associated differential complexes, we refer the reader to \cite{Hutchings02, Hutchings02Reidemeister, Dang21}.

By recalling the definition of critical points given in Proposition \ref{prop:dimension-of-Ck}, for any $p \in \operatorname{Crit}_k(V)$ and $q \in \operatorname{Crit}_{k+1}(V)$, the transversality condition of the Morse--Smale flow ensures that the intersection of the unstable manifold of $p$ and the stable manifold of $q$ is a one-dimensional manifold. We define the space of connecting flow lines, modulo the action of $\mathbb{R}$ by time translation, as:
\begin{equation}
\label{eq:moduli-space}
    \mathcal{M}(p,q) \doteq \big(W^u(p) \cap W^s(q)\big) / \mathbb{R}.
\end{equation}
As shown in \cite[Section 2]{Hutchings02}, $\mathcal{M}(p,q)$ is a finite set of elements. By fixing an orientation for the stable manifolds of all critical points, each trajectory $\gamma \in \mathcal{M}(p,q)$ naturally acquires a sign, denoted $\operatorname{sign}(\gamma) \in \{-1, +1\}$.

\begin{definition}
\label{def:Thom--Smale-complex}
For each $k \in \{0, \dots, n\}$, the {$k$-th Thom--Smale vector space} $C^k_{TS}(V, \nabla)$ is defined as the finite-dimensional direct sum of the fibers of $\mathcal{E}$ over the critical points of index $k$:
\begin{equation}
    C^k_{TS}(V, \nabla) \doteq \bigoplus_{p \in \operatorname{Crit}_k(V)} \mathcal{E}_p.
\end{equation}
\end{definition}

The differential structure on these spaces is intrinsically dynamical and relies on the parallel transport along the flow lines connecting critical points of adjacent indices.

\begin{definition}
\label{def:Thom--Smale-differential}
The {Thom--Smale differential} $d_{TS}^k \colon C^k_{TS}(V, \nabla) \to C^{k+1}_{TS}(V, \nabla)$ is the linear map defined on any element $u_p \in \mathcal{E}_p$ by:
\begin{equation}
    d_{TS}^k(u_p) \doteq \sum_{q \in \operatorname{Crit}_{k+1}(V)} \sum_{\gamma \in \mathcal{M}(p,q)} \operatorname{sign}(\gamma) \mathbf{P}_\gamma(u_p),
\end{equation}
where $\mathbf{P}_\gamma \colon \mathcal{E}_p \to \mathcal{E}_q$ denotes the parallel transport map with respect to the flat connection $\nabla$ along the trajectory $\gamma$.
\end{definition}

The flatness of $\nabla$ guarantees that the parallel transport along a trajectory $\gamma$ is unambiguously defined and invariant under local deformations of the path. Eventually, the following result turns the sequence of spaces $C^\bullet_{TS}(V, \nabla)$ into a finite-dimensional cochain complex with cochain map $d_{TS}$. 

\begin{prop}
\label{prop:Thom--Smale-nilpotency}
The Thom--Smale differential is nilpotent, \textit{i.e.},
\begin{equation}
    d_{TS}^{k+1} \circ d_{TS}^k = 0, \quad \forall k \in \{0, \dots, n-1\}.
\end{equation}
\end{prop}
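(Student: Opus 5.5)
Two routes are available: the classical Morse-theoretic gluing/compactness argument, or a transfer argument using the resonant complex. I would follow the gluing route and use the transfer route as a cross-check.

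\textbf{Gluing route.} Fix $p \in \operatorname{Crit}_k(V)$ and $r \in \operatorname{Crit}_{k+2}(V)$. Unwinding Definition \ref{def:Thom--Smale-differential}, the $\mathcal{E}_r$-component of $d_{TS}^{k+1} d_{TS}^k(u_p)$ is
\begin{equation*}
\sum_{q \in \operatorname{Crit}_{k+1}(V)} \sum_{(\gamma_1,\gamma_2) \in \mathcal{M}(p,q)\times\mathcal{M}(q,r)} \operatorname{sign}(\gamma_1)\operatorname{sign}(\gamma_2)\, \mathbf{P}_{\gamma_2}\mathbf{P}_{\gamma_1}(u_p).
\end{equation*}
The index set is the set of broken trajectories from $p$ to $r$ passing once through an intermediate critical point.

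\textbf{Step 1 (compactification).} By transversality, $(W^u(p)\cap W^s(r))/\mathbb{R}$ is a one-dimensional manifold. I would show that it admits a compactification to a compact one-manifold with boundary whose boundary is exactly the set of once-broken trajectories above. This has two parts: compactness (limits of flow lines break only at critical points of intermediate index, by the dimension count) and gluing (each broken pair is the end of a unique end of the moduli space).

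This step needs care because $V$ may have closed orbits. I would argue that, by the Morse--Smale transversality and the no-cycle condition, a sequence of trajectories from $p$ to $r$ cannot accumulate on a closed orbit $\Lambda$. Passing near $\Lambda$ would force $W^u(p)\cap W^s(\Lambda)\neq\emptyset$ and $W^u(\Lambda)\cap W^s(r)\neq\emptyset$, and the index/dimension count for transverse intersections then rules this out for an index gap of $2$. Only the critical point case is expected to arise.

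\textbf{Step 2 (signs).} With the orientation conventions for stable manifolds, the two broken ends of each compact component carry opposite signs $\operatorname{sign}(\gamma_1)\operatorname{sign}(\gamma_2)$. This is the standard boundary orientation argument of \cite{Hutchings02}.

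\textbf{Step 3 (parallel transport).} Along each compact component, the broken ends are homotopic as paths from $p$ to $r$ through the family of glued trajectories. By flatness of $\nabla$, the two ends therefore give the same operator $\mathbf{P}_{\gamma_2}\mathbf{P}_{\gamma_1}\colon\mathcal{E}_p\to\mathcal{E}_r$.

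\textbf{Conclusion.} Combining Steps 2 and 3, the contributions cancel in pairs, so $d_{TS}^{k+1}\circ d_{TS}^k=0$.

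\textbf{Transfer route (cross-check).} Theorem \ref{thm:spectral_realization_cohomology} and Remark \ref{rem:preferred-basis-PR} give the basis $U_{a,j}$ of $C^\bullet_{V,\nabla}(0)$. Following \cite{DangRiviere17Topology}, one expresses $d^\nabla U_{p,j}$ in this basis and recovers exactly the signed sum of parallel transports defining $d_{TS}$, via the Stokes-type boundary formula for the currents $[\overline{W^u(p)}]$. Then $d_{TS}$ is conjugate to the restriction of $d^\nabla$ to $C^\bullet_{V,\nabla}(0)$, and nilpotency follows from $(d^\nabla)^2=0$ as in Proposition \ref{prop:resonant_complex}. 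This avoids explicit gluing but requires the boundary computation of \cite{DangRiviere17Topology}.

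\textbf{Main obstacle.} The main obstacle is the compactness and gluing statement of Step 1, especially excluding breaking at closed orbits. If the dimension argument given there is insufficient, I would rely on the transfer route instead.
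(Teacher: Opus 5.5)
You take the same two routes as the paper's citations: gluing as in \cite[Lemma 2.2]{Hutchings02}, and transfer as in \cite[Theorem 2.6]{Dang21}. The problem is the justification in Step 1 of the gluing route: the dimension count does \emph{not} exclude breaking at closed orbits. Write $\dim W^s(\Lambda)=1+s_\Lambda$ and $\dim W^u(\Lambda)=1+u_\Lambda$ with $s_\Lambda+u_\Lambda=n-1$. For $p\in\operatorname{Crit}_k(V)$ and $r\in\operatorname{Crit}_{k+2}(V)$, transversality gives $\dim\bigl(W^u(p)\cap W^s(\Lambda)\bigr)=1+s_\Lambda-k$ and $\dim\bigl(W^u(\Lambda)\cap W^s(r)\bigr)=k+2-s_\Lambda$. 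Both are $\geq 1$ exactly when $s_\Lambda\in\{k,k+1\}$, and then $\dim\mathcal{M}(p,\Lambda)+\dim\mathcal{M}(\Lambda,r)=1$. Nothing in the Morse--Smale axioms forbids this.

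Such configurations are not isolated ends either. By the inclination ($\lambda$-)lemma, for every large $m$, $\mathcal{M}(p,r)$ contains a one-dimensional family of trajectories that wind $m$ times around $\Lambda$. These families accumulate on the one-parameter family of configurations broken at $\Lambda$. So the closure of $\mathcal{M}(p,r)$ is not a compact one-manifold whose boundary is the set of once-broken trajectories at critical points, and the pairing of Steps 2--3 cannot be run as stated.

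The same mechanism with $s_\Lambda=k$ and $q\in\operatorname{Crit}_{k+1}(V)$ makes $\mathcal{M}(p,q)$ itself infinite, with parallel transports involving $\rho(\Lambda)^m$. Making sense of $d_{TS}$, let alone $d_{TS}^2$, then requires grouping contributions by winding number and resumming series such as $\sum_m(\Delta(\Lambda)\rho(\Lambda))^m$; this is where non-alignedness would have to enter. As written, your gluing argument is valid only when no closed orbit $\Lambda$ satisfies $W^u(p)\cap W^s(\Lambda)\neq\emptyset\neq W^u(\Lambda)\cap W^s(r)$. This holds for gradient-like flows, which is the setting of \cite{Hutchings02}.

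You should therefore make the transfer route the actual proof; it is the content of the \cite[Theorem 2.6]{Dang21} citation. Given Proposition \ref{prop:differential_action_pref_base}, Theorem \ref{thm:iso-complex} yields $\Phi_{k+1}\circ d^k_{TS}=d^\nabla_k\circ\Phi_k$ without ever using $d_{TS}^2=0$, so there is no circularity. Then $d_{TS}^{k+1}d_{TS}^k=\Phi_{k+2}^{-1}\,d^\nabla_{k+1}d^\nabla_k\,\Phi_k=0$ by Proposition \ref{prop:resonant_complex}. This relocates the difficulty rather than removing it. As far as I recall, the coefficient formula of Proposition \ref{prop:differential_action_pref_base} is established in \cite{Dang21} for Morse--Smale gradient flows. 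So when closed orbits connect critical points of adjacent index, that input carries the same caveat as above.
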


\begin{proof}
The interested reader can find a formal proof in \cite[Theorem 2.6]{Dang21} or \cite[Lemma 2.2]{Hutchings02}.    
\end{proof}

\begin{rem}
\label{rem:basis-dimension-link}
To better understand the structure of the spaces $C^k_{TS}(V, \nabla)$, it is useful to construct a distinguished basis for them. Let $x \in M$ be a generic base point and let $\{u_{x,1}, \dots, u_{x,N}\}$ be a reference basis for the fibre $\mathcal{E}_{x}$, where $N = \operatorname{rk}(\mathcal{E})$. Then, for each critical point $p \in \operatorname{Crit}_k(V)$, we generate a basis for the local fiber $\mathcal{E}_p$ by parallel transporting the reference basis along a chosen path from $x$ to $p$. The union of these local bases over all critical points in $\operatorname{Crit}_k(V)$ provides a distinguished basis for the whole space:
\begin{equation*}
    \mathcal{B}_k = \big\{ u_{p, j} \in \mathcal{E}_p \mid p \in \operatorname{Crit}_k(V), \, j \in \{1, \dots, N\} \big\}.
\end{equation*}
This explicit construction makes it clear that the dimension of $C^k_{TS}(V, \nabla)$ is strictly given by the product of the number of critical points and the rank of the vector bundle. By setting $c_k(V) \doteq \# \{\operatorname{Crit}_k(V)\}$, we have
\begin{equation*}
    \dim(C^k_{TS}(V, \nabla)) = c_k(V) \cdot N.
\end{equation*}
This establishes a direct connection with the Pollicott--Ruelle resonant states; indeed, under the assumption that the vector field is non-aligned, Proposition \ref{prop:dimension-of-Ck} ensures that $\dim(C^k_{V,\nabla}(0))$ shares this exact value. Consequently, for each degree $k$, there exists an isomorphism of complex vector spaces:
\begin{equation*}
    C^k_{TS}(V, \nabla) \cong C^k_{V,\nabla}(0).
\end{equation*}
However, it is essential to emphasise that an isomorphism between spaces of the same degree does not, by itself, constitute an isomorphism of the underlying cochain complexes. Proving that the differentials $d_{TS}$ and $d^\nabla$ rigorously intertwine under this identification will be the subject of Subsection \ref{subsec:isomorphism_torsion}.
\end{rem}

\begin{rem}
By considering $\mathcal{B}_k$ to be the base constructed in Remark \ref{rem:basis-dimension-link} for all $k \in \{0, \dots, n\}$, it is useful for future purposes to make explicit the action of the Thom--Smale differential on its elements:
\begin{equation}
\label{eq: Thom--Smale-differential}
    d_{TS}^k(u_{p, j}) = \sum_{q \in \text{Crit}_{k+1}(V)} \sum_{i=1}^N \left( \sum_{\gamma \in \mathcal{M}(p,q)} \text{sign}(\gamma) [\mathbf{P}_\gamma]_{ij} \right) u_{q, i},
\end{equation}
where $\mathcal{M}(p, q)$ is as per Equation \eqref{eq:moduli-space} and  $[\mathbf{P}_\gamma]_{ij}$ is the $(i,j)$-th entry of the matrix associated with the parallel transport operator $\mathbf{P}_\gamma: \mathcal{E}_p \to \mathcal{E}_q$, expressed with respect to the chosen bases $\mathcal{B}_k =\{u_{p, j}\}_{p \in \text{Crit}_k(V)}^{j=1, \dots, N}$ and $\mathcal{B}_{k+1} =\{u_{q, j}\}_{q \in \text{Crit}_{k+1}(V)}^{i=1, \dots, N}$.    
\end{rem}


\begin{rem}
\label{rem:thom_smale_torsion_definition}
Having established that $(C^\bullet_{TS}(V, \nabla), d_{TS})$ is a finite-dimensional cochain complex, the algebraic machinery recalled in Appendix \ref{app:torsion_isomorphism} applies directly. Given a Hermitian inner product on the fibers $\mathcal{E}_p$, the spaces $C^k_{TS}(V, \nabla)$ inherit a natural metric structure, allowing one to compute the numerical torsion of the Thom–Smale complex, which we denote by $\tau(C^\bullet_{TS})$.
\end{rem}

\subsection{Isomorphism of the Complexes and Equality of Torsions}
\label{subsec:isomorphism_torsion}

Our goal is now to prove that the numerical torsion of the Pollicott--Ruelle resonant complex exactly coincides with the torsion of the Thom--Smale complex. To achieve this, we will establish that $(C^\bullet_{V,\nabla}(0), d^\nabla)$ and $(C^\bullet_{TS}(V, \nabla), d_{TS})$ are isometrically isomorphic, as recalled in Definition \ref{def:cochain_isomorphism}. 


The first result we must present is the explicit action of the twisted de Rham differential on the distinguished basis of $C^\bullet_{V,\nabla}(0)$ presented in Remark \ref{rem:preferred-basis-PR}. The interested reader can find a proof of the following result in \cite[Theorem 2.6]{Dang21}.

\begin{prop}
\label{prop:differential_action_pref_base}
Let $\mathscr{B}_k = \big\{ U_{a, j} \mid a \in \operatorname{Crit}_k(V), \, j=1, \dots, N \big\}$ be the preferred basis of $C^k_{V, \nabla}(0)$ introduced in Remark \ref{rem:preferred-basis-PR}. The twisted de Rham differential $d^\nabla_k \colon C^k_{V, \nabla}(0) \to C^{k+1}_{V, \nabla}(0)$ acts on these basis elements via the formula:
\begin{equation}
\label{eq:differential_action_pref_base}
    d^\nabla_k (U_{a, j}) = \sum_{b \in \operatorname{Crit}_{k+1}(V)} \sum_{i=1}^N \left( \sum_{\gamma \in \mathcal{M}(a,b)} \operatorname{sign}(\gamma) \,[\mathbf{P}_\gamma]_{ij} \right) U_{b, i},
\end{equation}
where $\mathcal{M}(a,b)$ is the moduli space of flow lines connecting $a$ to $b$, $\operatorname{sign}(\gamma)$ is the orientation sign of the trajectory $\gamma$, and $[\mathbf{P}_\gamma]_{ij}$ denotes the $(i,j)$-th entry of the parallel transport matrix $\mathbf{P}_\gamma \colon \mathcal{E}_a \to \mathcal{E}_b$ along $\gamma$, expressed with respect to the chosen local $\nabla$-parallel frames.
\end{prop}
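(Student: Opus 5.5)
The plan is to compute $d^\nabla U_{a,j}$ as a current and identify it with a combination of the basis currents $U_{b,i}$. Since $C^\bullet_{V,\nabla}(0)$ is a subcomplex (Proposition \ref{prop:resonant_complex}) and $\mathscr{B}_{k+1}$ is a basis, we already know
\begin{equation*}
d^\nabla U_{a,j}=\sum_{b,i} m_{ab}^{ij}\,U_{b,i}
\end{equation*}
for some coefficients $m_{ab}^{ij}$. The task is therefore to identify these coefficients, which we do by pairing with suitable test forms.

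\textbf{Step 1: the dual basis.} We use the dual basis of co-resonant states for the adjoint flow, as in Dang--Rivière. Each $U_{b,i}$ is supported on $\overline{W^u(b)}$ and is locally $\delta$ along $W^s(b)$. Dually, there are $(n-k-1)$-currents $S_{b,i}$, supported on $\overline{W^s(b)}$ and valued in $\mathcal{E}^*$, satisfying
\begin{equation*}
\langle U_{b,i},S_{b',i'}\rangle=\delta_{bb'}\delta_{ii'}.
\end{equation*}
This pairing is well defined because $W^u(b)$ and $W^s(b')$ meet transversally only at $b$ when the two points have the same index. It follows that
\begin{equation*}
m_{ab}^{ij}=\langle d^\nabla U_{a,j},S_{b,i}\rangle=\pm\langle U_{a,j},d^{\nabla^*}S_{b,i}\rangle .
\end{equation*}

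\textbf{Step 2: Stokes' formula for the unstable chain.} The central computation is to realise $U_{a,j}$ as integration over the oriented submanifold $W^u(a)$, twisted by the parallel section obtained by transporting $e^a_j$ along the flow. Its boundary is then computed with Stokes' theorem. The key input is the compactification of $\overline{W^u(a)}$ as a manifold with corners, following Laudenbach (or Harvey--Lawson in the gradient-like case). Its codimension-one boundary strata are the broken trajectories
\begin{equation*}
\mathcal{M}(a,b)\times W^u(b), \qquad \operatorname{ind} b=k+1.
\end{equation*}
Stokes' theorem then expresses $d^\nabla$ of the current as a sum over these strata. On the stratum through $\gamma\in\mathcal{M}(a,b)$, the twisting section arrives at $b$ as $\mathbf{P}_\gamma e^a_j$. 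By flatness of $\nabla$ this agrees with the local parallel frame at $b$ up to the matrix $[\mathbf{P}_\gamma]_{ij}$. Each stratum also carries the orientation sign $\operatorname{sign}(\gamma)$ induced by the chosen orientations of the stable manifolds. Collecting the terms gives exactly Equation \eqref{eq:differential_action_pref_base}.

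\textbf{Step 3: no other contributions.} Finally we must check that the strata of higher codimension, and the closed orbits lying in $\overline{W^u(a)}$, contribute nothing. The higher-codimension strata give currents of the wrong degree or with vanishing boundary contribution. The closed orbits are handled by the non-aligned hypothesis: with it, Proposition \ref{prop:dimension-of-Ck} shows that no resonant states are supported near the orbits, so the result must lie in the span of the $U_{b,i}$. The equality of coefficients can then be checked locally near each $b$, using the explicit form $\delta_0(x)\,dx\otimes e^b_i$.

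\textbf{Main obstacle.} I expect the hardest step to be Step 2. The compactification of unstable manifolds with corners and the sign bookkeeping are delicate when closed orbits are present, since orbits can make $\overline{W^u(a)}$ non-compactifiable in the naive sense. To handle this, I would first use the $C^\infty$-linearisability to work in normal-form charts, and then follow the argument of \cite[Theorem 2.6]{Dang21}.
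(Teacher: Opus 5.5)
The paper gives no argument of its own here: it only cites \cite[Theorem 2.6]{Dang21}. Your Steps 1--2 (co-resonant currents $S_{b,i}$ on $\overline{W^s(b)}$, $U_{a,j}$ as the twisted integration current over $W^u(a)$, Stokes on a compactification whose codimension-one strata are $\mathcal{M}(a,b)\times W^u(b)$) are the standard content of that argument, so the architecture agrees. The genuine gap is Step~3, which is exactly where closed orbits enter. The cited work concerns gradient flows of Morse functions (Witten Laplacians), which have no closed orbits. Non-alignment, via Proposition~\ref{prop:dimension-of-Ck}, only says that no resonant state at $0$ is attached to a closed orbit. Together with Proposition~\ref{prop:resonant_complex} it gives $d^\nabla U_{a,j}\in\operatorname{span}\{U_{b,i}\}$, which you already had before Step~1. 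It says nothing about the coefficients, and closed orbits do change them.

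Concretely, let $\Lambda$ be a closed orbit with $\dim W^s(\Lambda)=k+1$, hence $\dim W^u(\Lambda)=n-k$.
\begin{itemize}
\item Morse--Smale transversality robustly allows both $W^u(a)\cap W^s(\Lambda)$ and $W^u(\Lambda)\cap W^s(b)$ to be non-empty and one-dimensional.
\item By the $\lambda$-lemma for the Poincar\'e map of $\Lambda$, $W^u(a)$ then contains, for every large $m$, a sheet winding $m$ times around $\Lambda$ and $C^1$-close to $W^u(\Lambda)$. Each such sheet meets $W^s(b)$ transversally, so $\mathcal{M}(a,b)$ is infinite.
\item The $m$-th line contributes, up to conventions, $\pm\mathbf{P}_{\gamma_2}(\Delta(\Lambda)\rho(\Lambda))^m\mathbf{P}_{\gamma_1}$. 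Since $\rho$ is unitary, these terms do not tend to zero, so the sum in the statement diverges.
\item The same non-convergence occurs when pairing the would-be current $[W^u(a)]\otimes e^a_j$ with a test form near $W^u(\Lambda)$. So the integration current of Step~2 does not exist there, and $U_{a,j}$ is only its resolvent-regularised version.
\item Consequently $\langle d^\nabla U_{a,j},S_{b,i}\rangle$ is a regularised sum, producing factors such as $(\mathbb{I}-\Delta(\Lambda)\rho(\Lambda))^{-1}$. This is where non-alignment genuinely enters.
\end{itemize}
No local check near $b$ can turn this into the finite sum in the statement. The obstacle is therefore not sign bookkeeping or corners, as your last paragraph suggests: the answer itself is different.

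What your argument, and the citation, actually proves is the formula whenever no closed orbit sits between $a$ and $b$ in this sense, e.g.\ for gradient-like flows. There, compactness of broken trajectories makes $\mathcal{M}(a,b)$ finite and Steps 1--2 go through. Beyond that you need either that extra hypothesis or a restatement with regularised counts. The same caveat applies to the paper's claim that $\mathcal{M}(p,q)$ is finite, which it supports only by \cite{Hutchings02}, a treatment of gradient flows.
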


Notice that the matrix elements $[\mathbf{P}_\gamma]_{ij}$ appearing in Equation \eqref{eq:differential_action_pref_base} are identical to those defining the Thom--Smale differential in Definition \ref{def:Thom--Smale-differential}. This observation motivates the construction of a natural mapping between the two spaces. To make this mapping rigorous, we consider a \emph{coherent} choice of bases. Let $\mathcal{B}_k = \{u_{a, j}\}$ denote the canonical basis of $C^k_{TS}(V, \nabla)$, where $u_{a, j}$ is a basis vector of the fibre $\mathcal{E}_a$ over $a \in \operatorname{Crit}_k(V)$. We choose the local $\nabla$-parallel sections $c_{a, j}$ used to construct the resonant basis $\mathscr{B}_k = \{U_{a, j}\}$ such that their evaluation at the critical point precisely yields the corresponding fibre basis vector: $c_{a, j}(a) = u_{a, j} \in \mathcal{E}_a$. 

\begin{definition}
\label{def:complex_isomorphism}
We define $\Phi_k \colon C^k_{TS}(V, \nabla) \to C^k_{V, \nabla}(0)$ as the unique linear map satisfying the assignment on the coherent basis elements:
\begin{equation}
    \Phi_k(u_{a, i}) \coloneqq U_{a, i}, \qquad \forall a \in \operatorname{Crit}_k(V), \, \forall i \in \{1, \dots, N\}.
\end{equation}
\end{definition}

By construction, $\Phi = \{\Phi_k\}_k$ is an isomorphism of graded vector spaces, while the following theorem ensures it is also a chain map.

\begin{theorem}
\label{thm:iso-complex}
The cochain complexes $(C^\bullet_{TS}(V, \nabla), d_{TS})$ and $(C^\bullet_{V,\nabla}(0), d^\nabla)$ are isomorphic as per Definition \ref{def:cochain_isomorphism}. Specifically, the family of maps $\Phi_k$ satisfies:
\begin{equation}
    \Phi_{k+1} \circ d_{TS}^k = d^\nabla_k \circ \Phi_k, \qquad \forall k \in \{0, \dots, n-1\}.
\end{equation}
\end{theorem}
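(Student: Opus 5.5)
Since $\Phi_{k+1}\circ d^k_{TS}$ and $d^\nabla_k\circ\Phi_k$ are both linear maps $C^k_{TS}(V,\nabla)\to C^{k+1}_{V,\nabla}(0)$, the plan is to check that they agree on the coherent basis $\mathcal{B}_k=\{u_{a,j}\}$. By Definition \ref{def:complex_isomorphism}, $\Phi$ is already a degreewise isomorphism of graded vector spaces: the resonant side has the right dimension by Proposition \ref{prop:dimension-of-Ck} under the non-aligned hypothesis. So equality on the basis is all that is needed to conclude that $\Phi$ is an isomorphism of cochain complexes in the sense of Definition \ref{def:cochain_isomorphism}.

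First fix $a\in\operatorname{Crit}_k(V)$ and $j\in\{1,\dots,N\}$. Expanding $d^k_{TS}(u_{a,j})$ via Equation \eqref{eq: Thom--Smale-differential} and applying $\Phi_{k+1}$ termwise gives
\begin{equation*}
\Phi_{k+1}\big(d^k_{TS}u_{a,j}\big)=\sum_{b\in\operatorname{Crit}_{k+1}(V)}\sum_{i=1}^N\Big(\sum_{\gamma\in\mathcal{M}(a,b)}\operatorname{sign}(\gamma)[\mathbf{P}_\gamma]_{ij}\Big)U_{b,i}.
\end{equation*}
On the other side, $d^\nabla_k\Phi_k(u_{a,j})=d^\nabla_k U_{a,j}$, which Proposition \ref{prop:differential_action_pref_base} expresses by exactly the same formula. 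Comparing coefficients of $U_{b,i}$ then finishes the proof, as long as the coefficients really coincide.

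That last point is the only real obstacle. One has to make sure that the matrices $[\mathbf{P}_\gamma]_{ij}$ and the signs are computed with respect to the same data in both formulas. For the matrices, I will use the coherence condition $c_{a,j}(a)=u_{a,j}$. Since the local frames are $\nabla$-parallel on a simply connected neighbourhood, the matrix of parallel transport from the frame at $a$ to the frame at $b$ along $\gamma$ equals the matrix of $\mathbf{P}_\gamma\colon\mathcal{E}_a\to\mathcal{E}_b$ in the fibre bases $\{u_{a,j}\}$, $\{u_{b,i}\}$. Flatness ensures this does not depend on where along $\gamma$ the comparison is made. For the signs, I will fix the orientations of the stable manifolds used to define $\operatorname{sign}(\gamma)$ in Definition \ref{def:Thom--Smale-differential} to be the same ones induced by $dx_1\wedge\dots\wedge dx_k$ in the local normal form of $U_{a,j}$ (Remark \ref{rem:preferred-basis-PR}). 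If the two conventions happened to differ, I would rescale the basis elements by $\pm1$, which changes neither the basis property nor the chain-map identity. With these conventions matched, both compositions agree on every basis vector, and linearity gives the identity for all $k\in\{0,\dots,n-1\}$.
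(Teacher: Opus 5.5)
Your proposal is correct and is essentially the paper's proof: both arguments check the identity on the coherent basis $u_{a,j}$, expand $d_{TS}^k$ termwise, and use Proposition~\ref{prop:differential_action_pref_base} to match the coefficients of $U_{b,i}$. Your extra care in matching the frames via $c_{a,j}(a)=u_{a,j}$ and in orienting $W^s(a)$ by $dx_1\wedge\dots\wedge dx_k$ addresses a point the paper leaves implicit; note, however, that your $\pm1$ fallback does change $\Phi$ (it amounts to re-choosing the orientation of the $x$-coordinates), and it only absorbs sign discrepancies of the form $\epsilon_a\epsilon_b$, which is the form differing orientation conventions produce.
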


\begin{proof}
Let us verify the commutation relation on an arbitrary basis element $u_{a, j} \in \mathcal{B}_k$. Applying the Thom--Smale differential followed by the isomorphism $\Phi_{k+1}$, we obtain:
\begin{align*}
    \Phi_{k+1} \big( d_{TS}^k(u_{a, j}) \big) 
    &= \Phi_{k+1} \left( \sum_{b \in \operatorname{Crit}_{k+1}(V)} \sum_{i=1}^N \left( \sum_{\gamma \in \mathcal{M}(a,b)} \operatorname{sign}(\gamma) \, [\mathbf{P}_\gamma]_{ij} \right) u_{b, i} \right) \\
    &= \sum_{b \in \operatorname{Crit}_{k+1}(V)} \sum_{i=1}^N \left( \sum_{\gamma \in \mathcal{M}(a,b)} \operatorname{sign}(\gamma) \,[\mathbf{P}_\gamma]_{ij} \right) U_{b, i}.
\end{align*}
Conversely, applying the map $\Phi_k$ followed by the twisted de Rham differential yields:
\begin{equation*}
    d_k^\nabla \big( \Phi_k(u_{a, j}) \big) = d_k^\nabla (U_{a, j}).
\end{equation*}
By Proposition \ref{prop:differential_action_pref_base}, this expression exactly matches the right-hand side of the previous equation. Since the maps agree on a generic element of the basis, the theorem follows.
\end{proof}

The cochain isomorphism $\Phi$ enables a direct comparison between the numerical torsions of the two complexes. By equipping them with auxiliary inner products adapted to their distinguished bases, their numerical torsions coincide exactly.

\begin{theorem}
\label{thm:equality-of-torsion}
Let $C^\bullet_{TS}(V, \nabla)$ and $C^\bullet_{V, \nabla}(0)$ be equipped with the Hermitian inner products that render their bases, $\mathcal{B}_k$ and $\mathscr{B}_k$, orthonormal. Let $\tau(C^\bullet_{V, \nabla})$ and $\tau(C^\bullet_{TS})$ denote the numerical torsions of the Pollicott--Ruelle resonant complex and the Thom--Smale complex computed with respect to these metrics. It holds that:
\begin{equation}
    \tau(C^\bullet_{V, \nabla}) = \tau(C^\bullet_{TS}).
\end{equation}
\end{theorem}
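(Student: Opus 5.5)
The plan is to reduce the statement to the general invariance of numerical torsion under isometric cochain isomorphisms, which Appendix \ref{app:torsion_isomorphism} establishes (Definition \ref{def:cochain_isomorphism} and the invariance result there). Two things need checking: that the map $\Phi=\{\Phi_k\}_k$ of Definition \ref{def:complex_isomorphism} is a cochain isomorphism, and that it is isometric for the stated Hermitian structures. Once both hold, the torsion of $C^\bullet_{V,\nabla}(0)$ computed with the metrics making $\mathscr{B}_k$ orthonormal equals the torsion of $C^\bullet_{TS}(V,\nabla)$ computed with the metrics making $\mathcal{B}_k$ orthonormal.

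\emph{Step one.} Theorem \ref{thm:iso-complex} already gives the intertwining relation $\Phi_{k+1}\circ d^k_{TS}=d^\nabla_k\circ\Phi_k$. Each $\Phi_k$ is bijective, since it sends the basis $\mathcal{B}_k$ to the basis $\mathscr{B}_k$, and the two bases have the same cardinality $N\,c_k(V)$ by Proposition \ref{prop:dimension-of-Ck} under the standing non-aligned assumption. So $\Phi$ is a cochain isomorphism.

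\emph{Step two.} By construction $\Phi_k$ maps an orthonormal basis onto an orthonormal basis. It is therefore unitary:
\[
\langle \Phi_k u,\Phi_k v\rangle = \langle u,v\rangle \quad\text{for all } u,v\in C^k_{TS}(V,\nabla),
\]
which follows by sesquilinear expansion in the bases. Hence $\Phi$ is an isometric isomorphism of complexes.

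\emph{Step three.} If the appendix invariance result is phrased abstractly, I would also spell out why it applies, to be safe. The numerical torsion is built from the Hodge-type splitting $C^k=\ker d\oplus(\ker d)^\perp$, the restricted maps $d|_{(\ker d)^\perp}$, and the induced metrics on cohomology. Equivalently, it is the alternating product of $\det{}'(d_k^* d_k)^{\pm1/2}$ together with the norms on $\det H^\bullet$.

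A unitary chain isomorphism does three things. It conjugates $d$ to $d$. It conjugates $d^*$ to $d^*$, because adjoints are preserved under unitary conjugation. It therefore conjugates the Laplacians $d^*d+dd^*$, so nonzero spectra and kernels correspond. On cohomology it induces the isometry $[\Phi]\colon H^\bullet(C_{TS})\to H^\bullet(C_{V,\nabla}(0))$, since harmonic representatives map to harmonic representatives. All determinants entering the torsion then coincide, which gives the equality.

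The main obstacle is a matter of interpretation rather than calculation. Since the numerical torsion is a number attached to the determinant-line metric, one must make sure that the cohomology identifications on both sides are the ones induced by $\Phi$. Otherwise the equality would only hold up to the change of basis relating two different identifications of $H^\bullet$. I would handle this by computing both torsions with cohomology bases that correspond under $[\Phi]$, for instance orthonormal harmonic bases, so that the statement is one of equality of the scalar torsions. The coherent choice $c_{a,j}(a)=u_{a,j}$ is what ensures that the matrix entries $[\mathbf P_\gamma]_{ij}$ agree on both sides, and thus that Theorem \ref{thm:iso-complex} applies verbatim.
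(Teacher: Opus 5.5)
Your proposal is correct and follows the paper's own argument. You use the intertwining from Theorem~\ref{thm:iso-complex}, observe that $\Phi_k$ is isometric because it sends the orthonormal basis $\mathcal{B}_k$ to the orthonormal basis $\mathscr{B}_k$, and then invoke Theorem~\ref{thm:isomorphic_complexes_torsion}, which conjugates adjoints and Laplacians exactly as you describe. Your extra care about matching the cohomology identifications is harmless but not needed: the paper defines the numerical torsion as $\prod_k \det'(\Delta_k)^{(-1)^{k+1}k/2}$, which depends only on the nonzero Laplacian spectra.
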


\begin{proof}
By Definition \ref{def:complex_isomorphism}, the fixed cochain complexes isomorphism $\Phi_k$ acts precisely by mapping the basis elements of $\mathcal{B}_k$ to the corresponding elements of $\mathscr{B}_k$. Since it maps an orthonormal basis to an orthonormal basis, $\Phi_k$ is an isometry by construction. We can thus directly apply Theorem \ref{thm:isomorphic_complexes_torsion}, which ensures that an isometric isomorphism between two finite-dimensional cochain complexes implies the strict equality of their respective numerical torsions.

\end{proof}
\section{The Ruelle Zeta Function and the Analytic Torsion as Regularised Determinants}
\label{sec: Ruelle-zeta-and-Torsion-as-determinants}

This section presents the primary mathematical objects of Fried’s conjecture: the Ruelle zeta function and the Ray--Singer analytic torsion. By expressing both invariants as alternating products of regularised determinants, we provide a unified spectral framework for the analysis of their relation. The investigation begins by recalling the main properties of the linearised Poincaré map, which serves as a useful tool for the spectral realisation of the Ruelle zeta function within the framework of Pollicott--Ruelle resonances. On the other hand, we introduce the Ray--Singer torsion, highlighting its formal similarity to the dynamical zeta function.

\subsection{Linearised Poincaré Map and Orbit Stability}
\label{subsec: linearized-poincare}

To analyse the dynamics in the proximity of a closed orbit $\Lambda$ of minimal period $T_\Lambda > 0$, we employ the Poincaré first-return map. This construction reduces the study of the continuous $n$-dimensional flow to a dynamical system on an $(n-1)$-dimensional transverse section, providing a practical tool to characterise the geometric properties of the orbit.

\begin{definition}
\label{transverse-sec-app}
A local transverse section $\Sigma$ at $p \in \Lambda$ is an $(n-1)$-dimensional embedded submanifold of $M$ such that $p \in \Sigma$ and $T_pM = T_p\Sigma \oplus \mathbb{R}V_p$.
\end{definition}

The existence of such a section is guaranteed by the Tubular Flow Theorem \cite[Theorem 1.1]{Palis82}; specifically, $\Sigma$ can be realised as the zero-level set of a smooth function $f$ such that $d_pf(V_p) \neq 0$.

\begin{definition}
\label{Poincaré-map-app}
Let $\Sigma$ be a transverse section at $p \in \Lambda$. The \textbf{Poincaré map} $\mathbf{P}: U \to \Sigma$, defined on a neighbourhood $U \subset \Sigma$ of $p$, is given by $\mathbf{P}(q) = \phi^{\tau(q)}(q)$, where $\tau(q) \coloneqq \inf \{t > 0 : \phi^t(q) \in U\}$ is the first return time. 
\end{definition}

By construction, $p$ is a fixed point of $\mathbf{P}$ with $\tau(p) = T_\Lambda$. Moreover, $\mathbf{P}$ is a local diffeomorphism from a neighbourhood of $p$ in $\Sigma$ to its image \cite[Proposition 1.2]{Palis82}. To study the transverse linear stability, we introduce the linearised return map.

\begin{definition}
\label{linearised-Poincaré-app}
Let $\Lambda$ be a closed orbit of minimal period $T_\Lambda > 0$ and $\Sigma$ a transverse section at $p \in \Lambda \cap \Sigma$. The {linearised Poincaré map} $\mathcal{P}_{\Lambda, p}: T_p\Sigma \to T_p\Sigma$ is defined as
\begin{equation}
\mathcal{P}_{\Lambda, p}(v) = \frac{d}{ds}\Big|_{s=0} \mathbf{P}(c(s)), \quad \forall v \in T_p\Sigma,
\end{equation}
where $c:(-\epsilon, \epsilon) \to \Sigma$ is any smooth curve such that $c(0)=p$ and $\dot{c}(0)=v$.
\end{definition}

The following proposition formalises the relation between the linearised Poincaré map and the differential of the flow.

\begin{prop}
\label{prop:Poincare-flow-relation}
Given a closed orbit $\Lambda$ and a transverse section $\Sigma$ at $p \in \Lambda$ such that $T_p\Sigma = T^u_p \oplus T^s_p$ (see Equation \eqref{eq:tangent-bundle-decomposition-orbits}), the linearised Poincaré map satisfies:
\begin{equation}
\label{eq:equality-Poincaré-differential-app}
\mathcal{P}_{\Lambda, p}(v) = d_p\phi^{T_\Lambda}(v), \quad \forall v \in T_p\Sigma,
\end{equation}
where $d_p \phi^{T_\Lambda}: T_pM \to T_pM$ is the differential of the flow evaluated at the period $T_\Lambda$.
\end{prop}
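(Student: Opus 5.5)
The statement to prove is Proposition \ref{prop:Poincare-flow-relation}: on the splitting $T_p\Sigma = T^u_p\oplus T^s_p$, the linearised Poincaré map coincides with $d_p\phi^{T_\Lambda}$. The plan is to differentiate the defining identity $\mathbf{P}(q) = \phi^{\tau(q)}(q)$ at $q=p$ with the chain rule. The flow correction is a multiple of $V_p$, and it vanishes because of the choice of section.

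\textbf{Step 1.} The return time $\tau$ is smooth near $p$. Write $\Sigma = f^{-1}(0)$ with $d_pf(V_p)\neq 0$, as provided after Definition \ref{transverse-sec-app}. Consider $F(q,t) = f(\phi^t(q))$ on $\Sigma\times\mathbb{R}$ near $(p,T_\Lambda)$. Since $\phi^{T_\Lambda}(p)=p$, we get $\partial_t F(p,T_\Lambda) = d_pf(V_p)\neq 0$. The implicit function theorem then gives a smooth $\tilde\tau$ with $\tilde\tau(p)=T_\Lambda$ and $\phi^{\tilde\tau(q)}(q)\in\Sigma$. For $q$ close to $p$, $\tilde\tau$ agrees with the first return time $\tau$ of Definition \ref{Poincaré-map-app}. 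This holds by continuity of the flow, because the orbit of $p$ meets the section only at time $T_\Lambda$ within $(0,T_\Lambda]$, by minimality of the period.

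\textbf{Step 2.} Take a curve $c$ in $\Sigma$ with $c(0)=p$ and $\dot c(0)=v$. Differentiating $\mathbf{P}(c(s)) = \phi^{\tau(c(s))}(c(s))$ at $s=0$ gives
\begin{equation*}
\mathcal{P}_{\Lambda,p}(v) = d_p\phi^{T_\Lambda}(v) + \big(d_p\tau(v)\big)\, V_p ,
\end{equation*}
using $\partial_t\phi^t(p)|_{t=T_\Lambda} = V(\phi^{T_\Lambda}(p)) = V_p$.

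\textbf{Step 3.} The correction term vanishes; this is the main point. The left-hand side lies in $T_p\Sigma$. By the hyperbolic splitting of Equation \eqref{eq:tangent-bundle-decomposition-orbits}, $T_pM = \mathbb{R}V_p\oplus T^u_p\oplus T^s_p$, and the subbundles $T^u$ and $T^s$ are $d\phi^t$-invariant. Since $\phi^{T_\Lambda}(p)=p$, the map $d_p\phi^{T_\Lambda}$ sends $T^u_p\oplus T^s_p = T_p\Sigma$ into itself. Hence $d_p\phi^{T_\Lambda}(v)\in T_p\Sigma$. Then $(d_p\tau(v))V_p$ is the difference of two vectors in $T_p\Sigma$, so it also lies in $T_p\Sigma$. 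But $V_p\notin T_p\Sigma$ by transversality. Therefore $d_p\tau(v)=0$, and the identity \eqref{eq:equality-Poincaré-differential-app} follows.

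The main obstacle is precisely this invariance argument. It needs the hypothesis $T_p\Sigma = T^u_p\oplus T^s_p$; for a generic transverse section one only gets equality modulo $\mathbb{R}V_p$. A secondary check, done first, is that the implicit-function return time agrees with the infimum in the definition of $\tau$.
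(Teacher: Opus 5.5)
Your proposal is correct and follows the paper's argument: you differentiate $\mathbf{P}(c(s))=\phi^{\tau(c(s))}(c(s))$ by the chain rule, then use the $d_p\phi^{T_\Lambda}$-invariance of $T_p\Sigma=T^u_p\oplus T^s_p$ together with $V_p\notin T_p\Sigma$ to force $d_p\tau(v)=0$. Your Step 1 (smoothness of the return time via the implicit function theorem) makes explicit what the paper leaves implicit by citing Palis--de Melo; one small correction is that "the orbit meets the section only at $T_\Lambda$" needs $\Sigma$ shrunk to a small neighbourhood $U$ of $p$, since minimality of the period alone does not rule out further intersections of $\Lambda$ with a large section.
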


\begin{proof}
Let $c(s)$ be a curve in $\Sigma$ such that $c(0) = p$ and $\dot{c}(0) = v$. Expanding the definition of the Poincaré map $\mathbf{P}(c(s)) = \phi^{\tau(c(s))}(c(s))$ and applying the chain rule, we obtain:
\begin{align*}
    \mathcal{P}_{\Lambda, p}(v) &= \frac{d}{ds}\Bigg|_{s=0} \phi^{\tau(c(s))}(c(s)) \\
    &= d_p\phi^{T_\Lambda}(v) + \left( \frac{d}{ds}\Big|_{s=0} \tau(c(s)) \right) V(\phi^{T_\Lambda}(p)) \\
    &= d_p\phi^{T_\Lambda}(v) + (d_p\tau(v))V_p,
\end{align*}
where $d_p\tau(v)$ is the directional derivative of the return time at $p$. By definition, $\mathcal{P}_{\Lambda, p}(v) \in T_p\Sigma$. Furthermore, since $T_p\Sigma = T^u_p \oplus T^s_p$ is a $d_p\phi^{T_\Lambda}$-invariant splitting, we have $d_p\phi^{T_\Lambda}(v) \in T_p\Sigma$. 
Given that $V_p$ is transverse to $T_p\Sigma$, \textit{i.e.}, $V_p \notin T_p\Sigma$, the identity $\mathcal{P}_{\Lambda, p}(v) - d_p\phi^{T_\Lambda}(v) = (d_p\tau(v))V_p$ implies that the vector $(d_p\tau(v))V_p$ must also lie in $T_p\Sigma$. This is only possible if $d_p\tau(v) = 0$, which yields the sought equality.
\end{proof}

Morse--Smale flows are characterised by the non-degeneracy of their closed orbits, meaning that $1$ is never an eigenvalue of $\mathcal{P}_{\Lambda, p}$. This property is intrinsically linked to the orientability of the unstable bundle.

\begin{theorem}
\label{theorem:orientability-and-sign-app}
Let $\Lambda$ be a closed orbit for $V$. The unstable bundle $T^u_\Lambda$ is orientable if and only if $\det(\mathcal{P}_{\Lambda, p}|_{T^u_p}) > 0$ for all $p \in \Lambda$.
\end{theorem}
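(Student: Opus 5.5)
The plan is to reduce orientability of the unstable bundle $T^u_\Lambda$ over the circle $\Lambda$ to a sign computed from its monodromy. Throughout, $T^u_\Lambda$ is a $d\phi^t$-invariant real vector bundle over $\Lambda \cong \mathbb{R}/T_\Lambda\mathbb{Z}$.

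First, trivialise the pullback. Fix $p\in\Lambda$ and parametrise $\Lambda$ by $t\mapsto \phi^t(p)$, $t\in[0,T_\Lambda]$. Pulling $T^u_\Lambda$ back to $[0,T_\Lambda]$ gives a bundle over a contractible interval, hence trivial. An explicit frame is obtained by transporting a basis $(v_1,\dots,v_u)$ of $T^u_p$ with the flow: $v_i(t)\doteq d_p\phi^t(v_i)$. By invariance of the splitting in Equation \eqref{eq:tangent-bundle-decomposition-orbits}, this frame lies in $T^u_{\phi^t(p)}$. The bundle over the circle is then recovered by gluing the fibre at $t=T_\Lambda$ to the fibre at $t=0$ via the identity of $T^u_p$. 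In the transported frame, this gluing is recorded by the clutching matrix $A$ of $d_p\phi^{T_\Lambda}|_{T^u_p}$. By Proposition \ref{prop:Poincare-flow-relation}, $A$ equals the matrix of $\mathcal{P}_{\Lambda,p}|_{T^u_p}$.

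Next, use the standard classification of real vector bundles over $S^1$: such a bundle is orientable if and only if its clutching map lies in $\mathrm{GL}^+$, i.e.\ $\det A>0$. For the direction ($\Leftarrow$), assume $\det A>0$. Since $\mathrm{GL}^+(u,\mathbb{R})$ is path connected, choose a path $B(t)$ from $\mathbb{I}$ to $A^{-1}$. The modified frame $v(t)B(t)$ then closes up into a global continuous frame, which gives an orientation. For the direction ($\Rightarrow$), assume an orientation exists. The frame $v(t)$ is continuous, so $t\mapsto$ (sign of $v(t)$ relative to the orientation) is constant on $[0,T_\Lambda]$. Comparing $t=0$ with $t=T_\Lambda$, where $v(T_\Lambda)=v(0)A$, forces $\operatorname{sign}\det A = +1$.

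Finally, show the condition is independent of the base point. For another point $p'=\phi^s(p)$ we have the conjugation $\mathcal{P}_{\Lambda,p'}|_{T^u_{p'}} = d\phi^s\,\mathcal{P}_{\Lambda,p}|_{T^u_p}\,(d\phi^s)^{-1}$, so the determinant is the same at every point of $\Lambda$. This turns "for all $p$" into "for some $p$". Note also that $\det A\neq 0$ always holds, and hyperbolicity ensures $1$ is not an eigenvalue; only the sign matters here.

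I expect the main obstacle to be the bookkeeping in the identification of $\mathcal{P}_{\Lambda,p}$ with the monodromy. This relies on the choice of transverse section with $T_p\Sigma=T^u_p\oplus T^s_p$ required by Proposition \ref{prop:Poincare-flow-relation}. For a general section, I would instead pass to $T_pM/\mathbb{R}V_p$, on which the derivative of the Poincaré map descends to the same map. This makes the determinant on the unstable part independent of the choice of $\Sigma$.
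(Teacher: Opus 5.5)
Your proof is correct and follows the same route as the paper: orientability of $T^u_\Lambda$ over the circle $\Lambda$ is equivalent to the flow-transport monodromy $d_p\phi^{T_\Lambda}|_{T^u_p}$ being orientation-preserving, and this map is identified with $\mathcal{P}_{\Lambda,p}|_{T^u_p}$ via Proposition \ref{prop:Poincare-flow-relation}. Your explicit clutching argument in both directions, the conjugation showing the sign is independent of the base point, and the remark on general transverse sections make rigorous what the paper's two-line proof leaves implicit.
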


\begin{proof}
An orientation $\mathcal{O}_p$ on $T^u_p$ is globally consistent if its parallel transport along $\Lambda$ via $d_p\phi^t$ returns to $p$ with the same orientation. Since the transport for a full period $T_\Lambda$ is given by $d_p\phi^{T_\Lambda}|_{T^u_p}$, orientability is equivalent to this map being orientation-preserving, \textit{i.e.}, having a positive determinant. The result follows from Proposition \ref{prop:Poincare-flow-relation}.
\end{proof}

\begin{rem}
A rather straightforward consequence is that the sign of the determinant is related to the twist parameter $\Delta(\Lambda)$ defined in Equation \eqref{eq:twist-operator} by $\operatorname{sign}(\det(\mathcal{P}_{\Lambda, p}|_{T^u_p})) = \Delta(\Lambda)$.
\end{rem}

Two technical propositions concerning the spectral properties of $\mathcal{P}_{\Lambda, p}$ conclude the present section.

\begin{prop}
\label{prop:positivity-det}
For any closed orbit $\Lambda$ of a Morse--Smale flow and for all $j \ge 1$:
\begin{equation}
\label{eq:poincaré-eigenvalues-app}
\det\left(\mathbb{I} - (\mathcal{P}_{\Lambda, p}|_{T^u_p})^{-j}\right) > 0 \quad \text{and} \quad \det\left(\mathbb{I} - (\mathcal{P}_{\Lambda, p}|_{T^s_p})^j\right) > 0.
\end{equation}
\end{prop}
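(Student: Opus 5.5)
The plan is to reduce both inequalities to a statement about the eigenvalues of a real linear map, using the hyperbolicity of the closed orbit. Write $A \doteq \mathcal{P}_{\Lambda,p}|_{T^u_p}$ and $B \doteq \mathcal{P}_{\Lambda,p}|_{T^s_p}$. By Proposition \ref{prop:Poincare-flow-relation}, $\mathcal{P}_{\Lambda,p}=d_p\phi^{T_\Lambda}$ on $T_p\Sigma=T^u_p\oplus T^s_p$, and this splitting is invariant. The hyperbolicity of $\Lambda$ (Appendix \ref{app:Morse--Smale}) means that every eigenvalue $\mu$ of $A$ satisfies $|\mu|>1$, and every eigenvalue $\nu$ of $B$ satisfies $|\nu|<1$. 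It follows that for every $j\ge 1$ the eigenvalues of $A^{-j}$ are the $\mu^{-j}$ and those of $B^j$ are the $\nu^j$. All of these lie strictly inside the open unit disc.

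Next, for a real square matrix $C$, I would factor the determinant over its complex eigenvalues counted with algebraic multiplicity, $\det(\mathbb{I}-C)=\prod_\lambda(1-\lambda)$. Since $C$ is real, the non-real eigenvalues occur in conjugate pairs $\lambda,\bar\lambda$ with equal multiplicities. Each such pair contributes $(1-\lambda)(1-\bar\lambda)=|1-\lambda|^2$, and this is strictly positive because $\lambda\neq 1$. Each real eigenvalue $\lambda$ contributes $1-\lambda$, which is strictly positive because $|\lambda|<1$ forces $\lambda<1$.

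Applying this with $C=A^{-j}$ and with $C=B^j$ gives both inequalities. The empty case (trivial $T^u_p$ or $T^s_p$) has determinant $1$, so it is also covered.

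The only point needing care is to justify that $A$ and $B$ are real linear maps with spectra strictly off the unit circle, uniformly in $p\in\Lambda$. Reality holds because they are differentials of a real flow. For the spectral condition I would invoke the definition of hyperbolic closed orbit via the splitting of Equation \eqref{eq:tangent-bundle-decomposition-orbits}: it gives contraction $\|d\phi^t|_{T^s}\|\le Ce^{-\lambda t}$ and the analogous expansion on $T^u$. Iterating this bound and applying the spectral radius formula yields the eigenvalue bounds. Independence of the base point follows since linearised Poincaré maps at different points of $\Lambda$ are conjugate via $d\phi^t$.
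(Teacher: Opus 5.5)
Your proposal is correct and follows the same route as the paper. Both arguments use the fact that the eigenvalues of $(\mathcal{P}_{\Lambda,p}|_{T^u_p})^{-j}$ and $(\mathcal{P}_{\Lambda,p}|_{T^s_p})^{j}$ lie in the open unit disc, then factor $\det(\mathbb{I}-C)$ into real eigenvalues, each contributing $1-\lambda>0$, and conjugate pairs, each contributing $|1-\lambda|^2>0$. Your version is slightly more careful than the paper's, which simply asserts the spectral bounds: you apply the real/complex split directly to the eigenvalues of the power $C$, and you derive the bounds from the hyperbolic splitting via the spectral radius formula.
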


\begin{proof}
Looking at the second inequality, the eigenvalues $\lambda$ of $\mathcal{P}_{\Lambda}|_{T^s_\Lambda}$ satisfy $|\lambda| < 1$. Real eigenvalues $\lambda^j$ yield factors $(1-\lambda^j) > 0$. Complex eigenvalues appear in conjugate pairs $\mu, \bar{\mu}$ with $|\mu| < 1$; their contribution $(1-\mu)(1-\bar{\mu}) = 1 - 2\operatorname{Re}(\mu) + |\mu|^2$ is strictly positive since $|\operatorname{Re}(\mu)| \le |\mu| < 1$. Since the determinant of this operator can be seen as the product of all its eigenvalues and since the product of these positive terms is positive, the statement holds. The first inequality follows in a similar way using eigenvalues $|\lambda|^{-j} < 1$.
\end{proof}

The following result concerns the sign of the determinant of $(\mathbb{I} - \mathcal{P}_{\Lambda, p}^j)$.

\begin{prop}
\label{prop:module-determinant-relation-app}
Given $\mathcal{P}_{\Lambda, p}$ as per Definition \ref{linearised-Poincaré-app}, for every $j \ge 1$, it holds that:
\begin{equation}
    \left|\det\left(\mathbb{I} - \mathcal{P}_{\Lambda, p}^j \right) \right| = (-1)^{\operatorname{ind}(\Lambda)}\Delta(\Lambda)^j\det\left(\mathbb{I} - \mathcal{P}_{\Lambda, p}^j \right),
\end{equation}
where $\operatorname{ind}(\Lambda) = \operatorname{rk}(T^u_\Lambda)$ and $\Delta(\Lambda)$ is the twist parameter defined in Equation \eqref{eq:twist-operator}.
\end{prop}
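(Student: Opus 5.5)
The plan is to use the invariant splitting $T_p\Sigma = T^u_p \oplus T^s_p$ from Proposition \ref{prop:Poincare-flow-relation} to split the determinant into an unstable and a stable factor, and then read off the sign of each factor using Proposition \ref{prop:positivity-det} and Theorem \ref{theorem:orientability-and-sign-app}.

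Since the splitting is invariant under $\mathcal{P}_{\Lambda,p}$, the operator $\mathbb{I}-\mathcal{P}^j_{\Lambda,p}$ is block diagonal. Writing $P_u \doteq \mathcal{P}_{\Lambda,p}|_{T^u_p}$ and $P_s \doteq \mathcal{P}_{\Lambda,p}|_{T^s_p}$, this gives
\begin{equation*}
\det\left(\mathbb{I}-\mathcal{P}^j_{\Lambda,p}\right)=\det\left(\mathbb{I}-P_u^j\right)\det\left(\mathbb{I}-P_s^j\right).
\end{equation*}
By the second inequality of Proposition \ref{prop:positivity-det}, the stable factor is strictly positive. It therefore contributes nothing to the sign, and the sign of the whole determinant equals the sign of the unstable factor.

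For the unstable factor we factor out $-P_u^j$:
\begin{equation*}
\mathbb{I}-P_u^j=-P_u^j\left(\mathbb{I}-P_u^{-j}\right), \qquad \det\left(\mathbb{I}-P_u^j\right)=(-1)^{\operatorname{ind}(\Lambda)}\left(\det P_u\right)^j\det\left(\mathbb{I}-P_u^{-j}\right).
\end{equation*}
Here $\operatorname{ind}(\Lambda)=\dim T^u_p$. The last factor is positive by the first inequality of Proposition \ref{prop:positivity-det}. By Theorem \ref{theorem:orientability-and-sign-app} and the remark that follows it, $\operatorname{sign}(\det P_u)=\Delta(\Lambda)$, so $\operatorname{sign}((\det P_u)^j)=\Delta(\Lambda)^j$.

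Putting the pieces together,
\begin{equation*}
\operatorname{sign}\det\left(\mathbb{I}-\mathcal{P}^j_{\Lambda,p}\right)=(-1)^{\operatorname{ind}(\Lambda)}\Delta(\Lambda)^j.
\end{equation*}
Multiplying the determinant by its sign gives its absolute value, which is exactly the claimed identity because $\left((-1)^{\operatorname{ind}(\Lambda)}\Delta(\Lambda)^j\right)^2=1$.

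All steps are routine, so no single step is a serious obstacle. The points needing the most care are:
\begin{itemize}
\item[(i)] The invariance of the splitting under $\mathcal{P}_{\Lambda,p}$, which is what justifies the block factorisation; this is provided by the hypothesis of Proposition \ref{prop:Poincare-flow-relation}.
\item[(ii)] Invertibility of $P_u$, so that $P_u^{-j}$ makes sense; this holds because $\mathcal{P}_{\Lambda,p}$ is the restriction of the diffeomorphism differential $d_p\phi^{T_\Lambda}$.
\item[(iii)] Nonvanishing of the determinant, which follows from hyperbolicity: no eigenvalue of $\mathcal{P}^j_{\Lambda,p}$ equals $1$.
\end{itemize}
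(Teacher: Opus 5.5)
Your proposal is correct and follows the paper's proof. Both arguments split $\det(\mathbb{I}-\mathcal{P}^j_{\Lambda,p})$ along the invariant splitting $T^u_p\oplus T^s_p$ and rewrite the unstable factor via $\det(\mathbb{I}-A)=\det(-A)\det(\mathbb{I}-A^{-1})$; both then read off the sign $(-1)^{\operatorname{ind}(\Lambda)}\Delta(\Lambda)^j$ from Theorem \ref{theorem:orientability-and-sign-app} and Proposition \ref{prop:positivity-det}, differing only in the order the two factorisations are presented.
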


\begin{proof}
For any invertible matrix $A$, the following linear algebra identity holds:
\begin{equation*}
    \det(\mathbb{I} - A) = \det(-A) \det(\mathbb{I} - A^{-1}).
\end{equation*}
By applying this identity to the restriction of the linearised Poincaré map to the unstable subspace $(\mathcal{P}_{\Lambda}|_{T^u_p})^j$, we obtain:
\begin{align*}
    \det\left(\mathbb{I} - (\mathcal{P}_{\Lambda}|_{T^u_p})^j\right) &= \det\left(-(\mathcal{P}_{\Lambda}|_{T^u_p})^j\right) \det\left(\mathbb{I} - (\mathcal{P}_{\Lambda}|_{T^u_p})^{-j}\right) \\
    &= (-1)^{\operatorname{ind}(\Lambda)} \det\left(\mathcal{P}_{\Lambda}|_{T^u_p}\right)^j \det\left(\mathbb{I} - (\mathcal{P}_{\Lambda}|_{T^u_p})^{-j}\right).
\end{align*}
From Theorem \ref{theorem:orientability-and-sign-app}, the sign of $\det(\mathcal{P}_{\Lambda}|_{T^u_p})$ is given by $\Delta(\Lambda)$. Furthermore, Proposition \ref{prop:positivity-det} ensures that the term $\det(\mathbb{I} - (\mathcal{P}_{\Lambda}|_{T^u_p})^{-j})$ is always strictly positive. Consequently, the sign of the right-hand side is determined by the factor $(-1)^{\operatorname{ind}(\Lambda)} \Delta(\Lambda)^j$.

To conclude, we observe that the total determinant factorises according to the flow-invariant splitting $T_p\Sigma = T^u_p \oplus T^s_p$:
\begin{equation*}
    \det\left(\mathbb{I} - \mathcal{P}_{\Lambda, p}^j\right) = \det\left(\mathbb{I} - (\mathcal{P}_{\Lambda}|_{T^u_p})^j\right) \det\left(\mathbb{I} - (\mathcal{P}_{\Lambda}|_{T^s_p})^j\right).
\end{equation*}
Again by Proposition \ref{prop:positivity-det}, the factor $\det(\mathbb{I} - (\mathcal{P}_{\Lambda}|_{T^s_p})^j)$ is strictly positive for all $j \ge 1$. Therefore, the sign of $\det(\mathbb{I} - \mathcal{P}_{\Lambda, p}^j)$ is identical to the sign of its unstable part, which we have shown to be $(-1)^{\operatorname{ind}(\Lambda)} \Delta(\Lambda)^j$. This implies that:
\begin{equation*}
    \operatorname{sign}\left(\det\left(\mathbb{I} - \mathcal{P}_{\Lambda, p}^j\right)\right) = (-1)^{\operatorname{ind}(\Lambda)} \Delta(\Lambda)^j,
\end{equation*}
which is equivalent to the sought identity.

\end{proof}

\subsection{Ruelle Zeta Function and Regularised Determinants}
\label{sec: Ruelle-zeta-det}

In this section, we briefly recall the definition and properties of the Ruelle zeta function and express its value at zero as an alternating product of regularised determinants of the Lie derivative. Our approach follows the microlocal framework of \cite{DangRiviere17Topology}, with further technical foundations provided in \cite[Section 2.2]{Dyatlov16}, \cite{Guillemin77}, and \cite{Fried87}. Many of the steps presented here are also inspired by the treatment for contact Anosov flows in \cite{HadfieldKandelSchiavina20Ruelle}. Notably, the analysis of the Ruelle zeta function is significantly simplified in the case of Morse--Smale flows due to the finiteness of the set of periodic orbits. For the definition of the zeta function, we follow the conventions of \cite{Shen21Survey}.

\begin{definition}
\label{def:Ruelle-zeta}
Let $V \in \mathfrak{X}(M)$ be a $C^\infty$-linearisable Morse--Smale vector field as per Definition \ref{MS-cl_linearisable-app}, and let $\Delta(\Lambda) \in \{\pm 1\}$ be the twist parameter associated with a closed orbit $\Lambda$ as defined in Definition \ref{def:twist-operator}. The Ruelle zeta function is defined as the finite product:
\begin{equation}
\label{eq:Ruelle-zeta-def}
    \mathfrak{R}_{V, \rho}(z) \coloneqq \prod_{\Lambda \text{ closed orbits}}\det\left(\mathbb{I} - \operatorname{\Delta}(\Lambda) \rho(\Lambda)e^{-T_{\Lambda}z}\right)^{(-1)^{\operatorname{ind}(\Lambda)}},
\end{equation}
where $T_{\Lambda} > 0$ denotes the minimal period of the orbit $\Lambda$ and $\operatorname{ind}(\Lambda) \coloneqq \text{rk}(T^u_\Lambda)$ is the rank of the unstable bundle defined in Equation \eqref{eq:tangent-bundle-decomposition-orbits}.
\end{definition}

The regularity of the Ruelle zeta function is a crucial aspect for its spectral interpretation. In the Morse--Smale setting, the following result clarifies its fundamental analytic properties, see \cite[Section 4]{Shen21Survey} for a proof.

\begin{prop}
\label{prop:ruelle-zeta-well-defined-ms}
There exists a constant $\sigma_0 > 0$ such that $\mathfrak{R}_{V,\rho}(z)$ is a non-zero analytic function in the half-plane $\{z \in \mathbb{C} \mid \operatorname{Re}(z) > \sigma_0\}$. Furthermore, $\mathfrak{R}_{V,\rho}(z)$ admits a meromorphic continuation to the entire complex plane $\mathbb{C}$ and is analytic at $z=0$.
\end{prop}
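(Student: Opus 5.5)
The plan is to exploit the fact that, for a Morse--Smale flow, the set of closed orbits is finite. The product in Definition~\ref{def:Ruelle-zeta} is then a finite product of functions of the form $z\mapsto \det(\mathbb{I}-\Delta(\Lambda)\rho(\Lambda)e^{-T_\Lambda z})^{\pm1}$, and all claimed analytic properties can be read off factor by factor. Finiteness of the set of closed orbits follows from the Morse--Smale axioms recalled in Appendix~\ref{app:Morse--Smale}: all critical elements are hyperbolic, hence isolated, and the nonwandering set consists only of critical elements, so compactness of $M$ leaves finitely many of them. Each closed orbit has a well-defined minimal period $T_\Lambda>0$.

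For each closed orbit $\Lambda$, I would set
\[
f_\Lambda(z)=\det\bigl(\mathbb{I}-\Delta(\Lambda)\rho(\Lambda)e^{-T_\Lambda z}\bigr)=\prod_{\mu}\bigl(1-\Delta(\Lambda)\mu e^{-T_\Lambda z}\bigr),
\]
where $\mu$ runs over the eigenvalues, with multiplicity, of the holonomy $\rho(\Lambda)$. This $f_\Lambda$ is an entire function of $z$, being a polynomial in $e^{-T_\Lambda z}$. It is therefore holomorphic everywhere, and its zeros are exactly the discrete set of $z$ with $e^{T_\Lambda z}=\Delta(\Lambda)\mu$.

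When $\operatorname{ind}(\Lambda)$ is even, the factor $f_\Lambda$ is entire. When it is odd, $f_\Lambda^{-1}$ is meromorphic, with poles only at the discrete zero set of $f_\Lambda$. A finite product of meromorphic functions is meromorphic on $\mathbb{C}$, which gives the meromorphic continuation. In fact the defining formula already makes sense on all of $\mathbb{C}$ away from a discrete set, so no continuation is genuinely needed.

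Next comes the half-plane statement. Since $\rho$ is unitary, $|\mu|=1$, and so
\[
|1-\Delta(\Lambda)\mu e^{-T_\Lambda z}|\ge 1-e^{-T_\Lambda\operatorname{Re}z}>0
\quad\text{for } \operatorname{Re}z>0.
\]
Hence every $f_\Lambda$ is nonvanishing on $\{\operatorname{Re}z>0\}$, and $\mathfrak{R}_{V,\rho}$ is analytic and nonzero on any half-plane $\{\operatorname{Re}z>\sigma_0\}$ with $\sigma_0>0$. If $\rho$ were only assumed to be a general representation, one would instead choose $\sigma_0>\max_\Lambda T_\Lambda^{-1}\log\max|\mu|$, and the same conclusion would follow.

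The step I expect to carry the real content is analyticity at $z=0$, where the same factorisation gives $f_\Lambda(0)=\det(\mathbb{I}-\Delta(\Lambda)\rho(\Lambda))$. This is nonzero exactly when $\Delta(\Lambda)$ is not an eigenvalue of $\rho(\Lambda)$, which is the non-aligned hypothesis of Proposition~\ref{prop:dimension-of-Ck}, assumed throughout. With it, every factor is holomorphic and nonvanishing near $0$, including those raised to the power $-1$. So $\mathfrak{R}_{V,\rho}$ is analytic at $0$, and moreover
\[
\mathfrak{R}_{V,\rho}(0)=\prod_\Lambda \det\bigl(\mathbb{I}-\Delta(\Lambda)\rho(\Lambda)\bigr)^{(-1)^{\operatorname{ind}(\Lambda)}}\neq 0.
\]
The only subtle point is to stress that this hypothesis is essential: without it, odd-index orbits would produce a pole at $0$. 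The proof should therefore invoke non-alignment explicitly at this step rather than leave it implicit.
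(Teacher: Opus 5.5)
Your argument is correct, but it does not follow the paper, which gives no proof of its own and defers to \cite[Section 4]{Shen21Survey}. There the statement sits inside the general theory of dynamical zeta functions: these are defined by orbit sums that converge on a half-plane, and meromorphic continuation is the nontrivial part. You instead use the finiteness of the set of closed orbits to write $\mathfrak{R}_{V,\rho}$ as a finite product of exponential polynomials $\det(\mathbb{I}-\Delta(\Lambda)\rho(\Lambda)e^{-T_\Lambda z})^{\pm 1}$. Meromorphy on $\mathbb{C}$ is then immediate. Unitarity of $\rho$ gives nonvanishing on all of $\{\operatorname{Re} z>0\}$, so any $\sigma_0>0$ works. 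The value at $0$ is the explicit number $\prod_\Lambda \det(\mathbb{I}-\Delta(\Lambda)\rho(\Lambda))^{(-1)^{\operatorname{ind}(\Lambda)}}$.

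What your route buys is a self-contained proof. It also shows that analyticity at $z=0$ genuinely needs the non-aligned hypothesis of Proposition~\ref{prop:dimension-of-Ck}. The proposition as stated omits that hypothesis, and the paper only acknowledges it in the remark that follows. The cited general framework is what naturally underlies the orbit-sum factors $\mathfrak{R}_k(z)$ of Proposition~\ref{prop:product-of-zetas}, which the paper uses later and which are a priori defined only for $\operatorname{Re} z$ large. For $\mathfrak{R}_{V,\rho}$ itself, your direct argument suffices.

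One small correction. You do not need to derive finiteness of the closed orbits, since it is an axiom in Definition~\ref{def: MS-flow}. The justification you sketch (``hyperbolic, hence isolated, plus compactness'') is too quick in any case. A hyperbolic closed orbit is isolated only as an invariant set, not among closed orbits (think of the suspension of a horseshoe), so compactness alone does not give finiteness.
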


As we will see, the behaviour of $\mathfrak{R}_{V,\rho}(z)$ at the origin is of particular interest in the context of Fried's conjecture. In the following analysis, we shall always consider the meromorphic extension of the zeta function. We emphasise that the assumption that $V$ be {non-aligned} is essential to ensure that the determinants in Equation \eqref{eq:Ruelle-zeta-def} do not vanish at $z = 0$, thus guaranteeing that $\mathfrak{R}_{V,\rho}(0)$ is well-defined and non-zero. To bridge the gap between the closed orbits and the spectral data of the Lie derivative, it is useful to first decompose the Ruelle zeta function into factors corresponding to the degree $k$ of the anisotropic Sobolev spaces.

\begin{prop}
\label{prop:product-of-zetas}
Let $\mathfrak{R}_{V, \rho}(z)$ be the meromorphic extension of the Ruelle zeta function as per Definition \ref{def:Ruelle-zeta}. Then, the following decomposition holds:
\begin{equation}
\label{eq:def-r_k}
\begin{split}
    \mathfrak{R}_{V, \rho}(z) &= \prod_{k=0}^{n-1}\mathfrak{R}_{k}(z)^{(-1)^k},\\
    \mathfrak{R}_k(z) &\coloneqq \exp \Biggl\{ \sum_{\Lambda, j \ge 1} \frac{e^{-zj T_{\Lambda}}}{j} \frac{\operatorname{tr}(\rho(\Lambda)^j) \operatorname{tr}(\wedge^k\mathcal{P}_{\Lambda}^j)}{\left|\det(\mathbb{I} - \mathcal{P}_\Lambda^j) \right|} \Biggr\},
\end{split}
\end{equation}
where $\mathcal{P}_\Lambda$ denotes the linearised Poincaré map associated with the orbit $\Lambda$.
\end{prop}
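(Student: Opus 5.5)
We work first in the half-plane $\operatorname{Re}(z) > \sigma_0$ of Proposition \ref{prop:ruelle-zeta-well-defined-ms}, where all series converge absolutely, and extend at the end by meromorphic continuation. For each closed orbit $\Lambda$, we expand the logarithm of its factor in Equation \eqref{eq:Ruelle-zeta-def} via $\log\det(\mathbb{I}-A) = -\sum_{j\ge1}\operatorname{tr}(A^j)/j$, which converges for $\operatorname{Re}(z)$ large since $\rho$ is unitary. This gives
\begin{equation*}
\log \mathfrak{R}_{V,\rho}(z) = -\sum_{\Lambda}\sum_{j\ge 1} (-1)^{\operatorname{ind}(\Lambda)}\frac{e^{-zjT_\Lambda}}{j}\,\Delta(\Lambda)^j \operatorname{tr}\bigl(\rho(\Lambda)^j\bigr).
\end{equation*}

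Next we rewrite the sign factor $-(-1)^{\operatorname{ind}(\Lambda)}\Delta(\Lambda)^j$ dynamically. We use the standard linear algebra identity
\begin{equation*}
\sum_{k=0}^{n-1}(-1)^k\operatorname{tr}\bigl(\wedge^k \mathcal{P}_\Lambda^j\bigr) = \det\bigl(\mathbb{I}-\mathcal{P}_\Lambda^j\bigr),
\end{equation*}
which holds because $\mathcal{P}_\Lambda$ acts on the $(n-1)$-dimensional space $T_p\Sigma$. Combining this with Proposition \ref{prop:module-determinant-relation-app} gives
\begin{equation*}
\sum_{k=0}^{n-1}(-1)^k \frac{\operatorname{tr}\bigl(\wedge^k\mathcal{P}_\Lambda^j\bigr)}{\bigl|\det(\mathbb{I}-\mathcal{P}_\Lambda^j)\bigr|} = \frac{\det\bigl(\mathbb{I}-\mathcal{P}_\Lambda^j\bigr)}{\bigl|\det(\mathbb{I}-\mathcal{P}_\Lambda^j)\bigr|} = (-1)^{\operatorname{ind}(\Lambda)}\Delta(\Lambda)^j.
\end{equation*}
Here $(-1)^{\operatorname{ind}(\Lambda)}\Delta(\Lambda)^j = \pm1$, so it is its own reciprocal. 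The denominator is nonzero because the orbits are hyperbolic. Substituting this into the expansion and exchanging the finite sum over $k$ with the absolutely convergent sums over $\Lambda$ and $j$ yields
\begin{equation*}
\log\mathfrak{R}_{V,\rho}(z) = \sum_k (-1)^k \log\mathfrak{R}_k(z),
\end{equation*}
up to an overall sign.

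The main obstacle is exactly this overall sign convention. The expansion produces $-\sum_k(-1)^k\log\mathfrak{R}_k$, that is, an exponent $(-1)^{k+1}$, so we must reconcile this with the stated exponent $(-1)^k$. We would first recheck the convention in Definition \ref{def:Ruelle-zeta}, namely whether the factor exponent absorbs a minus sign as in \cite{Shen21Survey}. If it does, the stated exponent is correct. If it does not, the discrepancy is only a global inversion, and we would adjust the identification accordingly (equivalently, replacing $\operatorname{ind}$ by $\operatorname{ind}+1$).

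Finally, each $\mathfrak{R}_k$ is analytic and nonvanishing for $\operatorname{Re}(z)>\sigma_0$, and it extends meromorphically by \cite{DangRiviere17Topology}. The identity therefore persists on all of $\mathbb{C}$ by uniqueness of meromorphic continuation.
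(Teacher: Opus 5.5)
Your argument follows the paper's proof step for step. You expand $\log\det(\mathbb{I}-B)$ orbit by orbit, multiply by $1=\sum_{k}(-1)^k\operatorname{tr}(\wedge^k\mathcal{P}_\Lambda^j)/\det(\mathbb{I}-\mathcal{P}_\Lambda^j)$, and pass to absolute values with Proposition~\ref{prop:module-determinant-relation-app}. You then exchange the finite $k$-sum with the $(\Lambda,j)$-sums and exponentiate. Your sign bookkeeping is correct and the paper's is not. The first display of the paper's proof reads $\log\mathfrak{R}_{V,\rho}(z)=+\sum_\Lambda(-1)^{\operatorname{ind}(\Lambda)}\sum_j\frac{1}{j}e^{-zjT_\Lambda}\Delta(\Lambda)^j\operatorname{tr}(\rho(\Lambda)^j)\cdots$. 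That is, it silently drops the minus sign in $\log\det(\mathbb{I}-B)=-\sum_{j\ge1}\operatorname{tr}(B^j)/j$. This is the only reason it arrives at the exponent $(-1)^k$.

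So the ``obstacle'' you describe needs no reconciling, and your write-up should not leave it conditional. Definition~\ref{def:Ruelle-zeta} fixes the exponent $(-1)^{\operatorname{ind}(\Lambda)}$ with no absorbed minus sign, whatever \cite{Shen21Survey} does. With that definition and $\mathfrak{R}_k$ as in \eqref{eq:def-r_k}, the proposition is false as written. What your computation actually proves is
\begin{equation*}
\mathfrak{R}_{V,\rho}(z)=\prod_{k=0}^{n-1}\mathfrak{R}_k(z)^{(-1)^{k+1}}.
\end{equation*}
A one-orbit check settles it, since both sides are multiplicative over closed orbits. Take an attracting orbit ($\operatorname{ind}(\Lambda)=0$, $\Delta(\Lambda)=1$) with $n=2$, multiplier $\lambda\in(0,1)$ and rank-one holonomy $\rho(\Lambda)=e^{i\theta}$. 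One finds $\mathfrak{R}_0\mathfrak{R}_1^{-1}=\exp\bigl(\sum_{j\ge1}e^{ij\theta-jT_\Lambda z}/j\bigr)=\bigl(1-e^{i\theta}e^{-T_\Lambda z}\bigr)^{-1}$. This is the inverse of that orbit's factor in Definition~\ref{def:Ruelle-zeta}.

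The stated exponent $(-1)^k$ is right only for Fried's exponential convention $\exp\bigl(\sum_{\Lambda,j}\frac{1}{j}\operatorname{sign}\det(\mathbb{I}-\mathcal{P}_\Lambda^j)\operatorname{tr}(\rho(\Lambda)^j)e^{-zjT_\Lambda}\bigr)$. Its product form carries $(-1)^{\operatorname{ind}(\Lambda)+1}$, which is exactly your ``replace $\operatorname{ind}$ by $\operatorname{ind}+1$''. State the corrected identity, or the amended definition, as the result.

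This is not cosmetic. Theorem~\ref{thm: Ruelle-as-prod-det-app} is obtained by substituting exactly this product formula, so its exponents $(-1)^{n+k}$ inherit the same inversion. The convergence and continuation steps of your proposal are fine; the paper does not address them at all.
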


\begin{proof}
From linear algebra we know that $\det(\mathbb{I}-A) = \sum_{k=0}^m (-1)^k \operatorname{tr}(\wedge ^k A)$. If we apply it to the definition of $\mathfrak{R}_{V, \rho}(z)$, and use the relation between the determinant and the trace of the logarithm, we obtain:
\begin{align*}
    \log\left(\mathfrak{R}_{V, \rho}(z)\right) &= \sum_{\Lambda} (-1)^{\operatorname{ind}(\Lambda)} \sum_{j=1}^{\infty} \frac{1}{j} e^{-zjT_\Lambda} \Delta(\Lambda)^j \operatorname{tr}(\rho(\Lambda)^j) \sum_{k=0}^{n-1} \frac{(-1)^k \operatorname{tr}(\wedge^k\mathcal{P}_\Lambda^j)}{\det(\mathbb{I} - \mathcal{P}_\Lambda^j)}.
\end{align*}
By Proposition \ref{prop:module-determinant-relation-app}, we have $\det(\mathbb{I} - \mathcal{P}_\Lambda^j)^{-1} = (-1)^{\operatorname{ind}(\Lambda)} \Delta(\Lambda)^j |\det(\mathbb{I} - \mathcal{P}_\Lambda^j)|^{-1}$. Substituting this into the sum, the terms $(-1)^{2\operatorname{ind}(\Lambda)}$ and $\Delta(\Lambda)^{2j}$ vanish, yielding:
\begin{equation*}
    \log\left(\mathfrak{R}_{V, \rho}(z)\right) = \sum_{k=0}^{n-1} (-1)^k \sum_{\Lambda, j} \frac{1}{j} e^{-zjT_\Lambda} \frac{\operatorname{tr}(\rho(\Lambda)^j) \operatorname{tr}(\wedge^k\mathcal{P}_\Lambda^j)}{|\det(\mathbb{I} - \mathcal{P}_\Lambda^j)|}.
\end{equation*}
Exponentiating both sides leads to the sought decomposition.
\end{proof}

Let us introduce the twisted Fuller measure, which defines an additive measure on the non-wandering set $\mathbf{NW}(V)$, assigning independent contributions to each critical element of the Morse--Smale flow.

\begin{definition}
\label{fuller-measure}
The {twisted Fuller measure} $\mu_{V, \nabla} \in \mathcal{D}'(\mathbb{R}^*_+)$ is defined as:
\begin{equation}
\begin{split}
\mu_{V,\nabla}(t) \coloneqq & -\frac{1}{t} \sum_{p \in \operatorname{Crit}(V)} \frac{\det(\mathbb{I} - d_p\phi^t)}{|\det(\mathbb{I} - d_p\phi^t)|} \\
& + \frac{1}{t} \sum_{\Lambda \text{ closed orbits}} T_\Lambda \sum_{j = 1}^{\infty} \frac{\det(\mathbb{I} - \mathcal{P}_\Lambda^j)}{|\det(\mathbb{I} - \mathcal{P}_\Lambda^j)|} \operatorname{tr}(\rho(\Lambda)^j) \delta(t - jT_\Lambda).
\end{split}
\end{equation}
\end{definition}

The presence of Dirac's deltas in $\mu_{V, \nabla}$ is essential for capturing the discrete recurrence of the flow.
To relate this measure to the Ruelle zeta function, we consider its regularised Mellin--Laplace transform and, following the framework of \cite{Fried87}, define an auxiliary function linked to it.

\begin{definition}
\label{def: zeta-flat-function}
The function $\zeta^{\flat}_{V, \nabla}: \mathbb{C} \times \mathbb{C} \to \mathbb{C}$ is defined as:
\begin{equation}
\label{eq: zeta-flat-function}
    \zeta^{\flat}_{V, \nabla}(s, z) \coloneqq \frac{1}{\Gamma(s)} \int_0^{\infty}e^{-tz}t\mu_{V, \nabla}(t) t^{s-1}dt,
\end{equation}
where the integral represents the application of the Fuller measure to the test function $f(t; s, z) \coloneqq e^{-tz}t^s$ and $\Gamma(s)$ is the Euler's Gamma function.
\end{definition}

The following result establishes the link between the Fuller measure and the Ruelle zeta function through a regularisation at $s=0$, see \cite[Section 6.5]{DangRiviere17Topology} for details.

\begin{prop}
\label{prop: Fuller-Zeta-function}
The \textbf{Fuller zeta function} $Z_{V,\nabla}(z) \coloneqq \exp ( - \partial_s \zeta^{\flat}_{V,\nabla}(s,z) |_{s=0} )$ satisfies:
\begin{equation}
\label{eq: explicit-formula-Fuller-Zeta}
Z_{V,\nabla}(z) = z^{-\chi(M, \mathcal{E})} \mathfrak{R}_{V, \rho}(z),
\end{equation}
where $\chi(M,\mathcal{E}) = \sum_{k=0}^n (-1)^k b_k(M, \mathcal{E})$ is the Euler characteristic.
\end{prop}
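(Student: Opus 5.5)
The proof splits the Fuller measure of Definition~\ref{fuller-measure} into its fixed-point part and its closed-orbit part. Both parts are inserted into $\zeta^\flat_{V,\nabla}(s,z)$ of Definition~\ref{def: zeta-flat-function}, and each contribution to $-\partial_s \zeta^\flat_{V,\nabla}(s,z)|_{s=0}$ is computed separately. By linearity of the Mellin--Laplace transform in the measure, $Z_{V,\nabla}(z)$ factorises as a product of a critical-point factor and a closed-orbit factor. The claim reduces to showing that the first factor equals $z^{-\chi(M,\mathcal{E})}$ and the second equals $\mathfrak{R}_{V,\rho}(z)$, for $\operatorname{Re} z$ large, followed by meromorphic continuation via Proposition~\ref{prop:ruelle-zeta-well-defined-ms}.

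\textbf{Closed-orbit part.} Here $t\mu_{V,\nabla}(t)$ is a finite sum over orbits and a countable sum over $j$ of Dirac deltas at $t=jT_\Lambda$. The integral against $e^{-tz}t^{s-1}$ is evaluated directly, giving
\[
\frac{1}{\Gamma(s)}\sum_{\Lambda}T_\Lambda\sum_{j\ge1}\operatorname{sign}\det(\mathbb{I}-\mathcal{P}_\Lambda^j)\operatorname{tr}(\rho(\Lambda)^j)e^{-jT_\Lambda z}(jT_\Lambda)^{s-1}.
\]
For $\operatorname{Re} z$ large this series converges uniformly near $s=0$. Since $1/\Gamma(s)=s+O(s^2)$, the $s$-derivative at $0$ just removes the $1/\Gamma$ and sets $s=0$. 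This yields $\sum_{\Lambda,j}\frac1j\operatorname{sign}\det(\mathbb{I}-\mathcal{P}_\Lambda^j)\operatorname{tr}(\rho(\Lambda)^j)e^{-jT_\Lambda z}$. Proposition~\ref{prop:module-determinant-relation-app} gives the sign as $(-1)^{\operatorname{ind}\Lambda}\Delta(\Lambda)^j$. Summing the logarithm series, $\sum_j \frac1j\operatorname{tr}\big((\Delta(\Lambda)\rho(\Lambda)e^{-T_\Lambda z})^j\big)=-\log\det(\mathbb{I}-\Delta(\Lambda)\rho(\Lambda)e^{-T_\Lambda z})$. With the overall minus sign in $Z=\exp(-\partial_s\zeta^\flat)$, this is exactly $\log\mathfrak{R}_{V,\rho}(z)$ as given in Definition~\ref{def:Ruelle-zeta}.

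\textbf{Fixed-point part.} For a hyperbolic zero $p$ and $t>0$, the sign of $\det(\mathbb{I}-d_p\phi^t)$ is constant in $t$. The same eigenvalue argument as in Propositions~\ref{prop:positivity-det} and~\ref{prop:module-determinant-relation-app}, applied to $e^{t\,dV(p)}$, shows it equals $(-1)^{\dim W^u(p)}$, that is, $(-1)^{n-\dim W^s(p)}$. The contribution is therefore
\[
-\sum_p\varepsilon_p\frac{1}{\Gamma(s)}\int_0^\infty e^{-tz}t^{s-1}\,dt=-\Big(\sum_p\varepsilon_p\Big)z^{-s},
\]
where $\varepsilon_p$ denotes this constant sign. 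Its $s$-derivative at $0$ is $\big(\sum_p\varepsilon_p\big)\log z$, so this part contributes the factor $z^{-\sum_p\varepsilon_p}$ to $Z_{V,\nabla}$. The fibre rank $N$ must also be accounted for: in the twisted setting the fixed-point term carries $\operatorname{tr}\rho=N$, which I take to be implicit in the normalisation, or else I will insert it.

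\textbf{Main obstacle.} The hard step is the topological identity $\sum_p\varepsilon_p\cdot N=\chi(M,\mathcal{E})$, with the correct global sign convention. I would prove it using the Poincaré--Hopf theorem. Alternatively, and more in the spirit of the paper, I would use Theorem~\ref{thm:iso-complex} together with Proposition~\ref{prop:dimension-of-Ck} and Theorem~\ref{thm:spectral_realization_cohomology}: they give
\[
\chi(M,\mathcal{E})=\sum_k(-1)^k\dim C^k_{V,\nabla}(0)=N\sum_k(-1)^kc_k(V).
\]
Closed orbits do not contribute, since their Euler characteristic vanishes. It then remains to reconcile $(-1)^{\dim W^s(p)}$ with $\varepsilon_p$ up to the factor $(-1)^n$. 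For even $n$ these agree. For odd $n$ the overall signs in the definition of $\mu_{V,\nabla}$ must be checked against Dang--Rivière's conventions, and I expect that sign bookkeeping to be the delicate point.
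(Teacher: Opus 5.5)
Your evaluation of the two pieces of the Fuller measure is correct, and it is the natural direct computation; the paper gives no argument of its own here and simply refers to Dang--Rivi\`ere, Section 6.5. The closed-orbit part yields $\mathfrak{R}_{V,\rho}(z)$ exactly as you show. The fixed-point part yields $z^{-N\sum_p\varepsilon_p}$, with $\varepsilon_p=\operatorname{sign}\det(\mathbb{I}-d_p\phi^t)=(-1)^{\dim W^u(p)}$. You are also right that the factor $N=\operatorname{tr}(\mathrm{Id}_{\mathcal{E}_p})$ must be inserted into the fixed-point term of Definition~\ref{fuller-measure}. This is not a matter of normalisation: with the definition as printed one gets $z^{-\chi(M)}$, which agrees with $z^{-\chi(M,\mathcal{E})}=z^{-N\chi(M)}$ only when $N=1$ or $\chi(M)=0$.

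The gap is the final step, which you leave open and suspect may require changing sign conventions when $n$ is odd. It does not. Combine $\varepsilon_p=(-1)^{n}(-1)^{\dim W^s(p)}$ with your formula $\chi(M,\mathcal{E})=N\sum_k(-1)^k c_k(V)$. This gives $N\sum_p\varepsilon_p=(-1)^n\chi(M,\mathcal{E})$. For odd $n$ one has $\chi(M,\mathcal{E})=N\chi(M)=0$ on a closed manifold. For instance, apply Poincar\'e--Hopf to $V$ and to $-V$: the local indices differ by $(-1)^n$, so $\chi(M)=(-1)^n\chi(M)$. Hence the identity holds in every dimension with the signs exactly as printed.

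A cleaner way to close the argument, which also explains where the signs in $\mu_{V,\nabla}$ come from, is the Lefschetz fixed-point theorem. For $0<t<\min_\Lambda T_\Lambda$, the fixed points of $\phi^t$ are precisely the zeros of $V$, and all are non-degenerate by hyperbolicity. Since $\phi^t$ is homotopic to the identity, $\sum_p\operatorname{sign}\det(\mathbb{I}-d_p\phi^t)=L(\phi^t)=\chi(M)$. In the version twisted by $\mathcal{E}$, each zero contributes the additional local trace $\operatorname{tr}(\mathrm{Id}_{\mathcal{E}_p})=N$, which gives $\chi(M,\mathcal{E})$ directly. With this step supplied, your proof is complete for $\operatorname{Re}z$ large, and then everywhere by meromorphic continuation.
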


\begin{rem}
\label{rem: regularization-technical}
The regularisation in Proposition \ref{prop: Fuller-Zeta-function} is well-posed as $\zeta^{\flat}_{V, \nabla}(s,z)$ admits a meromorphic continuation to the entire $s$-plane with a single simple pole at $s=1$, making it analytic in a neighbourhood of $s=0$, see \cite[pages 29-30]{DangRiviere17Topology}. This result relies on the analysis of Mellin transforms for Morse--Smale systems as detailed in \cite[Theorem. 12.4-5]{Apostol76} and \cite[Section 3]{Fried87}.
\end{rem}

The relation between the Fuller measure and the Lie derivative is given by the {Guillemin trace formula}, which relates the measure to the flat-regularised trace of the Lie derivative. According to \cite[Theorem. 8, (II, 22)]{Guillemin77} and \cite[Theorem. 2.6]{DangRiviere17Topology}, we have:
\begin{equation}
\label{eq:spectral-guillemin}
     t\mu_{V, \nabla}(t) = \sum_{k=0}^{n} (-1)^{n-k+1} \operatorname{tr}^{\flat}\left(e^{-t\mathcal{L}_{V,\nabla}^{(k)}\big|_{\text{Ker}(\iota_V)}}\right).
\end{equation}
We can now isolate the contribution of the closed orbits degree by degree.

\begin{prop}
\label{prop:k-trace-formula}
Let $\widetilde{\mathcal{L}}_{V, \nabla}^{(k)}$ be the Lie derivative restricted to $\text{Im}\big(\iota_V\big|_{\text{Ker}\, \pi_0^{(k+1)}}\big)$. The following identity holds:
\begin{equation}
\label{eq:Guillemin-trace-formula-final}
    \operatorname{tr}^{\flat}\left(e^{-t\widetilde{\mathcal{L}}_{V, \nabla}^{(k)}}\right) = (-1)^{n+1}\sum_{\Lambda \text{ closed orbits}}\sum_{j=1}^{\infty} T_\Lambda \frac{\operatorname{tr}(\wedge^k\mathcal{P}_\Lambda^j)}{|\det(\mathbb{I} - \mathcal{P}_\Lambda^j)|} \operatorname{tr}(\rho(\Lambda)^j) \delta(t - jT_\Lambda).
\end{equation}
\end{prop}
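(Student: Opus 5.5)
The plan is to compute the flat trace directly with the Atiyah--Bott--Guillemin formula, localising it on the non-wandering set, and then to check that the fixed-point contributions disappear after restriction to $\text{Im}(\iota_V|_{\text{Ker}\,\pi_0^{(k+1)}})$. Only the closed orbits should survive, with the weight $\operatorname{tr}(\wedge^k\mathcal{P}_\Lambda^j)$.

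\emph{Reduction to $\text{Ker}(\iota_V)$.} On $\text{Ker}\,\pi_0$ the chain homotopy of Theorem~\ref{thm:chain_homotopy} gives $\iota_V\circ\mathfrak{h}_V+\mathfrak{h}_V\circ\iota_V$-type splittings. Since $\iota_V^2=0$ and $\iota_V$ commutes with $\mathcal{L}_{V,\nabla}$, we have $\text{Im}(\iota_V|_{\text{Ker}\,\pi_0^{(k+1)}})\subset \text{Ker}(\iota_V^{(k)})\cap\text{Ker}\,\pi_0^{(k)}$. Conversely, if $\iota_V\eta=0$ and $\pi_0\eta=0$, then $\eta=\iota_V\big(d^\nabla\mathcal{L}^{-1}\eta\big)$, so the two spaces coincide. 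The second space is $\text{Ker}(\iota_V^{(k)})$ with its generalised zero eigenspace $C^k_{V,\nabla}(0)\cap\text{Ker}\,\iota_V$ removed. By Proposition~\ref{prop:zero_operator_consequence} that eigenspace is all of $C^k_{V,\nabla}(0)$. Hence, in the sense of distributions on $t>0$,
\begin{equation*}
\operatorname{tr}^\flat\bigl(e^{-t\widetilde{\mathcal{L}}^{(k)}_{V,\nabla}}\bigr)=\operatorname{tr}^\flat\bigl(e^{-t\mathcal{L}^{(k)}_{V,\nabla}|_{\text{Ker}\,\iota_V}}\bigr)-\operatorname{tr}\bigl(e^{-t\mathcal{L}^{(k)}_{V,\nabla}}|_{C^k_{V,\nabla}(0)}\bigr),
\end{equation*}
where the last term is a finite-dimensional trace.

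\emph{Localisation.} I would compute the first term degree by degree via Guillemin's formula, in the form underlying \eqref{eq:spectral-guillemin} (\cite[Thm.~2.6]{DangRiviere17Topology}). The Schwartz kernel of the pullback $(\phi^{-t})^*$ on $\wedge^k\text{Ker}(\iota_V)\otimes\mathcal{E}$ restricted to the diagonal localises on fixed points of $\phi^t$. These are the critical points $p$ for every $t$, and the closed orbits at $t=jT_\Lambda$.

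At a closed orbit the transverse linearisation is $\mathcal{P}_\Lambda^j$ acting on $\wedge^k T^*\Sigma\cong\wedge^k\text{Ker}(\iota_V)$ along $\Lambda$. Integrating along the orbit contributes $T_\Lambda$, the bundle contributes $\operatorname{tr}\rho(\Lambda)^j$, and the transversality factor is $|\det(\mathbb{I}-\mathcal{P}_\Lambda^j)|^{-1}$. The overall sign $(-1)^{n+1}$ should come from the orientation and degree conventions, exactly as in \eqref{eq:spectral-guillemin}. Matching this against the orbit term of the Fuller measure, after summing over $k$ with signs, gives a consistency check.

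At a critical point $p$ the contribution is $\operatorname{tr}(\wedge^k (d_p\phi^{-t})^T|_{\ldots})/|\det(\mathbb{I}-d_p\phi^t)|$, a smooth function of $t>0$ that is a finite sum of exponentials $e^{-t\lambda}$. The $\lambda$ are combinations of the Lyapunov exponents at $p$.

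\emph{Main obstacle: cancellation of fixed-point terms.} The hard step is showing that these exponential terms are exactly cancelled by $\operatorname{tr}(e^{-t\mathcal{L}}|_{C^k(0)})$. A mismatch is possible: only zero resonances sit in $C^k(0)$, while the critical-point contributions carry nonzero exponents. I would handle this by using the $C^\infty$-linearisability of $V$ (Definition~\ref{MS-cl_linearisable-app}) to compute both sides in linearising charts near each $p$. The key point is that the local expansion at $p$ corresponds to the Pollicott--Ruelle spectrum localised at $p$. Since that local contribution is smooth in $t$ rather than a sum of $\delta$'s, it must either vanish on $\text{Ker}\,\iota_V$ modulo resonant states or be absorbed. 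Concretely, I would check in the linear model $\dot x=Ax$ that the flat trace on $\text{Ker}\,\iota_V$ at $p$ equals the trace over the resonant states of $p$. I would then verify that the non-zero resonant states attached to $p$ are all $\iota_V$-exact pairs, which makes them cancel in $\text{Im}\,\iota_V$. This leaves precisely the sum of $\delta(t-jT_\Lambda)$ terms stated in \eqref{eq:Guillemin-trace-formula-final}.
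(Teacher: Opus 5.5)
Your reduction step is fine and matches the paper's decomposition $\text{Ker}(\iota_V)=C^\bullet_{V,\nabla}(0)\oplus\text{Im}(\iota_V|_{\text{Ker}\,\pi_0})$. Your identity $\eta=\iota_V\bigl(d^\nabla(\mathcal{L}_{V,\nabla})^{-1}\eta\bigr)$ is a clean proof of it.

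\textbf{The gap is the final step.} The fixed-point part of the flat trace of $e^{-t\widetilde{\mathcal{L}}^{(k)}_{V,\nabla}}$ expands as a sum of terms $e^{-t\lambda}$ over the resonances of the \emph{restricted} operator, counted with nonnegative multiplicity. Restricting to an invariant subspace can only discard resonances; it cannot make them cancel.

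\begin{itemize}
\item A nonzero resonant state that is $\iota_V$-exact lies in $\text{Im}(\iota_V|_{\text{Ker}\,\pi_0})$. It is therefore precisely one that \emph{survives} in the trace.
\item The pairing $u\mapsto\iota_Vu$ between $C^{k+1}_{V,\nabla}(\lambda)$ and $C^{k}_{V,\nabla}(\lambda)$ cancels contributions only in an alternating sum over degrees. Moreover, the partner $u$ lies in $\text{Im}(d^\nabla|_{\text{Ker}\,\pi_0})$, which is not being traced.
\item Separately, $\text{Ker}(\iota_V)$ is not a subbundle at zeros of $V$. So Guillemin's formula on $\wedge^k\text{Ker}(\iota_V)$ cannot be used at critical points the way you use it.
\end{itemize}

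The obstruction you isolated is genuine, and it already appears in degree $0$. There $\iota_V=0$, hence $\text{Im}(\iota_V|_{\text{Ker}\,\pi_0^{(1)}})=\text{Ker}\,\pi_0^{(0)}$. Take $M=S^1$, $V=\sin\theta\,\partial_\theta$ and trivial $\rho$. There are no closed orbits, so the right-hand side vanishes. Yet
\begin{equation*}
\operatorname{tr}^\flat\bigl(e^{-t\widetilde{\mathcal{L}}^{(0)}_{V,\nabla}}\bigr)=\frac{1}{1-e^{-t}}+\frac{1}{e^{t}-1}-1=2\sum_{m\ge1}e^{-mt}.
\end{equation*}
This comes from the resonances $m\ge1$, with states $\theta^m$ at the source and $\delta^{(m-1)}(\theta-\pi)$ at the sink. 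So the fixed-point terms cannot be removed degree by degree, and your last step cannot be completed for the statement as formulated.

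\textbf{Comparison with the paper.} The paper's route is different and never confronts this directly:
\begin{itemize}
\item It inserts the decomposition of $\text{Ker}(\iota_V)$ into the \emph{alternating} formula \eqref{eq:spectral-guillemin}.
\item It cancels $\sum_k(-1)^{n-k+1}\dim C^k_{V,\nabla}(0)$ against the fixed-point part of $t\mu_{V,\nabla}(t)$. Both are, up to the rank, $-\sum_p(-1)^{\dim W^u(p)}$, independent of $t$.
\item It then reads off degree-$k$ components from $\det(\mathbb{I}-\mathcal{P}^j_\Lambda)=\sum_k(-1)^k\operatorname{tr}(\wedge^k\mathcal{P}^j_\Lambda)$.
\end{itemize}
An identity between alternating sums does not determine its individual summands. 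That final identification is exactly where the difficulty you found is hidden. Your degree-by-degree localisation has the merit of exposing it, but the proposed resolution via $\iota_V$-exact pairs does not work.
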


\begin{proof}
Consider the chain homotopy decomposition of the Sobolev spaces established in Corollary \ref{cor:direct_sum_decomposition}:
\begin{equation*}
    \mathcal{H}^\bullet_m(M, \mathcal{E}) = C^\bullet_{V, \nabla}(0) \, \oplus \, \text{Im}\left(d^{\nabla}\big|_{\text{Ker}\, \pi_0}\right) \, \oplus \, \text{Im}\left(\iota_V\big|_{\text{Ker}\, \pi_0}\right).
\end{equation*}
First, we characterise the kernel of the contraction map $\iota_V$. From Proposition \ref{prop:zero_operator_consequence}, the restriction $\iota_V|_{C^\bullet_{V,\nabla}(0)}$ vanishes. By nilpotency, $\iota_V$ also vanishes on $\text{Im}\left(\iota_V|_{\text{Ker}\, \pi_0}\right)$. Conversely, on the second summand $\text{Im}\left(d^{\nabla}|_{\text{Ker}\, \pi_0}\right)$, the contraction map is non-vanishing; otherwise, a non-zero element would belong to $\text{Ker}(\mathcal{L}_{V, \nabla})$, which is contained in $C^\bullet_{V, \nabla}(0)$, contradicting the direct sum decomposition. Thus, we have that:
\begin{equation*}
\text{Ker}(\iota_V) = C^\bullet_{V, \nabla}(0) \oplus \text{Im}\left(\iota_V\big|_{\text{Ker}\, \pi_0}\right).
\end{equation*}
Substituting this into Equation \eqref{eq:spectral-guillemin} and using the fact that $C^k_{V, \nabla}(0)$ is the space of generalised eigenvalues of $\mathcal{L}_{V, \nabla}^{(k)}$, we obtain:
\begin{equation*}
t\mu_{V,\nabla}(t) = \sum_{k=0}^{n} (-1)^{n-k+1}\left[\operatorname{dim}(C^k_{V, \nabla}(0)) + \operatorname{tr}^{\flat}\left(e^{-t\widetilde{\mathcal{L}}_{V,\nabla}^{(k)}}\right)\right].
\end{equation*}
By comparing this with Definition \ref{fuller-measure} and applying the spectral identification of the fixed points with resonant states at zero, due to additivity of $\mu_{V, \nabla}(t)$, the terms involving $\operatorname{dim}(C^k_{V, \nabla}(0))$ exactly cancel the fixed-point contributions in the measure. The remaining sum over closed orbits corresponds to the restricted traces. Finally, employing the identity $\det(\mathbb{I} - \mathcal{P}_\Lambda^j) = \sum_{k=0}^{n-1} (-1)^k \operatorname{tr}(\wedge^k \mathcal{P}_\Lambda^j)$ and using the additive property of the twisted Fuller measure on the periodic orbits, we can identify the degree-$k$ components. This yields the sought expression.
\end{proof}

We are now in a position to define its associated regularised determinant, exactly as outlined in Subsection \ref{sec:flat_traces}. For further details we refer to \cite[page 51]{Fried87} and \cite{Guillemin90}.

\begin{definition}
\label{def: 1-step-determinant}
We define the auxiliary holomorphic function:
\begin{equation}
    \mathcal{F}_{\widetilde{\mathcal{L}}_{V, \nabla}^{(k)}}(z, s) \coloneqq \frac{1}{\Gamma(s)}\int_0^{\infty}e^{-tz}t^{s-1}\operatorname{tr}^{\flat}\left(e^{-t \widetilde{\mathcal{L}}_{V,\nabla}^{(k)}}\right)dt,
\end{equation}
where the flat-trace is given by the spectral-dynamical identity in Proposition \ref{prop:k-trace-formula}. 
\end{definition}

The flat-regularised determinant is then obtained by differentiating this auxiliary function with respect to the parameter $s$ and evaluating it at zero.

\begin{definition}
\label{def: 2-step-determinant}
The {flat-regularised determinant} of the operator $(\widetilde{\mathcal{L}}_{V, \nabla}^{(k)} + z)$ is defined by:
\begin{equation}
\log\left({\det}^{\flat}(\widetilde{\mathcal{L}}_{V, \nabla}^{(k)} +z)\right) \coloneqq - \partial_s\vert_{s=0}\mathcal{F}_{\widetilde{\mathcal{L}}_{V,\nabla}^{(k)}}(z, s).
\end{equation}
\end{definition}

To evaluate this expression, we observe that $\mathcal{F}_{\widetilde{\mathcal{L}}_{V,\nabla}}$ is well-defined for $\operatorname{Re}(z)$ sufficiently large. Since $\operatorname{tr}^{\flat} (e^{-t\widetilde{\mathcal{L}}_{V, \nabla}})$ is supported on non-zero multiples of the periods $T_\Lambda$, as per Proposition \ref{prop:k-trace-formula}, it is therefore supported away from $t=0$. Applying the Leibniz rule and noting that $1/\Gamma(s) = s + \mathcal{O}(s^2)$ near $s=0$, the derivative yields the following integral representation:
\begin{equation}
\label{eq: log-det-integral}
\log\left({\det}^{\flat}(\widetilde{\mathcal{L}}_{V, \nabla}^{(k)} +z)\right) = - \int_0^{\infty} e^{-tz} t^{-1} \operatorname{tr}^{\flat}\left(e^{-t \widetilde{\mathcal{L}}_{V,\nabla}^{(k)}}\right) \mathrm{d}t.
\end{equation}

\begin{rem}
\label{rem: r_k-(s=0)}
Equation \eqref{eq: log-det-integral} allows for a direct identification between the spectrum of $\widetilde{\mathcal{L}}_{V, \nabla}^{(k)}$ and the $k$-th component of the Ruelle zeta function. By testing the regularised flat-trace of the restricted Lie derivative against $f(t, z) \coloneqq -t^{-1}e^{-tz}$ and substituting the identity from Proposition \ref{prop:k-trace-formula}, we obtain:
\begin{align*}
    \int_0^{\infty} f(t, z) \operatorname{tr}^{\flat}&\left(e^{-t\widetilde{\mathcal{L}}_{V, \nabla}^{(k)}}\right) dt \\
    &= -(-1)^{n+1}\sum_{\Lambda, j \ge 1} \frac{\operatorname{tr}(\wedge^k\mathcal{P}_\Lambda^j) \operatorname{tr}(\rho(\Lambda)^j)}{|\det(\mathbb{I} - \mathcal{P}_\Lambda^j)|} \int_0^{\infty} T_\Lambda e^{-tz} \delta(t-jT_\Lambda) t^{-1} dt \\
    &= -(-1)^{n+1}\sum_{\Lambda, j \ge 1} \frac{\operatorname{tr}(\wedge^k\mathcal{P}_\Lambda^j) \operatorname{tr}(\rho(\Lambda)^j)}{|\det(\mathbb{I} - \mathcal{P}_\Lambda^j)|} \frac{e^{-jT_\Lambda z}}{j}.
\end{align*}

By simplifying the periods $T_\Lambda$ and comparing the resulting expression with the definition of the $k$-th dynamical factor $\mathfrak{R}_k(z)$ given in Equation \eqref{eq:def-r_k}, it follows that:
\begin{equation}
\label{eq: link-trace-Rk}
     \int_0^{\infty} -t^{-1}e^{-tz} \operatorname{tr}^{\flat}\left(e^{-t\widetilde{\mathcal{L}}_{V, \nabla}^{(k)}}\right) dt = (-1)^{n} \operatorname{log}(\mathfrak{R}_k(z)).
\end{equation}
\end{rem}

Crucially, for our purposes, we must ensure that these determinants are well-behaved at the origin. The following result provides the necessary analytic foundation.

\begin{prop}
\label{prop: analytical-at-zero-app}
The function $\operatorname{log}(\det^{\flat}(\widetilde{\mathcal{L}}_{V, \nabla}^{(k)} +z))$ is analytic in a neighbourhood of $z=0$.    
\end{prop}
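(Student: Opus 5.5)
By Equation \eqref{eq: log-det-integral} and Remark \ref{rem: r_k-(s=0)}, for $\operatorname{Re}(z)$ large we have the identity
\begin{equation*}
\log\left({\det}^{\flat}(\widetilde{\mathcal{L}}_{V, \nabla}^{(k)} +z)\right) = (-1)^{n}\log\left(\mathfrak{R}_k(z)\right),
\end{equation*}
so the plan is to show directly that the explicit series defining $\log\mathfrak{R}_k(z)$ converges to a holomorphic function on a neighbourhood of $z=0$. Morse--Smale flows have only finitely many closed orbits, so the only infinite sum is the one over the iterates $j\ge1$. For each closed orbit $\Lambda$ we must therefore control
\begin{equation*}
S_{\Lambda,k}(z) \coloneqq \sum_{j\ge1}\frac{e^{-zjT_\Lambda}}{j}\,\frac{\operatorname{tr}(\rho(\Lambda)^j)\operatorname{tr}(\wedge^k\mathcal{P}_\Lambda^j)}{|\det(\mathbb{I}-\mathcal{P}_\Lambda^j)|}.
\end{equation*}

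First we compute the coefficient explicitly. Write $\mathcal{P}_\Lambda = P_u\oplus P_s$ according to the splitting $T^u\oplus T^s$. The argument of Proposition \ref{prop:module-determinant-relation-app} gives
\begin{equation*}
|\det(\mathbb{I}-\mathcal{P}^j)| = |\det P_u|^{j}\,\det(\mathbb{I}-P_u^{-j})\,\det(\mathbb{I}-P_s^{j}).
\end{equation*}
Next, expand $\operatorname{tr}(\wedge^k\mathcal{P}^j)=\sum_{a+b=k}\operatorname{tr}(\wedge^aP_u^j)\operatorname{tr}(\wedge^bP_s^j)$. Then expand $\det(\mathbb{I}-P_u^{-j})^{-1}$ and $\det(\mathbb{I}-P_s^{j})^{-1}$ as convergent geometric series in symmetric powers, $\sum_m \operatorname{tr}(S^mP_u^{-j})$ and $\sum_m\operatorname{tr}(S^mP_s^j)$. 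After these expansions the summand becomes a sum, with bounded integer multiplicities, of terms of the form $\mu^j/j$, where $\mu$ ranges over products of an eigenvalue of $\rho(\Lambda)$ (unitary, so of modulus one), a factor $(\pm1)^j$ coming from $\operatorname{sign}\det P_u=\Delta(\Lambda)$, and ratios of Poincaré eigenvalues.

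Summing over $j$ produces terms $-\log(1-\mu e^{-zT_\Lambda})$. This yields the standard representation of $\mathfrak{R}_k$ as a product of factors $(1-\mu e^{-zT_\Lambda})^{\pm1}$: there are finitely many factors with $|\mu|=1$, and infinitely many with $|\mu|$ decaying geometrically. The geometrically decaying part converges uniformly on a neighbourhood of $0$ to a holomorphic function. It therefore remains to check that no factor with $|\mu|\ge 1$ has $\mu e^{-zT_\Lambda}\to1$ as $z\to0$. This is the main obstacle. The dangerous terms are those where all hyperbolic contributions cancel, leaving $\mu=\Delta(\Lambda)\,\lambda$ with $\lambda$ an eigenvalue of $\rho(\Lambda)$; these are exactly the terms that build the factor $\det(\mathbb{I}-\Delta(\Lambda)\rho(\Lambda)e^{-zT_\Lambda})$. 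Terms with $|\mu|>1$ must be shown not to occur, since every exponent of $P_u$ enters through $P_u^{-1}$ and the $\wedge^a P_u$ factor is exactly compensated only in top degree. The non-aligned hypothesis says precisely that $\Delta(\Lambda)$ is not an eigenvalue of $\rho(\Lambda)$, so $1-\Delta(\Lambda)\lambda\neq0$, and the logarithm is holomorphic near $z=0$.

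Finally, we conclude by uniqueness of analytic continuation. The identity displayed at the start, valid for large $\operatorname{Re}(z)$, extends $\log\det^\flat(\widetilde{\mathcal{L}}^{(k)}_{V,\nabla}+z)$ holomorphically to a neighbourhood of $0$. An alternative spectral route gives the same conclusion: the restricted resolvent $(\widetilde{\mathcal{L}}^{(k)}_{V,\nabla}+z)^{-1}$ is holomorphic near $0$ because $C^\bullet_{V,\nabla}(0)$ has been split off (Proposition \ref{prop:operator_toolbox}, Corollary \ref{cor:direct_sum_decomposition}). Hence $\partial_z\log\det^\flat=\operatorname{tr}^\flat$ of that resolvent has no pole at $0$. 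We would use this only as a consistency check.
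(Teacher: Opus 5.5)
Your argument is correct, but it works on the dynamical side. The paper's proof is the spectral argument you relegate to a consistency check. It takes the Laurent expansion of $(\mathcal{L}^{(k)}_{V,\nabla}+z)^{-1}$ at $0$ (Proposition \ref{prop:Laurent-expansion}) and notes that every polar term carries $\pi_0^{(k)}$. These terms therefore vanish on $\operatorname{Im}\big(\iota_V|_{\operatorname{Ker}\pi_0^{(k+1)}}\big)\subset\operatorname{Im}(\mathbb{I}-\pi_0^{(k)})$, and the paper concludes from holomorphy of the restricted resolvent together with \eqref{eq: log-det-integral}.

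You instead use Proposition \ref{prop:k-trace-formula} to identify $\log\det^\flat(\widetilde{\mathcal{L}}^{(k)}_{V,\nabla}+z)=(-1)^n\log\mathfrak{R}_k(z)$. You then expand, via $\operatorname{tr}(\wedge^aP_u^j)/|\det P_u|^j=\Delta(\Lambda)^j\operatorname{tr}(\wedge^{u-a}P_u^{-j})$ and the symmetric-power series, into a convergent product of factors $(1-\mu e^{-zT_\Lambda})^{-1}$ with $|\mu|\le 1$. The marginal case $|\mu|=1$ occurs only for $k=\operatorname{ind}(\Lambda)$, where it reassembles $\det(\mathbb{I}-\Delta(\Lambda)\rho(\Lambda)e^{-zT_\Lambda})^{-1}$.

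Your route buys three things:
\begin{itemize}
\item it is elementary once the trace formula is granted;
\item it shows that each $\mathfrak{R}_k$ separately, not just the alternating product, is holomorphic and non-vanishing near $0$;
\item it shows exactly where the non-aligned hypothesis is needed. The paper's proof never invokes that hypothesis at this step; it is hidden in the derivation of Proposition \ref{prop:k-trace-formula} through Proposition \ref{prop:zero_operator_consequence}.
\end{itemize}
The paper's route, in turn, is structural. It uses neither finiteness of the set of closed orbits nor explicit Poincar\'e data, so it is the argument that would survive in, e.g., the Anosov setting. However, it tacitly assumes that holomorphy of the operator-valued restricted resolvent passes to its flat trace. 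That needs wavefront-set control the paper does not spell out, and your argument avoids it entirely.

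Two points would make yours airtight. First, justify swapping $\sum_j$ with the expansion index by absolute convergence for $\operatorname{Re}(z)>0$. Second, uniform convergence of the non-marginal part near $z=0$ needs both $\sup|\mu|<1$ over those terms and $\sum|\mu|<\infty$. The latter holds because the multiplicities of the eigenvalue products grow only polynomially.
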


\begin{proof}
Consider the semi-group $e^{-t(z + \mathcal{L}_{V, \nabla}^{(k)})}$. As established in Subsection \ref{subsec:resonances_and_algebra}, this operator is bounded for $\operatorname{Re}(z)$ large enough and is linked to its resolvent via the integral
\begin{equation*}
\int_0^{+\infty} e^{-t(z + \widetilde{\mathcal{L}}_{V, \nabla}^{(k)})} dt = (\widetilde{\mathcal{L}}_{V, \nabla}^{(k)} + z)^{-1}.
\end{equation*}
Recalling Proposition \ref{prop:Laurent-expansion}, the resolvent of the full Lie derivative admits a meromorphic extension near $z_0=0$ of the form:
\begin{equation*}
\left(\mathcal{L}_{V, \nabla}^{(k)} + z\right)^{-1} = \sum_{l=1}^{m_k(z_0)} \frac{(-1)^{l-1} \left(\mathcal{L}_{V, \nabla}^{(k)} + z_0\right)^{l-1} \pi_{z_0}^{(k)}}{(z-z_0)^l} + F_{0,k}(z),
\end{equation*}
where $F_{0,k}(z)$ is holomorphic. By restricting the operator to $\text{Im}\big(\iota_V\big|_{\text{Ker}\, \pi_0^{(k+1)}}\big)$, which is contained in $\text{Im}(\mathbb{I} - \pi_0^{(k)})$, the terms associated with the poles (which all contain the projector $\pi_0^{(k)}$ at the numerator) vanish identically. It follows that the restricted resolvent $(\widetilde{\mathcal{L}}_{V, \nabla}^{(k)} + z)^{-1}$ is holomorphic at $z=0$. Given the relation in Equation \eqref{eq: log-det-integral}, this ensures the analyticity of the regularised determinant.
\end{proof}

\begin{theorem}
\label{thm: Ruelle-as-prod-det-app}
The value of the Ruelle zeta function at the origin is given by
\begin{equation}
\mathfrak{R}_{V, \rho}(0) = \prod_{k=0}^{n-1}\left[{\det}^{\flat}(\widetilde{\mathcal{L}}_{V, \nabla}^{(k)})\right]^{(-1)^{n+k}},
\end{equation}
where $\det^{\flat}(\widetilde{\mathcal{L}}_{V, \nabla}^{(k)})$ denotes the flat-regularised determinant evaluated at $z=0$.
\end{theorem}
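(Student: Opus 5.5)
The plan is to assemble the identity from three earlier results: the degree-by-degree factorisation of Proposition \ref{prop:product-of-zetas}, the integral representation \eqref{eq: log-det-integral} of the flat determinant, and its identification with $\mathfrak{R}_k$ in Remark \ref{rem: r_k-(s=0)}. All of these will be worked out first for $\operatorname{Re}(z)$ large and then transported to $z=0$ by analytic continuation.

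\emph{Step one.} Fix $z$ with $\operatorname{Re}(z)>\sigma_0$, so that every series involved converges absolutely. Combining \eqref{eq: log-det-integral} with \eqref{eq: link-trace-Rk} gives, for each $0\le k\le n-1$,
\begin{equation*}
\log\det\nolimits^\flat\bigl(\widetilde{\mathcal{L}}^{(k)}_{V,\nabla}+z\bigr) = (-1)^n\log\mathfrak{R}_k(z),
\end{equation*}
and hence $\det^\flat(\widetilde{\mathcal{L}}^{(k)}_{V,\nabla}+z)^{(-1)^n}=\mathfrak{R}_k(z)$. The degree $k=n$ does not appear. The reason is that $\operatorname{tr}(\wedge^n\mathcal{P}_\Lambda^j)=0$, since $\mathcal{P}_\Lambda$ acts on the $(n-1)$-dimensional space $T_p\Sigma$; so the restricted flat trace in degree $n$ vanishes by Proposition \ref{prop:k-trace-formula}.

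\emph{Step two.} Raise each identity to the power $(-1)^k$ and take the product over $k$. Proposition \ref{prop:product-of-zetas} then yields
\begin{equation*}
\mathfrak{R}_{V,\rho}(z)=\prod_{k=0}^{n-1}\mathfrak{R}_k(z)^{(-1)^k}=\prod_{k=0}^{n-1}\Bigl[\det\nolimits^\flat\bigl(\widetilde{\mathcal{L}}^{(k)}_{V,\nabla}+z\bigr)\Bigr]^{(-1)^{n+k}},
\end{equation*}
valid on the half-plane $\operatorname{Re}(z)>\sigma_0$.

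\emph{Step three.} Continue both sides analytically to $z=0$ and evaluate. The left-hand side is analytic at $z=0$ and nonzero there, by Proposition \ref{prop:ruelle-zeta-well-defined-ms} together with the non-aligned hypothesis. On the right-hand side, each $\log\det^\flat(\widetilde{\mathcal{L}}^{(k)}_{V,\nabla}+z)$ is analytic near $0$ by Proposition \ref{prop: analytical-at-zero-app}, so each factor continues to a nonvanishing analytic function near $0$. Since both sides are meromorphic and agree on an open set, they agree on the connected domain where both are defined, which contains a neighbourhood of $0$. Setting $z=0$ gives the statement.

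The main obstacle is this continuation step, specifically the meaning of $\det^\flat(\widetilde{\mathcal{L}}^{(k)}_{V,\nabla})$ at $z=0$. The integral \eqref{eq: log-det-integral} is only absolutely convergent for large $\operatorname{Re}(z)$, so "evaluated at $z=0$" must be read as the value of the analytic continuation.

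To justify this, I would compute explicitly. Inserting the trace formula turns the integral into the finite sum over closed orbits of
\begin{equation*}
\sum_{j\ge1}\frac{e^{-jT_\Lambda z}}{j}\,\frac{\operatorname{tr}(\rho(\Lambda)^j)\operatorname{tr}(\wedge^k\mathcal{P}^j_\Lambda)}{|\det(\mathbb{I}-\mathcal{P}^j_\Lambda)|}.
\end{equation*}
Expand in the eigenvalues of $\mathcal{P}_\Lambda$, using the sign relation of Proposition \ref{prop:module-determinant-relation-app}. Each term then becomes a finite combination of logarithms $\log\det(\mathbb{I}-\Delta(\Lambda)\rho(\Lambda)\lambda e^{-T_\Lambda z})$, where $\lambda$ ranges over products of Poincaré eigenvalues. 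This makes the continuation to $z=0$ explicit. In the exponents $\lambda=1$, finiteness at $z=0$ is exactly the non-aligned condition, which is also how Proposition \ref{prop: analytical-at-zero-app} enters.
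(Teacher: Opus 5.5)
Your proposal is correct and follows the paper's proof. You identify $\mathfrak{R}_k(z)=\det\nolimits^{\flat}(\widetilde{\mathcal{L}}^{(k)}_{V,\nabla}+z)^{(-1)^n}$ for large $\operatorname{Re}(z)$ via \eqref{eq: log-det-integral} and \eqref{eq: link-trace-Rk}, insert this into Proposition \ref{prop:product-of-zetas}, and pass to $z=0$ using Propositions \ref{prop: analytical-at-zero-app} and \ref{prop:ruelle-zeta-well-defined-ms}, exactly as the paper does.

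One slip is in your optional explicit-continuation sketch. Expanding $1/|\det(\mathbb{I}-\mathcal{P}_\Lambda^j)|$ in Poincar\'e eigenvalues gives a convergent \emph{infinite} product of powers of $\det(\mathbb{I}-\Delta(\Lambda)\rho(\Lambda)\lambda e^{-T_\Lambda z})$, not a finite combination. In it $|\lambda|<1$ for every term except the single term $\lambda=1$, which occurs in degree $k=\operatorname{ind}(\Lambda)$. Your conclusion still holds: only this term can degenerate at $z=0$, and non-alignment rules that out.
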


\begin{proof}
The proof proceeds by considering the expression of the regularised determinant obtained in Equation \eqref{eq: log-det-integral}:
\begin{equation*}
\log\left({\det}^{\flat}(\widetilde{\mathcal{L}}_{V, \nabla}^{(k)} +z)\right) = - \int_0^{\infty} e^{-tz} t^{-1} \operatorname{tr}^{\flat}\left(e^{-t \widetilde{\mathcal{L}}_{V,\nabla}^{(k)}}\right) \mathrm{d}t
\end{equation*}
and the identity established in Remark \ref{rem: r_k-(s=0)}, which brings us to
\begin{equation*}
  \mathfrak{R}_k(z) = {\det}^{\flat}(\widetilde{\mathcal{L}}_{V, \nabla}^{(k)} + z)^{(-1)^{n}}.
\end{equation*}
This identity, which is \textit{a priori} valid for $\operatorname{Re}(z)$ large enough, can be substituted in the product formula derived in Proposition \ref{prop:product-of-zetas}, obtaining:
\begin{align*}
    \mathfrak{R}_{V, \rho}(z) &= \prod_{k=0}^{n-1} \mathfrak{R}_k(z)^{(-1)^k} \\
    &= \prod_{k=0}^{n-1} \left[ {\det}^{\flat}(\widetilde{\mathcal{L}}_{V, \nabla}^{(k)} + z)^{(-1)^n} \right]^{(-1)^k} \\
    &= \prod_{k=0}^{n-1} \left[ {\det}^{\flat}(\widetilde{\mathcal{L}}_{V, \nabla}^{(k)} + z) \right]^{(-1)^{n+k}}.
\end{align*}
To evaluate this identity at $z=0$, we rely on the analytic properties of the objects involved. On the right hand side, Proposition \ref{prop: analytical-at-zero-app} ensures that the regularised determinants $\det^{\flat}(\widetilde{\mathcal{L}}_{V, \nabla}^{(k)} + z)$ are analytic in a neighbourhood of the origin. On the left hand side, as stated in Proposition \ref{prop:ruelle-zeta-well-defined-ms}, the Ruelle zeta function admits a meromorphic continuation to the entire complex plane which is analytic at $z=0$, see also \cite[Section 4]{Shen21Survey}. Therefore, the identity remains valid at the origin.
\end{proof}

\subsection{Analytic Torsion}
\label{sec: analytic-torsion}

We now briefly introduce the Ray--Singer analytic torsion. Its definition exhibits a structure formally analogous to that of the Ruelle zeta function as expressed in Theorem \ref{thm: Ruelle-as-prod-det-app}, \textit{i.e.}, an alternating product of regularised determinants. However, a fundamental distinction must be made regarding both the underlying operators and the functional framework in which they operate: while the Ruelle zeta function is related to the spectral data of the Lie derivative $\mathcal{L}_{V, \nabla}^{(k)}$ acting on the anisotropic Sobolev spaces $\mathcal{H}^k_m(M, \mathcal{E})$, the analytic torsion is defined through the twisted Laplacian $\Delta_k$, which is naturally defined on the space of twisted differential forms.

The  latter is a global invariant taking values in the {determinant line} of the cohomology, $\text{Det}(H^\bullet(M, \mathcal{E}))$. We shall not worry about the well-posedness of this space, since, for a compact manifold $M$, twisted de Rham cohomology groups $H^k(M, \mathcal{E})$ are finite-dimensional vector spaces, a result detailed in \cite[Section 2.3]{DangRiviere17Topology}. Let us fix a specific element within this space: the canonical volume element $\eta_H$.
\begin{equation}
    \eta_H \coloneqq \bigotimes_{k=0}^n \left( h_{k,1} \wedge \dots \wedge h_{k, b_k} \right)^{(-1)^k} \in \text{Det}(H^\bullet(M, \mathcal{E})),
\end{equation}
where $\{h_{k,j}\}$ denotes an orthonormal basis for $H^k(M, \mathcal{E})$ induced by the $L^2$-inner product on harmonic forms. In this regard, we recall that for a compact manifold, the space of twisted differential forms admits the Hodge decomposition (see \cite{Mol26} for a concise recall):
\begin{equation}
\label{eq: Hodge-decomposition}
    \Omega^{k}(M, \mathcal{E}) = \mathcal{H}^k(M, \mathcal{E}) \oplus \text{Im}(d^\nabla_{k-1}) \oplus \text{Im}(d_k^{\nabla, \dagger}),
\end{equation}
where $\mathcal{H}^k(M, \mathcal{E}) \doteq \text{Ker}(\Delta_k)$, the space of harmonic forms, is canonically isomorphic to the $k$-th twisted cohomology group. It is important to remark that while $\eta_H$ is defined on the twisted de Rham cohomology, this space is canonically isomorphic to the cohomology of the complex of resonant states at zero, as well as to the total cohomology of the anisotropic Sobolev spaces $\mathcal{H}^\bullet_m(M, \mathcal{E})$, as seen in Equation \eqref{eq:iso-cohomologies}.

As recalled in \cite[Section 8]{Mnev14} and \cite{Ray71}, $\Delta_k$ is an elliptic, self-adjoint operator characterised by a purely discrete non-negative spectrum. This spectral structure allows us to define its determinant by employing the same regularisation procedure introduced in Definition \ref{def:flat_determinant}.

\begin{definition}
\label{def:flat-det-laplacian}
The \textbf{flat-regularised determinant} of the Laplacian $\Delta_k$, restricted to the orthogonal complement of its kernel, is defined as:
\begin{equation}
    \log \left(\det{}^\flat (\Delta_k)\right)
    \coloneqq -\frac{d}{ds}\bigg|_{s=0}
       \left[ \frac{1}{\Gamma(s)}
              \int_0^\infty t^{s-1}
              \operatorname{tr}^\flat\!\left( e^{-t\Delta_k} - \Pi_0 \right) dt
       \right],
\end{equation}
where $\Pi_0$ is the spectral projector onto the space of harmonic forms $\mathcal{H}^k(M, \mathcal{E})$.
\end{definition}

By combining these determinants across all degrees $k$, we can now provide the formal definition of the torsion as an element within the determinant line of the cohomology.

\begin{definition}
\label{def:Ray--Singer-torsion-flat}
The \textbf{Ray--Singer analytic torsion} $\tau(M, \mathcal{E}) \in \text{Det}(H^\bullet(M, \mathcal{E}))$ is defined as:
\begin{equation}
    \tau(M, \mathcal{E}) \coloneqq \left( \prod_{k=0}^{n} \left(\det{}^\flat(\Delta_k)\right)^{(-1)^{k+1} k/2} \right) \cdot \eta_H.
\end{equation}
\end{definition}

The topological nature of this construction is fundamentally supported by the Cheeger--Müller theorem \cite{Cheeger79, Muller78}, which ensures that $\tau(M, \mathcal{E})$ is independent of the choice of the auxiliary Riemannian metric $g$ and the Hermitian metric $h$ on the bundle.

Furthermore, the numerical part in Definition \ref{def:Ray--Singer-torsion-flat} can be rearranged to focus on the contribution of coexact forms (we refer the reader to \cite{Mol26} for a detailed proof of this identity).

\begin{prop}
\label{prop:torsion-rearrangement-flat}
The analytic torsion can be expressed as:
\begin{equation}
    \tau(M, \mathcal{E}) = \left[ \prod_{k=1}^{n} \left(\det{}^\flat(d^{\nabla, \dagger}_{k-1} d^\nabla_{k-1})\right)^{(-1)^{k+1}/2} \right] \cdot \eta_H,
\end{equation}
where $\det{}^\flat(d^{\nabla, \dagger} d^\nabla)$ denotes the regularised determinant of the Laplacian restricted to the image of the adjoint operator.
\end{prop}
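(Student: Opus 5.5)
We need to prove Proposition \ref{prop:torsion-rearrangement-flat}: rewrite $\prod_k \det{}^\flat(\Delta_k)^{(-1)^{k+1}k/2}$ as $\prod_{k=1}^n \det{}^\flat(d^{\nabla,\dagger}_{k-1}d^\nabla_{k-1})^{(-1)^{k+1}/2}$.

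The plan is to use the Hodge decomposition \eqref{eq: Hodge-decomposition} to split each Laplacian on the orthogonal complement of its kernel. On $\Omega^k(M,\mathcal{E})$ we have $\Delta_k = d^\nabla_{k-1}d^{\nabla,\dagger}_{k-1} + d^{\nabla,\dagger}_k d^\nabla_k$. The first summand vanishes on $\text{Im}(d^{\nabla,\dagger}_k)$, and the second vanishes on $\text{Im}(d^\nabla_{k-1})$, because $d^2=0$ and $(d^\dagger)^2=0$. Hence $\Delta_k|_{\ker\Delta_k^\perp}$ is block diagonal:
\begin{equation*}
\Delta_k = \Delta_k^{\mathrm{ex}} \oplus \Delta_k^{\mathrm{coex}}, \qquad \Delta_k^{\mathrm{ex}} = d^\nabla_{k-1}d^{\nabla,\dagger}_{k-1}\big|_{\text{Im}\, d^\nabla_{k-1}}, \quad \Delta_k^{\mathrm{coex}} = d^{\nabla,\dagger}_k d^\nabla_k\big|_{\text{Im}\, d^{\nabla,\dagger}_k}.
\end{equation*}
The heat operator $e^{-t\Delta_k}-\Pi_0$ splits accordingly, so its trace is additive. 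Since both pieces are restrictions of an elliptic self-adjoint operator, the flat trace coincides with the ordinary trace. By Definition \ref{def:flat-det-laplacian}, $\log\det{}^\flat(\Delta_k) = \log\det{}^\flat(\Delta_k^{\mathrm{ex}}) + \log\det{}^\flat(\Delta_k^{\mathrm{coex}})$, because the zeta functions add before taking $-\partial_s|_{s=0}$.

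The next step is the isospectrality $\Delta_{k+1}^{\mathrm{ex}} \simeq \Delta_k^{\mathrm{coex}}$. The map $d^\nabla_k\colon \text{Im}(d^{\nabla,\dagger}_k)\to\text{Im}(d^\nabla_k)$ is injective, since $\ker d^\nabla_k\cap\text{Im}\,d^{\nabla,\dagger}_k=0$. It is also surjective onto $\text{Im}\, d^\nabla_k$ by the Hodge decomposition. It intertwines the two operators: $d(d^\dagger d)=(dd^\dagger)d$. Eigenvalues therefore match with multiplicities, and the heat traces coincide, giving $\det{}^\flat(\Delta_{k+1}^{\mathrm{ex}})=\det{}^\flat(\Delta_k^{\mathrm{coex}}) \eqqcolon D_k$, where $D_k=\det{}^\flat(d^{\nabla,\dagger}_k d^\nabla_k)$ in the paper's notation. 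With $D_{-1}=D_n=1$, we get $\det{}^\flat(\Delta_k)=D_{k-1}D_k$.

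The final step is bookkeeping of exponents:
\begin{equation*}
\sum_{k=0}^n (-1)^{k+1}\tfrac{k}{2}\big(\log D_{k-1}+\log D_k\big) = \sum_{j=0}^{n-1}\tfrac12\log D_j\big[(-1)^{j}(j+1) + (-1)^{j+1} j\big] = \sum_{j=0}^{n-1}\tfrac{(-1)^j}{2}\log D_j.
\end{equation*}
Reindexing with $k=j+1$ gives exponent $(-1)^{k+1}/2$ on $\det{}^\flat(d^{\nabla,\dagger}_{k-1}d^\nabla_{k-1})$ for $k=1,\dots,n$. Multiplying by $\eta_H$, which is untouched, yields the claimed formula.

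The main obstacle is analytic rather than algebraic: justifying that the regularised determinants of the restricted pieces are well defined and additive. Each restricted heat trace is only a piece of an elliptic heat trace, so the small-$t$ asymptotics must still give a meromorphic zeta function regular at $s=0$. I would handle this by writing $\Delta_k^{\mathrm{coex}}$ traces as alternating sums $\sum_{j\le k}(-1)^{k-j}\operatorname{tr}(e^{-t\Delta_j}-\Pi_0)$, obtained by iterating the splitting. This expresses each piece via full Laplacian heat traces, whose Mellin transforms are known to be regular at $s=0$.
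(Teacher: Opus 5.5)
Your argument is correct and is the standard one. The paper gives no in-text proof and defers to \cite{Mol26}, describing the identity only as a rearrangement onto the coexact contributions; that is exactly your route of splitting $\det{}^\flat\Delta_k = D_{k-1}D_k$ via the intertwiner $d^\nabla_k$ and then telescoping the exponents, and your alternating-sum expression of the coexact heat traces in terms of full Laplacian heat traces also covers the regularity at $s=0$, which the paper does not address.
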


\begin{rem}
\label{rem:zeta-equivalence}
In the traditional literature concerning Ray--Singer torsion \cite{Ray71}, the determinant is typically defined via zeta-regularisation. However, the flat-trace approach employed here is completely equivalent to this standard method. A comprehensive, albeit non-exhaustive, treatment of the classical framework can be found in \cite{MathaiWu11Twisted} or \cite{Shen21Survey}.
\end{rem}
\section{Abelian BF Theory in the BV Formalism} 
\label{sec: BF-theory-and-Schwarz}

In this section, we define the twisted Abelian $BF$ theory and formulate it within the Batalin--Vilkovisky framework, following what has been done in \cite{HadfieldKandelSchiavina20Ruelle, Schiavina23, Mnev19}. This setup provides the necessary tools to handle the theory's reducible gauge symmetries and serves as the formal basis for the investigation of its partition function. We begin by recalling the classical formulation of the theory to highlight the gauge redundancies that necessitate the use of the BV formalism.

\begin{definition}[\textbf{Classical $BF$ Theory}]
\label{def: BF-classical-theory}
The space of classical fields for twisted Abelian $BF$ theory is defined as $\mathcal{F}_{cl} \coloneqq \Omega^1(M, \mathcal{E}) \oplus \Omega^{n-2}(M, \mathcal{E})$. The $BF$ action functional $S_{BF}: \mathcal{F}_{cl} \to \mathbb{C}$ is given by:
\begin{equation}
\label{eq:classical_bf_action}
    S_{BF}[A, B] = \int_M B \wedge_h d^\nabla A,
\end{equation}
where $A \in \Omega^1(M, \mathcal{E})$ and $B \in \Omega^{n-2}(M, \mathcal{E})$.
\end{definition}

The Euler-Lagrange equations are obtained by taking the variation of the action with respect to $A$ and $B$, yielding $d^\nabla A = 0$ and $d^\nabla B = 0$. The critical locus $\mathcal{F}_{\mathcal{EL}}$ thus consists of the subspace of closed fields. 

\begin{rem}
A key feature of $S_{BF}$ is its invariance under the gauge transformations:
\begin{equation}
\label{eq:gauge_shift}
    A \to A + d^\nabla \epsilon, \quad B \to B + d^\nabla \xi,
\end{equation}
for $\epsilon \in \Omega^0(M, \mathcal{E})$ and $\xi \in \Omega^{n-3}(M, \mathcal{E})$. Crucially, the gauge parameters themselves possess redundancies (e.g., $\, \xi \to \xi + d^\nabla \omega$ does not affect the shift of $B$). This "symmetry of a symmetry" continues as a finite tower of reducibilities, suggesting the use of the BV formalism and its unified graded framework, as partly shown in Subsection \ref{sec:BV-data}.
\end{rem}

The classical field space is extended to a $\mathbb{Z}$-graded space of inhomogeneous forms, defined as the direct sum of degree-shifted twisted differential forms:
\begin{equation}
\label{eq:bv_field_space}
    \mathcal{F}_{BF} \coloneqq \Omega^{\bullet}(M, \mathcal{E})[1] \oplus \Omega^{\bullet}(M, \mathcal{E})[n-2].
\end{equation}
Hence, a field configuration is a pair $(\mathbb{A}, \mathbb{B})$ of forms of inhomogeneous degrees. The specific degree shifts in Equation \eqref{eq:bv_field_space} ensure the consistency of the underlying BV algebra, see \cite[Remark 26]{HadfieldKandelSchiavina20Ruelle} for details. Consequently, the components $\mathcal{A}^{(k)} \in \Omega^k(M, \mathcal{E})[1]$ and $\mathcal{B}^{(k)} \in \Omega^k(M, \mathcal{E})[n-2]$ possess BV degrees $|\mathcal{A}^{(k)}|_{BV} = k-1$ and $|\mathcal{B}^{(k)}|_{BV} = k-n+2$, respectively.

The space of fields $\mathcal{F}_{BF}$ is naturally endowed with a symplectic form $\Omega_{BF}$ of degree $-1$:
\begin{equation}
\label{eq:BV_form}
    \Omega_{BF}((0, \mathbb{B}), (\mathbb{A}, 0)) \coloneqq \int_M \left[ \mathbb{B} \wedge_h \mathbb{A} \right]^{\text{top}} = \sum_{k=0}^n \int_M \mathcal{B}^{(k)} \wedge_h \mathcal{A}^{(n-k)}.
\end{equation}
One can readily verify that $\Omega_{BF}$ satisfies the requirement $|\Omega_{BF}|_{BV} = -1$. Indeed, for homogeneous elements $\mathbb{B}$ and $\mathbb{A}$ whose geometric degrees sum to the dimension $n$, the combined BV degree is $(k-n+2) + (n-k-1) = 1$, which is compensated by the $-1$ intrinsic degree of the symplectic pairing to yield a degree-0 scalar.

The dynamics and the gauge structure of the theory are captured by the BV action functional $\mathbb{S}_{BF}: \mathcal{F}_{BF} \to \mathbb{C}$ and the associated BV operator $Q_{BF}$. The first is a degree $0$ functional defined as:
\begin{equation}
\label{eq:BV_action}
    \mathbb{S}_{BF}(\mathbb{A}, \mathbb{B}) \coloneqq \int_M \left[\mathbb{B} \wedge_h d^\nabla \mathbb{A} \right]^{\text{top}},
\end{equation}
while the BV operator $Q_{BF}$ is a degree $1$ vector field defined as:
\begin{equation}
\label{eq:BV_operator}
    Q_{BF}(\mathbb{A}, \mathbb{B}) = (d^\nabla \mathbb{A}, d^\nabla \mathbb{B}).
\end{equation}
The compatibility of the previous definitions with the structural requirements of the BV formalism is established by the following result, which identifies $Q_{BF}$ as the Hamiltonian vector field of the theory.

\begin{prop}
\label{prop:BV-consistency}
The operator $Q_{BF}$ is the Hamiltonian vector field associated with the action $\mathbb{S}_{BF}$ with respect to $\Omega_{BF}$, \textit{i.e.}, $\iota_{Q_{BF}} \Omega_{BF} = - \delta \mathbb{S}_{BF}$. Furthermore, $\mathbb{S}_{BF}$ satisfies the Classical Master Equation:
\begin{equation}
    \{\mathbb{S}_{BF}, \mathbb{S}_{BF}\} = 0,
\end{equation}
where $\{\cdot, \cdot\}$ is the Poisson bracket induced by $\Omega_{BF}$.
\end{prop}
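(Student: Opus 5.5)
The plan is to verify both claims by direct computation on homogeneous components, using three ingredients: the graded Leibniz rule for $d^\nabla$, flatness of $\nabla$ ($(d^\nabla)^2=0$), and compatibility of $\nabla$ with the Hermitian pairing $h$. Unitarity of $\rho$ gives
\[
d\bigl(\alpha\wedge_h\beta\bigr)=d^\nabla\alpha\wedge_h\beta+(-1)^{|\alpha|}\alpha\wedge_h d^\nabla\beta .
\]
Since $\partial M=\emptyset$, Stokes' theorem makes the integral of the left-hand side over $M$ vanish. This yields the graded integration by parts
\[
\int_M [d^\nabla\alpha\wedge_h\beta]^{\mathrm{top}}=-(-1)^{|\alpha|}\int_M[\alpha\wedge_h d^\nabla\beta]^{\mathrm{top}},
\]
which is the only analytic input needed.

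For the Hamiltonian property, I first compute the variation $\delta\mathbb{S}_{BF}$. Because the action is bilinear,
\[
\delta\mathbb{S}_{BF}=\int_M[\delta\mathbb{B}\wedge_h d^\nabla\mathbb{A}]^{\mathrm{top}}+\int_M[\mathbb{B}\wedge_h d^\nabla\delta\mathbb{A}]^{\mathrm{top}}.
\]
I then move $d^\nabla$ off $\delta\mathbb{A}$ in the second term with the integration by parts above, so that $\delta\mathbb{S}_{BF}$ is written as a sum of a $\delta\mathbb{B}$ term paired with $d^\nabla\mathbb{A}$ and a $\delta\mathbb{A}$ term paired with $d^\nabla\mathbb{B}$. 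Next I evaluate $\iota_{Q_{BF}}\Omega_{BF}$ using the pairing \eqref{eq:BV_form} and $Q_{BF}=(d^\nabla\mathbb{A},d^\nabla\mathbb{B})$. Contracting $Q$ into the $\mathbb{A}$ slot gives $\int\delta\mathbb{B}\wedge_h d^\nabla\mathbb{A}$, and contracting into the $\mathbb{B}$ slot gives $\int d^\nabla\mathbb{B}\wedge_h\delta\mathbb{A}$, each up to signs. Matching the two expressions term by term, degree by degree, should give $\iota_{Q_{BF}}\Omega_{BF}=-\delta\mathbb{S}_{BF}$.

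I expect the main obstacle to be the bookkeeping of signs. Three sources of sign interact: the BV degrees $|\mathcal{A}^{(k)}|=k-1$ and $|\mathcal{B}^{(k)}|=k-n+2$, the form degrees entering the Leibniz rule, and the Koszul signs from commuting $\delta$ and $\iota_Q$ past odd objects. The overall sign in $\iota_Q\Omega=\pm\delta S$ is convention-dependent, so I would fix the conventions explicitly. To do this, I first check the bottom-degree case, namely the classical action $\int B\wedge_h d^\nabla A$ with $A$ a 1-form and $B$ an $(n-2)$-form, and choose the orientation of $\iota_Q$ so that the sign there is $-1$. I then argue that the same sign persists in every degree, because the total-degree parity of each pairing is fixed: form degrees sum to $n$ and BV degrees sum to $1$, as already noted after \eqref{eq:BV_form}.

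For the Classical Master Equation, I use the Hamiltonian property just established. With it, $\{\mathbb{S}_{BF},\mathbb{S}_{BF}\}=\pm\,\iota_{Q_{BF}}\delta\mathbb{S}_{BF}=\pm\,Q_{BF}(\mathbb{S}_{BF})$. Computing directly,
\[
Q_{BF}(\mathbb{S}_{BF})=\int_M[d^\nabla\mathbb{B}\wedge_h d^\nabla\mathbb{A}]^{\mathrm{top}}+\int_M[\mathbb{B}\wedge_h (d^\nabla)^2\mathbb{A}]^{\mathrm{top}} .
\]
The second term vanishes by flatness. The first term equals $\pm\int_M d\bigl(\mathbb{B}\wedge_h d^\nabla\mathbb{A}\bigr)$ by Leibniz together with $(d^\nabla)^2=0$, so it vanishes by Stokes. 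An alternative route is to observe that $[Q_{BF},Q_{BF}]=2(d^\nabla)^2=0$ and invoke the general statement that the Hamiltonian function of a cohomological symplectic vector field satisfies the CME up to a constant, which vanishes here for degree reasons because $\mathbb{S}_{BF}$ is quadratic.
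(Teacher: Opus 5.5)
Your proposal is correct and follows the paper's proof essentially step for step. You vary the action and integrate by parts via Stokes to match $-\iota_{Q_{BF}}\Omega_{BF}$, then obtain the CME from $\{\mathbb{S}_{BF},\mathbb{S}_{BF}\}=Q_{BF}(\mathbb{S}_{BF})$, $(d^\nabla)^2=0$, and Stokes. Your extra remarks on fixing sign conventions and the alternative route via $[Q_{BF},Q_{BF}]=2(d^\nabla)^2=0$ go slightly beyond what the paper writes but do not change the argument.
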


\begin{proof}
By considering the variation of the action functional, we obtain $\delta \mathbb{S}_{BF} = \int_M [\delta \mathbb{B} \wedge_h d^\nabla \mathbb{A} + \mathbb{B} \wedge_h \delta (d^\nabla \mathbb{A})]$. Integrating by parts on a manifold without boundary yields $\delta \mathbb{S}_{BF} = \int_M [\delta \mathbb{B} \wedge_h d^\nabla \mathbb{A} - d^\nabla \mathbb{B} \wedge_h \delta \mathbb{A}]^{\text{top}}$, which matches exactly the expression for $-\iota_{Q_{BF}} \Omega_{BF}$ derived from Equation \eqref{eq:BV_form}.
The CME then follows from the identity $\{\mathbb{S}_{BF}, \mathbb{S}_{BF}\} = Q_{BF}(\mathbb{S}_{BF})$. Applying the BV operator to the integrand, we find:
\begin{equation*}
    Q_{BF}(\mathbb{B} \wedge_h d^\nabla \mathbb{A}) = (d^\nabla \mathbb{B}) \wedge_h d^\nabla \mathbb{A} + (-1)^{|\mathbb{B}|} \mathbb{B} \wedge_h (d^\nabla)^2 \mathbb{A} = d^\nabla(\mathbb{B} \wedge_h d^\nabla \mathbb{A}).
\end{equation*}
Since $(d^\nabla)^2 = 0$, the integrand reduces to an exact form whose integral vanishes by Stokes' Theorem, thus establishing that $\{\mathbb{S}_{BF}, \mathbb{S}_{BF}\} = 0$.
\end{proof}

To conclude, we give the following definition.

\begin{definition}
\label{def: BF-data-in-BV-formalism}
The assignment $(M, \mathcal{E}) \rightsquigarrow (\mathcal{F}_{BF}, \Omega_{BF}, \mathbb{S}_{BF}, Q_{BF})$ defines the twisted Abelian $BF$ theory in the Batalin--Vilkovisky formalism.
\end{definition}

\section
{Analytic Torsion from Metric Gauge Fixing} 
\label{sec: metric-gauge}

The relationship between quadratic field theories and topological invariants was pioneered by Schwarz in \cite{Sch78, Sch79}, who demonstrated that the formal path integral of such theories computes the analytic torsion when the twisted de Rham cohomology vanishes, \textit{i.e.}:
\begin{equation}
\label{eq: partition-function-torsion}
    \mathcal{Z}_{Sch} = \prod_{k=1}^{n} \left(\det\nolimits^{\flat} \left(d^{\nabla, \dagger}_k d^\nabla_k\right)\right)^{(-1)^{k}/2}.
\end{equation}
Specifically, this construction restricts the integration to the subspace of co-exact forms $\text{Im}(d^{\nabla, \dagger}_k)$, where the relevant Laplacian-type operators are invertible. This procedure admits a natural and rigorous reformulation within the BV framework when applied to the twisted Abelian $BF$ theory introduced in Definition \ref{def: BF-classical-theory}, as established in \cite{HadfieldKandelSchiavina20Ruelle} for the original acyclic case.

Since we intend to treat the theory introduced in Section \ref{sec: BF-theory-and-Schwarz} in which the field space possesses nontrivial cohomology, we implement the evaluation of the partition function following the general paradigm for infinite-dimensional quadratic action functional established in Subsection \ref{sec:inf-dim-gaussian}. In this context, by choosing an auxiliary Riemannian metric $g$ on the manifold, we rely on the Hodge decomposition to partition the field space into a finite-dimensional harmonic sector and an infinite-dimensional fluctuating complement. Within the latter, we then single out a gauge-fixing Lagrangian subspace to implement the BV pushforward paradigm.

Central to the discussion in this and the next section is the interplay between the Ray–Singer and Milnor metrics, whose definitions follow the treatment in \cite{Shen21}. We invite the reader to consult that work and the review \cite{Mol26} for an exhaustive account of these topics.

\vspace{\baselineskip}

To rigorously implement this metric gauge-fixing and the associated Hodge decomposition, we extend the classical field space $\mathcal{F}_{cl}$ to its $L^2$ Hilbert space completion, which we denote by $\mathcal{F}_{BF}^{L^2}$ (we refer to \cite[Lemma 2.1 and Corollary 2.5]{BruningLesch1992} for the rigorous Hilbert complex formulation). Note that the finite-dimensional notion of a Lagrangian subspace seen in Definition \ref{def:lagrangian_subspace} naturally extends to this infinite-dimensional Hilbert space setting, simply requiring a direct sum decomposition into two isotropic subspaces.

\begin{definition}
\label{def: decomposition-F-BF}
    We define the cohomological and fluctuating subspaces of $\mathcal{F}_{BF}^{L^2}$ respectively as:
\begin{equation}
\begin{split}
    \mathcal{F}_{BF}' &\coloneqq \mathcal{H}^{\bullet}(M, \mathcal{E})[1] \bigoplus \mathcal{H}^{\bullet}(M, \mathcal{E})[n-2], \\
    \mathcal{F}_{BF}'' &\coloneqq \left(\text{Im}\left(d^\nabla_{\bullet-1}\right) \oplus \text{Im}\left(d^{\nabla, \dagger}_{\bullet}\right)\right)[1] \bigoplus \left(\text{Im}\left(d^\nabla_{\bullet-1}\right) \oplus \text{Im}\left(d^{\nabla, \dagger}_{\bullet}\right)\right)[n-2],
\end{split}
\end{equation}
where the images of the operators are understood as their closures in the $L^2$ topology.
\end{definition}

As a direct consequence of the $L^2$ Hodge decomposition, we obtain an orthogonal splitting of the BV field space:
\begin{equation}
\label{eq:field_splitting}
\mathcal{F}_{BF}^{L^2} = \mathcal{F}_{BF}' \oplus \mathcal{F}_{BF}''.
\end{equation}

The finite-dimensionality of the residual sector $\mathcal{F}_{BF}'$ ensures that the associated partition function is well-defined. Consequently, the divergences inherent to the path integral are localised within the infinite-dimensional complement $\mathcal{F}_{BF}''$, which serves as the concrete realisation of the fluctuation sector introduced in the general framework of Subsection \ref{sec:inf-dim-gaussian}. Following the paradigm established therein, we implement the gauge-fixing procedure by selecting a Lagrangian subspace $\mathbb{L}_g$ of $\mathcal{F}_{BF}''$ alone, thereby leaving the cohomological degrees of freedom untouched.

\begin{definition}
\label{def: Lagrangian-metric}
    The \textbf{metric gauge} subspace $\mathbb{L}_g \subset \mathcal{F}_{BF}''$ is defined as:
    \begin{equation}
        \mathbb{L}_g \coloneqq \text{Im}\left(d^{\nabla, \dagger}_{\bullet}\right)[1] \bigoplus \text{Im}\left(d^{\nabla, \dagger}_{\bullet}\right)[n-2].
    \end{equation}
\end{definition}

The next step is to prove that $\mathbb{L}_g$ indeed satisfies the requirements to be a valid gauge-fixing subspace.

\begin{prop}
\label{prop: L_g-lagrangian-subspace}
    The subspace $\mathbb{L}_g$ is a Lagrangian subspace of $\mathcal{F}_{BF}''$ with respect to the symplectic form $\Omega_{BF}$.
\end{prop}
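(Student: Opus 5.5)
The argument follows Definition \ref{def:lagrangian_subspace}: we check that $\mathbb{L}_g$ is isotropic and then produce an explicit isotropic complement inside $\mathcal{F}_{BF}''$. The natural candidate is the exact sector
\begin{equation*}
\mathbb{L}_g' \coloneqq \text{Im}\left(d^\nabla_{\bullet-1}\right)[1] \bigoplus \text{Im}\left(d^\nabla_{\bullet-1}\right)[n-2].
\end{equation*}
By Definition \ref{def: decomposition-F-BF} and the $L^2$ Hodge decomposition \eqref{eq: Hodge-decomposition}, we already have $\mathcal{F}_{BF}'' = \mathbb{L}_g \oplus \mathbb{L}_g'$ as a direct (indeed orthogonal) sum of closed subspaces. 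So everything reduces to two isotropy computations, one for each summand.

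\textbf{Isotropy of $\mathbb{L}_g$.} By \eqref{eq:BV_form}, $\Omega_{BF}$ only pairs the $\mathbb{B}$-component with the $\mathbb{A}$-component. It therefore suffices to show that $\int_M \mathcal{B}^{(k)}\wedge_h \mathcal{A}^{(n-k)}=0$ whenever $\mathcal{B}^{(k)}=d^{\nabla,\dagger}\beta$ and $\mathcal{A}^{(n-k)}=d^{\nabla,\dagger}\alpha$.

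The key step uses the Hodge star. Since $\rho$ is unitary, the fibre metric $h$ is $\nabla$-parallel. With $\wedge_h$ the metric pairing on $\mathcal{E}$, one has $d^{\nabla,\dagger} = \pm \star d^\nabla \star$ and $\star$ commutes with $h$. Hence $\star d^{\nabla,\dagger}\alpha = \pm d^\nabla(\star\alpha)$ is exact. The pairing then rewrites as an $L^2$ inner product:
\begin{equation*}
\int_M d^{\nabla,\dagger}\beta\wedge_h d^{\nabla,\dagger}\alpha=\pm\big\langle d^{\nabla,\dagger}\beta,\ \overline{\star}\, d^{\nabla,\dagger}\alpha\big\rangle_{L^2}=\pm\big\langle d^{\nabla,\dagger}\beta,\ d^\nabla\gamma\big\rangle_{L^2},
\end{equation*}
with $\gamma=\star\alpha$ up to conjugation. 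Co-exact forms are $L^2$-orthogonal to exact forms, so this vanishes. Equivalently, Stokes gives $\int_M d^{\nabla,\dagger}\beta\wedge_h d^\nabla\gamma=\pm\int_M (d^\nabla)^2(\cdots)=0$, because $d^{\nabla,\dagger}\beta=\pm\star d^\nabla\star\beta$ and wedging with $\star$ of a form reduces to the same orthogonality.

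\textbf{Isotropy of $\mathbb{L}_g'$.} For exact forms the argument is more direct. One integrates by parts: $\int_M d^\nabla\beta\wedge_h d^\nabla\alpha = \pm\int_M d^\nabla(\beta\wedge_h d^\nabla\alpha)=0$ by Stokes, using flatness and the compatibility of $h$ with $\nabla$, which is exactly the computation already carried out in Proposition \ref{prop:BV-consistency}.

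\textbf{Extension to closures.} Both subspaces are defined as $L^2$ closures, while the identities above are established first for smooth representatives. The form $\Omega_{BF}$ is $L^2$-continuous, since it is the $L^2$ pairing composed with $\star$, which is bounded. So isotropy passes to the closures by density.

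The main obstacle I expect is the sign and conjugation bookkeeping in the identity $\star d^{\nabla,\dagger}=\pm d^\nabla\star$ for the twisted, Hermitian pairing $\wedge_h$. This must be checked carefully, so that $\star$ really maps co-exact forms to exact forms of complementary degree compatibly with $h$. That compatibility is precisely where unitarity of $\rho$ is used.
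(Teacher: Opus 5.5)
Your proposal is correct and follows the paper's route: the exact sector serves as the isotropic complement supplied by the Hodge decomposition, and the Hodge star turns the co-exact pairing into an $L^2$ pairing of a co-exact form with an exact one. The paper kills that pairing by moving $d^{\nabla,\dagger}$ across and using $(d^\nabla)^2=0$, and uses Stokes for the exact sector; your density argument for the $L^2$ closures is a useful addition the paper leaves implicit.

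The only slip is the ``Equivalently, Stokes gives\ldots'' aside: its integrand $d^{\nabla,\dagger}\beta\wedge_h d^\nabla\gamma$ has degree $2k$ rather than $n$. The correct Stokes variant uses $\star\omega\wedge_h\star\eta=\omega\wedge_h\eta$ for complementary degrees to reduce the pairing to $\pm\int_M d^\nabla(\star\beta)\wedge_h d^\nabla(\star\alpha)=0$.
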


\begin{proof}
    Let us consider the complement $\mathbb{L}_g^\perp \coloneqq \text{Im}(d^{\nabla}_{\bullet-1})[1] \oplus \text{Im}(d^{\nabla}_{\bullet-1})[n-2]$. From Definition \ref{def: decomposition-F-BF} and Hodge Theorem, it follows that $\mathcal{F}_{BF}'' = \mathbb{L}_g \oplus \mathbb{L}_g^\perp$. Eventually, we check the isotropy of both subspaces.
    
    \noindent \textbf{Isotropy of $\mathbb{L}_g$}: Let $\mathbb{A} \in \text{Im}(d^{\nabla, \dagger}_{k})[1]$ and $\mathbb{B} \in \text{Im}(d^{\nabla, \dagger}_{n-k})[n-2]$ be two homogeneous elements, so that $\mathbb{A} = d^{\nabla, \dagger}_k \alpha$ and $\mathbb{B} = d^{\nabla, \dagger}_{n-k} \beta$. Using the identity $d^{\nabla, \dagger}_k = (-1)^{nk+1} *_h d^\nabla_{n-k-1} *_h$, see \cite[Definition 6.1]{Warner1983}, the pairing is evaluated as:
    \begin{align*}
        \Omega_{BF}((0, \mathbb{B}), (\mathbb{A}, 0)) &= \int_M d^{\nabla, \dagger}_{n-k}\beta \wedge_h d^{\nabla, \dagger}_k \alpha \\
        &= (-1)^{nk+1} \int_M d^{\nabla, \dagger}_{n-k}\beta \wedge_h *_h (d^{\nabla}_{n-k-1} (*_h \alpha)).
    \end{align*}
    By definition of the $L^2$ inner product on twisted differential forms and the adjoint property of the codifferential, we have:
    \begin{align*}
        \Omega_{BF}((0, \mathbb{B}), (\mathbb{A}, 0)) &= (-1)^{nk+1} \left(d^{\nabla, \dagger}_{n-k}\beta, d^{\nabla}_{n-k-1} (*_h\alpha)\right)_{L^2} \\
        &= (-1)^{nk+1} \left(\beta, d^{\nabla}_{n-k} \circ d^{\nabla}_{n-k-1} (*_h\alpha)\right)_{L^2} = 0,
    \end{align*}
    vanishing due to the nilpotence of the de Rham differential.

    \noindent \textbf{Isotropy of $\mathbb{L}_g^\perp$}: For $\mathbb{A} = d^{\nabla}_{k-1} \alpha$ and $\mathbb{B} = d^{\nabla}_{n-k-1} \beta$, the Leibniz rule and the nilpotence of $d^\nabla$ yield:
    \begin{align*}
        \Omega_{BF}((0, \mathbb{B}), (\mathbb{A}, 0)) &= \int_M d^{\nabla}_{n-k-1}\beta \wedge_h d^{\nabla}_{k-1} \alpha \\
        &= \int_M d(\beta \wedge_h d^{\nabla}_{k-1} \alpha) = 0,
    \end{align*}
    where the integral vanishes by Stokes' Theorem on the boundaryless manifold $M$. Thus, $\mathbb{L}_g$ is a Lagrangian subspace.
\end{proof}

\begin{rem}
\label{rem: BF-action-restricted-metric}
It is important to note that the BV action functional depends exclusively on the non-harmonic sector of the field space. Indeed, by considering the decomposition $\mathbb{A} = \mathbb{A}_H + \mathbb{A}_{CE}$ and $\mathbb{B} = \mathbb{B}_H + \mathbb{B}_{CE}$ established in Equation \eqref{eq:field_splitting}\footnote{The subscript $CE$ refers to the components of the fields belonging to the non-harmonic sector, standing for the direct sum of the {Exact} and {Co-exact} parts of the Hodge decomposition.}, the action functional reads:
\begin{equation*}
    \mathbb{S}_{BF}(\mathbb{A}, \mathbb{B}) = \int_M \left[ (\mathbb{B}_H + \mathbb{B}_{CE}) \wedge_h d^\nabla (\mathbb{A}_H + \mathbb{A}_{CE}) \right]^{\text{top}}.
\end{equation*}
Since harmonic forms are $d^\nabla$-closed (see \cite[Proposition 6.3]{Warner1983}), the terms involving $d^\nabla \mathbb{A}_H$ vanish identically. Furthermore, the orthogonality between harmonic forms and exact forms ensures that the cross-term $\int_M \mathbb{B}_H \wedge_h d^\nabla \mathbb{A}_{CE}$ also vanishes. Consequently, the action reduces to:
\begin{equation}\label{e:metricEffact}
    \mathbb{S}_{BF}(\mathbb{A}, \mathbb{B}) = \int_M \left[\mathbb{B}_{CE} \wedge_h d^\nabla \mathbb{A}_{CE} \right]^{\text{top}}.
\end{equation}
This derivation implies that the effective action $\mathbb{S}_{BF}'$ on the cohomological sector $\mathcal{F}_{BF}'$ is identically zero, a result that will be central to the definition of the gauge-fixed partition function.
\end{rem}

Before proceeding, we provide a formal meaning for the determinant of the twisted de Rham differential. By analogy with the finite-dimensional case of non-square matrices, we define the regularised determinant of the differential as:
\begin{equation}
\label{eq:sdet_of_d}
    {\det}^\flat (d^\nabla_k) \equiv {\det}^\flat (d^{\nabla, \dagger}_k) \coloneqq \left|{\det}^\flat (d^{\nabla, \dagger}_k  d^\nabla_k)\right|^{1/2},
\end{equation}
where $\det^\flat (d^{\nabla, \dagger}_k d^\nabla_k)$ is the flat-trace regularised determinant of the Laplacian restricted to the space of co-exact forms, as introduced in Definition \ref{def:flat-det-laplacian}. This notion is crucial for the consistent treatment of the Jacobian factors in the path integral.

\vspace{\baselineskip}

We note that the action functional restricted to $\mathbb{L}_g$ assumes a quadratic form.

\begin{prop}
\label{prop: action-restricted-metric}
 The restricted BV action functional satisfies:
\begin{equation}
\label{eq:def-det-differential}
    \mathbb{S}_{BF}|_{\mathbb{L}_g} = \sum_{k=0}^{n-1} \left(\tau^{(k)}, d_k^{\nabla, \dagger} d_k^\nabla \mathcal{A}^{(k)} \right),
\end{equation}
where $\mathcal{A}^{(k)}$ is the $k$-th component of the field $\mathbb{A}$ and the forms $\tau^{(k)}$ parametrise the co-exact components of $\mathbb{B}$.
\end{prop}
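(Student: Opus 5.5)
Restricting to $\mathbb{L}_g$ means both fields are inhomogeneous co-exact forms: $\mathcal{A}^{(k)} \in \text{Im}(d^{\nabla,\dagger}_k)$ and $\mathcal{B}^{(j)} \in \text{Im}(d^{\nabla,\dagger}_j)$. By Remark \ref{rem: BF-action-restricted-metric}, only these components enter the action. Expanding the top-degree part of $\mathbb{S}_{BF}$ from Equation \eqref{eq:BV_action} by form degree gives
\begin{equation*}
\mathbb{S}_{BF}|_{\mathbb{L}_g} = \sum_{k=0}^{n-1} \int_M \mathcal{B}^{(n-k-1)} \wedge_h d^\nabla_k \mathcal{A}^{(k)}.
\end{equation*}
The term with $k=n$ is absent because $d^\nabla_n = 0$. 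The plan is to rewrite each summand as an $L^2$ pairing and then express the co-exact $\mathcal{B}^{(n-k-1)}$ through a parameter $\tau^{(k)}$, so that the operator $d^{\nabla,\dagger}_k d^\nabla_k$ appears.

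First I convert the wedge pairing into the inner product. Using $(\alpha,\beta)_{L^2} = \int_M \alpha \wedge_h *_h \beta$, with the sign conventions from \cite[Definition 6.1]{Warner1983} as in the proof of Proposition \ref{prop: L_g-lagrangian-subspace}, I get
\begin{equation*}
\int_M \mathcal{B}^{(n-k-1)} \wedge_h d^\nabla_k \mathcal{A}^{(k)} = \pm\bigl(*_h \mathcal{B}^{(n-k-1)}, d^\nabla_k \mathcal{A}^{(k)}\bigr)_{L^2}.
\end{equation*}
Next I use that $\mathcal{B}^{(n-k-1)}$ is co-exact, say $\mathcal{B}^{(n-k-1)} = d^{\nabla,\dagger}\beta$. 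The identity $d^{\nabla,\dagger} = \pm *_h d^\nabla *_h$ then shows that $*_h\mathcal{B}^{(n-k-1)}$ is an exact $(k+1)$-form, of the form $d^\nabla_k \sigma$ with $\sigma$ a $k$-form.

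By the Hodge decomposition \eqref{eq: Hodge-decomposition}, $\sigma$ can be chosen uniquely in $\text{Im}(d^{\nabla,\dagger}_k)$, since $d^\nabla_k$ is injective on co-exact $k$-forms. I define $\tau^{(k)} \coloneqq \pm\sigma$, absorbing all signs. Adjointness of $d^\nabla_k$ and $d^{\nabla,\dagger}_k$ then gives
\begin{equation*}
\bigl(d^\nabla_k \tau^{(k)}, d^\nabla_k \mathcal{A}^{(k)}\bigr)_{L^2} = \bigl(\tau^{(k)}, d^{\nabla,\dagger}_k d^\nabla_k \mathcal{A}^{(k)}\bigr)_{L^2}.
\end{equation*}
Summing over $k$ yields Equation \eqref{eq:def-det-differential}. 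The map $\mathcal{B}^{(n-k-1)} \mapsto \tau^{(k)}$, given by $*_h$ followed by $(d^\nabla_k)^{-1}$ on exact forms, is an isomorphism from co-exact $(n-k-1)$-forms onto co-exact $k$-forms. This justifies saying that $\tau^{(k)}$ parametrises the co-exact components of $\mathbb{B}$.

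The main obstacle is analytic rather than algebraic. In the $L^2$ completion, $d^\nabla_k$ restricted to co-exact forms is unbounded with closed range, so the inversion defining $\tau^{(k)}$ needs care. I would handle this through the Hilbert complex results of \cite{BruningLesch1992}: $d^\nabla_k$ has closed range on a compact manifold, so the inverse is bounded on $\text{Im}(d^\nabla_k)$. The identities are first proved on smooth forms and then extended by density, with all operators understood on their natural domains. Tracking the signs from $*_h$ and the BV degrees is routine bookkeeping and only affects the normalisation of $\tau^{(k)}$.
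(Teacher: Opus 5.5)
Your proposal is correct and follows essentially the same route as the paper: you expand the action into homogeneous components, write the co-exact $\mathcal{B}^{(n-k-1)}$ as $*_h d^\nabla_k \tau^{(k)}$ (up to sign), convert the wedge pairing into an $L^2$ pairing, and conclude using adjointness of $d^\nabla_k$ and $d^{\nabla,\dagger}_k$. You go slightly further than the paper by justifying the existence and uniqueness of $\tau^{(k)}$ through the injectivity of $d^\nabla_k$ on co-exact forms, and by flagging the domain issues in the $L^2$ completion, both of which the paper leaves implicit.
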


\begin{proof}
By decomposing the fields into homogeneous components, the action reads \begin{equation*}
\mathbb{S}_{BF} = \sum \int_M \mathcal{B}^{(n-k-1)} \wedge_h d^\nabla_k \mathcal{A}^{(k)}.
\end{equation*}
On $\mathbb{L}_g$ the components of $\mathbb{B}$ are co-exact and can be parametrised as $\mathcal{B}^{(n-k-1)} = *_h d_k^{\nabla} \tau^{(k)}$, as a consequence of \cite[Definition 6.1]{Warner1983}. Substituting this into the integral and using the $L^2$ inner product definition, we obtain:
\begin{equation*}
    \mathbb{S}_{BF}|_{\mathbb{L}_g} = \sum_{k=0}^{n-1} \left(d_k^{\nabla} \tau^{(k)}, d^\nabla_k \mathcal{A}^{(k)}\right) = \sum_{k=0}^{n-1} \left(\tau^{(k)}, d_k^{\nabla, \dagger} d_k^\nabla \mathcal{A}^{(k)}\right),
\end{equation*}
where we applied the definition of codifferential.
\end{proof}

The combination of the restricted action with the rules of Berezinian integration on shifted spaces (see \cite[Section 3.8]{Mnev19}) provides a rigorous definition for the partition function.

\begin{definition}
\label{def:partition_function_metric_gauge}
The partition function of $BF$ theory in the metric gauge $\mathbb{L}_g$ is defined as:
\begin{equation}
\label{eq:partition_function_metric_gauge}
\begin{split}
    \mathcal{Z}\, '(\mathbb{S}_{BF}, \mathbb{L}_g) 
    &\equiv \mu' \cdot \left| \sdet^\flat(d^{\nabla, \dagger})\right|^{-1} \cdot \int_{\mathbb{L}_g} \exp \left( i \sum_{k=0}^{n-1} \left(\tau^{(k)}, d_k^{\nabla, \dagger} d^\nabla_k \mathcal{A}^{(k)}\right)\right) \\
    &\coloneqq \mu' \cdot \left| \sdet^\flat(d^{\nabla, \dagger})\right|^{-1} \cdot \left| \sdet^\flat(d^{\nabla, \dagger} d^\nabla) \right|,
\end{split}
\end{equation}
where $\mu' \in \text{Det}(\mathcal{F}_{BF}')$ is the volume element on the cohomology part of the field space.
\end{definition}

\begin{rem}
\label{rem:metric-partition-notation}
Note the slight change of notation with respect to Equation~\eqref{eq:residual_partition_function}. Here we
use a more precise notation and write $\mathcal{Z}'(\mathbb{S}_{BF},\mathbb{L}_g)$, stressing the choice of a particular field theory and the dependence on the gauge-fixing subspace, in place of $\mathcal{Z}'(A)$, which highlighted instead the operator appearing in the quadratic action. The two notations are
reconciled by $\mathcal{Z}'(\mathbb{S}_{BF},\mathbb{L}_g) = \mathcal{Z}'(\Delta)$,
where $d^{\nabla,\dagger}d^{\nabla}$ is the generic operator $A$, which restricted to the co-exact sector singled out by $\mathcal{L}_g$ reduces to $A''=d^{\nabla,\dagger}d^{\nabla}$.
\end{rem}

\begin{rem}
\label{rem: rationale-choice-metric}
The expression in Equation \eqref{eq:partition_function_metric_gauge} accounts for both the Jacobian determinant arising from the change of variables and the evaluation of the resulting infinite-dimensional Berezinian integral. As discussed in \cite[Remark 26]{HadfieldKandelSchiavina20Ruelle}, the grading shifts inherent to the BV formalism lead to an inversion of the expected powers: the Jacobian determinant appears with a negative exponent, while the Gaussian integration yields the determinant directly. In addition, the vanishing of the effective action on the harmonic sector, as established in Remark \ref{rem: BF-action-restricted-metric}, explains why the partition function does not present any remaining exponential terms.
\end{rem}

The link between the partition function and the analytic torsion introduced in Definition \ref{def:Ray--Singer-torsion-flat} is straightforward.

\begin{cor}
\label{cor:partition-torsion-link}
The metric gauge-fixed partition function equals the Ray--Singer torsion:
\begin{equation}
    \mathcal{Z}\, '(\mathbb{S}_{BF}, \mathbb{L}_g) = \left( \prod_{k=0}^{n} ({\det}^\flat \Delta_k)^{(-1)^{k+1} k/2} \right) \cdot \mu'.
\end{equation}
\end{cor}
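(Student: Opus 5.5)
The plan is to start from Definition \ref{def:partition_function_metric_gauge}, which expresses $\mathcal{Z}'(\mathbb{S}_{BF},\mathbb{L}_g)$ as $\mu'$ times $|\sdet^\flat(d^{\nabla,\dagger})|^{-1}\,|\sdet^\flat(d^{\nabla,\dagger}d^\nabla)|$. First I would make the graded superdeterminants explicit. By Proposition \ref{prop: action-restricted-metric} the Gaussian operator in degree $k$ is $d_k^{\nabla,\dagger}d_k^\nabla$ restricted to co-exact $k$-forms, for $k=0,\dots,n-1$. The Jacobian from reparametrising the co-exact components of $\mathbb{B}$ through $\tau^{(k)}$ is $\det^\flat(d^{\nabla,\dagger}_k)$ in the sense of \eqref{eq:sdet_of_d}, namely $|\det^\flat(d^{\nabla,\dagger}_kd^\nabla_k)|^{1/2}$. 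The BV degrees $|\mathcal{A}^{(k)}|_{BV}=k-1$ fix the parity of each sector, so I would write
\begin{equation*}
\bigl|\sdet^\flat(d^{\nabla,\dagger}d^\nabla)\bigr|=\prod_{k=0}^{n-1}\bigl({\det}^\flat(d^{\nabla,\dagger}_kd^\nabla_k)\bigr)^{\epsilon_k},\qquad \bigl|\sdet^\flat(d^{\nabla,\dagger})\bigr|=\prod_{k=0}^{n-1}\bigl({\det}^\flat(d^{\nabla,\dagger}_kd^\nabla_k)\bigr)^{\epsilon_k/2},
\end{equation*}
with $\epsilon_k=\pm1$ alternating in $k$. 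The overall sign convention would be fixed following \cite[Remark 26]{HadfieldKandelSchiavina20Ruelle} and Remark \ref{rem: rationale-choice-metric}.

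Combining the two factors gives $\prod_k(\det^\flat d^{\nabla,\dagger}_kd^\nabla_k)^{\epsilon_k/2}$. Reindexing with $k\mapsto k-1$, this should match $\prod_{k=1}^{n}(\det^\flat d^{\nabla,\dagger}_{k-1}d^\nabla_{k-1})^{(-1)^{k+1}/2}$, provided $\epsilon_{k-1}=(-1)^{k+1}$. Proposition \ref{prop:torsion-rearrangement-flat} identifies that product with the numerical part of Definition \ref{def:Ray--Singer-torsion-flat}, i.e.\ $\prod_{k=0}^n(\det^\flat\Delta_k)^{(-1)^{k+1}k/2}$.

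If one wants that step self-contained, the argument runs as follows. Hodge theory and \eqref{eq: Hodge-decomposition} give $\Delta_k=d^\nabla_{k-1}d^{\nabla,\dagger}_{k-1}\oplus d^{\nabla,\dagger}_kd^\nabla_k$ on exact $\oplus$ co-exact forms. The operator $d^\nabla_{k-1}$ intertwines $d^{\nabla,\dagger}_{k-1}d^\nabla_{k-1}$ on co-exact $(k-1)$-forms with $d^\nabla_{k-1}d^{\nabla,\dagger}_{k-1}$ on exact $k$-forms, so the two have the same nonzero spectrum. Additivity of the flat (here ordinary) heat trace then gives $\log\det^\flat\Delta_k=\log D_{k-1}+\log D_k$, where $D_j=\det^\flat(d^{\nabla,\dagger}_jd^\nabla_j)$. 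A telescoping computation of $\sum_k(-1)^{k+1}\tfrac{k}{2}(\log D_{k-1}+\log D_k)$ leaves $\sum_k\tfrac{(-1)^{k+1}}{2}\log D_{k-1}$.

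Finally, I would identify $\mu'$ with $\eta_H$. By Remark \ref{rem: BF-action-restricted-metric} the effective action on $\mathcal{F}'_{BF}$ vanishes, so no exponential factor survives. I would then choose $\mu'$ as the volume element on harmonic forms induced by $L^2$-orthonormal bases. This is consistent with the statement, which keeps $\mu'$ rather than $\eta_H$.

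The main obstacle is the sign/parity bookkeeping: checking that the BV degree shifts $[1]$ and $[n-2]$, together with the inversion of powers noted in Remark \ref{rem: rationale-choice-metric}, produce exactly $\epsilon_{k-1}=(-1)^{k+1}$ rather than the opposite sign. I would check this first in low dimension, for $n=3$ with the components $\mathcal{A}^{(0)},\mathcal{A}^{(1)},\mathcal{A}^{(2)}$, before fixing the general convention.
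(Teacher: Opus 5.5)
Your proposal is correct and follows the paper's proof. You combine the two factors in Definition \ref{def:partition_function_metric_gauge} into $|\sdet^\flat(d^{\nabla,\dagger}d^\nabla)|^{1/2}\cdot\mu'$, expand degree-wise with $\epsilon_k=(-1)^k$, and invoke Proposition \ref{prop:torsion-rearrangement-flat}. The paper does not derive that sign from the BV degrees either: it takes $\epsilon_k=(-1)^k$ as the convention for $\sdet^\flat$, stated explicitly in the proof of Theorem \ref{thm:main_pushforward_axial}. Your self-contained telescoping proof of that proposition, which the paper defers to \cite{Mol26}, is a correct addition rather than a different route.
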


\begin{proof}
Substituting $\sdet^\flat (d^{\nabla, \dagger}) = |\sdet^\flat (d^{\nabla, \dagger}d^\nabla)|^{1/2}$ from Equation \eqref{eq:def-det-differential} into Definition \ref{def:partition_function_metric_gauge}, we obtain $\mathcal{Z}' =|\sdet^\flat(d^{\nabla, \dagger} d^\nabla)|^{1/2} \cdot \mu'$. Expressing this in terms of the degree components and using Proposition \ref{prop:torsion-rearrangement-flat} gets us to the sought statement.
\end{proof}

To conclude, we recall that within the BV pushforward paradigm, the partition function is formally defined as an element of the space of half-densities over the residual fields, $\mathrm{Dens}^{1/2}(\mathcal{F}_{BF}')$ (see Definition \ref{def:BV_pushforward}). Given that the residual space $\mathcal{F}_{BF}'$ is finite-dimensional, this half-density is canonically identified with an element of the determinant line $\text{Det}(\mathcal{F}_{BF}')$ through the isomorphism described in \cite[Equation 30]{Cattaneo20}, justifying the presence of the volume element $\mu'$. Ultimately, through the canonical isomorphism $\mathcal{H}^\bullet(M, \mathcal{E}) \cong H^\bullet(M, \mathcal{E})$ arising from the Hodge decomposition, the metric partition function yields an element of the determinant line of the cohomology $\text{Det}(H^\bullet(M, \mathcal{E}))$, whose Ray--Singer norm can be taken as per \cite[Definition 3.9]{Shen21}.
\section{Ruelle Zeta Function from Axial Gauge Fixing} 
\label{sec: axial-gauge}

An alternative gauge-fixing, hereafter referred to as the {axial gauge}, is provided by the choice of a Morse--Smale vector field $V$. While the $L^2$ completion $\mathcal{F}_{BF}^{L^2}$ was perfectly suited for the metric gauge, it is analytically inadequate here because of the non-discrete spectrum of the Lie derivative acting on it. To provide a consistent mathematical framework, the classical BV field space is instead extended to the anisotropic Sobolev spaces $\mathcal{H}_m^\bullet(M, \mathcal{E})$ detailed in Appendix \ref{app:Sobolev}. 

As established in Proposition \ref{prop:anisotropic_spaces} and \cite{DangRiviere17Topology}, these anisotropic spaces are topologically isomorphic to $L^2$ as Hilbert spaces. However, the separation between zero-modes and fluctuations here is not driven by the Hodge theorem, but rather by the dynamically-induced chain homotopy equation. 

\begin{rem}
\label{rem: BV-data-Sob}
We define the extended graded space of $BF$ fields as:
\begin{equation}
\label{eq:bv_field_space_Sob}
    \widetilde{\mathcal{F}}_{BF} \coloneqq \mathcal{H}_m^\bullet(M, \mathcal{E})[1] \oplus \mathcal{H}_m^\bullet(M, \mathcal{E})[n-2].
\end{equation}
Crucially, we have a topological isomorphism of the total field spaces $\widetilde{\mathcal{F}}_{BF} \simeq \mathcal{F}_{BF}^{L^2}$. The action functional $\widetilde{\mathbb{S}}_{BF}$ and the symplectic $2$-form $\widetilde{\Omega}_{BF}$ are defined as the continuous extensions of the classical functionals to $\widetilde{\mathcal{F}}_{BF}$.
\end{rem}

To implement the BV pushforward, we decompose $\widetilde{\mathcal{F}}_{BF}$.

\begin{definition}
\label{def: decomposition-F-BF-Sob}
    Let $\widetilde{\mathcal{F}}_{BF}'$ and $
\widetilde{\mathcal{F}}_{BF}''$ be defined as:
\begin{equation}
\begin{split}
    \widetilde{\mathcal{F}}_{BF}' &\coloneqq C^{\bullet}_{V, \nabla}(0)[1] \oplus C^{\bullet}_{V, \nabla}(0)[n-2], \\
    \widetilde{\mathcal{F}}_{BF}'' &\coloneqq \left(\operatorname{Im}\big(d^{\nabla}\big|_{\operatorname{Ker} \pi_0}\big) \oplus \operatorname{Im}\big(\iota_V\big|_{\operatorname{Ker}\pi_0}\big)\right)[1] \\
    &\qquad \oplus \left(\operatorname{Im}\big(d^{\nabla}\big|_{\operatorname{Ker} \pi_0}\big) \oplus \operatorname{Im}\big(\iota_V\big|_{\operatorname{Ker}\pi_0}\big)\right)[n-2].
\end{split}
\end{equation}
\end{definition}

As a fundamental consequence of Equation \eqref{eq:chain-homotopy-decomp}, the extended space of fields $\widetilde{\mathcal{F}}_{BF}$ admits the following splitting:
\begin{equation}
\label{eq:bv_splitting_Sob}
    \widetilde{\mathcal{F}}_{BF} = \widetilde{\mathcal{F}}_{BF}' \oplus \widetilde{\mathcal{F}}_{BF}''.
\end{equation}
A key feature of this decomposition is that the subspace $\widetilde{\mathcal{F}}_{BF}'$, being generated by the Pollicott--Ruelle resonant states at zero, is finite-dimensional, under the hypotheses presented in Proposition \ref{prop:dimension-of-Ck}.
The axial gauge is thus defined by choosing a Lagrangian subspace within $\widetilde{\mathcal{F}}_{BF}''$.

\begin{prop}
\label{prop: axial_gauge_lagrangian}
The subspace $\mathbb{L}_V \subset \widetilde{\mathcal{F}}_{BF}''$, defined as:
\begin{equation}
\label{eq:axial_gauge_subspace}
    \mathbb{L}_V \coloneqq \text{Im}\Big(\iota_V\big|_{\text{Ker}\, \pi_0}\Big)[1] \oplus \text{Im}\Big(\iota_V\big|_{\text{Ker}\, \pi_0}\Big)[n-2],
\end{equation}
is a Lagrangian subspace of $\widetilde{\mathcal{F}}_{BF}''$ with respect to the symplectic structure $\widetilde{\Omega}_{BF}$.
\end{prop}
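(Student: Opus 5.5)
We mirror the proof of Proposition \ref{prop: L_g-lagrangian-subspace}. The natural isotropic complement is
\begin{equation*}
\mathbb{L}_V' \coloneqq \operatorname{Im}\big(d^{\nabla}\big|_{\operatorname{Ker}\pi_0}\big)[1] \oplus \operatorname{Im}\big(d^{\nabla}\big|_{\operatorname{Ker}\pi_0}\big)[n-2].
\end{equation*}
By Corollary \ref{cor:direct_sum_decomposition}, applied separately to the $\mathbb{A}$- and $\mathbb{B}$-copies, we get $\widetilde{\mathcal{F}}_{BF}'' = \mathbb{L}_V \oplus \mathbb{L}_V'$ as a direct sum. It therefore remains to check that each summand is isotropic for $\widetilde{\Omega}_{BF}$.

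For $\mathbb{L}_V'$ the argument is identical to the metric case. Take homogeneous elements $\mathbb{A} = d^\nabla \alpha$ and $\mathbb{B} = d^\nabla \beta$. The Leibniz rule together with $(d^\nabla)^2=0$ gives $d^\nabla\beta\wedge_h d^\nabla\alpha = d(\beta\wedge_h d^\nabla\alpha)$, and Stokes' theorem makes the pairing vanish. Here $\wedge_h$ pairs the flat unitary bundle fibrewise, so $d^\nabla$ satisfies Leibniz with respect to $\wedge_h$.

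For $\mathbb{L}_V$, take $\mathbb{A} = \iota_V \alpha$ and $\mathbb{B} = \iota_V\beta$ with form degrees summing to $n$. Since $\iota_V$ is a pointwise antiderivation and $\iota_V\iota_V=0$, we have
\begin{equation*}
\iota_V\beta\wedge_h\iota_V\alpha = \iota_V\big(\beta\wedge_h\iota_V\alpha\big).
\end{equation*}
This is the contraction of an $(n+1)$-form, which vanishes identically. Hence the integrand is already zero pointwise. In fact, since $\iota_V\beta\wedge_h\iota_V\alpha$ is a top form annihilated by $V$ at every point, it vanishes wherever $V\neq 0$. At zeros of $V$ it also vanishes, since each factor is a contraction with $V$ there.

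The main obstacle is the analytic setting. Elements of $\mathcal{H}^\bullet_m$ are anisotropic currents, so wedge products and the integral in $\widetilde{\Omega}_{BF}$ are not a priori defined pointwise. We would handle this by first proving both isotropy identities for smooth forms in $\operatorname{Ker}\pi_0$. Smooth forms are dense (Remark \ref{rem:cohomology_isomorphisms}), and $\iota_V$, $d^\nabla$ and $\mathbb{I}-\pi_0$ are continuous on $\mathcal{H}^\bullet_m$. We would then use that $\widetilde{\Omega}_{BF}$ is the continuous extension of the classical pairing on the dual pair $\mathcal{H}^k_m\times\mathcal{H}^{n-k}_{-m}$, so the vanishing extends by continuity to the closures of the images.

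One must also confirm that the two copies carry compatible order functions, and that the images are closed so that the direct sum is topological. Closedness follows from the bounded projectors $d^\nabla\mathfrak{h}_V$ and $\mathfrak{h}_V d^\nabla$, which realise the splitting of Corollary \ref{cor:direct_sum_decomposition}.
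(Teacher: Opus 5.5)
Your argument is correct and is the paper's proof: the same isotropic complement $\operatorname{Im}(d^\nabla|_{\operatorname{Ker}\pi_0})[1]\oplus\operatorname{Im}(d^\nabla|_{\operatorname{Ker}\pi_0})[n-2]$, directness coming from Corollary~\ref{cor:direct_sum_decomposition}, isotropy of $\mathbb{L}_V$ via $\iota_V\beta\wedge_h\iota_V\alpha=\iota_V(\beta\wedge_h\iota_V\alpha)$ with an $(n+1)$-form inside, and isotropy of the complement via Leibniz and Stokes. Your analytic paragraph goes beyond the paper, which never addresses products of currents and simply postulates, in the remark introducing $\widetilde{\mathcal{F}}_{BF}$, that $\widetilde{\Omega}_{BF}$ extends continuously to $\mathcal{H}_m^\bullet[1]\oplus\mathcal{H}_m^\bullet[n-2]$, with the same order function $m$ on both copies rather than a dual pair with $\mathcal{H}_{-m}$; granting that postulate, your density argument works, and since both identities hold for all smooth $\alpha,\beta$ you need not approximate inside $\operatorname{Ker}\pi_0$ at all.
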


\begin{proof}
    To verify the Lagrangian property, we consider the complementary subspace 
    \begin{equation*}
    \mathbb{L}_V^\perp \coloneqq \text{Im}\left(d^{\nabla}\big|_{\text{Ker} \, \pi_0}\right)[1] \oplus \text{Im}\left(d^{\nabla}\big|_{\text{Ker} \, \pi_0}\right)[n-2].
    \end{equation*}
    The direct sum decomposition $\widetilde{\mathcal{F}}_{BF}'' = \mathbb{L}_V \oplus \mathbb{L}_V^\perp$ is a manifest consequence of the definition of $\widetilde{\mathcal{F}}_{BF}''$. We then proceed to show the isotropy of both subspaces.
    
    The isotropy of $\mathbb{L}_V$ follows from the Leibniz property and the nilpotency of the interior product. For any homogeneous components $\mathbb{B} = \iota_V \beta$ and $\mathbb{A} = \iota_V \alpha$ in $\mathbb{L}_V$, the pairing evaluates to:
    \begin{equation*}
        \widetilde{\Omega}_{BF}((0, \mathbb{B}), (\mathbb{A}, 0)) = \int_M \iota_V \beta \wedge_h \iota_V \alpha = \int_M \iota_V(\beta \wedge_h \iota_V \alpha).
    \end{equation*}
    Since $\beta \wedge_h \iota_V \alpha$ is a differential form of degree $n+1$ on an $n$-dimensional manifold, it is identically zero, hence the integral vanishes. Similarly, the isotropy of $\mathbb{L}_V^\perp$ is established by homogeneous elements $\mathbb{B} = d^\nabla \beta$ and $\mathbb{A} = d^\nabla \alpha$. The Leibniz rule for the twisted differential yields $\int_M d^\nabla \beta \wedge_h d^\nabla \alpha = \int_M d^\nabla(\beta \wedge_h d^\nabla \alpha)$, which vanishes by Stokes' theorem on the boundaryless manifold $M$.
\end{proof}

The action functional decouples between the resonant part of the fields and the fluctuations. This property is formalised in the following result.

\begin{prop}
\label{prop: action-splitting-axial}
Let $(\mathbb{A}, \mathbb{B}) \in \widetilde{\mathcal{F}}_{BF}$ be a field configuration. The action functional $\widetilde{\mathbb{S}}_{BF}$ splits into:
\begin{equation}
\label{eq:action_splitting}
    \widetilde{\mathbb{S}}_{BF}(\mathbb{A}, \mathbb{B}) = \int_M \left[B_0 \wedge_h d^\nabla A_0\right]^{\text{top}} + \int_M \left[B'' \wedge_h d^\nabla A''\right]^{\text{top}},
\end{equation}
where $({A}_0, {B}_0) \in \widetilde{\mathcal{F}}_{BF}'$ and $({A}'', {B}'') \in \widetilde{\mathcal{F}}_{BF}''$. 
\end{prop}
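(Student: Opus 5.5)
Write $\mathbb{A} = A_0 + A''$ and $\mathbb{B} = B_0 + B''$ according to the splitting \eqref{eq:bv_splitting_Sob}. Expanding the bilinear action \eqref{eq:BV_action} gives four terms:
\begin{equation*}
\int_M [B_0\wedge_h d^\nabla A_0]^{\text{top}} + \int_M [B''\wedge_h d^\nabla A'']^{\text{top}} + \int_M [B_0\wedge_h d^\nabla A'']^{\text{top}} + \int_M [B''\wedge_h d^\nabla A_0]^{\text{top}}.
\end{equation*}
The plan is to show that the two cross terms vanish.

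First I would reduce the two cross terms to a single type. By the Leibniz rule for $d^\nabla$ and Stokes' theorem on the closed manifold $M$, $\int_M B''\wedge_h d^\nabla A_0 = \pm\int_M d^\nabla B''\wedge_h A_0$. Both cross terms then have the form $\int_M \omega_0 \wedge_h d^\nabla \eta''$, with $\omega_0 \in C^\bullet_{V,\nabla}(0)$ and $\eta'' \in \widetilde{\mathcal F}''$. Since the pairing is symmetric in the roles of $A$ and $B$ up to sign, it suffices to show that $\int_M \omega_0\wedge_h \psi = 0$ whenever $\omega_0 \in C^\bullet_{V,\nabla}(0)$ and $\psi \in \operatorname{Im}(\mathbb I-\pi_0)$. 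Here $d^\nabla\eta'' \in \operatorname{Im}(\mathbb I-\pi_0)$ because $d^\nabla$ commutes with $\pi_0$ (Proposition~\ref{prop:operator_toolbox}) and $\eta''\in\operatorname{Ker}\pi_0$.

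The key step is this orthogonality of the generalised zero-eigenspace to $\operatorname{Ker}\pi_0$ under the wedge pairing. I would obtain it from the duality of Pollicott--Ruelle resonances. The pairing $\int_M \cdot\wedge_h\cdot$ intertwines $\mathcal L_{V,\nabla}$ with $-\mathcal L_{-V,\nabla}$ (equivalently, its transpose), because $\int_M \mathcal L_V(\alpha\wedge_h\beta)=0$. Consequently the transpose of $\pi_0$ for $V$ is the spectral projector $\pi_0^{-V}$ for $-V$ acting on the dual anisotropic space. Then, for $\omega_0 = \pi_0\omega_0$ and $\psi=(\mathbb I-\pi_0)\psi$,
\begin{equation*}
\int_M \omega_0 \wedge_h \psi = \int_M \pi_0\omega_0\wedge_h (\mathbb I-\pi_0)\psi = \int_M \omega_0\wedge_h \pi_0^{\top}(\mathbb I-\pi_0)\psi,
\end{equation*}
and one must check that $\pi_0^\top$ acts on the second slot as $\pi_0$, hence kills $(\mathbb I-\pi_0)\psi$. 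An alternative I would try first, since it avoids dual spaces, is to express $\pi_0$ as a contour integral of the resolvent $(\mathcal L_{V,\nabla}+z)^{-1}$ around $0$ (Proposition~\ref{prop:Laurent-expansion}). Moving the resolvent across the pairing via the integration-by-parts identity $\int_M \mathcal L_{V,\nabla}\alpha\wedge_h\beta = -\int_M\alpha\wedge_h\mathcal L_{V,\nabla}\beta$ then shows $\int_M \pi_0\alpha\wedge_h\beta = \int_M \alpha\wedge_h \pi_0^{*}\beta$, where $\pi_0^*$ is the corresponding projector.

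The main obstacle is this last step. It requires the wedge pairing to be well defined and continuous between the relevant anisotropic spaces (orders $m$ and $-m$), and it requires identifying the adjoint projector with $\pi_0$ acting on the second factor. These are analytic points that I expect to borrow from the duality statements in \cite{DangRiviere17Topology} (Appendix~\ref{app:Sobolev}). Once this is settled, the cross terms vanish and \eqref{eq:action_splitting} follows. A routine remaining check is that $\int_M B_0\wedge_h d^\nabla A_0$ is indeed the restriction of the action to $\widetilde{\mathcal F}'_{BF}$, using Proposition~\ref{prop:resonant_complex}.
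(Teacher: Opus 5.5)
The step you flag as the main obstacle, identifying the transpose of $\pi_0$ with $\pi_0$ acting on the second factor, is not a technicality that can be borrowed from \cite{DangRiviere17Topology}. It is false.

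In your pairing, $\omega_0$ and $\psi$ both lie in the same space $\mathcal{H}^\bullet_m(M,\mathcal{E})$. The Dang--Rivi\`ere duality instead pairs $\mathcal{H}_m$ with the dual space adapted to $-V$. Under that duality, as you note yourself, $\pi_0^{\top}=\pi_0^{-V}$ is the zero-resonance projector of the reversed flow, and its range is spanned by currents on the stable manifolds $\overline{W^s(a)}$. It does not act as $\pi_0$, and it does not annihilate $\operatorname{Im}(\mathbb{I}-\pi_0)$.

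In fact your orthogonality lemma is false. On $S^2$ with trivial $\mathcal{E}$, take a gradient-like Morse--Smale field with two sources $N_1,N_2$, one saddle $P$ and one sink.
\begin{enumerate}
\item $U_{N_1}$ is the indicator of the disc $W^u(N_1)$, whose boundary is $\overline{W^u(P)}$. Hence $d^\nabla U_{N_1}=\pm U_P$, the integration current on $\overline{W^u(P)}$.
\item Let $\psi$ be a smooth $1$-form supported near a point of $W^u(P)$ and away from $\overline{W^s(P)}$. Then $\pi_0\psi=0$, since its only coefficient is $\int_{W^s(P)}\psi$.
\item Yet $\int_M\psi\wedge d^\nabla U_{N_1}=\pm\int_{W^u(P)}\psi\neq 0$.
\end{enumerate}
Because $\psi$ is smooth, this value is forced for any continuous extension of the action. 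So with $A_0=U_{N_1}$ and $B''=\psi$, the cross term in \eqref{eq:action_splitting} is genuinely non-zero, and no analytic input can close your step.

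The paper never uses a transpose. It writes $A''=\mathcal{L}_{V,\nabla}\theta$ with $\theta\in\operatorname{Im}(\mathbb{I}-\pi_0)$, using item~(III) of Proposition~\ref{prop:operator_toolbox}. It then commutes $d^\nabla$ past $\mathcal{L}_{V,\nabla}$ and integrates by parts repeatedly until $(\mathcal{L}_{V,\nabla})^m B_0=0$ appears.

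Your contour variant can be made to match this. The transposed resolvent is $(z-\mathcal{L}_{V,\nabla})^{-1}$ on the \emph{same} space, and its Riesz projector at $0$ is $\pi_0$ itself (substitute $z\mapsto -z$). Both arguments are only formal, however. They presuppose that the wedge pairing and Stokes' theorem extend to all of $\widetilde{\mathcal{F}}_{BF}$, as stipulated in Remark~\ref{rem: BV-data-Sob}. The example above shows this cannot hold, since otherwise the argument would prove a false identity.

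Your duality instinct is the right one for a correct version, but of a different statement. Take $\mathbb{B}$ in the $(-V)$-adapted dual space and split it with $\pi_0^{-V}$ instead of $\pi_0$. Then both cross terms vanish by exactly the transpose identity you wrote, together with $d^\nabla\pi_0=\pi_0 d^\nabla$.
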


\begin{proof}
By writing $\mathbb{A} = A_0 + A''$ and $\mathbb{B} = B_0 + B''$ and substituting them into the definition of $\widetilde{\mathbb{S}}_{BF}$, we obtain two diagonal terms and two mixed terms. We prove that the mixed terms, $\int_M B_0 \wedge_h d^\nabla A''$ and $\int_M B'' \wedge_h d^\nabla A_0$, vanish identically.

First, recall that the elements of $\widetilde{\mathcal{F}}_{BF}'$ are generalised eigenvectors of $\mathcal{L}_{V, \nabla}$ relative to the zero eigenvalue. Thus, for $B_0 \in C^\bullet_{V,\nabla}(0)$, there exists a minimal integer $m \ge 1$ such that $(\mathcal{L}_{V, \nabla})^m B_0 = 0$. Conversely, the Lie derivative is invertible on $\widetilde{\mathcal{F}}_{BF}''$, see Item~(III) in Proposition~\ref{prop:operator_toolbox}. 

Consider the term $\int_M B_0 \wedge_h d^\nabla A''$. Since $A''$ belongs to the fluctuation sector, we can write $A'' = \mathcal{L}_{V, \nabla} \theta$ for some $\theta \in \text{Im}(\mathbb{I} - \pi_0)$. Recalling that $d^\nabla$ and $\mathcal{L}_{V, \nabla}$ commute, see Item~(I) in Proposition~\ref{prop:operator_toolbox}, we have:
\begin{align*}
    \int_M B_0 \wedge_h d^\nabla A'' &= \int_M B_0 \wedge_h \mathcal{L}_{V, \nabla} (d^\nabla \theta). \\
    \intertext{Applying the Leibniz rule for the Lie derivative, we obtain:}
    &= \int_M \mathcal{L}_{V, \nabla} (B_0 \wedge_h d^\nabla \theta) - \int_M (\mathcal{L}_{V, \nabla} B_0) \wedge_h d^\nabla \theta.
\end{align*}
The first integral on the right-hand side vanishes by Stokes' theorem, as $M$ is a manifold without boundaries. The remaining term involves $\mathcal{L}_{V, \nabla} B_0$. Since $B_0$ is a generalised eigenvector, we can iteratively repeat this integration by parts. After $m$ iterations, the integrand will contain the factor $(\mathcal{L}_{V, \nabla})^m B_0$, which is identically zero by definition. 

An identical argument applies to the second mixed term $\int_M B'' \wedge_h d^\nabla A_0$. By writing $B'' = \mathcal{L}_{V, \nabla} \eta$ and using the nilpotency of the Lie derivative on $A_0 \in C^\bullet_{V,\nabla}(0)$, the integral vanishes.
\end{proof}

The previous result confirms that the effective action $\widetilde{\mathbb{S}}'_{BF}$ on the finite-dimensional resonant sector is non-trivial, unlike in the metric gauge (c.f. Equation \eqref{e:metricEffact}). To evaluate the contribution of the fluctuations to the partition function, we now provide an explicit quadratic representation of the action functional restricted to $\mathbb{L}_V$.

\begin{prop}
\label{prop: action-restricted-axial}
Let $\widetilde{\mathcal{L}}_{V, \nabla}^{(k)}$ be the Lie derivative restricted to the subspace $\text{Im}\big(\iota_V\big|_{\text{Ker}\, \pi_0^{(k+1)}}\big)$. On the axial Lagrangian $\mathbb{L}_V$, the action functional admits the quadratic form:
\begin{equation}
\widetilde{\mathbb{S}}_{BF}|_{\mathbb{L}_V} = \sum_{k=0}^{n-1} \int_M \left(\mathcal{B}^{(n-k)}\wedge_h \widetilde{\mathcal{L}}_{V, \nabla}^{(k)} \eta^{(k)} \right),
\end{equation}
where $\mathcal{B}^{(n-k)}$ is the $(n-k)$-th component of $\mathbb{B} \in \widetilde{\mathcal{F}}_{BF}''$ and $\eta^{(k)}$ parametrises the field $\mathbb{A}$ restricted to the axial gauge.
\end{prop}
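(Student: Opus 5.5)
The plan mirrors the proof of Proposition~\ref{prop: action-restricted-metric}, with the Hodge codifferential replaced by $\iota_V$ and the Laplacian by the Lie derivative.

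\textbf{Step 1: decompose and parametrise.} I would start from the homogeneous decomposition of the action,
\[
\widetilde{\mathbb{S}}_{BF}=\sum_k \int_M \mathcal{B}^{(n-k-1)}\wedge_h d^\nabla \mathcal{A}^{(k)}.
\]
On $\mathbb{L}_V$ every component of $\mathbb{A}$ lies in $\operatorname{Im}(\iota_V|_{\operatorname{Ker}\pi_0})$. So I write $\mathcal{A}^{(k)}=\iota_V\beta^{(k+1)}$ with $\beta^{(k+1)}\in\operatorname{Ker}\pi_0^{(k+1)}$.

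\textbf{Step 2: reparametrise by $\eta^{(k)}$.} Since $\iota_V$ is nilpotent, $\iota_V\mathcal{A}^{(k)}=0$. Cartan's formula $\mathcal{L}_{V,\nabla}=d^\nabla\iota_V+\iota_V d^\nabla$ then gives
\[
\iota_V d^\nabla\mathcal{A}^{(k)}=\mathcal{L}_{V,\nabla}\mathcal{A}^{(k)}.
\]
By Item~(III) of Proposition~\ref{prop:operator_toolbox}, $\mathcal{L}_{V,\nabla}$ is invertible on $\operatorname{Ker}\pi_0$, and it preserves $\operatorname{Im}(\iota_V|_{\operatorname{Ker}\pi_0})$ because it commutes with $\iota_V$ and with $\pi_0$. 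I therefore set
\[
\eta^{(k)}\coloneqq \mathcal{A}^{(k)},
\]
an element of $\operatorname{Im}(\iota_V|_{\operatorname{Ker}\pi_0^{(k+1)}})$, so that $\iota_V d^\nabla\eta^{(k)}=\widetilde{\mathcal{L}}^{(k)}_{V,\nabla}\eta^{(k)}$. If one prefers $\eta^{(k)}$ as an independent parameter, one can write $\mathcal{A}^{(k)}=(\widetilde{\mathcal{L}}^{(k)})^{-1}\eta^{(k)}$; the two choices differ only by a field redefinition, and this is exactly the Jacobian the paper handles later.

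\textbf{Step 3: move $\iota_V$ onto $d^\nabla\mathcal{A}$ using isotropy.} The $\mathbb{B}$-component pairing with $d^\nabla\mathcal{A}^{(k)}$ also lies in $\operatorname{Im}\iota_V$, say $\mathcal{B}^{(n-k-1)}=\iota_V\gamma$. I would then use the derivation identity
\[
\iota_V(\gamma\wedge_h d^\nabla\mathcal{A}^{(k)})=\iota_V\gamma\wedge_h d^\nabla\mathcal{A}^{(k)}\pm\gamma\wedge_h\iota_V d^\nabla\mathcal{A}^{(k)}.
\]
The left-hand side is $\iota_V$ of a form of degree $n+1$, hence zero, as in the isotropy argument of Proposition~\ref{prop: axial_gauge_lagrangian}. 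This gives
\[
\int_M\mathcal{B}^{(n-k-1)}\wedge_h d^\nabla\mathcal{A}^{(k)}=\pm\int_M\gamma\wedge_h\widetilde{\mathcal{L}}^{(k)}_{V,\nabla}\eta^{(k)}.
\]
Here $\gamma$ is an $(n-k)$-form, which is the object denoted $\mathcal{B}^{(n-k)}$ in the statement: the degree-$(n-k)$ preimage parametrising the gauge-fixed $\mathbb{B}$ field. Summing over $k=0,\dots,n-1$ produces the stated quadratic form. The index range is forced: $\eta^{(n)}$ cannot occur, and $k=0$ is allowed because $\operatorname{Im}\iota_V$ in degree $0$ comes from $1$-forms.

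\textbf{Main obstacle.} The delicate point is Step 3, namely keeping track of signs and of the Koszul degree shifts. The $\pm$ must be absorbed into the convention for $\mathcal{B}^{(n-k)}$. One must also justify the manipulations in the anisotropic setting: the wedge product $\gamma\wedge_h\widetilde{\mathcal{L}}\eta$ of two anisotropic currents has to be well defined. For this I would invoke that the relevant pairing extends continuously to $\mathcal{H}_m\times\mathcal{H}_{-m}$, as in Remark~\ref{rem: BV-data-Sob} and the appendix on anisotropic spaces, by density of smooth forms, which lets every identity be checked on smooth forms first.
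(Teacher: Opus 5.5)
Your argument is correct, but it is the paper's argument run in mirror image.

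The paper leaves $\mathbb{B}$ untouched and reparametrises $\mathbb{A}$. It writes $\mathcal{A}^{(k)}=\iota_V\eta^{(k+1)}$ and uses the chain homotopy equation to choose this primitive $d^\nabla$-closed. Cartan's formula then gives $d^\nabla\iota_V\eta^{(k+1)}=\mathcal{L}_{V,\nabla}\eta^{(k+1)}$, and the paper reindexes $k+1\to k$. You instead keep $\mathcal{A}^{(k)}$ itself as the variable. You use $\iota_V\mathcal{A}^{(k)}=0$ to get $\iota_V d^\nabla\mathcal{A}^{(k)}=\mathcal{L}_{V,\nabla}\mathcal{A}^{(k)}$. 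You then move the $\iota_V$ from $\mathcal{B}^{(n-k-1)}=\iota_V\gamma$ onto $d^\nabla\mathcal{A}^{(k)}$ via the derivation rule and the vanishing of $(n+1)$-forms, picking up the sign $(-1)^{n-k+1}$.

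Each route buys something different:
\begin{itemize}
\item Your route produces literally $\widetilde{\mathcal{L}}^{(k)}_{V,\nabla}$ acting on $\operatorname{Im}(\iota_V|_{\operatorname{Ker}\pi_0^{(k+1)}})$ for $k=0,\dots,n-1$. This is exactly the operator in the statement and in Theorem~\ref{thm: Ruelle-as-prod-det-app}.
\item The paper's $\eta^{(k)}$ lies in $\operatorname{Im}(d^\nabla|_{\operatorname{Ker}\pi_0})$. Its operator is therefore only conjugate, through $\iota_V$, to $\widetilde{\mathcal{L}}^{(k-1)}_{V,\nabla}$, and its sum runs over $k=1,\dots,n$.
\item In exchange, the paper's $\mathcal{B}^{(n-k)}$ is a genuine component of $\mathbb{B}$, while yours is an $\iota_V$-primitive of $\mathcal{B}^{(n-k-1)}$.
\end{itemize}

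Your reinterpretation is in fact forced once the stated domain of $\widetilde{\mathcal{L}}^{(k)}_{V,\nabla}$ is taken literally. If $\mathcal{B}^{(n-k)}$ were a component of $\mathbb{B}\in\mathbb{L}_V$, both factors would lie in $\operatorname{Im}\iota_V$. The integrand would then vanish by the isotropy argument of Proposition~\ref{prop: axial_gauge_lagrangian}.

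In either route, the single change of variables by $\iota_V$ is what produces the Jacobian $\det\nolimits'(\iota_V)$. For you, that change of variables is $\mathcal{B}^{(n-k-1)}=\iota_V\gamma$. Your alternative $\mathcal{A}^{(k)}=(\widetilde{\mathcal{L}}^{(k)}_{V,\nabla})^{-1}\eta^{(k)}$ is not an equivalent option: it would turn the integrand into $\gamma\wedge_h\eta^{(k)}$. You should drop it.

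Finally, both factors of your wedge product lie in $\mathcal{H}_m$, so the $\mathcal{H}_m\times\mathcal{H}_{-m}$ duality you invoke does not apply directly. The paper simply assumes the continuous extension of Remark~\ref{rem: BV-data-Sob}, and at the paper's level of rigour you may do the same.
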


\begin{proof}
By definition of $\mathbb{L}_V$, each field component satisfies $\mathcal{A}^{(k)} = \iota_V \eta^{(k+1)}$ for some $\eta^{(k+1)} \in \text{Im}(\mathbb{I} - \pi_0)$. Substituting this into the action $\widetilde{\mathbb{S}}_{BF} = \sum \int_M \mathcal{B}^{(n-k-1)} \wedge_h d^\nabla \mathcal{A}^{(k)}$, we focus on the term $d^\nabla \iota_V \eta^{(k+1)}$. Using Theorem \ref{thm:chain_homotopy}, and noting that $\eta^{(k+1)}$ can be chosen such that $d^\nabla \eta^{(k+1)} = 0$, it follows that $d^\nabla \iota_V \eta^{(k+1)} = \mathcal{L}_{V,\nabla} \eta^{(k+1)}$. The identity is then obtained by substituting this expression back into the integral and re-indexing the summation over the form degrees.
\end{proof}

By combining the quadratic representation of the action on the axial gauge-fixing with the formal rules of Berezinian integration on degree-shifted spaces, we can provide a rigorous definition for the partition function. Analogously to the metric gauge case, this definition incorporates the Jacobian factors arising from the change of coordinates and the evaluation of the infinite-dimensional Gaussian integral over the fluctuation sector.

\begin{definition}
\label{def:partition_function_axial_gauge}
The partition function of $BF$ theory in the axial gauge $\mathbb{L}_V$ is defined as:
\begin{equation}
\label{eq:axial-partition-result}
    \mathcal{Z}'(\widetilde{\mathbb{S}}_{BF}, \mathbb{L}_V) \coloneqq e^{i \widetilde{\mathbb{S}}_{BF}'} \, \nu' \cdot \left| \det\nolimits'(\iota_V)\right|^{-1} \cdot \prod_{k=0}^{n-1} \left|\det\nolimits^{\flat}(\widetilde{\mathcal{L}}_{V, \nabla}^{(k)})^{(-1)^{k}}\right|,
\end{equation}
where $\nu' \in \mathrm{Dens}^{1/2}(\widetilde{\mathcal{F}}_{BF}')$ is a half-density on the residual sector and $\det^\flat$ denotes the flat-regularised determinant as per Definition \ref{def: 2-step-determinant}. Furthermore, as established in Proposition \ref{prop: action-splitting-axial}, the term $\widetilde{\mathbb{S}}_{BF}'$ denotes the effective action on the residual sector:
\begin{equation}
    \widetilde{\mathbb{S}}_{BF}' = \int_M \left[ B_0 \wedge_h d^\nabla A_0 \right]^{\text{top}}.
\end{equation}
\end{definition}

\begin{rem}
As in the metric gauge case (cf.\ Remark~\ref{rem:metric-partition-notation}), we
use the notation $\mathcal{Z}'(\widetilde{\mathbb{S}}_{BF}, \mathbb{L}_V)$ rather
than $\mathcal{Z}'(A)$ from Equation~\eqref{eq:residual_partition_function}, with
$A = \widetilde{\mathcal{L}}_{V,\nabla}$, while $A''$ denotes the operator restricted to
$\mathbb{L}_V$.
\end{rem}
It remains to address the contribution of the interior product operator to the partition function. We will provide an explicit characterisation for its regularised determinant in Subsection \ref{subsec:determinant_iota}, where it will be shown why $\det'(\iota_V)$ can be consistently set to unity.

It is now straightforward to link this object with the dynamical Ruelle zeta function.

\begin{cor}
\label{cor:axial-ruelle-link}
The axial partition function is related to the Ruelle zeta function at zero by the following identity:
\begin{equation}
    \mathcal{Z}'(\widetilde{\mathbb{S}}_{BF}, \mathbb{L}_V) = e^{i \widetilde{\mathbb{S}}_{BF}'} \, \nu' \cdot \left| \det\nolimits'(\iota_V)\right|^{-1} \cdot \mathfrak{R}_{V, \rho}(0)^{(-1)^n}.
\end{equation}
\end{cor}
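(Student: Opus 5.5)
The plan is to compare Definition \ref{def:partition_function_axial_gauge} directly with Theorem \ref{thm: Ruelle-as-prod-det-app}. The prefactors $e^{i \widetilde{\mathbb{S}}_{BF}'}\,\nu'$ and $|\det'(\iota_V)|^{-1}$ appear unchanged in both the definition and the statement. So the whole content of the corollary is the identity
\begin{equation*}
\prod_{k=0}^{n-1} \left|{\det}^{\flat}\big(\widetilde{\mathcal{L}}_{V, \nabla}^{(k)}\big)^{(-1)^{k}}\right| = \mathfrak{R}_{V, \rho}(0)^{(-1)^n}.
\end{equation*}

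\textbf{Step 1: rewrite the product.} Raise both sides of Theorem \ref{thm: Ruelle-as-prod-det-app} to the power $(-1)^n$. Since $(-1)^{(n+k)}\cdot(-1)^n=(-1)^k$, this gives
\begin{equation*}
\mathfrak{R}_{V,\rho}(0)^{(-1)^n}=\prod_{k=0}^{n-1}{\det}^\flat\big(\widetilde{\mathcal{L}}^{(k)}_{V,\nabla}\big)^{(-1)^k}.
\end{equation*}
The right-hand side is exactly the product in the definition, except that the definition carries absolute values.

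\textbf{Step 2: well-posedness at $z=0$.} The evaluation at $z=0$ is legitimate. Proposition \ref{prop: analytical-at-zero-app} makes each $\log\det^\flat(\widetilde{\mathcal{L}}^{(k)}_{V,\nabla}+z)$ analytic near the origin. Non-alignedness guarantees that $\mathfrak{R}_{V,\rho}(0)$ is finite and non-zero, so all the powers are meaningful.

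\textbf{Step 3: remove the absolute values (main obstacle).} I would show that each $\det^\flat(\widetilde{\mathcal{L}}^{(k)}_{V,\nabla})$ is a positive real number. By Remark \ref{rem: r_k-(s=0)} and Equation \eqref{eq: log-det-integral}, $\log\det^\flat(\widetilde{\mathcal{L}}^{(k)}_{V,\nabla})=(-1)^n\log\mathfrak{R}_k(0)$. Here $\log\mathfrak{R}_k(0)$ is the convergent series from Equation \eqref{eq:def-r_k}, and its terms $\operatorname{tr}(\rho(\Lambda)^j)\operatorname{tr}(\wedge^k\mathcal{P}^j_\Lambda)/|\det(\mathbb{I}-\mathcal{P}_\Lambda^j)|$ are real whenever $\rho$ has real traces; in that case $\mathfrak{R}_k(0)=\exp(\text{real})>0$. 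The delicate point is that the series for $\log\mathfrak{R}_k(z)$ is a priori only valid for $\operatorname{Re}z$ large, so one has to justify evaluating it at $z=0$. Because Morse--Smale flows have finitely many orbits, each orbit's contribution is a log-determinant of a finite matrix $\mathbb{I}-(\text{stuff})e^{-T_\Lambda z}$, which continues analytically to $z=0$ by non-alignedness.

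\textbf{Fallback if $\rho$ is genuinely complex.} If $\rho$ is only unitary with complex traces, the argument must be read as an identity up to phase. In that case I would interpret both sides via their moduli, replacing $\mathfrak{R}_{V,\rho}(0)$ by $|\mathfrak{R}_{V,\rho}(0)|$. This matches the fact that a partition function as a density is insensitive to phases, since Definition \ref{def:partition_function_flat} already takes absolute values.

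Substituting the result of Step 1 into Definition \ref{def:partition_function_axial_gauge} then yields the corollary.
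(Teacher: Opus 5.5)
Your proposal is correct and is essentially the paper's proof, which consists only of substituting Theorem \ref{thm: Ruelle-as-prod-det-app} into Definition \ref{def:partition_function_axial_gauge}. Your Step 3 also correctly flags something the paper passes over: because of the absolute values in the definition, the identity holds exactly only when $\mathfrak{R}_{V,\rho}(0)>0$, and otherwise only up to a phase.

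You can simplify Step 3. Since $\prod_k \bigl|{\det}^{\flat}(\widetilde{\mathcal{L}}^{(k)}_{V,\nabla})^{(-1)^k}\bigr| = \bigl|\mathfrak{R}_{V,\rho}(0)\bigr|^{(-1)^n}$, you only need positivity of $\mathfrak{R}_{V,\rho}(0)$ itself, not of each ${\det}^{\flat}$. That positivity follows at once from the finite product in Definition \ref{def:Ruelle-zeta}. For unitary $\rho(\Lambda)$ with real traces, non-alignedness forces any real eigenvalue to equal $-\Delta(\Lambda)$, contributing a factor $2$, and each conjugate pair contributes $|1-\Delta(\Lambda)e^{i\theta}|^2>0$. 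This also sidesteps your per-$k$ justification, which is inaccurate as stated: an orbit's contribution to a single $\mathfrak{R}_k$ is in general an infinite product of determinants, not one finite determinant.
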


\begin{proof}
The identity follows by substituting the spectral realisation of the Ruelle zeta function at the origin, established in Theorem \ref{thm: Ruelle-as-prod-det-app}, into Equation \eqref{eq:axial-partition-result}.
\end{proof}

Corollary \ref{cor:axial-ruelle-link} yields an effective partition function on the finite-dimensional space of zero-resonant states $\widetilde{\mathcal{F}}_{BF}'$. Unlike the $L^2$ framework of the metric gauge, which directly isolates harmonic forms, the anisotropic completion yields a residual space spanned by Pollicott--Ruelle resonant states. Consequently, while the total spaces $\widetilde{\mathcal{F}}_{BF}$ and $\mathcal{F}_{BF}^{L^2}$ are topologically isomorphic, their respective residual field spaces $\widetilde{\mathcal{F}}_{BF}' $ and $ \mathcal{F}_{BF}'$ are not.

Nevertheless, Theorem \ref{thm:spectral_realization_cohomology} ensures a canonical isomorphism at the cohomological level: $H^\bullet(C^\bullet_{V,\nabla}(0), d^\nabla) \cong H^\bullet(M, \mathcal{E})$. To restrict the theory to these purely cohomological degrees of freedom and integrate out the remaining resonant fluctuations (thereby handling the non-trivial effective action $\widetilde{\mathbb{S}}_{BF}'$), we perform a second BV pushforward. 

The following commutative diagram illustrates how the two distinct analytical completions, despite their different intermediate splittings, ultimately reduce to the same topological invariant:


\[
\xymatrix@C=1em{
\ar []+<-7ex,-2ex> ; [dd]+<-18ex,2ex> \mathcal{F}_{BF}'\oplus \mathcal{F}_{BF}'' \simeq \widetilde{\mathcal{F}}_{BF}'\oplus \widetilde{\mathcal{F}}_{BF}'' \ar[dr] & \\
 & \widetilde{\mathcal{F}}_{BF}' \simeq \widetilde{\mathcal{F}}_{BF}^{(1)} \oplus \widetilde{\mathcal{F}}_{BF}^{(2)}\ar[d]\\
\mathcal{F}_{BF}' = H^\bullet(M,\mathcal{E})[1]\oplus H^\bullet(M,\mathcal{E})[n-2] \ar[r] & \text{Ker}(\widetilde{\Delta}_\bullet)[1] \oplus \text{Ker}(\widetilde{\Delta}_\bullet)[n-2]= \widetilde{\mathcal{F}}^{(1)}_{BF},
}
\]

As the diagram anticipates, while the two analytical completions partition the fields differently, they ultimately isolate the same topological content. The space $\widetilde{\mathcal{F}}^{(1)}_{BF}$ appearing in the diagram represents the purely cohomological sector of the resonant complex, which is canonically isomorphic to the harmonic sector $\mathcal{F}_{BF}'$. The precise algebraic construction yielding this subspace and the execution of the associated second BV pushforward are detailed in what follows. Specifically, isolating this purely cohomological subspace relies on establishing a Hodge decomposition of $C^k_{V, \nabla}(0)$. We achieve this by equipping the finite-dimensional complex with an auxiliary metric $h_{aux}$, which naturally defines the algebraic adjoint $\tilde{d}^{\nabla, \dagger}_k \colon C^k_{V, \nabla}(0) \to C^{k-1}_{V, \nabla}(0)$ through the relation:
\begin{equation}
    h_{aux}(d^\nabla \alpha, \beta) = h_{aux}(\alpha, \tilde{d}^{\nabla, \dagger} \beta), \quad \forall \alpha, \beta \in C^\bullet_{V, \nabla}(0).
\end{equation}
By introducing $h_{aux}$ we can define a Laplacian $\widetilde{\Delta}_k \coloneqq \tilde{d}^{\nabla, \dagger}_k d^\nabla_k + d^\nabla_{k-1} \tilde{d}^{\nabla, \dagger}_{k-1}$ on $C^k_{V, \nabla}(0)$, which provides the following decomposition:
\begin{equation}
\label{eq:algebraic_hodge_axial}
    C^k_{V, \nabla}(0) = \text{Im}(d^\nabla_{k-1}) \oplus \text{Im}(\tilde{d}^{\nabla, \dagger}_{k}) \oplus \text{Ker}(\widetilde{\Delta}_k),
\end{equation}
where $\text{Ker}(\widetilde{\Delta}_k)$ is canonically isomorphic to the $k$-th cohomology group of the resonant complex.

A critical requirement for applying the BV formalism is to verify that this decomposition induces Lagrangian subspaces. While the exact summand $\text{Im}(d^\nabla)$ is isotropic due to the nilpotency of $d^\nabla$ and Stokes' Theorem, the isotropy of $\text{Im}(\tilde{d}^{\nabla, \dagger})$ is established by exploiting the duality between the auxiliary metric $h_{aux}$ and the symplectic form $\widetilde{\Omega}_{BF}$. Since the pairing induced by $\widetilde{\Omega}_{BF}$ between $C^k_{V, \nabla}(0)$ and $C^{n-k}_{V, \nabla}(0)$ is non-degenerate, {any} arbitrary choice of the metric $h_{aux}$ uniquely induces a linear isomorphism $\star_{alg}: C^k_{V, \nabla}(0) \to C^{n-k}_{V, \nabla}(0)$, defined by the compatibility relation:
\begin{equation}
\label{eq:algebraic_star_compatibility}
    \widetilde{\Omega}_{BF}((0, \beta), (\alpha, 0)) \coloneqq \int_M \beta \wedge_h \alpha = h_{aux}(\beta, \star_{alg} \alpha),
\end{equation}
for all $\alpha \in C^k_{V, \nabla}(0)$ and $\beta \in C^{n-k}_{V, \nabla}(0)$. Because of the defining relation \eqref{eq:algebraic_star_compatibility}, the Hodge decomposition in Equation \eqref{eq:algebraic_hodge_axial} is automatically orthogonal with respect to the symplectic structure. This guarantees the isotropy of the co-exact subspace without imposing any actual restriction on $h_{aux}$. Consequently, the resulting algebraic Hodge decomposition cleanly splits $\widetilde{\mathcal{F}}_{BF}'$ into a cohomological and a fluctuation sector, naturally yielding a suitable gauge-fixing Lagrangian for the secondary BV pushforward.

\begin{prop}
\label{prop:residual_splitting_lagrangian}
Let the residual field space be decomposed as $\widetilde{\mathcal{F}}_{BF}' \simeq \widetilde{\mathcal{F}}_{BF}^{(1)} \oplus \widetilde{\mathcal{F}}_{BF}^{(2)}$, where the summands are defined by:
\begin{equation}
\label{eq:F_BF_1_2}
\begin{split}
    \widetilde{\mathcal{F}}_{BF}^{(1)} &\coloneqq \text{Ker}(\widetilde{\Delta}_\bullet)[1] \oplus \text{Ker}(\widetilde{\Delta}_\bullet)[n-2], \\
    \widetilde{\mathcal{F}}_{BF}^{(2)} &\coloneqq \left(\text{Im}(d^\nabla_{\bullet-1}) \oplus \text{Im}(\tilde{d}_{\bullet}^{\nabla, \dagger})\right)[1] \oplus \left(\text{Im}(d^\nabla_{\bullet-1}) \oplus \text{Im}(\tilde{d}_{\bullet}^{\nabla, \dagger})\right)[n-2].
\end{split}
\end{equation}
The subspace $\mathbb{L}^{(2)} \subset \widetilde{\mathcal{F}}_{BF}^{(2)}$ defined as:
\begin{equation}
\label{eq:lagrangian_2_sub}
    \mathbb{L}^{(2)} \coloneqq \text{Im}(\tilde{d}_{\bullet}^{\nabla, \dagger})[1] \oplus \text{Im}(\tilde{d}_{\bullet}^{\nabla, \dagger})[n-2],
\end{equation}
is a Lagrangian subspace of $\widetilde{\mathcal{F}}_{BF}^{(2)}$ with respect to the restricted symplectic structure. 
\end{prop}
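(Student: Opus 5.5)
Following the template of Propositions \ref{prop: L_g-lagrangian-subspace} and \ref{prop: axial_gauge_lagrangian}, I will exhibit an isotropic complement and then check isotropy of both summands.

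\textbf{Complement.} The natural complement is
\begin{equation*}
(\mathbb{L}^{(2)})^\perp \coloneqq \text{Im}(d^\nabla_{\bullet-1})[1] \oplus \text{Im}(d^\nabla_{\bullet-1})[n-2].
\end{equation*}
The decomposition $\widetilde{\mathcal{F}}_{BF}^{(2)} = \mathbb{L}^{(2)} \oplus (\mathbb{L}^{(2)})^\perp$ follows from the algebraic Hodge decomposition \eqref{eq:algebraic_hodge_axial}. In finite dimensions this is standard linear algebra for the adjoint $\tilde d^{\nabla,\dagger}$ with respect to $h_{aux}$. In particular, $\text{Im}(d^\nabla)$ and $\text{Im}(\tilde d^{\nabla,\dagger})$ are $h_{aux}$-orthogonal, since
\begin{equation*}
h_{aux}(d^\nabla\alpha,\tilde d^{\nabla,\dagger}\beta) = h_{aux}((d^\nabla)^2\alpha,\beta) = 0.
\end{equation*}

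\textbf{Isotropy of $(\mathbb{L}^{(2)})^\perp$.} Take $\mathbb{A} = d^\nabla\alpha$ and $\mathbb{B} = d^\nabla\beta$ with $\alpha,\beta$ in the resonant complex; by Proposition \ref{prop:resonant_complex}, $d^\nabla$ preserves $C^\bullet_{V,\nabla}(0)$. Exactly as in Proposition \ref{prop: axial_gauge_lagrangian},
\begin{equation*}
\int_M d^\nabla\beta\wedge_h d^\nabla\alpha = \int_M d^\nabla(\beta\wedge_h d^\nabla\alpha) = 0
\end{equation*}
by Stokes. Since the elements are anisotropic currents, I would justify Stokes and the Leibniz rule through the continuous extension of $\widetilde\Omega_{BF}$ from Remark \ref{rem: BV-data-Sob}, using density of smooth forms.

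\textbf{Isotropy of $\mathbb{L}^{(2)}$.} This is the main point. Take $\mathbb{A} = \tilde d^{\nabla,\dagger}\alpha$ in degree $k$ and $\mathbb{B} = \tilde d^{\nabla,\dagger}\beta$ in degree $n-k$. Using the compatibility relation \eqref{eq:algebraic_star_compatibility},
\begin{equation*}
\widetilde\Omega_{BF}((0,\mathbb{B}),(\mathbb{A},0)) = h_{aux}(\tilde d^{\nabla,\dagger}\beta, \star_{alg}\tilde d^{\nabla,\dagger}\alpha).
\end{equation*}
I would then establish an algebraic analogue of $d^\dagger = \pm\star d\star$. Graded-commutativity of $\wedge_h$ together with Stokes gives
\begin{equation*}
\int_M d^\nabla\gamma\wedge_h\alpha = \pm\int_M \gamma\wedge_h d^\nabla\alpha,
\end{equation*}
so $d^\nabla$ is $\widetilde\Omega_{BF}$-skew-adjoint up to sign. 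Combining this with the defining relations of $\star_{alg}$ and $\tilde d^{\nabla,\dagger}$ should give $\star_{alg}\tilde d^{\nabla,\dagger} = \pm\, d^\nabla\star_{alg}$, possibly up to inverses and transposes of $\star_{alg}$. With this relation in hand,
\begin{equation*}
\widetilde\Omega_{BF}((0,\mathbb{B}),(\mathbb{A},0)) = \pm\, h_{aux}(\tilde d^{\nabla,\dagger}\beta, d^\nabla\star_{alg}\alpha) = \pm\, h_{aux}(\beta, d^\nabla d^\nabla\star_{alg}\alpha) = 0.
\end{equation*}

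\textbf{Expected obstacle.} This intertwining identity is the delicate step. For an arbitrary $h_{aux}$, the map $\star_{alg}$ need not intertwine $\tilde d^{\nabla,\dagger}$ with $d^\nabla$, because that requires $\star_{alg}$ to be compatible with $h_{aux}$ across the degrees $k$ and $n-k$. If the general argument fails, I would impose this compatibility as a condition on $h_{aux}$. Concretely, I would choose the metric on $C^{n-k}_{V,\nabla}(0)$ to be the one transported from $C^k_{V,\nabla}(0)$ via the nondegenerate pairing, so that $\star_{alg}$ is an isometry. This is consistent with the freedom in choosing $h_{aux}$ exploited in Theorem \ref{thm:equality-of-torsion}. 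Under that choice, $\tilde d^{\nabla,\dagger}$ is conjugate to $\pm d^\nabla$ by $\star_{alg}$, and the computation above closes.
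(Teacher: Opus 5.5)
Your outline matches the paper's: the complement $\text{Im}(d^\nabla)$ is isotropic by nilpotency and Stokes, and the co-exact part is handled through $\star_{alg}$. The paper, however, asserts that \eqref{eq:algebraic_star_compatibility} makes $\text{Im}(\tilde d^{\nabla,\dagger})$ isotropic ``without imposing any actual restriction on $h_{aux}$''. Your suspicion at exactly this step is justified: the general argument you hope for does not exist.

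For arbitrary $h_{aux}$, \eqref{eq:algebraic_star_compatibility} and Stokes give the intertwining in the \emph{opposite} order,
\[
\tilde d^{\nabla,\dagger}\star_{alg}=\pm\,\star_{alg}\,d^\nabla ,
\qquad\text{since}\qquad
h_{aux}(\gamma,\tilde d^{\nabla,\dagger}\star_{alg}\alpha)=\int_M d^\nabla\gamma\wedge_h\alpha=\pm\int_M\gamma\wedge_h d^\nabla\alpha .
\]
Graded symmetry of $\wedge_h$ gives $\star_{alg}^{*}=\pm\star_{alg}$ for the $h_{aux}$-adjoint. Hence $\text{Im}(\tilde d^{\nabla,\dagger})=\star_{alg}(\text{Im}\,d^\nabla)$. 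Your own computation then shows that isotropy of $\mathbb{L}^{(2)}$ is equivalent to $d^\nabla\star_{alg}\tilde d^{\nabla,\dagger}=0$, i.e.\ to $\star_{alg}\star_{alg}$ sending $d^\nabla$-exact elements to $d^\nabla$-closed ones. Nothing forces this.

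Only \eqref{eq:algebraic_star_compatibility} and Stokes enter the argument, and a small finite-dimensional model already defeats it. Take $n=3$, $C^1=\langle e_1,e_2\rangle$, $C^2=\langle f_1,f_2\rangle$, with $d^\nabla e_1=0$, $d^\nabla e_2=f_1$, $d^\nabla f_1=0$. Complete it by one-dimensional $C^0,C^3$ with $d^\nabla g_0=e_1$ and $d^\nabla f_2=g_3$.
\begin{itemize}
\item Stokes forces $\int_M f_1\wedge_h e_1=\pm\int_M e_2\wedge_h d^\nabla e_1=0$.
\item It still allows a non-degenerate pairing with $\int_M f_1\wedge_h e_2=\int_M f_2\wedge_h e_1=1$ and $\int_M f_2\wedge_h e_2=c$ arbitrary.
\item Choose $h_{aux}$ with $e_1\perp e_2+ae_1$ and $f_1\perp f_2+bf_1$. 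Then $\text{Im}(\tilde d^{\nabla,\dagger})$ is spanned by $e_2+ae_1$ in degree $1$ and by $f_2+bf_1$ in degree $2$.
\item $\widetilde\Omega_{BF}$ pairs these two vectors to $a+b+c$.
\end{itemize}
So $\mathbb{L}^{(2)}$ is isotropic only on the hypersurface $a+b+c=0$ of auxiliary metrics, and the paper's ``no restriction'' claim does not hold. The same mechanism shows that, for general $h_{aux}$, $\text{Ker}(\widetilde\Delta_\bullet)$ need not be $\widetilde\Omega_{BF}$-orthogonal to $\text{Im}(\tilde d^{\nabla,\dagger})$. Then even the splitting $\widetilde{\mathcal{F}}^{(1)}_{BF}\oplus\widetilde{\mathcal{F}}^{(2)}_{BF}$ is not symplectic.

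Your fallback is therefore necessary, and it does work. If each $\star_{alg}$ is an $h_{aux}$-isometry, then $\star_{alg}^*\star_{alg}=1$ together with $\star_{alg}^*=\pm\star_{alg}$ gives $\star_{alg}\star_{alg}=\pm1$. The identity above can then be inverted to $\star_{alg}\tilde d^{\nabla,\dagger}=\pm\, d^\nabla\star_{alg}$, and your displayed computation closes. The isometry condition also makes $\star_{alg}$ preserve $\text{Ker}(\widetilde\Delta_\bullet)$, which restores the symplectic splitting.

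Two points still need repair.
\begin{itemize}
\item \textbf{Middle degree.} Transporting the metric from $C^k$ to $C^{n-k}$ only makes sense for $k\neq n/2$. When $n$ is even, $C^{n/2}_{V,\nabla}(0)$ is paired with itself, so you must choose $h_{aux}$ there with $\star_{alg}^2=\pm1$. This is possible, e.g.\ via the polar decomposition of the (skew-)Hermitian form $\int_M\beta\wedge_h\alpha$ on $C^{n/2}_{V,\nabla}(0)$.
\item \textbf{Consistency with Theorem~\ref{thm:equality-of-torsion}.} Your claim that the restriction is compatible with that theorem is unjustified. Remark~\ref{rem:thom_smale_gauge_choice} fixes $h_{aux}$ completely by declaring the bases $\mathscr{B}_k$ orthonormal. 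That metric makes $\star_{alg}$ an isometry only if the Gram matrix of $\widetilde\Omega_{BF}$ between $\mathscr{B}_{n-k}$ and $\mathscr{B}_k$ is unitary, and nothing in the paper establishes this.
\end{itemize}
What you can actually prove is the proposition for this restricted class of $h_{aux}$. Using it downstream with the Thom--Smale metric requires either that extra unitarity check or a comparison through the metric-independent torsion in $\text{Det}(H^\bullet(M,\mathcal{E}))$.
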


With this gauge-fixing Lagrangian at our disposal, we are now in a position to establish the following theorem regarding the evaluation of the secondary BV pushforward.

\begin{theorem}
\label{thm:main_pushforward_axial}
Let $\mathcal{Z}'(\widetilde{\mathbb{S}}_{BF}, \mathbb{L}_V)$ be the axial partition function of Corollary \ref{cor:axial-ruelle-link}. Given an arbitrary auxiliary metric $h_{aux}$ on $C_{V, \nabla}(0)$, the BV pushforward onto $\widetilde{\mathcal{F}}_{BF}^{(1)}$, relative to $\mathbb{L}^{(2)} \subset \widetilde{\mathcal{F}}_{BF}^{(2)}$ defined in Equation \eqref{eq:lagrangian_2_sub}, equals to:
\begin{equation}
\label{eq:final-partition-function-axial}
    \mathcal{Z}^{(1)}(\mathbb{S}_{BF}',\mathbb{L}^{(2)}) \coloneqq \int_{\mathbb{L}^{(2)}} \mathcal{Z}'(\widetilde{\mathbb{S}}_{BF}, \mathbb{L}_V) = \mathfrak{R}_{V, \rho}(0)^{(-1)^n} \cdot \prod_{k=0}^{n} ({\det} ' \widetilde{\Delta}_k)^{(-1)^{k+1} k/2} \cdot \mu^{(1)},
\end{equation}
where $\mu^{(1)} \in \mathrm{Dens}^{1/2}(\widetilde{\mathcal{F}}_{BF}^{(1)})$ is the residual half-density.
\end{theorem}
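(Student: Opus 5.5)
The plan is to mirror, on the finite-dimensional resonant complex, the metric-gauge computation of Section \ref{sec: metric-gauge}. I would first factor $\mathcal{Z}'(\widetilde{\mathbb{S}}_{BF},\mathbb{L}_V)$ from Corollary \ref{cor:axial-ruelle-link}. The scalar prefactor $\left|\det'(\iota_V)\right|^{-1}\mathfrak{R}_{V,\rho}(0)^{(-1)^n}$ is independent of the residual fields, so it passes unchanged through the second pushforward; I take $\det'(\iota_V)=1$ as announced for Subsection \ref{subsec:determinant_iota}. What remains is to compute the genuinely finite-dimensional BV pushforward of the half-density $e^{i\widetilde{\mathbb{S}}'_{BF}}\nu'$ along $\mathbb{L}^{(2)}$. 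Here Theorem \ref{thm:BV_pushforward_properties} and Proposition \ref{prop:factorisation_density} apply without any regularisation.

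Next I would decompose $(A_0,B_0)$ according to the algebraic Hodge decomposition \eqref{eq:algebraic_hodge_axial}, writing $A_0=A_H+A_E+A_{C}$ and similarly for $B_0$. The argument then repeats Remark \ref{rem: BF-action-restricted-metric} with $h_{aux}$ and $\star_{alg}$ in place of $g$ and $*_h$:
\begin{itemize}
\item $d^\nabla$ kills the harmonic and exact parts of $A_0$.
\item The cross terms $\int_M B_H\wedge_h d^\nabla A_C$ vanish. By \eqref{eq:algebraic_star_compatibility} this pairing equals $h_{aux}(B_H,\star_{alg} d^\nabla A_C)$, and the Hodge splitting is symplectically orthogonal.
\item The term with $B_E$ vanishes by Stokes.
\end{itemize}
Hence $\widetilde{\mathbb{S}}'_{BF}$ is independent of $\widetilde{\mathcal{F}}^{(1)}_{BF}$. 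Its restriction to $\mathbb{L}^{(2)}$ takes the form $\sum_k h_{aux}(\tau^{(k)},\tilde d^{\nabla,\dagger}_k d^\nabla_k A^{(k)})$, in exact analogy with Proposition \ref{prop: action-restricted-metric}. The analogue of Proposition \ref{prop:residual_splitting_lagrangian} ensures that $\nu'$ factorises as $\mu^{(1)}\otimes\mu^{(2)}$, with $\mu^{(2)}$ the half-density induced by $h_{aux}$.

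The Gaussian Berezinian integral over $\mathbb{L}^{(2)}$ then proceeds as in Definition \ref{def:partition_function_metric_gauge}. It produces the Jacobian $|\sdet(\tilde d^{\nabla,\dagger})|^{-1}=|\sdet(\tilde d^{\nabla,\dagger}d^\nabla)|^{-1/2}$ from parametrising co-exact fields, times $|\sdet(\tilde d^{\nabla,\dagger}d^\nabla)|$ from the Gaussian, using the inverted BV-grading powers of Remark \ref{rem: rationale-choice-metric}. All determinants are now ordinary finite-dimensional determinants. The final step is the algebraic identity
\[
\prod_{k}\det(\tilde d^{\nabla,\dagger}_k d^\nabla_k|_{\mathrm{Im}\,\tilde d^{\nabla,\dagger}})^{(-1)^{k}/2}=\prod_{k=0}^n(\det{}'\widetilde{\Delta}_k)^{(-1)^{k+1}k/2},
\]
the finite analogue of Proposition \ref{prop:torsion-rearrangement-flat}. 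It follows from $\det'\widetilde\Delta_k=\det(\tilde d^\dagger d|_{k})\det(d\tilde d^\dagger|_{k})$ together with the isospectrality of $d\tilde d^\dagger$ on degree $k$ and $\tilde d^\dagger d$ on degree $k-1$, followed by telescoping the exponents.

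I expect the main obstacle to be the bookkeeping of signs and exponents. Specifically, one must track the BV degree shifts $[1]$ and $[n-2]$ for both $\mathbb{A}$ and $\mathbb{B}$ copies so that the pairing between degrees $k$ and $n-k-1$ assembles into the stated $(-1)^{k+1}k/2$ exponents rather than their inverses. I would fix conventions by checking the case $n=1$ (or a two-term complex) explicitly. A secondary point is the claim that $\nu'$ splits compatibly with $\widetilde{\mathcal{F}}^{(1)}\oplus\widetilde{\mathcal{F}}^{(2)}$. I would justify it by choosing $\nu'$ to be the half-density associated to $h_{aux}$, which is compatible with $\widetilde\Omega_{BF}$ via $\star_{alg}$.
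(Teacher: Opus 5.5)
Your proposal is essentially the paper's own proof. You pull out the field-independent factor $\mathfrak{R}_{V,\rho}(0)^{(-1)^n}$ (with $\det'(\iota_V)=1$), split $\nu'=\mu^{(1)}\otimes\mu^{(2)}$, and reduce $\widetilde{\mathbb{S}}'_{BF}$ to the co-exact quadratic form. You then combine the Jacobian $|\sdet'(\tilde d^{\nabla,\dagger}d^\nabla)|^{-1/2}$ with the Gaussian factor $|\sdet'(\tilde d^{\nabla,\dagger}d^\nabla)|$ and re-index into $\prod_k(\det'\widetilde{\Delta}_k)^{(-1)^{k+1}k/2}$ as in Proposition~\ref{prop:torsion-rearrangement-flat}. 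Your isospectrality-and-telescoping argument, and your choice of $\nu'$ as the $h_{aux}$-induced half-density, only spell out what the paper leaves implicit.

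The one caveat is shared with the paper. The isotropy of $\mathbb{L}^{(2)}$ and the form $\sum_k h_{aux}(\tau^{(k)},\tilde d^{\nabla,\dagger}_k d^\nabla_k\mathcal{A}^{(k)})$ of the restricted action are imported from the smooth Hodge-star picture. They actually need $h_{aux}$ to be compatible with $\widetilde{\Omega}_{BF}$ (e.g.\ making $\star_{alg}$ an isometry), not literally arbitrary.
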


\begin{proof}
By the direct sum decomposition in Equation \eqref{eq:F_BF_1_2}, the half-density $\nu'$ factorises as $\mu^{(1)} \otimes \mu^{(2)} \in \mathrm{Dens}^{1/2}(\widetilde{\mathcal{F}}_{BF}^{(1)}) \otimes \mathrm{Dens}^{1/2}(\widetilde{\mathcal{F}}_{BF}^{(2)})$. We perform an integration over $\mathbb{L}^{(2)}$ for the term $(e^{i \widetilde{\mathbb{S}}_{BF}'} \cdot \nu')$. 

Following what has been done in Remark \ref{rem: BF-action-restricted-metric}, the action functional restricted to $\mathbb{L}^{(2)}$ depends only on non-harmonic components and takes the quadratic form:
\begin{equation}
    \widetilde{\mathbb{S}}_{BF}' \big|_{\mathbb{L}^{(2)}} = \sum_{k=0}^{n-1} \left(\tau^{(k)}_0, \tilde{d}_k^{\nabla, \dagger} d_k^\nabla \mathcal{A}^{(k)}_0 \right)_{h_{aux}},
\end{equation}
where $\tau^{(k)}_0$ parametrises the co-exact sector of $\mathcal{B}_0 \in \text{Im}(\tilde{d}_{n-k-1}^{\nabla, \dagger})[n-2]$ while $\mathcal{A}^{(k)}_0 \in \text{Im}(\tilde{d}_k^{\nabla, \dagger})[1]$. The integration involves two factors: the Jacobian $|\det'(\tilde{d}^{\nabla, \dagger} \tilde{d}^\nabla)|^{-1/2}$ arising from the change of variables to $\tau_0$, and the result of the Gaussian integral over the shifted space, which yields the determinant of the operator directly in the numerator (see Definition \ref{def:partition_function_flat}):
\begin{align*}
    \int_{\mathbb{L}^{(2)}} e^{i \widetilde{\mathbb{S}}_{BF}'} \mu^{(2)} &= \left| \sdet\nolimits'(\tilde{d}^{\nabla, \dagger} d^\nabla) \right|^{-1/2} \cdot \left| \sdet\nolimits'(\tilde{d}^{\nabla, \dagger} d^\nabla) \right| \cdot \mu^{(1)}\\
    &= \left| \sdet\nolimits'(\tilde{d}^{\nabla, \dagger} d^\nabla) \right|^{1/2} \cdot \mu^{(1)}.
\end{align*}
By explicitly writing the superdeterminant as $\prod_{k=1}^n (\det'(\tilde{d}_{k-1}^{\nabla, \dagger} d_{k-1}^\nabla))^{(-1)^{k+1}/2}$ and applying the same re-indexing used in Proposition \ref{prop:torsion-rearrangement-flat}, we recover the alternating product of the Laplacian determinants:
\begin{equation*}
    \int_{\mathbb{L}^{(2)}} e^{i \widetilde{\mathbb{S}}_{BF}'} \nu' = \prod_{k=0}^n (\det' \nolimits \widetilde{\Delta}_k)^{(-1)^{k+1} k/2} \cdot \mu^{(1)}.
\end{equation*}

Combining this result with the prefactors from the first pushforward, we recover the partition function up to the Jacobian factor $\left| \det\nolimits'(\iota_V)\right|^{-1}$. As we will rigorously justify in Subsection \ref{subsec:determinant_iota}, a consistent normalisation of the vector field allows us to evaluate $\det'(\iota_V) = 1$, yielding the sought expression.

\end{proof}

\begin{rem}
\label{rem:notation_axial_partition_function}
For the sake of readability and to emphasise the fact that both stages of the BV-pushforward, encoded respectively by the choice of the axial Lagrangian $\mathbb{L}_V$ and by the gauge-fixing Lagrangian $\mathbb{L}^{(2)}$, ultimately depend on the choice of a Morse--Smale vector field, we adopt the lighter notation with superscript $V$:
\begin{equation}
\label{eq:zeta_bf_rename}
    \mathfrak{Z}^V_{BF} \doteq \mathcal{Z}^{(1)}(\mathbb{S}_{BF}', \mathbb{L}^{(2)}) \doteq \int_{\mathbb{L}^{(2)}} \mathcal{Z}'(\widetilde{\mathbb{S}}_{BF}, \mathbb{L}_V)
\end{equation}
for the partition function obtained in Theorem \ref{thm:main_pushforward_axial}.
\end{rem}

\begin{rem}
\label{rem:cohomological_reduction}
Theorem \ref{thm:main_pushforward_axial} completes the field-theoretic reduction to the cohomological sector. The residual half-density $\mu^{(1)} \in \mathrm{Dens}^{1/2}(\widetilde{\mathcal{F}}_{BF}^{(1)})$ is canonically identified with an element of the determinant line $\text{Det}(H^\bullet(M, \mathcal{E}))$, an identification supported by the sequence of isomorphisms $\text{Ker}(\widetilde{\Delta}_\bullet) \cong H^\bullet(C^\bullet_{V, \nabla}(0)) \cong H^\bullet(M, \mathcal{E})$ provided by the algebraic Hodge decomposition and Theorem \ref{thm:spectral_realization_cohomology}. 
\end{rem}

\begin{rem}
\label{rem:thom_smale_gauge_choice}
As anticipated, we aim to establish a direct link between the numerical factor arising from the second integration and the Thom--Smale torsion.
Up to this point, we used a completely arbitrary auxiliary metric $h_{aux}$ to define the gauge-fixing Lagrangian $\mathbb{L}^{(2)}$. However, we now evaluate the partition function by explicitly choosing $h_{aux}$ to be the inner product that promotes the canonical isomorphism $\Phi$, introduced in Definition \ref{def:complex_isomorphism}, to an isometry. Consequently, by applying Theorem \ref{thm:equality-of-torsion}, the alternating product of the determinants of $\widetilde{\Delta}_\bullet$ evaluated in Theorem \ref{thm:main_pushforward_axial} coincides with the combinatorial torsion of the Thom--Smale complex. This provides a field-theoretic derivation of the fixed-point contribution.
\end{rem}

In order to identify the physical object computed by the axial partition function, we recall the definition of the Milnor metric, which naturally decouples the dynamics into two independent sectors, as established in \cite[Section 2]{Shen21}. For an exhaustive analysis of the factorisation of the Milnor metric into its independent fixed-point and closed-orbit components, we refer the reader to the review in \cite{Mol26}.

\begin{definition}
\label{def:milnor_metric}
The {Milnor metric} $\| \cdot \|_{M, V}$ on the determinant line $\text{Det}(H^\bullet(M, \mathcal{E}))$ is defined by its evaluation on a unitary volume element $\mu_H$ as:
\begin{equation}
    \| \mu_H \|_{M, V} = |\tau(C^\bullet_{\mathrm{TS}})| \cdot |\mathfrak{R}_{V, \rho}(0)|^{-1},
\end{equation}
where $\tau(C^\bullet_{\mathrm{TS}})$ is the torsion of the Thom--Smale complex.
\end{definition}

Comparing this definition with the explicit evaluation of the BV pushforward yields our central result.

\begin{cor}
\label{cor:partition_function_milnor}
Under the canonical identification of the residual half-density $\mu^{(1)}$ of Theorem \ref{thm:main_pushforward_axial} with the unitary volume element $\mu_H$, the absolute value of the axial partition function exactly recovers the Milnor metric:
\begin{equation}
    | \mathfrak{Z}^V_{BF} | = \| \mu_H \|_{M, V}.
\end{equation}
\end{cor}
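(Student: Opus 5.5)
The plan is to combine Theorem \ref{thm:main_pushforward_axial} with the choice of auxiliary metric in Remark \ref{rem:thom_smale_gauge_choice}, and then compare the resulting numerical prefactor term by term with Definition \ref{def:milnor_metric}. First, I take absolute values in Equation \eqref{eq:final-partition-function-axial}, which gives
\begin{equation*}
|\mathfrak{Z}^V_{BF}| = |\mathfrak{R}_{V,\rho}(0)|^{(-1)^n}\cdot \prod_{k=0}^{n}(\det{}'\widetilde{\Delta}_k)^{(-1)^{k+1}k/2}\cdot |\mu^{(1)}|.
\end{equation*}
Here I use that $\det'\widetilde{\Delta}_k>0$, since each $\widetilde{\Delta}_k$ is a non-negative self-adjoint operator with respect to $h_{aux}$. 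I also use $\det'(\iota_V)=1$, as in the proof of the theorem. Under the stated identification of $\mu^{(1)}$ with the unitary element $\mu_H$, the last factor becomes $\mu_H$, and the norm is then read off from the numerical prefactor.

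Second, I identify the Laplacian product with the algebraic torsion of the resonant complex. I fix $h_{aux}$ so that the basis $\mathscr{B}_k$ is orthonormal, so that $\Phi$ of Definition \ref{def:complex_isomorphism} is an isometry. By the finite-dimensional Ray--Singer-type formula recalled in Appendix \ref{app:torsion_isomorphism}, which expresses the torsion of a metric complex through $\det'$ of its combinatorial Laplacians, $\prod_k(\det'\widetilde{\Delta}_k)^{(-1)^{k+1}k/2}$ equals $|\tau(C^\bullet_{V,\nabla})|$, possibly up to inversion depending on the convention. Theorem \ref{thm:equality-of-torsion} then replaces this by $|\tau(C^\bullet_{\mathrm{TS}})|$. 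The cohomology basis used to define the torsion must be the one induced by $\mu_H$ through $\mathrm{Ker}\,\widetilde{\Delta}\cong H^\bullet(C_{V,\nabla}(0))\cong H^\bullet(M,\mathcal{E})$ (Remark \ref{rem:cohomological_reduction}), so that both sides refer to the same element of the determinant line.

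Third, I match the Ruelle factor. The definition of the Milnor metric has $|\mathfrak{R}_{V,\rho}(0)|^{-1}$, while the pushforward gives $|\mathfrak{R}_{V,\rho}(0)|^{(-1)^n}$. For even $n$ these exponents do not agree directly, so I need a parity argument: for even $n$, Fried/Shen duality ($\mathfrak{R}_{-V}$ versus $\mathfrak{R}_V$ together with the Poincaré-duality sign $(-1)^n$) or the fact that $|\mathfrak{R}_{V,\rho}(0)|=1$ for unitary $\rho$ in even dimension. I would first try to show $|\mathfrak{R}_{V,\rho}(0)|^2=1$ when $n$ is even, using $\operatorname{ind}(\Lambda)$ parity pairing and unitarity of $\rho$. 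I expect this reconciliation, together with fixing the torsion convention so that it appears to the first power rather than inverted, to be the main obstacle. Everything else is bookkeeping of already established identities.
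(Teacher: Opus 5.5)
Your first two steps reproduce the paper's argument: Theorem~\ref{thm:main_pushforward_axial}, the choice of $h_{aux}$ making $\Phi$ an isometry, then Theorem~\ref{thm:equality-of-torsion}. The hedge ``possibly up to inversion'' is unnecessary. Appendix~\ref{app:torsion_isomorphism} \emph{defines} the numerical torsion as $\prod_k(\det'\Delta_k)^{(-1)^{k+1}k/2}$, which is literally the (positive) factor in Equation~\eqref{eq:final-partition-function-axial}. You are right that the Ruelle exponents disagree: the theorem gives $|\mathfrak{R}_{V,\rho}(0)|^{(-1)^n}$, while Definition~\ref{def:milnor_metric} contains $|\mathfrak{R}_{V,\rho}(0)|^{-1}$. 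The paper's proof does not address this; it only says the result is ``the product of $|\tau(C^\bullet_{\mathrm{TS}})|$ and $\mathfrak{R}_{V,\rho}(0)$''. So for odd $n$ your argument and the paper's coincide, and for even $n$ everything rests on the step you flag.

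That step fails: $|\mathfrak{R}_{V,\rho}(0)|=1$ is false in even dimension, even for unitary $\rho$ and non-aligned $V$. Take $T^2=(\mathbb{R}/2\mathbb{Z})_t\times(\mathbb{R}/2\pi\mathbb{Z})_\theta$ with the flat metric, $f=-\sin^2(\pi t)(2+\cos\theta)$, and $V=-\nabla f+b(t)\,\partial_\theta$. Here $0\le b\le 1$ equals $1$ near $t\in\{0,1\}$ and is supported in $\{|t|<1/4\}\cup\{|t-1|<1/4\}$. Then:
\begin{itemize}
\item $V(f)<0$ off the critical set of $f$, so there are no other closed orbits.
\item The circles $t=0$ and $t=1$ are hyperbolic repelling closed orbits, both in the class $\gamma$ of the $\theta$-circle, with $\operatorname{ind}=1$ and $\Delta=+1$.
\item Each complementary annulus contains exactly one saddle and one sink, and there are no saddle connections.
\end{itemize}
So $V$ is Morse--Smale; linearisability near the critical elements can be arranged without changing these data. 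For a rank-one unitary $\rho$ with $\rho(\gamma)=e^{i\alpha}\neq 1$, $V$ is non-aligned, and Definition~\ref{def:Ruelle-zeta} gives $\mathfrak{R}_{V,\rho}(0)=(1-e^{i\alpha})^{-2}$. Its modulus is not $1$ for generic $\alpha$, and there is no index pairing to exploit.

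The Thom--Smale complex here is $C^1\to C^2$ (saddles to sinks), with entries of modulus $|1-e^{i\alpha}|$, so $\tau(C^\bullet_{\mathrm{TS}})=|1-e^{i\alpha}|^{-2}$. Hence Definition~\ref{def:milnor_metric} gives $1$, as Ray--Singer predicts for an acyclic unitary twist on a surface. By contrast, $|\mathfrak{R}_{V,\rho}(0)|^{(-1)^n}\tau(C^\bullet_{\mathrm{TS}})=|1-e^{i\alpha}|^{-4}$.

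The duality route does not help either. For unitary $\rho$ on oriented $M$ one gets $|\mathfrak{R}_{-V,\rho}(0)|=|\mathfrak{R}_{V,\rho}(0)|^{(-1)^{n-1}}$, which relates two different flows and says nothing about $|\mathfrak{R}_{V,\rho}(0)|$ itself.

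So for even $n$ no lemma about $\mathfrak{R}_{V,\rho}(0)$ can bridge Theorem~\ref{thm:main_pushforward_axial} and Definition~\ref{def:milnor_metric} as stated. The exponent itself must be re-derived upstream. For instance, the proof of Proposition~\ref{prop:product-of-zetas} drops the minus sign in $\log\det(\mathbb{I}-A)=-\sum_{j\ge1}\operatorname{tr}(A^j)/j$, so the product $\prod_k\mathfrak{R}_k^{(-1)^k}$ written there actually equals $\mathfrak{R}_{V,\rho}^{-1}$.
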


\begin{proof}
By Theorem \ref{thm:main_pushforward_axial}, the axial partition function $\mathfrak{Z}^V_{BF}$ factors into the dynamical contribution of the Ruelle zeta function and the alternating product of the regularised determinants of $\widetilde{\Delta}_k$. As detailed in Remark \ref{rem:thom_smale_gauge_choice}, by selecting the auxiliary metric $h_{aux}$ that renders the isomorphism $\Phi$ an isometry, this spectral factor coincides exactly with the combinatorial torsion of the Thom--Smale complex, $\tau(C^\bullet_{\mathrm{TS}})$. Therefore, the partition function reduces to the product of $|\tau(C^\bullet_{\mathrm{TS}})|$ and $\mathfrak{R}_{V, \rho}(0)$. By Definition \ref{def:milnor_metric}, this combination precisely yields the Milnor metric $\| \mu_H \|_{M, V}$.
\end{proof}

\begin{rem}
\label{rem:combined_lagrangian}
The two-stage construction of $\mathfrak{Z}^V_{BF}$ as per Equation \eqref{eq:zeta_bf_rename} admits an equivalent description as a single BV integration over the lagrangian subspace $\mathbb{L}_V \oplus \mathbb{L}^{(2)}$. Indeed, $\mathbb{L}_V \subset \widetilde{\mathcal{F}}_{BF}''$ and $\mathbb{L}^{(2)} \subset \widetilde{\mathcal{F}}_{BF}^{(2)}$ are Lagrangian subspaces with respect to the symplectic forms induced on $\widetilde{\mathcal{F}}_{BF}''$ and $\widetilde{\mathcal{F}}_{BF}^{(2)}$ respectively, by Proposition \ref{prop: axial_gauge_lagrangian} and Proposition \ref{prop:residual_splitting_lagrangian}. Since $\widetilde{\mathcal{F}}_{BF}''$ and $\widetilde{\mathcal{F}}_{BF}^{(2)}$ are themselves symplectic subspaces within $\widetilde{\mathcal{F}}_{BF}$, via the decompositions provided in Equations \eqref{eq:bv_splitting_Sob} and \eqref{eq:F_BF_1_2}, it follows that $\mathbb{L}_V \oplus \mathbb{L}^{(2)}$ is itself a Lagrangian subspace of $\widetilde{\mathcal{F}}_{BF}$.
\end{rem}

\begin{rem}[Fried's Conjecture]
    To summarise, the evaluation of the partition function for Abelian $BF$ theory has been performed under two distinct gauge-fixing conditions. As proven in Section \ref{sec: metric-gauge}, the metric gauge-fixing leads to the Ray--Singer metric. By contrast, the axial gauge-fixing investigated here recovers the Milnor metric. Given that the equality between the Ray--Singer and Milnor metrics is established by Fried's conjecture, which in the specific context of Morse--Smale flows constitutes a proven theorem, our findings suggest that this correspondence can be reinterpreted as the requirement for the partition function of the theory to be independent of the choice of gauge-fixing condition. 
\end{rem}

\subsection{The Determinant of the Interior Product}
\label{subsec:determinant_iota}

In the evaluation of the axial partition function, we have set the Jacobian determinant $\det'(\iota_V)$ to $1$. We now provide a complete justification for this choice. Let $g$ be a Riemannian metric on the manifold $M$, and denote by $\|V\|_g(p) \coloneqq \sqrt{g_p(V,V)}$ the point-wise norm of the Morse--Smale vector field $V$. By Definition \ref{def: MS-flow}, a Morse--Smale vector field $V$ vanishes at a finite number of hyperbolic fixed points.

We recall the quadratic form of the action restricted to the axial Lagrangian $\mathbb{L}_V$ as derived in Proposition \ref{prop: action-restricted-axial}:
\begin{equation}
\label{eq: axial-action-app}
    \widetilde{\mathbb{S}}_{BF}'\big|_{\mathbb{L}_V} = \sum_{k=0}^{n-1} \int_M \left(\mathcal{B}^{(n-k)}\wedge_h \widetilde{\mathcal{L}}_{V, \nabla}^{(k)} \eta^{(k)} \right),
\end{equation}
where $\widetilde{\mathcal{L}}_{V, \nabla}^{(k)}$ is the Lie derivative restricted to $\text{Im}\big(\iota_V\big|_{\text{Ker}\, \pi_0^{(k+1)}}\big)$, and the forms $\eta^{(k)}$ denote the $\iota_V$-primitives of the field $\mathcal{A}$, \textit{i.e.}, $\mathcal{A}^{(k)}=\iota_V\eta^{(k)}$.

We introduce the normalised vector field $\widehat{V} \coloneqq V / \|V\|_g$ and, correspondingly, we define the normalised restricted Lie derivative:
\begin{equation}
\label{eq: rescaling-definitions}
    \mathbf{L}_{V, \nabla}^{(k)} \coloneqq \frac{1}{\|V\|_g} \widetilde{\mathcal{L}}_{V, \nabla}^{(k)}.
\end{equation}
It is crucial to observe that $\mathbf{L}_{V, \nabla}^{(k)}$ remains well-defined even at the fixed points of the flow: the zeros of the vector field $V$ and its norm $\|V\|_g$ are of the same order, resulting in a removable singularity. Moreover, on the restricted subspace, $\mathbf{L}_{V, \nabla}^{(k)}$ precisely coincides with the Lie derivative along the normalised field $\widehat{V}$, \textit{i.e.}, $\mathbf{L}_{V, \nabla}^{(k)} \equiv \mathcal{L}_{\widehat{V}, \nabla}^{(k)}$.

By substituting this normalised operator and redefining the integration variables as $\widetilde{\eta}^{(k)} \coloneqq \|V\|_g \eta^{(k)}$, the action functional can be identically rewritten as:
\begin{equation}
\label{eq: axial-action-rescaled}
    \widetilde{\mathbb{S}}_{BF}'\big|_{\mathbb{L}_V} = \sum_{k=0}^{n-1} \int_M \left(\mathcal{B}^{(n-k)}\wedge_h \mathcal{L}_{\widehat{V}, \nabla}^{(k)} \widetilde{\eta}^{(k)} \right).
\end{equation}

This algebraic manipulation highlights that evaluating the theory on the axial gauge $\mathbb{L}_V$ generated by $V$ is completely equivalent to evaluating it on the gauge generated by the normalised field $\widehat{V}$. Since the formal expression of the partition function must remain invariant under this reparametrisation, it can be equivalently evaluated by replacing the interior product $\iota_V$ with $\iota_{\widehat{V}}$ and the restricted Lie derivative $\widetilde{\mathcal{L}}_{V, \nabla}^{(k)}$ with $\mathcal{L}_{\widehat{V}, \nabla}^{(k)}$.

We begin by examining how the rescaling of a vector field affects its integral curves, their associated travel times, and ultimately the regularised determinants.

\begin{lemd}
\label{lem:reparametrization-general}
Let $X$ be a smooth vector field on a manifold $M$ and let $f \in C^\infty(M, \mathbb{R}^+)$ be a strictly positive smooth function. Let $Y = fX$ be the rescaled vector field. Then $X$ and $Y$ possess the same integral curves, differing only by a time-reparametrisation.
\end{lemd}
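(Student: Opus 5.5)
We prove the lemma by building the reparametrisation explicitly: we solve a scalar ODE for a new time variable and then apply uniqueness of solutions for the flow of $Y$.

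\emph{Construction.} Fix $p \in M$ and let $\gamma \colon \mathbb{R} \to M$ be the integral curve of $X$ through $p$, so that $\gamma(0)=p$ and $\dot\gamma(t) = X(\gamma(t))$. Completeness holds because $M$ is compact. Define the new time by
\begin{equation*}
    \sigma(t) \coloneqq \int_0^t \frac{du}{f(\gamma(u))}.
\end{equation*}
Since $f$ is smooth and strictly positive on the compact manifold $M$, it satisfies $0 < c \le f \le C$. Hence $\sigma$ is a smooth, strictly increasing function with $\sigma' = 1/(f\circ\gamma) \ge 1/C$, and it is a diffeomorphism of $\mathbb{R}$ onto $\mathbb{R}$.

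\emph{Verification.} Set $\tilde\gamma \coloneqq \gamma \circ \sigma^{-1}$. By the chain rule,
\begin{equation*}
    \dot{\tilde\gamma}(s) = \dot\gamma(\sigma^{-1}(s))\,(\sigma^{-1})'(s) = f(\tilde\gamma(s))\,X(\tilde\gamma(s)) = Y(\tilde\gamma(s)),
\end{equation*}
and $\tilde\gamma(0) = p$. By uniqueness of solutions of ODEs (Picard--Lindel\"of), $\tilde\gamma$ is the integral curve of $Y$ through $p$, that is, $\phi^s_Y(p) = \phi^{\sigma_p^{-1}(s)}_X(p)$. Thus the two curves have the same image and differ only by the orientation-preserving time change $\sigma_p$. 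The case $X(p)=0$ is consistent with this: both curves are then constant, and $\sigma$ is simply $t/f(p)$.

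\emph{Consequences for the later argument.} Two by-products of this construction will be needed afterwards. First, fixed points and closed orbits coincide for $X$ and $Y$. Second, a closed orbit $\Lambda$ of $X$ with period $T_\Lambda$ has $Y$-period
\begin{equation*}
    \widehat T_\Lambda = \oint_\Lambda \frac{dt}{f},
\end{equation*}
which is again finite and positive.

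\emph{Main obstacle.} The lemma itself involves no real difficulty. The only point requiring care is that the reparametrisation is global, with $\sigma$ onto $\mathbb{R}$. This is guaranteed by the bounds $c \le f \le C$, which is why we assume $f$ smooth and strictly positive on compact $M$. In the application with $f = 1/\|V\|_g$, however, $f$ blows up at the critical points. There we would restrict the argument to $M \setminus \operatorname{Crit}(V)$, where $f$ is smooth and positive and closed orbits stay away from the fixed points, so the argument applies locally along each orbit.
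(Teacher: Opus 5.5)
Your proof is correct and follows essentially the paper's approach: the paper solves $h'(\tau)=f(\gamma(h(\tau)))$ for the new time and composes $\gamma\circ h$, while you write the inverse time change $\sigma(t)=\int_0^t du/f(\gamma(u))$ explicitly and finish with ODE uniqueness. Your use of the compactness bounds $c\le f\le C$ to make $\sigma$ onto $\mathbb{R}$, and your caveat that $f=1/\|V\|_g$ is singular at the critical points in the paper's application, make precise two points that the paper's phrase ``globally solvable'' passes over.
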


\begin{proof}
Let $\gamma(t)$ be an integral curve of $X$ such that $\gamma(0) = p$, satisfying $\frac{d\gamma}{dt} = X_{\gamma(t)}$. We seek an integral curve $\eta(\tau)$ for $Y$ starting at $p$ which can be expressed as $\eta(\tau) = \gamma(h(\tau))$ for some monotonic and invertible function $h$. By the chain rule:
\begin{equation*}
    \frac{d \eta(\tau)}{d\tau} = \frac{d\gamma}{dt}(h(\tau)) \cdot h'(\tau) = X_{\gamma(h(\tau))} \cdot h'(\tau) = X_{\eta(\tau)} \cdot h'(\tau).
\end{equation*}
For $\eta(\tau)$ to be an integral curve of $Y$, we must also satisfy the defining equation:
\begin{equation*}
    \frac{d \eta(\tau)}{d\tau} = Y_{\eta(\tau)} = f(\eta(\tau)) \cdot X_{\eta(\tau)}.
\end{equation*}
Comparing the two expressions, we obtain the differential equation governing the time reparametrisation:
\begin{equation*}
    h'(\tau) = f(\gamma(h(\tau))) \implies \frac{dt}{d\tau} = f(\gamma(t)).
\end{equation*}
Since $f$ is strictly positive, the reparametrisation is globally solvable, ensuring that $X$ and $Y$ share the exact same integral curves.
\end{proof}

We apply Lemma \ref{lem:reparametrization-general} to the Morse--Smale vector field $V$ and its normalised counterpart $\widehat{V} = V / \|V\|_g$. Under this reparametrisation, any closed orbit $\Lambda$ of $V$ is preserved as a periodic trajectory for $\widehat{V}$. While its minimal period is modified to a new value denoted by $\widehat{T}_\Lambda$, the holonomy representation $\rho(\Lambda)$ and the linearised Poincaré map $\mathcal{P}_\Lambda$ are geometric invariants of the orbit and thus remain strictly invariant. 

\begin{lemd}
\label{lem: equality-lie-dets}
Let $\widetilde{\mathcal{L}}_{V, \nabla}^{(k)}$ be the restricted Lie derivative along the Morse--Smale vector field $V$, and $\mathbf{L}_{V, \nabla}^{(k)}$ be the normalised restricted Lie derivative. Their flat-regularised determinants coincide:
\begin{equation}
    \det\nolimits^\flat \big(\widetilde{\mathcal{L}}_{V, \nabla}^{(k)}\big) = \det\nolimits^\flat \big(\mathbf{L}_{V, \nabla}^{(k)}\big).
\end{equation}
\end{lemd}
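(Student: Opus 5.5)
The plan is to compare the two flat-regularised determinants through the trace formula of Proposition \ref{prop:k-trace-formula} and the integral representation of Equation \eqref{eq: log-det-integral}, rather than through any spectral information. By Remark \ref{rem: r_k-(s=0)} and the proof of Theorem \ref{thm: Ruelle-as-prod-det-app}, at $z=0$ we have
\begin{equation*}
\log {\det}^\flat\big(\widetilde{\mathcal{L}}^{(k)}_{V,\nabla}\big) = (-1)^{n+1}\sum_{\Lambda}\sum_{j\ge 1}\frac{1}{j}\,\frac{\operatorname{tr}(\wedge^k\mathcal{P}_\Lambda^j)\,\operatorname{tr}(\rho(\Lambda)^j)}{|\det(\mathbb{I}-\mathcal{P}_\Lambda^j)|}.
\end{equation*}
The periods $T_\Lambda$ have dropped out: the weight $T_\Lambda$ in the Guillemin formula cancels against $t^{-1}$ evaluated at $t=jT_\Lambda$. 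I will apply the same computation to $\mathbf{L}^{(k)}_{V,\nabla}\equiv\mathcal{L}^{(k)}_{\widehat V,\nabla}$. I then show that every ingredient of the resulting sum is unchanged under $V\mapsto\widehat V$, except the periods $T_\Lambda\mapsto\widehat T_\Lambda$, which no longer appear at $z=0$.

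The steps are as follows. First, I check that $\widehat V$ falls within the scope of Proposition \ref{prop:k-trace-formula} away from the critical points. By Lemma \ref{lem:reparametrization-general}, applied with $f=1/\|V\|_g$ on $M\setminus\operatorname{Crit}(V)$ where $f$ is smooth and positive, the closed orbits of $\widehat V$ are exactly those of $V$. These orbits are compact and avoid the critical points, so $f$ is bounded above and below near each of them. The new periods are $\widehat T_\Lambda=\int_0^{T_\Lambda}\|V\|_g(\phi^t_V(p))\,dt$. Second, I verify invariance of the orbit data. The holonomy $\rho(\Lambda)$ depends only on the free homotopy class of the loop. The linearised Poincaré map on a fixed transverse section $\Sigma$ is the differential of the first-return map, and that map is unchanged by a reparametrisation because it depends only on the trajectories. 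Hence $\mathcal{P}_\Lambda$, $\operatorname{tr}(\wedge^k\mathcal{P}_\Lambda^j)$, $|\det(\mathbb{I}-\mathcal{P}_\Lambda^j)|$, $\operatorname{ind}(\Lambda)$ and $\Delta(\Lambda)$ all coincide. Third, I insert these data into the Guillemin-type formula for $\operatorname{tr}^\flat(e^{-t\mathbf{L}^{(k)}})$: Dirac masses at $t=j\widehat T_\Lambda$ with weight $\widehat T_\Lambda$. Pairing with $-t^{-1}$ gives termwise the same series as for $V$, and exponentiating yields the claim.

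The hard step is justifying the trace formula for $\mathbf{L}^{(k)}$. The field $\widehat V$ is not smooth at the hyperbolic fixed points: it is bounded and discontinuous there, so the anisotropic framework of Appendix \ref{app:Sobolev} does not apply verbatim. The removable-singularity claim in the text addresses this only for the operator restricted to $\operatorname{Im}(\iota_V|_{\operatorname{Ker}\pi_0})$. My first approach is to avoid the issue entirely. The flat trace $\operatorname{tr}^\flat(e^{-t\mathbf{L}})$ for $t>0$ is computed from the Schwartz kernel near the diagonal $\{\phi^t(x)=x\}$. Restricting to the sector transverse to the fixed points removes their contribution, as in the proof of Proposition \ref{prop:k-trace-formula}, where those contributions cancel against $\dim C^k_{V,\nabla}(0)$. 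So only neighbourhoods of the closed orbits matter, and there $\widehat V=fV$ with $f$ smooth and positive. The Atiyah--Bott--Guillemin localisation is local near each orbit, so the computation can be carried out there with a smooth field.

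If that localisation proves too delicate, the fallback is to replace $\|V\|_g$ by a smooth positive function that agrees with it on a neighbourhood of the closed orbits. This produces a smooth Morse--Smale field with the same closed orbits and Poincaré data. The argument above then goes through rigorously, and it shows that $\det^\flat$ at $z=0$ depends only on the orbit data.
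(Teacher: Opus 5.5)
Your proposal is correct and follows essentially the paper's own argument: both flat traces are written via the Guillemin-type formula of Proposition~\ref{prop:k-trace-formula}, and the weight $T_\Lambda$ (resp.\ $\widehat T_\Lambda$) cancels against $t^{-1}$ at $t=jT_\Lambda$ (resp.\ $t=j\widehat T_\Lambda$), so that at $z=0$ only the reparametrisation-invariant data $\rho(\Lambda)$ and $\mathcal{P}_\Lambda$ survive; your treatment of the non-smoothness of $\widehat V$ at the fixed points goes beyond the paper, which simply asserts the analogous trace formula for $\mathbf{L}^{(k)}_{V,\nabla}$. The only slip is the overall sign of your displayed series: by Equation~\eqref{eq: link-trace-Rk} it should be $(-1)^{n}$ rather than $(-1)^{n+1}$, which does not affect the argument because the same factor appears on both sides.
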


\begin{proof}
To isolate the invariant data of the orbits from their period dependence, we introduce the following weight factor:
\begin{equation}
\label{eq: weight-definition}
    W_{\Lambda, j}^{(k)} \coloneqq (-1)^{n+1} \frac{\operatorname{tr}(\wedge^k\mathcal{P}_\Lambda^j) \operatorname{tr}(\rho(\Lambda)^j)}{|\det(\mathbb{I} - \mathcal{P}_\Lambda^j)|}.
\end{equation}
By means of Proposition \ref{prop:k-trace-formula} and the weight just defined, the flat trace of the original restricted Lie derivative can be rewritten as:
\begin{equation}
\label{eq: trace-V-compact}
    \operatorname{tr}^{\flat}\left(e^{-t\widetilde{\mathcal{L}}_{V, \nabla}^{(k)}}\right) = \sum_{\Lambda, j \ge 1} T_\Lambda W_{\Lambda, j}^{(k)} \delta(t - jT_\Lambda).
\end{equation}
Analogously, the flat trace for the normalised operator $\mathbf{L}_{V, \nabla}^{(k)}$ admits a structurally identical expression, where the original periods are replaced by their unit-speed counterparts:
\begin{equation}
\label{eq: trace-Vhat-compact}
    \operatorname{tr}^{\flat}\left(e^{-t\mathbf{L}_{V, \nabla}^{(k)}}\right) = \sum_{\Lambda, j \ge 1} \widehat{T}_\Lambda W_{\Lambda, j}^{(k)} \delta(t - j\widehat{T}_\Lambda).
\end{equation}

We compute the regularised determinant of $\mathbf{L}_{V, \nabla}^{(k)}$ following the paradigm introduced in Subsection \ref{sec:flat_traces}. Applying the Laplace transform to the trace formula yields the auxiliary function:
\begin{align*}
\mathcal{F}_{\mathbf{L}}(z, s) &\coloneqq \frac{1}{\Gamma(s)}\int_0^{\infty}e^{-tz}t^{s-1}\operatorname{tr}^{\flat}\left(e^{-t \mathbf{L}_{V,\nabla}^{(k)}}\right)dt \\
&= \frac{1}{\Gamma(s)} \int_0^{\infty} e^{-tz} t^{s-1} \left[ \sum_{\Lambda, j \ge 1} \widehat{T}_\Lambda W_{\Lambda, j}^{(k)} \delta(t - j\widehat{T}_\Lambda) \right] dt.
\end{align*}
Applying the Dirac distribution, we obtain:
\begin{align*}
\mathcal{F}_{\mathbf{L}}(z, s) &= \frac{1}{\Gamma(s)} \sum_{\Lambda, j \ge 1} W_{\Lambda, j}^{(k)} \widehat{T}_\Lambda (j \widehat{T}_\Lambda)^{s-1} e^{-z j \widehat{T}_\Lambda} \\
&= \frac{1}{\Gamma(s)} \sum_{\Lambda, j \ge 1} W_{\Lambda, j}^{(k)} (\widehat{T}_\Lambda)^s j^{s-1} e^{-z j \widehat{T}_\Lambda}.
\end{align*}

By definition, the logarithm of the determinant is obtained by taking the negative derivative with respect to $s$ evaluated at $s=0$. Using the Leibniz rule and the expansion $1/\Gamma(s) = s + \mathcal{O}(s^2)$ around the origin, we find:
\begin{align*}
\log\left(\det\nolimits^{\flat}(\mathbf{L}_{V, \nabla}^{(k)} + z)\right) &\coloneqq - \partial_s\vert_{s=0}\mathcal{F}_{\mathbf{L}}(z, s) \\
&= - \left[ \left( \partial_s \frac{1}{\Gamma(s)} \right)\Big|_{s=0} \cdot \sum_{\Lambda, j \ge 1} W_{\Lambda, j}^{(k)} (\widehat{T}_\Lambda)^0 j^{-1} e^{-z j \widehat{T}_\Lambda} \right] \\
&= - \sum_{\Lambda, j \ge 1} \frac{W_{\Lambda, j}^{(k)}}{j} e^{-z j \widehat{T}_\Lambda}.
\end{align*}

Following an identical derivation for the original operator $\widetilde{\mathcal{L}}_{V, \nabla}^{(k)}$, we obtain:
\begin{equation*}
\log\left(\det\nolimits^{\flat}(\widetilde{\mathcal{L}}_{V, \nabla}^{(k)} + z)\right) = - \sum_{\Lambda, j \ge 1} \frac{W_{\Lambda, j}^{(k)}}{j} e^{-z j T_\Lambda}.
\end{equation*}

Evaluating both expressions at $z=0$, the exponential damping factors evaluate to $1$. The dependence on the respective minimal periods vanishes completely, yielding:
\begin{equation}
\label{eq: equality-lie-dets}
\log\left(\det\nolimits^{\flat}(\widetilde{\mathcal{L}}_{V, \nabla}^{(k)})\right) = - \sum_{\Lambda, j \ge 1} \frac{W_{\Lambda, j}^{(k)}}{j} = \log\left(\det\nolimits^{\flat}(\mathbf{L}_{V, \nabla}^{(k)})\right).
\end{equation}
Exponentiating this identity concludes the proof.
\end{proof}

Given the invariance of the regularised Lie derivative determinants established in Lemma \ref{lem: equality-lie-dets}, the comparison between the two expressions of $\mathcal{Z}'$ obtained from the use of $V$ (as per Definition \ref{def:partition_function_axial_gauge}) and its normalised counterpart $\widehat{V}$, requires that:
\begin{equation}
\label{eq: equality-iota-dets}
\left| \det\nolimits'(\iota_V)\right| = \left| \det\nolimits'(\iota_{\widehat{V}})\right|.
\end{equation}
This crucial identity reduces the problem of evaluating $\det'(\iota_V)$ to the simpler task of characterising the determinant of the interior product operator along the normalised field $\widehat{V}$. We can define this determinant in a way that is formally analogous to the regularised determinant of the de Rham differential.

\begin{definition}
\label{def: determinant-iota-def}
Let $\flat: \mathfrak{X}(M) \to \Omega^1(M)$ be the musical isomorphism induced by the metric $g$, and let $\widehat{V}^\flat \coloneqq V^\flat / \|V\|_g$ be the normalised $1$-form associated with $V$. The regularised determinant of the interior product operator $\iota_{\widehat{V}}$ is defined via its composition with its formal adjoint, the exterior multiplication operator $\widehat{V}^\flat \wedge: \mathcal{H}_m^k(M, \mathcal{E}) \to \mathcal{H}_m^{k+1}(M, \mathcal{E})$:
\begin{equation}
    \det\nolimits'(\iota_{\widehat{V}}) \coloneqq \left| \det\nolimits'(\iota_{\widehat{V}} \circ (\widehat{V}^\flat \wedge)) \right|^{1/2}.
\end{equation}
\end{definition}

\begin{prop}
\label{prop:det_iota_is_one}
The determinant of the interior product operator along the normalised vector field $\widehat{V}$ is:
\begin{equation}
    \det\nolimits'(\iota_{\widehat{V}}) = 1.
\end{equation}
\end{prop}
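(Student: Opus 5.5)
The plan is to compute the composite operator $\iota_{\widehat{V}} \circ (\widehat{V}^\flat \wedge)$ pointwise and show that, on the relevant subspace, it acts as the identity. The regularised determinant of Definition \ref{def: determinant-iota-def} is then trivially $1$.

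The starting point is the Cartan-type anticommutation identity for interior and exterior multiplication. For any vector field $X$ and $1$-form $\alpha$, acting on twisted forms or currents (the bundle factor $\mathcal{E}$ is inert),
\begin{equation*}
\iota_X(\alpha\wedge\omega) + \alpha\wedge\iota_X\omega = \alpha(X)\,\omega .
\end{equation*}
We take $X=\widehat V$ and $\alpha=\widehat V^\flat$. Away from the critical points, $\widehat V^\flat(\widehat V) = g(V,V)/\|V\|_g^2 = 1$, so
\begin{equation*}
\iota_{\widehat V}\circ(\widehat V^\flat\wedge) = \mathbb{I} - \widehat V^\flat\wedge\iota_{\widehat V}.
\end{equation*}

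Next we restrict to the subspace on which the primed determinant is taken. The Jacobian $\det'(\iota_V)$ arises from the change of variables $\mathcal{A}^{(k)}=\iota_V\eta^{(k)}$ on $\mathbb{L}_V$, so it is computed on the complement of $\operatorname{Ker}\iota_{\widehat V}$. We identify this complement with $\operatorname{Im}(\widehat V^\flat\wedge)$, the $g$-orthogonal complement of $\operatorname{Ker}(\iota_{\widehat V})$. Pointwise, $\Lambda^k T^*_pM = \widehat V^\flat\wedge\Lambda^{k-1}\oplus \ker\iota_{\widehat V}$, and on the first summand $\iota_{\widehat V}$ restricts to an isomorphism onto $\ker \iota_{\widehat V}$ in degree $k-1$. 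On $\operatorname{Im}(\widehat V^\flat\wedge)$ the composite acts as follows: for $\omega=\widehat V^\flat\wedge\theta$ we get
\begin{equation*}
\iota_{\widehat V}(\widehat V^\flat\wedge\omega)=0,
\end{equation*}
while on the other ordering $(\widehat V^\flat\wedge)\circ\iota_{\widehat V}$ is the orthogonal projector onto $\operatorname{Im}(\widehat V^\flat\wedge)$. To avoid ambiguity I would therefore write the adjoint pairing as $\iota_{\widehat V}^\dagger\iota_{\widehat V}$ restricted to $(\ker\iota_{\widehat V})^\perp$. The identity above shows this equals the identity operator there, and likewise $\iota_{\widehat V}\iota_{\widehat V}^\dagger=\mathbb{I}$ on $\ker\iota_{\widehat V}$. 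Both operators are orthogonal projections whose nonzero spectrum is identically $1$. The flat-regularised determinant of Definition \ref{def:flat_determinant} is then evaluated with $\operatorname{tr}^\flat(e^{-t\cdot\mathbb{I}}-\Pi)$ restricted to that range. Its logarithm is $-\partial_s|_{s=0}[\Gamma(s)^{-1}\Gamma(s)\,c]=0$, where $c$ is the formal (flat) rank, since $\int_0^\infty t^{s-1}e^{-t}dt=\Gamma(s)$ cancels the prefactor independently of $c$. Hence $\det'=1$ and $\det'(\iota_{\widehat V})=|1|^{1/2}=1$.

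The main obstacle is the behaviour at the finitely many zeros of $V$, where $\widehat V$ is not defined. I would argue that $\{V=0\}$ is a finite set, hence of measure zero, and does not affect the operator identity on $\mathcal{H}^k_m$ viewed as (topologically) $L^2$ via Proposition \ref{prop:anisotropic_spaces}. I would also invoke the paper's earlier removable-singularity remark for the normalised Lie derivative. A secondary subtlety is that the flat trace of the identity is ill-defined as a number. This is handled exactly as above, since the Mellin regularisation kills the $s$-independent rank, mirroring how Lemma \ref{lem: equality-lie-dets} eliminates period dependence.
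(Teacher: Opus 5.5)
Your proposal is correct and follows essentially the paper's own argument. The anticommutation identity $\iota_{\widehat V}(\widehat V^\flat\wedge\omega)+\widehat V^\flat\wedge\iota_{\widehat V}\omega=g(\widehat V,\widehat V)\,\omega$ with $g(\widehat V,\widehat V)=1$, together with $\iota_{\widehat V}\omega=0$ on $\operatorname{Im}(\iota_{\widehat V})$, makes $\iota_{\widehat V}\circ(\widehat V^\flat\wedge)=\iota_{\widehat V}\iota_{\widehat V}^\dagger$ the identity on the relevant subspace, so $\det'=1$. Your additions are refinements at the same formal level as the paper, which simply asserts $g(\widehat V,\widehat V)=1$ ``everywhere on $M$'': noting that on the domain side one must use the other ordering $\iota_{\widehat V}^\dagger\iota_{\widehat V}$, the explicit Mellin computation giving $\log\det^\flat(\mathbb{I})=0$, and the treatment of the zeros of $V$. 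The measure-zero remark is only heuristic on anisotropic currents, though, because the abstract Hilbert-space isomorphism with $L^2$ does not intertwine multiplication by the discontinuous field $\widehat V$.
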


\begin{proof}
We recall that the following fundamental identity holds:
\begin{equation*}
    \iota_{\widehat{V}}(\widehat{V}^\flat \wedge \omega) + \widehat{V}^\flat \wedge (\iota_{\widehat{V}} \omega) = g(\widehat{V}, \widehat{V}) \omega, \quad \forall \omega \in \mathcal{H}_m^\bullet(M, \mathcal{E}).
\end{equation*}
In the context of the axial gauge-fixing, the integration is performed over $\mathbb{L}_V$, whose elements satisfy the condition $\iota_{\widehat{V}} \omega = 0$. Therefore, the second term of the left hand side vanishes identically. Furthermore, since the normalised vector field $\widehat{V}$ satisfies $g(\widehat{V}, \widehat{V}) = 1$ everywhere on $M$, the identity reduces to:
\begin{equation}
    \iota_{\widehat{V}} \circ (\widehat{V}^\flat \wedge) \Big|_{\text{Im}(\iota_{\widehat{V}})} = \mathbb{I}.
\end{equation}
The operator composition defining the determinant therefore acts as the identity on the relevant subspace. It follows immediately that $\det'(\iota_{\widehat{V}}) = 1$.

\end{proof}

Due to the invariance established in Equation \eqref{eq: equality-iota-dets}, Proposition \ref{prop:det_iota_is_one} implies that the original Jacobian factor must also be unity. This provides the final and complete justification for setting $\det'(\iota_V) = 1$ in the expression presented in Theorem \ref{thm:main_pushforward_axial}.

\appendix 
\section{Morse--Smale Vector Fields}
\label{app:Morse--Smale}

Since the definition of Morse--Smale flows and their properties are central to our analysis, we collect here the fundamental definitions and structural results concerning this kind of vector fields. The purpose of this appendix is to provide a self-contained reference for the geometric and dynamical assumptions employed throughout the paper. For a more detailed and comprehensive treatment, we refer the reader to \cite{DangRiviere17Topology, Shen21Survey} and \cite[Appendix A]{DangRiviere17Anisotropic}.

The way to characterise this kind of dynamical system is through their recurrent behaviour, which is encapsulated by the {non-wandering set}. Heuristically, this set consists of points whose trajectories return to, or pass arbitrarily close to, after an arbitrarily long time.

\begin{definition}
\label{nw-appendix}
The non-wandering set of a vector field $V \in \mathfrak{X}(M)$, denoted by $\mathbf{NW}(V)$, is the set of points $p \in M$ such that for any neighbourhood $U$ of $p$ and any $T > 0$, there exists a time $t \ge T$ for which $U \cap \phi^t(U) \neq \emptyset$.
\end{definition}

For Morse--Smale systems, the recurrent set consists solely of fixed points and periodic orbits, each characterised by a robust form of stability known as hyperbolicity.

\begin{definition}
\label{hyperbolic-point-appendix}
A fixed point $p \in M$ (where $V_p = 0$) is {hyperbolic} if the eigenvalues of the linearised flow $d_p\phi^t : T_pM \to T_pM$ have non unitary module for all $t > 0$.
\end{definition}

For a hyperbolic fixed point $p$, the tangent space admits a $d\phi^t$-invariant splitting $T_pM = T^s_p \oplus T^u_p$ into stable and unstable subspaces. This decomposition is characterised by exponential estimates: there exist constants $C > 0$ and $\theta > 0$ such that for all $t \ge 0$:
\begin{equation}
\label{eq:exp-dilation-contraction}
    \left\|d_p\phi^{t}(v_s)\right\| \le C e^{-\theta t} \left\|v_s\right\|, \quad \left\|d_p\phi^{-t}(v_u)\right\| \le C e^{-\theta t} \left\|v_u\right\|, 
\end{equation}
for any $v_s \in T^s_p$ and $v_u \in T^u_p$. These linear subspaces are tangent to the \textit{stable} and \textit{unstable manifolds}, which consist of points converging to $p$ in the future and past, respectively:
\begin{equation}
\label{eq:unstable-point-app}
W^s(p) \coloneqq \left\{ q \in M : \lim_{t\to+\infty} \phi^t(q) = p \right\}, \quad W^u(p) \coloneqq \left\{ q \in M : \lim_{t\to-\infty} \phi^t(q) = p \right\}.
\end{equation}
A classical result in dynamical systems \cite{Perko2001} ensures that these sets are indeed embedded submanifolds with $T_pW^{s/u}(p) = T^{s/u}_p$. Similar considerations apply to periodic trajectories.

\begin{definition}
\label{hyperbolic-orbit-appendix}
A closed orbit $\Lambda$ with minimal period $T_\Lambda > 0$ is \textit{hyperbolic} if for any $p \in \Lambda$, the differential $d_p\phi^{T_\Lambda}$ has $1$ as a simple eigenvalue (corresponding to the direction of the flow $V_p$), while the remaining $n-1$ eigenvalues have modulus different from $1$.
\end{definition}

Hyperbolicity for an orbit $\Lambda$ implies a continuous, flow-invariant splitting of the tangent bundle restricted to $\Lambda$:
\begin{equation}
\label{eq:tangent-bundle-decomposition-orbits}
    TM|_\Lambda = \mathbb{R}V \oplus T^s_\Lambda \oplus T^u_\Lambda,
\end{equation}
where $T^s_\Lambda$ and $T^u_\Lambda$ denote the contracting and expanding directions, similarly to what happen in Equation \eqref{eq:exp-dilation-contraction}. This geometry leads to the definition of the stable and unstable manifolds for the orbit:
\begin{equation}
\label{eq:invariant-manifolds-orbit}
\begin{split}
    W^s(\Lambda) &\coloneqq  \left\{ q \in M : \lim_{t\to+\infty} d_M(\phi^t(q), \Lambda) = 0 \right\}, \\
    W^u(\Lambda) &\coloneqq  \left\{ q \in M : \lim_{t\to-\infty} d_M(\phi^t(q), \Lambda) = 0 \right\}.
\end{split}
\end{equation}
The {Stable Manifold Theorem} for orbits \cite{KatokHasselblatt95} guarantees that these are submanifolds with tangent bundles $\mathbb{R}V \oplus T^s_\Lambda$ and $\mathbb{R}V \oplus T^u_\Lambda$, respectively.

The global characterisation of the flow is determined by the way these invariant manifolds interact. We recall the condition of transversality, which ensures the robustness of the system.

\begin{definition}
\label{transversality-app}
Two submanifolds $L_1, L_2 \subset M$ intersect \textit{transversally} if at every $p \in L_1 \cap L_2$, their tangent spaces satisfy $T_p M = T_p L_1 + T_p L_2$. Non-intersecting manifolds are considered transversal by vacuity.
\end{definition}

We are now in a position to give the definition of Morse--Smale vector fields.

\begin{definition}
\label{def: MS-flow}
A flow generated by $V \in \mathfrak{X}(M)$ is a \textbf{Morse--Smale flow} if:
\begin{enumerate}
    \item The non-wandering set $\mathbf{NW}(V)$ consists of a finite number of disjoint hyperbolic critical elements (fixed points and closed orbits) $\{\Lambda_i\}_{i=1}^K$.
    \item For every pair of indices $(i, j)$, the unstable manifold $W^u(\Lambda_i)$ and the stable manifold $W^s(\Lambda_j)$ intersect transversally.
\end{enumerate}
\end{definition}

A fundamental property of these systems is the decomposition of the manifold into invariant sets.
\begin{prop}
\label{prop: partition-app}
Let $V$ be a Morse--Smale vector field. The manifold $M$ admits a disjoint partition in terms of stable (or unstable) manifolds:
\begin{equation}
    M = \bigsqcup_{k=1}^K W^u(\Lambda_k) = \bigsqcup_{k=1}^K W^s(\Lambda_k).
\end{equation}
Consequently, for each $p \in M$, there is a unique pair of elements $\Lambda_i, \Lambda_j$ such that the trajectory through $p$ connects $\Lambda_i$ to $\Lambda_j$.
\end{prop}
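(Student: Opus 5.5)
The goal is the partition $M = \bigsqcup_k W^u(\Lambda_k) = \bigsqcup_k W^s(\Lambda_k)$ and the uniqueness of the endpoints of each trajectory. The plan is to show that every orbit has its $\omega$-limit set contained in a single critical element, and likewise its $\alpha$-limit set. Disjointness is then immediate: the $\Lambda_i$ are disjoint compact invariant sets, so $\phi^t(q)$ cannot converge to two of them.

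\textbf{Step 1 (limit sets lie in the non-wandering set).} For $p\in M$, compactness of $M$ makes the $\omega$-limit set $\omega(p)$ non-empty, compact, connected and invariant. It is a standard fact that $\omega(p)\subset \mathbf{NW}(V)$: if $y\in\omega(p)$ and $U\ni y$, then $\phi^{t_1}(p),\phi^{t_2}(p)\in U$ with $t_2-t_1\ge T$, so $\phi^{t_2-t_1}(U)\cap U\neq\emptyset$. By Definition \ref{def: MS-flow}, $\mathbf{NW}(V)=\bigsqcup_i \Lambda_i$ is a finite disjoint union of compact sets. Since $\omega(p)$ is connected, it lies in a single $\Lambda_j$.

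\textbf{Step 2 (membership in the stable manifold).} I want to upgrade $\omega(p)\subset\Lambda_j$ to $d_M(\phi^t(p),\Lambda_j)\to 0$, which is exactly $p\in W^s(\Lambda_j)$ in the sense of Equation \eqref{eq:invariant-manifolds-orbit}. This follows by compactness: if some sequence $\phi^{t_n}(p)$ stayed at distance at least $\varepsilon$ from $\Lambda_j$, it would have an accumulation point in $\omega(p)\setminus\Lambda_j=\emptyset$, a contradiction. For a fixed point $\Lambda_j=\{a\}$ the distance condition is literally $\phi^t(p)\to a$, which matches Equation \eqref{eq:unstable-point-app}. Applying the same argument to $-V$ with $\alpha$-limits gives $p\in W^u(\Lambda_i)$ for a unique $i$. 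Together these yield both partitions and the final assertion about the trajectory connecting $\Lambda_i$ to $\Lambda_j$.

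\textbf{Main subtlety.} The one step that needs care is Step 1. The claim that an $\omega$-limit set is non-wandering is standard but must be stated. The connectedness of $\omega(p)$ is essential, since without it the limit set could be spread across several critical elements. Connectedness follows because $\omega(p)=\bigcap_T \overline{\{\phi^t(p):t\ge T\}}$ is a decreasing intersection of compact connected sets. Hyperbolicity and transversality are not needed for the partition itself; they only guarantee that the pieces are submanifolds.
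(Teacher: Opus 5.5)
Your proof is correct. There is no argument in the paper to compare it with: Proposition \ref{prop: partition-app} appears in Appendix \ref{app:Morse--Smale} as standard background, stated without proof and backed only by the Morse--Smale references cited at the start of that appendix.

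Your route is the classical one, and it holds up step by step. The $\omega$- and $\alpha$-limit sets of any point are non-empty, compact, connected and contained in $\mathbf{NW}(V)$, so each lies in a single critical element. Compactness then upgrades $\omega(p)\subset\Lambda_j$ to $d_M(\phi^t(p),\Lambda_j)\to 0$. Disjointness follows from the positive distance between disjoint compact sets.

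Your proof also makes explicit something the paper leaves implicit. The partition uses only condition (1) of Definition \ref{def: MS-flow}, namely that $\mathbf{NW}(V)$ is a finite disjoint union of compact invariant sets. Hyperbolicity enters only through the Stable Manifold Theorem quoted in the appendix, to know that the pieces are submanifolds, and transversality is not used at all.

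One small step should be stated explicitly. When you apply the argument to $-V$ for the $\alpha$-limits, you need $\mathbf{NW}(-V)=\mathbf{NW}(V)$. This is immediate from the definition, since $U\cap\phi^{t}(U)\neq\emptyset$ if and only if $\phi^{-t}(U)\cap U\neq\emptyset$. Alternatively, run your Step 1 directly on backward times: two visits $\phi^{-s_1}(p),\phi^{-s_2}(p)\in U$ with $s_2-s_1\ge T$ give $\phi^{s_2-s_1}(U)\cap U\neq\emptyset$.
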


In our study, it will be necessary to assume that the flow can be locally linearised via a change of coordinates.

\begin{definition}
\label{MS-cl_linearisable-app}
A Morse--Smale flow is \textbf{$C^l$-linearisable} if for each critical element $\Lambda_i$, there exists a $C^l$-diffeomorphism $h$ such that:
\begin{itemize}
    \item If $\Lambda_i$ is a fixed point, $V \circ h = dh \circ L$, where $L(x) = (A_i x) \cdot \partial_x$ for some $A_i \in \text{GL}(\mathbb{R}^n)$.
    \item If $\Lambda_i$ is a closed orbit, $V \circ h = dh \circ L_{per}$, where $L_{per}(x, \theta) = (\mathcal{A}_i(\theta)x) \cdot \partial_x + \partial_\theta$ and $\mathcal{A}_i(\theta)$ is a periodic matrix-valued map.
\end{itemize}
The flow is said to be \textbf{$C^\infty$-linearisable} if it is $C^l$-linearisable for every $l \ge 1$.
\end{definition}

Since it will be fundamental in the following, we conclude by highlighting the constraint on connections between fixed points.
\begin{prop}
\label{prop: dimension-app}
For any two fixed points $\Lambda_i, \Lambda_j$ of a Morse--Smale flow, if $W^u(\Lambda_i) \cap W^s(\Lambda_j) \neq \emptyset$, then the following inequality must hold:
\begin{equation*}
\text{dim}(W^{u}(\Lambda_{i})) > \text{dim}(W^{u}(\Lambda_{j})).  
\end{equation*}
Furthermore, for any $x \in W^u(\Lambda_i) \cap W^s(\Lambda_j)$, the transversality condition implies:
\begin{equation}
\label{eq: dim-identity}
\text{dim}(T_{x}W^{u}(\Lambda_{i}))+\text{dim}(T_{x}W^{s}(\Lambda_{j})) = \text{dim}(M) +\text{dim}(T_{x}W^{u}(\Lambda_{i})\cap T_{x}W^{s}(\Lambda_{j})).
\end{equation}
\end{prop}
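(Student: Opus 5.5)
The first claim follows from dimension counting for a non-empty transverse intersection, using the invariance of $W^u(\Lambda_i)\cap W^s(\Lambda_j)$ under the flow. The identity \eqref{eq: dim-identity} is the standard linear-algebra formula for the sum of two subspaces of $T_xM$, combined with transversality. I will establish \eqref{eq: dim-identity} first and then derive the strict inequality from it.

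For \eqref{eq: dim-identity}, fix $x\in W^u(\Lambda_i)\cap W^s(\Lambda_j)$ and set $U=T_xW^u(\Lambda_i)$ and $S=T_xW^s(\Lambda_j)$, both subspaces of $T_xM$. The Grassmann formula gives $\dim(U+S)=\dim U+\dim S-\dim(U\cap S)$. By Definition~\ref{def: MS-flow}(2) and Definition~\ref{transversality-app}, $U+S=T_xM$, so $\dim(U+S)=n$. Rearranging yields \eqref{eq: dim-identity} directly.

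For the inequality, I first argue that $U\cap S$ has dimension at least one. Since $\Lambda_i\neq\Lambda_j$ are distinct fixed points, a point $x$ in the intersection is not itself critical: if $V_x=0$, then $x$ would be its own limit in both time directions, forcing $x=\Lambda_i=\Lambda_j$. So $V_x\neq0$. Both invariant manifolds are $\phi^t$-invariant, so the orbit curve $t\mapsto\phi^t(x)$ lies in each of them, and therefore $V_x\in U\cap S$.

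Next I use the splitting at a fixed point, $T_pM=T^s_p\oplus T^u_p$, with $T_pW^{s/u}(p)=T^{s/u}_p$. Hyperbolicity means no eigenvalue of the linearisation has modulus one, so there is no centre direction and $\dim W^s(p)+\dim W^u(p)=n$. The manifolds $W^s(\Lambda_j)$ and $W^u(\Lambda_i)$ are injectively immersed of constant dimension, so $\dim S=\dim W^s(\Lambda_j)=n-\dim W^u(\Lambda_j)$. Substituting into \eqref{eq: dim-identity} gives
$$\dim W^u(\Lambda_i)-\dim W^u(\Lambda_j)=\dim(U\cap S)\ge 1,$$
which is the strict inequality.

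The main obstacle is justifying that the tangent spaces at $x$ have the global dimension, i.e.\ that $W^{u/s}$ are manifolds of dimension $\dim T^{u/s}_p$ at every point, not only near $p$. I would handle this via the stable manifold theorem cited in the appendix: the local manifold near $p$ has that dimension, and the global manifold is the union $\bigcup_t\phi^{-t}(W^s_{loc})$, whose pieces are images of diffeomorphisms. The case $\Lambda_i=\Lambda_j$ is excluded because the intersection would then consist only of the point itself, which would contradict $V_x\neq0$ unless $x=p$, in which case the statement is vacuous. If the statement is intended to include that case, I would flag it as requiring distinct fixed points.
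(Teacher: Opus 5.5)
Your proof is correct. The identity is exactly the Grassmann formula $\dim(U+S)=\dim U+\dim S-\dim(U\cap S)$ combined with $U+S=T_xM$ from transversality. The strict inequality then follows from $0\neq V_x\in U\cap S$ together with $\dim W^s(\Lambda_j)=n-\dim W^u(\Lambda_j)$ from hyperbolicity; this is the standard argument. The paper states this proposition in the appendix on Morse--Smale vector fields as background and gives no proof, so there is no alternative route to compare against.

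One correction to your closing remark: the case $\Lambda_i=\Lambda_j=p$ is not vacuous. The point $x=p$ lies in $W^u(p)\cap W^s(p)$, so the hypothesis holds. Your own computation then gives $\dim W^u(p)-\dim W^u(p)=\dim(T^u_p\cap T^s_p)=0$, so the strict inequality is false, not vacuously true. Your argument does show that the intersection is exactly $\{p\}$: any other point $x$ would have $V_x\neq 0$, forcing $\dim(U\cap S)\ge 1$, which contradicts $\dim U+\dim S=n$. So the hypothesis $\Lambda_i\neq\Lambda_j$ is genuinely required for the first claim, and that is how the proposition is used for the Thom--Smale complex, with critical points of adjacent index. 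The dimension identity, by contrast, holds in both cases.
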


\begin{rem}
The inequality in Proposition \ref{prop: dimension-app} defines a partial ordering of the critical elements. This cascading nature of trajectories are essential for the definition of the Thom--Smale complex and the associated boundary operators.
\end{rem}
\section{Anisotropic Sobolev Spaces}
\label{app:Sobolev}

We provide the reader with a  review of the functional framework required for the spectral analysis of Morse--Smale flows. The central challenge is that the twisted Lie derivative along the flow, when acting on smooth differential forms, does not possess a discrete spectrum. To resolve this, one must construct an enlarged space of currents where the Lie derivative acts as an operator with a discrete spectrum of eigenvalues, known as Pollicott--Ruelle resonances.

\subsection{Anisotropic Sobolev Spaces}

The modern approach to rigorously define these spaces relies on microlocal analysis. This framework was pioneered by Faure and Sjöstrand \cite{FaureSjostrand11} for Anosov flows, subsequently generalised by Dyatlov and Zworski \cite{Dyatlov16}, and specifically adapted to the dynamics of Morse--Smale flows by Dang and Rivière in \cite{DangRiviere17Anisotropic, DangRiviere17Topology}. 

Rather than reproducing the heavy pseudo-differential machinery required to construct these spaces from the ground up, we outline their fundamental properties and direct the interested reader to the aforementioned literature for the rigorous proofs.

The core idea of the construction relies on defining an {escape function} on the cotangent bundle $T^*M$, engineered to strictly decrease along the lifted Hamiltonian flow of the vector field $V$, except in specific conical regions. The existence of such a function for $C^\infty$-linearisable Morse--Smale flows is a highly non-trivial dynamical property, explicitly proven in \cite[Lemma 8.1]{DangRiviere17Anisotropic}. 

Specifically, an {order function} $m(x, \xi)$ is used to construct a variable-order symbol that captures the specific anisotropy required by the dynamics. Through standard pseudo-differential calculus on vector bundles, this symbol induces an essentially self-adjoint pseudo-differential operator. The anisotropic Sobolev spaces are then defined as the image of the standard Hilbert space $L^2(M, \Lambda^k T^*M \otimes \mathcal{E})$ under the inverse of this operator.

We summarise the fundamental properties of these spaces in the following proposition, which collects the main structural results established in \cite[Sections 4 and 5]{DangRiviere17Anisotropic} and \cite[Section 2.1]{DangRiviere17Topology}.

\begin{prop}
\label{prop:anisotropic_spaces}
Let $V \in \mathfrak{X}(M)$ be a $C^\infty$-linearisable Morse--Smale vector field, and let $m(x, \xi)$ be an associated order function. For any sufficiently large regularity parameter $s>0$, there exists a family of Hilbert spaces $\mathcal{H}_m^k(M, \mathcal{E})$ for $0 \le k \le n$, called anisotropic Sobolev spaces of currents, satisfying the following properties:
\begin{enumerate}
    \item There are continuous and dense embeddings from the space of smooth forms into the anisotropic spaces, and from the anisotropic spaces into the space of distributional currents $\mathcal{D}'^k(M, \mathcal{E}) \doteq (\Omega^{n-k}(M, \mathcal{E}^*))'$:
    \begin{equation}
    \label{eq:inclusion_sobolev}
        \Omega^k(M, \mathcal{E}) \hookrightarrow \mathcal{H}_m^k(M, \mathcal{E}) \hookrightarrow \mathcal{D}'^k(M, \mathcal{E}).
    \end{equation}
    \item The twisted exterior derivative $d^\nabla$, extended by standard duality to the space of currents, restricts to a bounded, nilpotent operator strictly between these spaces \cite[Equation (9)]{DangRiviere17Topology}:
    \begin{equation}
        d^\nabla \colon \mathcal{H}_m^k(M, \mathcal{E}) \longrightarrow \mathcal{H}_m^{k+1}(M, \mathcal{E}).
    \end{equation}
\end{enumerate}
\end{prop}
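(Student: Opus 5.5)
The plan is to build the spaces $\mathcal{H}_m^k(M,\mathcal{E})$ in two stages. First we take the Dang--Rivière construction as given. Then we adjust the norm so that $d^\nabla$ becomes bounded between spaces carrying the same order function $m$, as the statement requires.

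\textbf{Stage 1.} Fix the escape function $G_m(x,\xi)=m(x,\xi)\log(1+\|\xi\|^2)$ of \cite[Lemma 8.1]{DangRiviere17Anisotropic}. Quantise $e^{G_m}$, acting diagonally on $\Lambda^kT^*M\otimes\mathcal{E}$, to an elliptic, invertible pseudo-differential operator $\mathbf{A}_m$ of variable order. Set
\[
H^k_{m}\doteq \mathbf{A}_m^{-1}L^2(M,\Lambda^kT^*M\otimes\mathcal{E}),\qquad \|u\|_{H^k_m}\doteq\|\mathbf{A}_m u\|_{L^2}.
\]
Since $m$ is bounded, the symbol $e^{G_m}$ is pinched between $\langle\xi\rangle^{-N}$ and $\langle\xi\rangle^{N}$ for some $N$. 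This gives continuous inclusions $H^{N}\hookrightarrow H^k_m\hookrightarrow H^{-N}$ into the ordinary Sobolev scale. Smooth forms are dense in $H^k_m$, because $\mathbf{A}_m$ maps $C^\infty$ onto $C^\infty$ and $C^\infty$ is dense in $L^2$.

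On $H^k_m$, however, $d^\nabla$ is only a first-order operator, so it is not bounded into $H^{k+1}_m$; it loses one derivative. To repair this we pass to the graph norm.

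\textbf{Stage 2.} Define
\[
\mathcal{H}_m^k(M,\mathcal{E})\doteq\{u\in H^k_m : d^\nabla u\in H^{k+1}_m\},\qquad \|u\|^2_{\mathcal{H}^k_m}\doteq\|u\|^2_{H^k_m}+\|d^\nabla u\|^2_{H^{k+1}_m},
\]
where $d^\nabla$ acts in the sense of currents. The plan then runs through the following steps in order.

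First, we check that $\mathcal{H}_m^k$ is a Hilbert space. It is complete because $d^\nabla$ is continuous on $\mathcal{D}'$, hence closed as an operator $H^k_m\to H^{k+1}_m$.

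Second, boundedness of $d^\nabla$ becomes immediate from $(d^\nabla)^2=0$, which holds on currents by duality from smooth forms:
\[
\|d^\nabla u\|^2_{\mathcal{H}^{k+1}_m}=\|d^\nabla u\|^2_{H^{k+1}_m}+\|d^\nabla d^\nabla u\|^2_{H^{k+2}_m}=\|d^\nabla u\|^2_{H^{k+1}_m}\le\|u\|^2_{\mathcal{H}^k_m}.
\]
Nilpotency is inherited in the same way. This is precisely the content of \cite[Equation (9)]{DangRiviere17Topology}: the spaces there are built so that $d^\nabla$ preserves them.

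Third, we obtain the inclusions \eqref{eq:inclusion_sobolev}. The embedding $\mathcal{H}^k_m\hookrightarrow H^k_m\hookrightarrow H^{-N}\hookrightarrow\mathcal{D}'^k$ is continuous by Stage 1. The embedding $\Omega^k\hookrightarrow\mathcal{H}^k_m$ is continuous because $\|u\|_{\mathcal{H}^k_m}\lesssim\|u\|_{H^{N+1}}$.

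The main obstacle is \textbf{density of $\Omega^k$ in $\mathcal{H}^k_m$ for the graph norm}. Density in $H^k_m$ alone is not enough; we need an approximation $u_\epsilon\to u$ such that $d^\nabla u_\epsilon\to d^\nabla u$ simultaneously. The first attempt is a Friedrichs-type regularisation $J_\epsilon=\mathrm{Op}(\chi(\epsilon\xi))$ made to commute with $d^\nabla$. For instance, take $J_\epsilon=e^{-\epsilon\Delta_\nabla}$, the heat semigroup of the twisted Hodge Laplacian, which commutes exactly with $d^\nabla$. This reduces the problem to showing that $e^{-\epsilon\Delta_\nabla}$ is uniformly bounded on $H^k_m$ and converges strongly to the identity there.

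Uniform boundedness amounts to $\mathbf{A}_m e^{-\epsilon\Delta_\nabla}\mathbf{A}_m^{-1}$ being uniformly bounded on $L^2$. We would establish it via the symbolic calculus with variable order of \cite[Sections 4--5]{DangRiviere17Anisotropic}. The key point is that $e^{-\epsilon|\xi|^2}$ is a symbol of order $0$ uniformly in $\epsilon$, and its commutator with $\mathbf{A}_m$ is of lower order with uniform bounds, as long as $m$ has bounded derivatives. Strong convergence then follows from uniform boundedness together with convergence on the dense subspace $C^\infty$. With density established, all the listed properties hold, and the spaces satisfy the statement for $s$ large.
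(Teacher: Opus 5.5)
The paper does not prove this proposition. It imports it from Dang--Rivi\`ere, describing $\mathcal{H}^k_m(M,\mathcal{E})$ as the image of $L^2(M,\Lambda^kT^*M\otimes\mathcal{E})$ under the inverse of the quantised escape-function operator (exactly your Stage~1 space $H^k_m$), and attributing boundedness of $d^\nabla$ to \cite[Equation (9)]{DangRiviere17Topology}. Your route is genuinely different, and your Stage~1 remark is the key point. With a single order function $m$ in all degrees, $\mathbf{A}_m d^\nabla\mathbf{A}_m^{-1}$ has principal symbol $i\xi\wedge\cdot$ and is not $L^2$-bounded, so the bare spaces need either a loss of one derivative in the order function from degree $k$ to degree $k+1$, or a modified norm. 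Choosing the graph norm makes boundedness and nilpotency tautological and moves all the analysis into density of $\Omega^k$. You handle that correctly with the commuting regulariser $e^{-\epsilon\Delta_\nabla}$. The step you only sketch, uniform boundedness of $\{e^{-\epsilon\Delta_\nabla}\}_{0<\epsilon\le 1}$ on $H^k_m$, is standard: functional calculus puts this family in a bounded subset of $\Psi^0$, conjugation by $\mathbf{A}_m$ keeps it bounded in $\Psi^0_{\rho,1-\rho}$ for $\rho<1$ close to $1$, and Calder\'on--Vaillancourt gives uniform $L^2$ bounds. Density of $\mathcal{H}^k_m$ in $\mathcal{D}'^k$ is immediate from $\Omega^k\subset\mathcal{H}^k_m$.

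What the citation buys the paper is that its spaces are literally those on which Dang--Rivi\`ere prove the discrete spectrum of $-\mathcal{L}_{V,\nabla}$, the Laurent expansion and the chain homotopy used later. What your construction buys is an honest proof of (1)--(2) as stated, at the price that those spectral results must be transferred to the graph-norm spaces. The transfer is easy: $\mathcal{L}_{V,\nabla}$ commutes with $d^\nabla$, so the resolvent on $H^\bullet_m$ preserves $\mathcal{H}^\bullet_m$, with norm controlled by its norms in degrees $k$ and $k+1$. It still deserves a sentence, because (1)--(2) alone are also satisfied by dynamics-free spaces such as $\{u\in H^{-s}: d^\nabla u\in H^{-s}\}$. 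Finally, drop the claim that your graph-norm bound is ``precisely the content'' of \cite[Equation (9)]{DangRiviere17Topology}. By the paper's own description the cited spaces are your Stage~1 spaces, so the graph norm is your modification, not theirs.
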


The properties outlined in Proposition \ref{prop:anisotropic_spaces} guarantee that the entire twisted de Rham complex structure can be rigorously lifted to $\mathcal{H}_m^\bullet(M, \mathcal{E})$. It is precisely within this tailored anisotropic framework that the dynamics of the flow can be studied via spectral theory, paving the way for the definition of the Pollicott--Ruelle resonances.

\subsection{Pollicott--Ruelle Resonances and Algebraic Properties}
\label{subsec:resonances_and_algebra}

Because the exterior derivative $d^\nabla$ and the interior product $\iota_V$ naturally map the anisotropic Sobolev spaces into one another, their combination yields a well-defined operator on $\mathcal{H}_m^\bullet(M, \mathcal{E})$.

\begin{definition}
\label{def:Lie-derivative}
We define the Lie derivative along the vector field $V$, denoted by 
\[
\mathcal{L}_{V, \nabla}^{(k)} \colon \mathcal{H}_m^k(M, \mathcal{E}) \to \mathcal{H}_m^k(M, \mathcal{E}), 
\]
via Cartan's calculus:
\begin{equation}
    \mathcal{L}_{V, \nabla}^{(k)} \doteq d^\nabla_{k-1} \circ \iota_V + \iota_V \circ d^\nabla_k.
\end{equation}
\end{definition}

Within this tailored anisotropic framework, the spectral behaviour of the Lie derivative is completely resolved. Crucially, the operator $-\mathcal{L}_{V, \nabla}^{(k)}$ acquires a purely discrete spectrum of finite-multiplicity eigenvalues, as proven in \cite[Proposition 9.3]{DangRiviere17Topology}:

\begin{prop}
\label{prop:discrete-spectrum}
For every real threshold $B_0 >0$, there exists a Sobolev regularity parameter $s(B_0)>0$ such that $-\mathcal{L}_{V, \nabla}^{(k)}$ possesses a purely discrete spectrum in the complex half-plane $\operatorname{Re}(z) > -B_0$. Each eigenvalue in this set has finite algebraic multiplicity. The set of these eigenvalues is denoted by $\mathcal{R}_k(V, \nabla)$, and its elements are called Pollicott--Ruelle resonances.
\end{prop}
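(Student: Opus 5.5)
The statement is Proposition~\ref{prop:discrete-spectrum}: $-\mathcal{L}_{V,\nabla}^{(k)}$ has discrete spectrum of finite multiplicity in $\operatorname{Re}(z)>-B_0$ on a suitably tuned anisotropic space. The plan is to show that, on $\mathcal{H}^k_m(M,\mathcal{E})$ with $m$ built from an escape function and regularity parameter $s$ large relative to $B_0$, the resolvent $(\mathcal{L}_{V,\nabla}^{(k)}+z)^{-1}$ exists for $\operatorname{Re}(z)$ large. We then show that for $\operatorname{Re}(z)>-B_0$ the operator $\mathcal{L}_{V,\nabla}^{(k)}+z$ is Fredholm of index zero. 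Analytic Fredholm theory then gives a meromorphic continuation of the resolvent with finite-rank residues, and the poles are exactly the Pollicott--Ruelle resonances.

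\textbf{Step 1 (conjugation).} Write $\mathcal{H}^k_m=A_m^{-1}L^2$, where $A_m=\mathrm{Op}(e^{m\log\langle\xi\rangle})$. Consider the conjugated operator $P_m(z)\doteq A_m(\mathcal{L}_{V,\nabla}^{(k)}+z)A_m^{-1}$ on $L^2(M,\Lambda^kT^*M\otimes\mathcal{E})$. By Egorov-type pseudodifferential calculus, $P_m(z)=\mathcal{L}_{V,\nabla}^{(k)}+z-\mathrm{Op}(X_H(m\log\langle\xi\rangle))+R$, where $X_H$ is the symplectic lift of $V$ to $T^*M$ and $R$ has lower order. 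The zeroth-order bundle terms coming from the action of $\mathcal{L}_V$ on $\Lambda^k\otimes\mathcal{E}$ are bounded and contribute a constant $C_k$ independent of $s$.

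\textbf{Step 2 (escape function estimate).} Invoke the escape function of \cite[Lemma 8.1]{DangRiviere17Anisotropic}. Outside a compact set in $\xi$, the principal symbol of $-\frac{1}{2}(P+P^*)$ is then bounded below by $s\,c_0$ times a positive quantity, where $c_0>0$ comes from the strict decay of the escape function. Near the conical sets where the decay fails, namely the conormals of stable and unstable manifolds, we instead use the order-$\pm s$ choice of $m$ so that the weights sit on the correct side. This is the main obstacle: the Morse--Smale geometry has a non-compact, stratified trapped structure in $T^*M$, built from the conormal bundles of the stable and unstable manifolds of all critical elements. Producing a global escape function with uniform decay there is exactly where $C^\infty$-linearisability is needed. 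I would simply import that lemma rather than reprove it.

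\textbf{Step 3 (Fredholm conclusion).} Choose $s(B_0)$ with $s\,c_0>B_0+C_k$. A sharp G\aa rding inequality then yields $\operatorname{Re}\langle P_m(z)u,u\rangle\ge \epsilon\|u\|^2-\|Ku\|^2$, with $K$ compact (microlocalised to a compact region), for $\operatorname{Re}z>-B_0$. Hence $P_m(z)=Q(z)+K'$, where $Q(z)$ is invertible and $K'$ is compact, so $P_m(z)$ is Fredholm of index zero. Invertibility for large $\operatorname{Re}z$ follows from the semigroup bound $\|e^{-t\mathcal{L}}\|\le Ce^{\beta t}$ on $\mathcal{H}^k_m$. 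The analytic Fredholm theorem then gives a meromorphic resolvent on $\operatorname{Re}z>-B_0$ with finite-rank Laurent coefficients. This yields discreteness and finite algebraic multiplicity.

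Independence of the resonances from the choice of $m$ and $s$ follows by a density argument using \eqref{eq:inclusion_sobolev}: the poles are those of $\langle(\mathcal{L}+z)^{-1}\varphi,\psi\rangle$ for smooth $\varphi,\psi$.
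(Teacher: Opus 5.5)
The paper gives no proof of this proposition; it only cites Dang--Rivi\`ere. Your architecture is the Faure--Sj\"ostrand scheme behind that citation: conjugation by $A_m$, an escape function, a G\aa rding-type bound, then analytic Fredholm theory. Step 1 and the final analytic-Fredholm step are fine.

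\textbf{The gap is in Steps 2--3, and the inequality you derive there is false for every weight.} You assume that outside a compact set in $\xi$ the real part of the symbol of $P_m(z)$ is bounded below by $sc_0-C_k$ everywhere. From this you deduce $\operatorname{Re}\langle P_m(z)u,u\rangle\ge\epsilon\|u\|^2-\|Ku\|^2$ with $K$ compact. To see why this fails, do the following.
\begin{itemize}
\item Take a closed orbit $\Lambda$, a point $x\in\Lambda$, and $\xi\in T^*_xM$ annihilating $T^s_x\oplus T^u_x$ with $\xi(V_x)\neq0$.
\item The map $d_x\phi^{T_\Lambda}$ preserves $T^s_x$ and $T^u_x$ and fixes $V_x$. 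Hence the cotangent lift returns $\xi\circ(d_x\phi^{T_\Lambda})^{-1}=\xi$.
\item So $(x,\lambda\xi)$ lies on a periodic orbit of the Hamiltonian lift for every $\lambda>0$.
\item Along any such orbit $\int_0^{T_\Lambda}X_HG_m\,dt=0$, whatever $m$ and $s$ are. At some point of the orbit $-X_HG_m\le0$, so the real part of the symbol of $P_m(z)$ is at most $\operatorname{Re}z+C_k+o(1)$ there, at arbitrarily large $|\xi|$.
\item Test against wave packets concentrated at these points; they tend weakly to zero, so a compact $K$ does not see them. Your inequality then fails as soon as $B_0>C_k$, however large $s$ is.
\end{itemize}
You also place the failure of decay in the wrong spot. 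On the conormals of stable and unstable manifolds, $X_Hm=0$, but $mX_H\log\langle\xi\rangle$ has a strict sign, which is exactly what your sign choice for $m$ exploits. The neutral covectors over closed orbits form a further invariant region that no weight can damp.

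\textbf{The missing idea is ellipticity.} The neutral covectors lie off the characteristic set $\{\xi(V_x)=0\}$. That set does contain every conormal $N^*W^{s/u}$, since $V$ is tangent to these manifolds, and the entire fibre over each fixed point. Off it, $|\xi(V_x)|\ge c|\xi|$, and conjugation by $A_m$ only adds terms of order $\le\epsilon$ (the $X_HG_m$ term is of size $\log\langle\xi\rangle$). So $P_m(z)$ is microlocally elliptic of order one on a cone around these directions, uniformly for $z$ in compact sets. The repair has three parts.
\begin{itemize}
\item Require the escape function to decay strictly only on a conic neighbourhood of the characteristic set, and to be non-increasing elsewhere.
\item On the complementary cone, use an elliptic estimate or parametrix.
\item Glue the two with a microlocal partition of unity to get $\|u\|\le C\|P_m(z)u\|+\|Ku\|$ for $P_m(z)$ and for its adjoint. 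This gives the Fredholm property.
\end{itemize}
Index zero then follows from constancy of the index on the connected half-plane together with invertibility for large $\operatorname{Re}z$. With this repair, your analytic-Fredholm conclusion (finite-rank residues, hence discreteness and finite multiplicity) goes through unchanged.
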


\begin{rem}
\label{rem:intrinsic_spectrum}
A fundamental property of the Pollicott--Ruelle resonances is that their corresponding generalised eigenspaces are intrinsic to the dynamics of the flow. Specifically, while the definition of the anisotropic Sobolev spaces $\mathcal{H}_m^k(M, \mathcal{E})$ heavily depends on the technical choices made during their construction, the spectrum of $-\mathcal{L}_{V, \nabla}^{(k)}$ does not, see \cite[Section 9.3]{DangRiviere17Anisotropic}. Consequently, the resonances are true dynamical invariants, and the resonant states are intrinsically defined anisotropic currents.
\end{rem}

A central step is to relate the flow generated by the Lie derivative to the resolvent of the operator. Let $\Phi_k^t$ be the lift of the base flow to the vector bundle $\Lambda^k(T^*M) \otimes \mathcal{E}$, induced by the flat connection $\nabla$. By standard functional analysis, the action of the flow on twisted forms extends to a strongly continuous semigroup $e^{-t \mathcal{L}_{V, \nabla}^{(k)}} \doteq (\Phi_k^{-t})^*$ on the anisotropic spaces. The operator norm of this semigroup is bounded by $e^{t C_0}$ for some real constant $C_0 > 0$ and all $t \ge 0$. Consequently, for $\operatorname{Re}(z) > C_0$, the resolvent operator is well-defined and can be represented via the Laplace transform:
\begin{equation}
    \left(\mathcal{L}_{V, \nabla}^{(k)} + z\right)^{-1} = \int_0^{+\infty} e^{-t\left(\mathcal{L}_{V, \nabla}^{(k)} + z\right)} \, dt.
\end{equation}

For each resonance $z_0 \in \mathcal{R}_k(V, \nabla)$, this allows us to define a corresponding spectral projector onto the associated generalised eigenspace.

\begin{definition}
\label{def:projector}
For any $z_0 \in \mathcal{R}_k(V, \nabla)$, the spectral projector $\pi_{z_0}^{(k)} \colon \mathcal{H}_m^k(M, \mathcal{E}) \to \mathcal{H}_m^k(M, \mathcal{E})$ is defined as:
\begin{equation}
    \pi_{z_0}^{(k)} \doteq \frac{1}{2\pi i} \oint_{\gamma_{z_0}} \left(\mathcal{L}_{V, \nabla}^{(k)} + z\right)^{-1} \, dz,
\end{equation}
where $\gamma_{z_0}$ is a small, positively oriented contour enclosing $z_0$ and no other eigenvalues. We denote its finite-dimensional image by $C^k_{V, \nabla}(z_0) \doteq \text{Im}(\pi_{z_0}^{(k)})$. If $z_0 \notin \mathcal{R}_k(V, \nabla)$, the spectral projector is still well-defined but evaluates identically to zero.
\end{definition}

The resolvent operator admits a natural meromorphic extension completely determined by these projectors \cite[Equation (14)]{DangRiviere17Topology}. The following proposition summarises this expansion and clarifies the nature of the spaces $C^k_{V, \nabla}(z_0)$.

\begin{prop}
\label{prop:Laurent-expansion}
The resolvent $(\mathcal{L}_{V, \nabla}^{(k)} + z)^{-1}$ admits a meromorphic extension to the half-plane $\operatorname{Re}(z) > -B_0$. For any resonance $z_0$, there exists an integer $m_k(z_0) \ge 1$ such that the resolvent expands as:
\begin{equation}
    \left(\mathcal{L}_{V, \nabla}^{(k)} + z\right)^{-1} = \sum_{l=1}^{m_k(z_0)} \frac{(-1)^{l-1} \left(\mathcal{L}_{V, \nabla}^{(k)} + z_0\right)^{l-1} \pi_{z_0}^{(k)}}{(z-z_0)^l} + F_{z_0,k}(z),
\end{equation}
where $F_{z_0,k}(z)$ is holomorphic near $z_0$. Consequently, $C^k_{V, \nabla}(z_0)$ is the space of generalised resonant states: for any state $u \in C^k_{V, \nabla}(z_0)$ it holds that:
\begin{equation}
    \left(\mathcal{L}_{V, \nabla}^{(k)} + z_0\right)^{m_k(z_0)} u = 0.
\end{equation}
\end{prop}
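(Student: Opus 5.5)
The plan is to treat the statement as a consequence of analytic Fredholm theory combined with the finite-dimensional Riesz (Jordan) calculus. The analytic input is taken from \cite{DangRiviere17Anisotropic, DangRiviere17Topology}. On $\mathcal{H}_m^k(M,\mathcal{E})$, with $s=s(B_0)$ chosen as in Proposition \ref{prop:discrete-spectrum}, the family $z\mapsto \mathcal{L}_{V,\nabla}^{(k)}+z$ is a holomorphic family of Fredholm operators of index zero on $\{\operatorname{Re}(z)>-B_0\}$. This follows from the escape-function and radial estimates behind Proposition \ref{prop:discrete-spectrum}. Moreover, it is invertible for $\operatorname{Re}(z)>C_0$, where the Laplace-transform formula for the resolvent holds. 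The analytic Fredholm theorem then gives that $(\mathcal{L}_{V,\nabla}^{(k)}+z)^{-1}$ extends meromorphically to $\operatorname{Re}(z)>-B_0$. Its poles are exactly the resonances, and the coefficients of the negative powers in each Laurent expansion have finite rank.

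Fix a resonance $z_0$ and define $\pi_{z_0}^{(k)}$ by the contour integral of Definition \ref{def:projector}. I would first check the following standard facts from the resolvent identity.
\begin{itemize}
\item $\pi_{z_0}^{(k)}$ is idempotent.
\item It commutes with $\mathcal{L}_{V,\nabla}^{(k)}$ and with the resolvent.
\item It has finite rank, being a residue of a meromorphic family with finite-rank principal part.
\end{itemize}
The contour independence is automatic since $z_0$ is an isolated pole. This gives an $\mathcal{L}$-invariant splitting $\mathcal{H}_m^k=\operatorname{Im}\pi_{z_0}^{(k)}\oplus\operatorname{Ker}\pi_{z_0}^{(k)}$, with $C^k_{V,\nabla}(z_0)=\operatorname{Im}\pi_{z_0}^{(k)}$ finite dimensional.

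On the complement $\operatorname{Ker}\pi_{z_0}^{(k)}$, the restricted resolvent has no singularity inside $\gamma_{z_0}$. Indeed, its residue is $(\mathbb{I}-\pi_{z_0})\pi_{z_0}=0$, and all higher principal coefficients vanish similarly, because they are obtained by integrating $(z-z_0)^{l}$ times the resolvent and these integrals factor through $\pi_{z_0}$. Hence $(\mathcal{L}+z)^{-1}(\mathbb{I}-\pi_{z_0})$ is holomorphic near $z_0$, and I take this as $F_{z_0,k}(z)$.

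On the finite-dimensional piece, set $N\doteq(\mathcal{L}_{V,\nabla}^{(k)}+z_0)\pi_{z_0}^{(k)}$. Its only eigenvalue is $0$: the spectrum of $\mathcal{L}+z$ restricted to $\operatorname{Im}\pi_{z_0}$ is the set of poles enclosed by $\gamma_{z_0}$, namely $\{z_0\}$. So $N$ is nilpotent, say $N^{m}=0$ with $m=m_k(z_0)\ge1$. Writing $\mathcal{L}+z=(z-z_0)+N$ on $\operatorname{Im}\pi_{z_0}$, the Neumann series terminates:
\begin{equation*}
\big((z-z_0)+N\big)^{-1}\pi_{z_0}=\sum_{l=1}^{m}\frac{(-1)^{l-1}N^{l-1}}{(z-z_0)^{l}}.
\end{equation*}
This is exactly the stated principal part, since $N^{l-1}=(\mathcal{L}+z_0)^{l-1}\pi_{z_0}$. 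Finally, $(\mathcal{L}+z_0)^{m}u=N^{m}u=0$ for any $u\in C^k_{V,\nabla}(z_0)$.

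The main obstacle is the first step, namely the Fredholm property of $\mathcal{L}_{V,\nabla}^{(k)}+z$ on the anisotropic spaces for $\operatorname{Re}(z)>-B_0$. This requires the full microlocal machinery: the escape function for $C^\infty$-linearisable Morse--Smale flows and propagation and radial estimates near the critical elements. I would not reprove it, but import it from \cite[Proposition 9.3]{DangRiviere17Topology} and \cite{DangRiviere17Anisotropic}. The remaining steps are the standard algebra of Riesz projectors.
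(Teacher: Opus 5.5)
Your proposal is correct and matches how the paper handles this statement: the paper gives no argument of its own and imports the expansion from \cite[Equation (14)]{DangRiviere17Topology}, where it follows as you describe, from the Fredholm property of $\mathcal{L}_{V,\nabla}^{(k)}+z$ on the anisotropic spaces together with the Riesz-projector and nilpotent Neumann-series algebra you carry out. One wording fix: in the nilpotency step, the claim should be that $-\mathcal{L}_{V,\nabla}^{(k)}$ restricted to $\operatorname{Im}\pi_{z_0}^{(k)}$ has spectrum $\{z_0\}$ (equivalently, that $N$ has spectrum $\{0\}$), not that ``$\mathcal{L}+z$'' does.
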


Before concluding this overview, we highlight a set of essential commutation rules between the operators involved. These identities, which can be found in \cite[Section 4.2]{DangRiviere17Topology}, serve as the algebraic engine for the chain homotopy equation utilised in Section \ref{sec:chain_homotopy}.

\begin{prop}
\label{prop:operator_toolbox}
Let $\pi_{0}^{(k)} \colon \mathcal{H}_m^k(M, \mathcal{E}) \to C^k_{V, \nabla}(0)$ be the spectral projector at zero. The following algebraic relations hold:
\begin{enumerate}[label=\emph{(\Roman*)}]
    \item \textbf{Commutation with the Lie derivative:}
    \begin{equation*}
        d^\nabla_k \circ \mathcal{L}_{V, \nabla}^{(k)} = \mathcal{L}_{V, \nabla}^{(k+1)} \circ d^\nabla_k, \quad \iota_V \circ \mathcal{L}_{V, \nabla}^{(k)} = \mathcal{L}_{V, \nabla}^{(k-1)} \circ \iota_V.
    \end{equation*}

    \item \textbf{Commutation with the spectral projector:}
    \begin{align*}
        d^\nabla_k \circ \pi_0^{(k)} &= \pi_0^{(k+1)} \circ d^\nabla_k, \quad \iota_V \circ \pi_0^{(k)} = \pi_0^{(k-1)} \circ \iota_V \\
        &\implies \mathcal{L}_{V, \nabla}^{(k)} \circ \pi_0^{(k)} = \pi_0^{(k)} \circ \mathcal{L}_{V, \nabla}^{(k)}.
   \end{align*}

    \item The operator $\mathcal{L}_{V, \nabla}^{(k)}$ is strictly invertible on the complementary subspace $\text{Im}(\mathbb{I} - \pi_0^{(k)})$.

    \item \textbf{Commutation of the inverse:}
    \begin{align*}
        (\mathcal{L}_{V, \nabla}^{(k+1)})^{-1} \circ d^\nabla_k \circ (\mathbb{I} - \pi_0^{(k)}) &= d^\nabla_k \circ (\mathcal{L}_{V, \nabla}^{(k)})^{-1} \circ (\mathbb{I} - \pi_0^{(k)}), \\
        (\mathcal{L}_{V, \nabla}^{(k-1)})^{-1} \circ \iota_V \circ (\mathbb{I} - \pi_0^{(k)}) &= \iota_V \circ (\mathcal{L}_{V, \nabla}^{(k)})^{-1} \circ (\mathbb{I} - \pi_0^{(k)}).
    \end{align*}
\end{enumerate}
\end{prop}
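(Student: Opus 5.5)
The statement to prove is Proposition \ref{prop:operator_toolbox}: the commutation rules among $d^\nabla$, $\iota_V$, $\mathcal{L}_{V,\nabla}$ and the spectral projector $\pi_0$ on the anisotropic spaces. I would derive all four items from Cartan's formula (Definition \ref{def:Lie-derivative}), the contour-integral definition of $\pi_0$ (Definition \ref{def:projector}), and the Laurent expansion of Proposition \ref{prop:Laurent-expansion}.

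\textbf{Item (I).} I would compute directly from Cartan's formula. Since $(d^\nabla)^2=0$,
\[
d^\nabla \mathcal{L}_{V,\nabla} = d^\nabla\iota_V d^\nabla = \mathcal{L}_{V,\nabla} d^\nabla .
\]
Since $\iota_V^2=0$, the same computation gives $\iota_V \mathcal{L}_{V,\nabla} = \iota_V d^\nabla \iota_V = \mathcal{L}_{V,\nabla}\iota_V$. This holds on smooth forms. I would extend it to $\mathcal{H}^\bullet_m$ by density (Proposition \ref{prop:anisotropic_spaces}) together with boundedness of $d^\nabla$ and $\iota_V$ between the anisotropic spaces. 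Boundedness of $\iota_V$ is not stated explicitly, so I would quote \cite{DangRiviere17Topology} for it.

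\textbf{Item (II).} From (I), for $\operatorname{Re} z > C_0$ I get
\[
d^\nabla(\mathcal{L}_{V,\nabla}^{(k)}+z)^{-1} = (\mathcal{L}_{V,\nabla}^{(k+1)}+z)^{-1}d^\nabla .
\]
One way to see this is to multiply $(\mathcal{L}+z)d^\nabla = d^\nabla(\mathcal{L}+z)$ by the two resolvents on either side. Both sides are meromorphic in $z$, so by uniqueness of meromorphic continuation the identity persists near $z=0$. Integrating over the small contour $\gamma_0$ then gives $d^\nabla\pi_0^{(k)}=\pi_0^{(k+1)}d^\nabla$. The same argument works for $\iota_V$. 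The commutation of $\mathcal{L}_{V,\nabla}$ with $\pi_0$ follows either from Cartan's formula combined with these two identities, or directly because $\pi_0$ is a contour integral of resolvents of $\mathcal{L}_{V,\nabla}$.

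\textbf{Item (III).} The space $\operatorname{Im}(\mathbb{I}-\pi_0)$ is invariant under $\mathcal{L}_{V,\nabla}$ by (II). Applying $\mathbb{I}-\pi_0$ to the Laurent expansion kills every polar term, because each one contains $\pi_0$. So on this subspace the restricted resolvent $z\mapsto(\mathcal{L}+z)^{-1}(\mathbb{I}-\pi_0)$ is holomorphic at $z=0$. Its value there, $F_{0,k}(0)(\mathbb{I}-\pi_0)$, is a bounded two-sided inverse of $\mathcal{L}_{V,\nabla}$ restricted to $\operatorname{Im}(\mathbb{I}-\pi_0)$. To check that it really is an inverse, I would use the identity $(\mathcal{L}+z)R(z)=\mathbb{I}$ and take the limit $z\to0$ on the range of $\mathbb{I}-\pi_0$. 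This is the step I expect to be the main obstacle. It needs the Fredholm/meromorphic structure from \cite{DangRiviere17Anisotropic}, specifically that $0$ is at worst an isolated pole with finite-rank residue. I would cite that structure rather than reprove it.

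\textbf{Item (IV).} By (II), $d^\nabla$ maps $\operatorname{Im}(\mathbb{I}-\pi_0^{(k)})$ into $\operatorname{Im}(\mathbb{I}-\pi_0^{(k+1)})$, and similarly for $\iota_V$. I would apply the inverse from (III) on both sides of the identity from (I), $\mathcal{L}\,d^\nabla = d^\nabla\mathcal{L}$, restricted to these subspaces. This gives $\mathcal{L}^{-1}d^\nabla = d^\nabla\mathcal{L}^{-1}$ after composing with $\mathbb{I}-\pi_0$, and the same computation handles $\iota_V$.
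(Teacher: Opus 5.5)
Your proposal is correct. The paper gives no argument of its own for this proposition; it only refers to \cite[Section 4.2]{DangRiviere17Topology}. Your derivation therefore supplies a self-contained version of what the paper takes on citation, and it follows the standard route:
\begin{itemize}
\item Cartan's formula for (I);
\item commuting $d^\nabla$ and $\iota_V$ with the resolvent for $\operatorname{Re} z > C_0$, continuing meromorphically and integrating over $\gamma_0$ for (II);
\item removing the polar part of the Laurent expansion for (III);
\item transporting (I) through the inverses for (IV).
\end{itemize}
What your version adds is an explicit list of the analytic inputs that are actually used: boundedness of $d^\nabla$ and $\iota_V$ on the anisotropic spaces, the meromorphic continuation with finite-rank polar part at $0$ (Proposition \ref{prop:Laurent-expansion}), and closedness of $\mathcal{L}_{V,\nabla}$.

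Two small tightenings.

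In (I), the density argument is unnecessary and, as phrased, slightly off. Because $\mathcal{L}_{V,\nabla}$ is unbounded, norm-density of smooth forms in $\mathcal{H}^k_m$ does not by itself extend an identity involving $\mathcal{L}_{V,\nabla}$; you would need a graph-norm core. However, Cartan's formula, $(d^\nabla)^2=0$ and $\iota_V^2=0$ hold on all currents. Hence $d^\nabla\mathcal{L}_{V,\nabla}=d^\nabla\iota_V d^\nabla=\mathcal{L}_{V,\nabla}d^\nabla$ is a purely algebraic identity, and the same holds for $\iota_V$. The only thing left to check is that $d^\nabla$ and $\iota_V$ preserve the domain $\{u\in\mathcal{H}^k_m:\mathcal{L}_{V,\nabla}u\in\mathcal{H}^k_m\}$. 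This follows from the identity itself together with boundedness of $d^\nabla$ and $\iota_V$.

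In (III), passing to the limit $z\to 0$ in $(\mathcal{L}_{V,\nabla}+z)F_{0,k}(z)(\mathbb{I}-\pi_0)=\mathbb{I}-\pi_0$ relies on $\mathcal{L}_{V,\nabla}$ being closed. For the left inverse you also need the companion identity $F_{0,k}(z)(\mathbb{I}-\pi_0)(\mathcal{L}_{V,\nabla}+z)=\mathbb{I}-\pi_0$ on the domain. Together these are just Kato's decomposition of a closed operator at an isolated spectral point. So this step is less of an obstacle than you anticipated once Proposition \ref{prop:Laurent-expansion} is available.
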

\section{Torsion of Finite-Dimensional Complexes}
\label{app:torsion_isomorphism}

In this appendix, rather than providing a comprehensive introduction to algebraic torsion for finite-dimensional cochain complexes, we explicitly direct the interested reader to the foundational literature on the subject, such as \cite{Turaev01, Nicolaescu03, Mnev14}. Our primary goal here is to establish a useful result concerning the invariance of the torsion under isometric isomorphisms. This algebraic property is necessary to ensure that the torsion of the Thom--Smale complex matches the torsion of the zero-resonant complex, providing the theoretical basis for the results presented in Section \ref{sec:algebraic_torsion}.

The foundational result in the theory of algebraic torsion establishes a natural isomorphism between the determinant line of a finite-dimensional cochain complex and the determinant line of its cohomology \cite[Lemma 3.15]{Mnev14}. To extract a numerical invariant from this abstract isomorphism, one introduces a Hermitian inner product on the cochain spaces. This specifies associated volume elements and defines the adjoint of the cochain differential, allowing for the construction of the combinatorial Laplacian $\Delta_k$. Drawing upon standard results in algebraic topology and finite-dimensional Hodge theory (see, e.g., \cite[page 147]{Hatcher02}), the numerical torsion $\tau(C^\bullet) \in \mathbb{C} \setminus \{0\}$ can be explicitly computed in terms of the spectrum of these Laplacian operators \cite[Section 3.3]{Mnev14}:
\begin{equation*}
    \tau(C^\bullet) = \prod_{k=0}^n \left(\det\nolimits'(\Delta_k)\right)^{(-1)^{k+1} k/2},
\end{equation*}
where $\det'(\Delta_k)$ denotes the regularised determinant of the operator, defined as the product of all its non-zero eigenvalues.

The abstract torsion $\mathbb{T}$, viewed as an isomorphism of determinant lines, is an intrinsic 
metric-independent invariant of the complex. Crucially, its numerical component $\tau(C^\bullet)$ is not an absolute invariant, but inherently depends on the chosen inner products. As discussed in \cite[Section 8.4]{Mnev14}, equipping $C^\bullet$ with a different metric alters both the combinatorial Laplacian and the reference volume elements. Consequently, the value of $\tau(C^\bullet)$ is always defined relative to the specific orthonormal bases induced by the metric. Therefore, when comparing the numerical torsions of two isomorphic complexes, the isomorphism must strictly preserve these metric structures (\textit{i.e.}, it must be an isometry) to ensure that the resulting Laplacian spectra perfectly coincide. While this metric dependence might initially seem restrictive, the abstract torsion is intrinsically metric-independent. Hence, we are free to assign the auxiliary inner products on our complexes. Given a specific isomorphism, we can deliberately define the inner products to render it an isometry by construction, allowing for a rigorous and direct comparison of their numerical torsions.

To the end of proving the theorem we previously mentioned, we first recall the standard notion of an isometric isomorphism between cochain complexes.

\begin{definition}
\label{def:cochain_isomorphism}
Let $(C^\bullet, d^C)$ and $(E^\bullet, d^E)$ be two cochain complexes. They are said to be {isomorphic} if there exists a family of linear isomorphisms $\Phi_k \colon C^k \to E^k$ such that
\begin{equation}
    \Phi_{k+1} \circ d^C_k = d^E_k \circ \Phi_k.
\end{equation}
Such a family $\Phi = \{\Phi_k\}_k$ is called an {isomorphism of cochain complexes}. 

If these complexes are endowed with Hermitian inner products $\langle \cdot, \cdot \rangle_C$ and $\langle \cdot, \cdot \rangle_E$ to define a cochain contraction, the isomorphism $\Phi$ is said to be {isometric} if it preserves these inner products, namely:
\begin{equation}
    \langle c_1, c_2 \rangle_C = \langle \Phi_k(c_1), \Phi_k(c_2) \rangle_E, \qquad \forall c_1, c_2 \in C^k.
\end{equation}
\end{definition}


We can now state the central result of this appendix, which guarantees that isometric isomorphisms perfectly preserve the numerical torsion. We did not immediately find a proof of the following statement in the references, so we add it for completeness.

\begin{theorem}
\label{thm:isomorphic_complexes_torsion}
Let $(C^\bullet, d^C)$ and $(E^\bullet, d^E)$ be two finite-dimensional cochain complexes equipped with Hermitian inner products $\langle \cdot, \cdot \rangle_C$ and $\langle \cdot, \cdot \rangle_E$, respectively. Suppose there exists a family of isometric isomorphisms $\Phi_k \colon C^k \to E^k$ such that $\Phi_{k+1} \circ d^C_k = d^E_k \circ \Phi_k$ for all $k$. Then, their numerical torsions, computed with respect to the standard volume elements induced by the metrics, coincide:
\begin{equation}
    \tau(C^\bullet) = \tau(E^\bullet).
\end{equation}
\end{theorem}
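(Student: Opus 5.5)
The plan is to work with the Laplacian formula for the numerical torsion recalled just above the statement,
\begin{equation*}
    \tau(C^\bullet) = \prod_{k=0}^n \left(\det\nolimits'(\Delta^C_k)\right)^{(-1)^{k+1} k/2},
\end{equation*}
and to show that $\Phi$ intertwines the combinatorial Laplacians of the two complexes. If $\Phi_k \circ \Delta^C_k = \Delta^E_k \circ \Phi_k$ for every $k$, then $\Delta^C_k$ and $\Delta^E_k$ are similar operators. Hence they have the same spectrum, with multiplicities, and in particular the same non-zero eigenvalues. The products $\det'$ of non-zero eigenvalues therefore agree degree by degree, and the two alternating products coincide.

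The key step is to show that $\Phi$ also intertwines the adjoint differentials: $\Phi_{k}\circ (d^C_k)^\dagger = (d^E_k)^\dagger \circ \Phi_{k+1}$. I would use the isometry hypothesis, which in finite dimensions says $\Phi_k^\dagger \Phi_k = \mathbb{I}$. Since $\Phi_k$ is also invertible, this gives $\Phi_k^\dagger = \Phi_k^{-1}$. Taking adjoints of the chain relation $\Phi_{k+1} d^C_k = d^E_k \Phi_k$ yields $(d^C_k)^\dagger \Phi_{k+1}^{-1} = \Phi_k^{-1} (d^E_k)^\dagger$. Multiplying on the left by $\Phi_k$ and on the right by $\Phi_{k+1}$ gives the desired relation. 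Alternatively, one can verify it directly from the defining identity $\langle d c, c'\rangle = \langle c, d^\dagger c'\rangle$ together with the preservation of inner products.

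With both intertwinings in hand, it follows that
\begin{equation*}
    \Phi_k \Delta^C_k = \Phi_k\big((d^C_k)^\dagger d^C_k + d^C_{k-1}(d^C_{k-1})^\dagger\big) = \Delta^E_k \Phi_k,
\end{equation*}
so $\Delta^E_k = \Phi_k \Delta^C_k \Phi_k^{-1}$ and the spectra agree.

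The main subtlety is not the algebra but the meaning of ``numerical torsion computed with respect to the standard volume elements''. If $\tau$ is viewed as an element of the determinant line of cohomology, one must also check that the reference volume elements correspond. I would address this by noting that $\Phi_k$ maps $\operatorname{Ker}\Delta^C_k$ isometrically onto $\operatorname{Ker}\Delta^E_k$. Consequently, orthonormal bases of harmonic representatives are sent to orthonormal bases, and the induced map on cohomology sends the unit volume element $\eta_H$ of one complex to that of the other, up to a phase. This shows that the scalar prefactors, which are the only metric-dependent numerical data, coincide exactly. Everything else is routine linear algebra.
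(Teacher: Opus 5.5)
Your proposal is correct and follows essentially the same route as the paper: you establish $d_k^{E,\dagger} = \Phi_k \circ d_k^{C,\dagger} \circ \Phi_{k+1}^{-1}$ from the isometry hypothesis (via $\Phi_k^\dagger = \Phi_k^{-1}$, where the paper uses a direct chain of inner-product identities), deduce $\Delta^E_k = \Phi_k \Delta^C_k \Phi_k^{-1}$, and conclude through the Laplacian formula for $\tau$. Your additional remark that $\Phi_k$ maps $\operatorname{Ker}\Delta^C_k$ isometrically onto $\operatorname{Ker}\Delta^E_k$, so that the reference volume elements correspond, is a harmless refinement that the paper does not spell out.
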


\begin{proof}
Let $d_k^{C, \dagger}$ and $d_k^{E, \dagger}$ denote the adjoints of the differentials with respect to their respective inner products.

For any $y \in E^k$ and $z \in E^{k+1}$, we evaluate the action of the mapped adjoint operator:
\begin{align*}
    \langle y, \Phi_k \circ d_k^{C, \dagger} \circ \Phi_{k+1}^{-1}(z) \rangle_E 
    &= \langle \Phi_k^{-1}(y), d_k^{C, \dagger} \circ \Phi_{k+1}^{-1}(z) \rangle_C \\
    &= \langle d^C_k \circ \Phi_k^{-1}(y), \Phi_{k+1}^{-1}(z) \rangle_C \\
    &= \langle \Phi_{k+1} \circ d^C_k \circ \Phi_k^{-1}(y), z \rangle_E \\
    &= \langle d^E_k(y), z \rangle_E \\
    &= \langle y, d_k^{E, \dagger}(z) \rangle_E.
\end{align*}
This sequence of equalities shows that the adjoints are related by conjugation:
\begin{equation}
\label{eq:adjoint_conjugation}
    d_k^{E, \dagger} = \Phi_k \circ d_k^{C, \dagger} \circ \Phi_{k+1}^{-1}.
\end{equation}
We now show that the Hodge Laplacians $\Delta^E_k := d^E_{k-1} d^{E, \dagger}_{k-1} + d^{E, \dagger}_k d^E_k$ and $\Delta^C_k := d^C_{k-1} d^{C, \dagger}_{k-1} + d^{C, \dagger}_k d^C_k$ are also related by conjugation. Using Equation \eqref{eq:adjoint_conjugation} and the properties of  $\Phi_k$, we compute:
\begin{align*}
    \Delta^E_k 
    &= \left( \Phi_k \circ d^C_{k-1} \circ \Phi_{k-1}^{-1} \right) \circ \left( \Phi_{k-1} \circ d^{C, \dagger}_{k-1} \circ \Phi_k^{-1} \right) \\
    &\quad + \left( \Phi_k \circ d^{C, \dagger}_k \circ \Phi_{k+1}^{-1} \right) \circ \left( \Phi_{k+1} \circ d^C_k \circ \Phi_k^{-1} \right) \\
    &= \Phi_k \circ \left( d^C_{k-1} d^{C, \dagger}_{k-1} + d^{C, \dagger}_k d^C_k \right) \circ \Phi_k^{-1} \\
    &= \Phi_k \circ \Delta^C_k \circ \Phi_k^{-1}.
\end{align*}
Since the Laplacians are related by a similarity transformation, they share the exact same non-zero eigenvalues, implying that their regularised determinants coincide: $\det'(\Delta^E_k) = \det'(\Delta^C_k)$. By the formula for the numerical torsion in terms of Laplacian determinants seen earlier, we conclude that $\tau(C^\bullet) = \tau(E^\bullet)$.
\end{proof}

\vspace{1cm} 
\nocite{*} 
\bibliographystyle{alpha} 
\bibliography{Bibliography} 

@article{FukayaCS,
	author = {Fukaya, Kenji},
	journal = {Communications in Mathematical Physics},
	number = {1},
	pages = {37--90},
	title = {Morse homotopy and Chern-Simons perturbation theory},
	volume = {181},
	year = {1996}}

@article{CLMY,
	author = {Chekeres, Olga and Losev, Andrey and Mnev, Pavel and Youmans, Donald R.},
	journal = {Letters in Mathematical Physics},
	number = {5},
	pages = {93},
	title = {Two field-theoretic viewpoints on the Fukaya-Morse {$A_\infty$} category},
	volume = {112},
	year = {2022}
	}

@misc{Severa2026,
      title={Forms, half-densities, and the quantum odd symplectic category in the BV formalism}, 
      author={Pavol Ševera},
      year={2026},
      eprint={2606.23902},
      archivePrefix={arXiv}, 
}

@article{mnev2007,
      title={Notes on simplicial BF theory}, 
      author={Pavel Mnev},
      year={2009},
      journal={Moscow Mathematical Journal},
      volume = {9}
}

@misc{losev,
    author = "Losev, A.",
    title = "BV formalism and quantum homotopical structures, Lectures at GAP3, Perugia",
    year = "2006"
}

@article{CattaneoRossi05,
	journal = {Communications in Mathematical Physics},
	number = {3},
	pages = {513--537},
	title = {Wilson Surfaces and Higher Dimensional Knot Invariants},
	volume = {256},
	year = {2005},
    author = {Cattaneo, A.S. and Rossi, C.A.}
	}

@article{ChoMoore,
title = {Topological BF field theory description of topological insulators},
journal = {Annals of Physics},
volume = {326},
number = {6},
pages = {1515-1535},
year = {2011},
author = {Cho, G. Y. and Moore, J. E. }
}

@article{Witten88,
	author = {Witten, E.},
	journal = {Communications in Mathematical Physics},
	number = {3},
	pages = {353--386},
	title = {Topological quantum field theory},
	volume = {117},
	year = {1988}
	}

@book{Atiyah64,
  author = {Atiyah, M.F. and Bott, R.},
  title = {Notes on the {Lefschetz} fixed point theorem for elliptic complexes},
  year = {1964},
  publisher = {Harvard University}
}

@article{Atiyah68,
  author = {Atiyah, M.F. and Bott, R.},
  title = {A {Lefschetz} fixed point formula for elliptic complexes: {II}. {Applications}},
  journal = {Annals of Mathematics},
  pages = {451--491},
  year = {1968}
}

@book{Baladi18,
  author = {Baladi, V.},
  title = {Dynamical zeta functions and dynamical determinants for hyperbolic maps},
  year = {2018},
  publisher = {Springer}
}

@book{Berezin83,
  author = {Berezin, F.A.},
  title = {Introduction to algebra and analysis with anticommuting variables},
  year = {1983},
  publisher = {Moscow Univ}
}

@article{Batalin83,
  author = {Batalin, I.A. and Vilkovisky, G.A.},
  title = {Quantization of gauge theories with linearly dependent generators},
  journal = {Physical Review D},
  volume = {28},
  number = {10},
  pages = {2567},
  year = {1983}
}

@incollection{Batalin84,
  author = {Batalin, I.A. and Vilkovisky, G.A.},
  title = {Gauge algebra and quantization},
  booktitle = {Quantum Gravity},
  pages = {463--480},
  year = {1984},
  publisher = {Springer}
}

@incollection{Cattaneo06,
  author = {Cattaneo, A.S. and Fiorenza, D. and Longoni, R.},
  title = {Graded {Poisson} algebras},
  booktitle = {Encyclopedia of Mathematical Physics},
  pages = {560 -- 567},
  year = {2006},
  publisher = {Academic Press, Oxford}
}

@article{Cattaneo14,
  author = {Cattaneo, A. S. and Mnev, P. and Reshetikhin, N.},
  title = {Classical {BV} theories on manifolds with boundary},
  journal = {Communications in Mathematical Physics},
  volume = {332},
  number = {2},
  pages = {535--603},
  year = {2014}
}

@article{Cattaneo18a,
  author = {Cattaneo, A.S. and Mnev, P. and Reshetikhin, N.},
  title = {Perturbative quantum gauge theories on manifolds with boundary},
  journal = {Communications in Mathematical Physics},
  volume = {357},
  number = {2},
  pages = {631--730},
  year = {2018}
}

@article{Cattaneo20,
  author = {Cattaneo, A. S. and Mnev, P. and Reshetikhin, N.},
  title = {A cellular topological field theory},
  journal = {Communications in Mathematical Physics},
  volume = {374},
  number = {2},
  pages = {1229--1320},
  year = {2020}
}

@article{Cattaneo01,
  author = {Cattaneo, A.S. and Rossi, C.A.},
  title = {Higher-dimensional {BF} theories in the {Batalin–Vilkovisky} formalism: {The BV} action and generalized {Wilson} loops},
  journal = {Communications in Mathematical physics},
  volume = {221},
  number = {3},
  pages = {591--657},
  year = {2001}
}

@article{Cattaneo11,
  author = {Cattaneo, A.S. and Sch{\"a}tz, F.},
  title = {Introduction to supergeometry},
  journal = {Reviews in Mathematical Physics},
  volume = {23},
  number = {06},
  pages = {669--690},
  year = {2011}
}

@article{Dyatlov16,
  author = {Dyatlov, S. and Zworski, M.},
  title = {Dynamical zeta functions for {Anosov} flows via microlocal analysis},
  journal = {Ann. Sci. Ec. Norm. Sup{\'e}r. (4)},
  volume = {49},
  number = {3},
  pages = {543--577},
  year = {2016}
}

@article{Dyatlov17,
  author = {Dyatlov, S. and Zworski, M.},
  title = {Ruelle zeta function at zero for surfaces},
  journal = {Inventiones mathematicae},
  volume = {210},
  number = {1},
  pages = {211--229},
  year = {2017}
}

@article{Fried86,
  author = {Fried, D.},
  title = {Analytic torsion and closed geodesics on hyperbolic manifolds},
  journal = {Inventiones mathematicae},
  volume = {84},
  number = {3},
  pages = {523--540},
  year = {1986}
}

@article{Fried87,
  author = {Fried, D.},
  title = {Lefschetz formulas for flows},
  journal = {Contemporary Mathematics},
  volume = {58},
  pages = {19--69},
  year = {1987}
}

@article{Fried95,
  author = {Fried, D.},
  title = {Meromorphic zeta functions for analytic flows},
  journal = {Communications in Mathematical Physics},
  volume = {174},
  number = {1},
  pages = {161--190},
  year = {1995}
}

@article{Guillemin77,
  author = {Guillemin, V.},
  title = {Lectures on spectral theory of elliptic operators},
  journal = {Duke Mathematical Journal},
  volume = {44},
  number = {3},
  pages = {485--517},
  year = {1977}
}

@article{Mnev14,
  author = {Mnev, P.},
  title = {Lecture notes on torsions},
  journal = {arXiv preprint arXiv:1406.3705},
  year = {2014}
}

@book{Mnev19,
  author = {Mnev, P.},
  title = {Quantum Field Theory: {Batalin–Vilkovisky} Formalism and Its Applications, volume 72},
  year = {2019},
  publisher = {American Mathematical Soc.}
}

@article{Ray71,
  author = {Ray, D.B. and Singer, I.M.},
  title = {R-torsion and the {Laplacian} on riemannian manifolds},
  journal = {Advances in Mathematics},
  volume = {7},
  number = {2},
  pages = {145--210},
  year = {1971}
}

@article{Ruelle76,
  author = {Ruelle, D.},
  title = {Zeta-functions for expanding maps and {Anosov} flows},
  journal = {Inventiones mathematicae},
  volume = {34},
  number = {3},
  pages = {231--242},
  year = {1976}
}

@article{Ruelle86,
  author = {Ruelle, D.},
  title = {Resonances of chaotic dynamical systems},
  journal = {Physical review letters},
  volume = {56},
  number = {5},
  pages = {405},
  year = {1986}
}

@article{Sch78,
  author = {Schwarz, A.S.},
  title = {The partition function of degenerate quadratic functional and {Ray-Singer} invariants},
  journal = {Letters in Mathematical Physics},
  volume = {2},
  number = {3},
  pages = {247--252},
  year = {1978}
}

@article{Sch79,
  author = {Schwarz, A.S.},
  title = {The partition function of a degenerate functional},
  journal = {Communications in Mathematical Physics},
  volume = {67},
  number = {1},
  pages = {1--16},
  year = {1979}
}

@article{Shen17,
  author = {Shen, S.},
  title = {Analytic torsion, dynamical zeta functions, and the {Fried} conjecture},
  journal = {Analysis \& PDE},
  volume = {11},
  number = {1},
  pages = {1--74},
  year = {2017}
}

@article{DangRiviere17Topology,
  author        = {Dang, N. V. and Rivi{\`e}re, G.},
  title         = {Topology of {P}ollicott-{R}uelle resonant states},
  journal       = {arXiv preprint arXiv:1703.08037},
  year          = {2017}
}

@article{DangRiviere16Gradient,
  author        = {Dang, N. V. and Rivi{\`e}re, G.},
  title         = {Spectral analysis of {M}orse-{S}male gradient flows},
  journal       = {arXiv preprint arXiv:1605.05516},
  year          = {2016}
}

@article{DangRiviere17Resonances,
  author        = {Dang, N. V. and Rivi{\`e}re, G.},
  title         = {Spectral analysis of {M}orse-{S}male flows {II}: resonances and resonant states},
  journal       = {arXiv preprint arXiv:1703.08038},
  year          = {2017}
}

@article{DangRiviere17Anisotropic,
  author        = {Dang, N. V. and Rivi{\`e}re, G.},
  title         = {Spectral analysis of {M}orse-{S}male flows {I}: construction of the anisotropic spaces},
  journal       = {arXiv preprint arXiv:1703.08040},
  year          = {2017}
}

@article{HadfieldKandelSchiavina20Ruelle,
  author        = {Hadfield, C. and Kandel, S. and Schiavina, M.},
  title         = {Ruelle zeta function from field theory},
  journal       = {arXiv preprint arXiv:2002.03952},
  year          = {2020}
}

@book{Lee12Smooth,
  author    = {Lee, J. M.},
  title     = {Introduction to Smooth Manifolds},
  series    = {Graduate Texts in Mathematics},
  volume    = {218},
  publisher = {Springer},
  year      = {2012},
  edition   = {Second},
  address   = {New York},
  isbn      = {978-1-4419-9981-8}
}

@book{Lee11Topo,
  author    = {Lee, J. M.},
  title     = {Introduction to Topological Manifolds},
  series    = {Graduate Texts in Mathematics},
  volume    = {202},
  publisher = {Springer},
  year      = {2011},
  edition   = {Second},
  address   = {New York},
  isbn      = {978-1-4419-7939-1}
}

@incollection{Shen21Survey,
  author    = {Shen, S.},
  title     = {Analytic Torsion and Dynamical Flow: A Survey on the Fried Conjecture},
  booktitle = {Geometric Aspects of Functional Analysis: Israel Seminar (GAFA) 2017-2019},
  editor    = {Klartag, B. and Milman, E.},
  series    = {Progress in Mathematics},
  volume    = {338},
  pages     = {247--299},
  publisher = {Birkh{\"a}user},
  year      = {2021}
}

@book{KatokHasselblatt95,
  title={Introduction to the Modern Theory of Dynamical Systems},
  author={Katok, A. and Hasselblatt, B.},
  year={1995},
  publisher={Cambridge University Press},
  address={Cambridge},
  series={Encyclopedia of Mathematics and its Applications},
  volume={54},
  isbn={978-0-521-57557-7}
}

@book{Perko2001,
  author    = {Perko, L.},
  title     = {Differential Equations and Dynamical Systems},
  edition   = {Third},
  publisher = {Springer},
  year      = {2001},
  address   = {New York},
  isbn      = {0-387-95116-4}
}

@book{Warner1983,
  author    = {Warner, F. W.},
  title     = {Foundations of Differentiable Manifolds and Lie Groups},
  publisher = {Springer-Verlag},
  year      = {1983},
  address   = {New York},
  series    = {Graduate Texts in Mathematics},
  volume    = {94},
  isbn      = {978-1-4419-2820-7},
}

@article{MathaiWu11Twisted,
  author  = {Mathai, V. and Wu, S.},
  title   = {Analytic Torsion for Twisted de {R}ham Complexes},
  journal = {arXiv preprint arXiv:0810.4204v6},
  year    = {2011}
}

@book{Hatcher02,
  author    = {Hatcher, A.},
  title     = {{A}lgebraic {T}opology},
  publisher = {Cambridge University Press},
  year      = {2002},
  address   = {Cambridge},
  isbn      = {978-0-521-79540-1}
}

@book{Munkres00,
  author    = {Munkres, J. R.},
  title     = {{T}opology},
  publisher = {Prentice Hall},
  year      = {2000},
  edition   = {Second},
  address   = {Upper Saddle River, NJ},
  isbn      = {978-0131816299}
}

@book{Lee09ManifoldsDiffGeom,
  author    = {Lee, J. M.},
  title     = {{M}anifolds and {D}ifferential {G}eometry},
  series    = {Graduate Studies in Mathematics},
  volume    = {107},
  publisher = {American Mathematical Society},
  year      = {2009},
  address   = {Providence, RI},
  isbn      = {978-0-8218-4815-9}
}

@article{Qiu11,
  author = {Qiu, J. and Zabzine, M.},
  title = {Introduction to Graded Geometry, {Batalin-Vilkovisky} Formalism and their Applications},
  journal = {Archivum Mathematicum},
  volume = {47},
  number = {5},
  pages = {415--471},
  year = {2011},
  note = {arXiv:1105.2680v2 [math.QA]}
}

@techreport{Mnev17,
  author      = {Mnev, P.},
  title       = {Lectures on {Batalin-Vilkovisky} formalism and its applications in topological quantum field theory},
  institution = {University of Notre Dame},
  year        = {2017},
  note        = {Lecture notes, arXiv:1707.08096v1 [math-ph]}
}

@article{Dang21,
  author = {Dang, N. V. and Rivi{\`e}re, G.},
  title = {Pollicott-{R}uelle spectrum and {W}itten {L}aplacians},
  journal = {Journal of the European Mathematical Society},
  volume = {23},
  number = {8},
  pages = {2577--2634},
  year = {2021},
  note = {arXiv:1709.04265v1 [math.DS]}
}

@article{Shen21,
  author  = {Shen, S. and Yu, J.},
  title   = {Morse-{S}male flow, {M}ilnor metric, and dynamical zeta function},
  journal = {Journal de l'\'{E}cole polytechnique -- Math\'{e}matiques},
  volume  = {8},
  pages   = {585--607},
  year    = {2021},
  note    = {arXiv:1806.00662v2 [math.DG]}
}

@incollection{Cattaneo23,
  author    = {Cattaneo, A. S. and Mnev, P. and Schiavina, M.},
  title     = {{BV} Quantization},
  booktitle = {Encyclopedia of Mathematical Physics},
  editor    = {Bojowald, M. and Szabo, R.J.},
  year      = {2025},
  publisher = {Elsevier},
  pages     = {543--555},
  note      = {arXiv:2307.07761v1 [math-ph]}
}

@article{Schiavina23,
	author = {Schiavina, M. and Stucker, T.},
	journal = {Annales Henri Poincar{\'e}},
	number = {10},
	pages = {4591--4632},
	title = {Perturbative BF Theory in Axial, Anosov Gauge},
	volume = {25},
	year = {2024}
    }

@article{Schiavina25,
  author  = {Schiavina, M. and Stucker, T.},
  title   = {On the local constancy of regularized superdeterminants along special families of differential operators},
  journal = {arXiv preprint arXiv:2505.03404},
  year    = {2025}
}

@techreport{Hutchings02,
  author      = {Hutchings, M.},
  title       = {Lecture notes on {M}orse homology (with an eye towards {F}loer theory and pseudoholomorphic curves)},
  institution = {UC Berkeley},
  year        = {2002},
  note        = {Lecture notes}
}

@book{Bott82,
  author = {Bott, R. and Tu, L. W.},
  title = {Differential forms in algebraic topology},
  year = {1982},
  publisher = {Springer-Verlag},
  address = {New York},
  series = {Graduate Texts in Mathematics},
  volume = {82}
}

@book{Brendon97,
  author = {Brendon, G. E.},
  title = {Sheaf theory},
  year = {1997},
  publisher = {Springer-Verlag},
  address = {New York},
  series = {Graduate Texts in Mathematics},
  volume = {170}
}

@book{deRham84,
  author = {de Rham, G.},
  title = {Differentiable manifolds: {F}orms, currents, harmonic forms},
  year = {1984},
  publisher = {Springer-Verlag},
  address = {Berlin Heidelberg}
}

@book{Fulton91,
  author = {Fulton, W. and Harris, J.},
  title = {Representation theory: {A} first course},
  year = {1991},
  publisher = {Springer-Verlag},
  address = {New York},
  series = {Graduate Texts in Mathematics},
  volume = {129}
}

@book{Palis82,
  author = {Palis, J. and de Melo, W.},
  title = {Geometric theory of dynamical systems: {A}n introduction},
  year = {1982},
  publisher = {Springer-Verlag},
  address = {New York}
}

@book{Hasselblatt03,
  author = {Hasselblatt, B. and Katok, A.},
  title = {A first course in dynamics: {W}ith a panorama of recent developments},
  year = {2003},
  publisher = {Cambridge University Press},
  address = {Cambridge}
}

@book{Marsden88,
  author = {Marsden, J. E. and Ratiu, T. S.},
  title = {Manifolds, tensor analysis, and applications},
  year = {1988},
  publisher = {Springer-Verlag},
  address = {New York},
  series = {Applied Mathematical Sciences},
  volume = {75},
  edition = {Second}
}

@article{Hutchings02Reidemeister,
  author  = {Hutchings, M.},
  title   = {Reidemeister torsion in generalized {M}orse theory},
  journal = {Forum Mathematicum},
  volume  = {14},
  number  = {2},
  pages   = {209--244},
  year    = {2002},
  note    = {arXiv:math/9907066v2 [math.DG]}
}

@article{Hutchings99Circle,
  author  = {Hutchings, M. and Lee, Y. J.},
  title   = {Circle-valued {M}orse theory, {R}eidemeister torsion, and {S}eiberg-{W}itten invariants of 3-manifolds},
  journal = {Topology},
  volume  = {38},
  number  = {4},
  pages   = {861--888},
  year    = {1999},
  note    = {arXiv:dg-ga/9612004v1}
}

@article{Bismut94,
  author = {Bismut, J. M. and Zhang, W.},
  title = {Milnor and {R}ay-{S}inger metrics on the equivariant determinant of a flat vector bundle},
  journal = {Geometric and Functional Analysis},
  volume = {4},
  number = {2},
  pages = {136--212},
  year = {1994}
}

@book{Bismut92,
  author = {Bismut, J. M. and Zhang, W.},
  title = {An extension of a theorem by {C}heeger and {M}{\"u}ller},
  series = {Ast{\'e}risque},
  volume = {205},
  year = {1992},
  publisher = {Soci{\'e}t{\'e} Math{\'e}matique de France},
  address = {Paris},
  note = {With an appendix by Fran{\c c}ois Laudenbach}
}

@incollection{Gouezel15,
  author = {Gou{\"e}zel, S.},
  title = {Spectre du flot g{\'e}od{\'e}sique en courbure n{\'e}gative [d'apr{\`e}s {F}. {F}aure et {M}. {T}sujii]},
  booktitle = {S{\'e}minaire {B}ourbaki, Vol. 2014/2015, Expos{\'e}s 1089--1103},
  series = {Ast{\'e}risque},
  volume = {379},
  pages = {299--327},
  year = {2015},
  publisher = {Soci{\'e}t{\'e} Math{\'e}matique de France},
  address = {Paris}
}

@book{Guillemin90,
  author = {Guillemin, V. and Sternberg, S.},
  title = {Geometric {A}symptotics},
  series = {Mathematical Surveys and Monographs},
  volume = {14},
  publisher = {American Mathematical Society},
  year = {1990},
  edition = {Revised},
  address = {Providence, RI},
  isbn = {978-0-8218-1633-2}
}

@book{Turaev01,
  author    = {Turaev, V.},
  title     = {Introduction to combinatorial torsions},
  publisher = {Birkh{\"a}user Verlag},
  year      = {2001},
  address   = {Basel},
  series    = {Lectures in Mathematics ETH Z{\"u}rich},
  note      = {Notes taken by Felix Schlenk}
}

@misc{Nicolaescu03,
  author = {Nicolaescu, L. I.},
  title = {Notes on the {Reidemeister} Torsion},
  year = {2003},
  note = {Lecture notes for the course MATH 927, University of Notre Dame, Fall 2002}
}

@article{Bouwknegt02,
  author = {Bouwknegt, P. and Carey, A. and Mathai, V. and Murray, M. and Stevenson, D.},
  title = {Twisted K-theory and K-theory of bundle gerbes},
  journal = {Communications in Mathematical Physics},
  volume = {228},
  number = {1},
  pages = {17--49},
  year = {2002},
  archivePrefix = {arXiv},
  eprint = {hep-th/0106194}
}

@article{Smale67,
  author = {Smale, S.},
  title = {Differentiable dynamical systems},
  journal = {Bulletin of the American Mathematical Society},
  volume = {73},
  number = {6},
  pages = {747--817},
  year = {1967},
  publisher = {American Mathematical Society}
}

@inproceedings{Seeley67,
  author    = {Seeley, R. T.},
  title     = {Complex powers of an elliptic operator},
  booktitle = {Singular Integrals (Proc. Sympos. Pure Math., Chicago, Ill., 1966)},
  volume    = {10},
  pages     = {288--307},
  publisher = {Amer. Math. Soc.},
  address   = {Providence, R.I.},
  year      = {1967}
}

@article{Cheeger79,
  author  = {Cheeger, J.},
  title   = {Analytic torsion and the heat equation},
  journal = {Annals of Mathematics},
  volume  = {109},
  number  = {2},
  pages   = {259--322},
  year    = {1979}
}

@article{Muller78,
  author  = {M{\"u}ller, W.},
  title   = {Analytic torsion and {R}-torsion of {R}iemannian manifolds},
  journal = {Advances in Mathematics},
  volume  = {28},
  number  = {3},
  pages   = {233--305},
  year    = {1978}
}

@book{Apostol76,
  author    = {Apostol, T. M.},
  title     = {Introduction to Analytic Number Theory},
  publisher = {Springer-Verlag},
  year      = {1976},
  address   = {New York-Heidelberg},
  series    = {Undergraduate Texts in Mathematics}
}

@book{Franks82,
  author    = {Franks, J. M.},
  title     = {Homology and dynamical systems},
  series    = {CBMS Regional Conference Series in Mathematics},
  volume    = {49},
  publisher = {American Mathematical Society},
  year      = {1982},
  address   = {Providence, R.I.},
  note      = {Published for the Conference Board of the Mathematical Sciences, Washington, D.C.},
  mrnumber  = {669378}
}

@article{FaureSjostrand11,
  author  = {Faure, F. and Sj{\"o}strand, J.},
  title   = {{U}pper bound on the density of {R}uelle resonances for {A}nosov flows},
  journal = {Communications in Mathematical Physics},
  volume  = {308},
  number  = {2},
  pages   = {325--364},
  year    = {2011}
}

@article{Khudaverdian2004,
  author  = {Khudaverdian, H.},
  title   = {Semidensities on odd symplectic supermanifolds},
  journal = {Communications in Mathematical Physics},
  volume  = {247},
  number  = {2},
  pages   = {353--390},
  year    = {2004}
}

@book{Cannas08,
  author    = {da Silva, A. C.},
  title     = {Lectures on Symplectic Geometry},
  series    = {Lecture Notes in Mathematics},
  volume    = {1764},
  publisher = {Springer-Verlag},
  year      = {2008},
  address   = {Berlin, Heidelberg},
  note      = {Corrected second printing of the 2001 original}
}

@misc{Nicolaescu12,
  author = {Nicolaescu, L. I.},
  title  = {Lectures on the {G}eometry of {M}anifolds},
  year   = {2012},
  note   = {Lecture notes for the course MATH 60530, University of Notre Dame. Available at \url{https://www3.nd.edu/~lnicolae/Lectures.pdf}}
}

@article{CattaneoMengerSchiavina21,
  author  = {Cattaneo, A. S. and Menger, L. and Schiavina, M.},
  title   = {Gravity with torsion as deformed {BF} theory},
  journal = {Journal of High Energy Physics},
  volume  = {2021},
  number  = {10},
  pages   = {153},
  year    = {2021},
  note    = {arXiv:2104.14815 [hep-th]}
}

@article{Cattaneo97,
  author  = {Cattaneo, A. S.},
  title   = {{A}belian {BF} {T}heories and {K}not {I}nvariants},
  journal = {Communications in Mathematical Physics},
  volume  = {189},
  number  = {3},
  pages   = {795--828},
  year    = {1997},
  doi     = {10.1007/s002200050229},
  note    = {arXiv:hep-th/9510034}
}

@article{CattaneoRossi01,
  author  = {Cattaneo, A. S. and Rossi, C. A.},
  title   = {{H}igher-dimensional {BF} theories in the {Batalin-Vilkovisky} formalism: {T}he {BV} action and generalized {W}ilson loops},
  journal = {Communications in Mathematical Physics},
  volume  = {221},
  number  = {3},
  pages   = {591--657},
  year    = {2001},
  note    = {arXiv:hep-th/0010179}
}

@article{CattaneoCottaRamusinoMartellini96,
  author  = {Cattaneo, A. S. and Cotta-Ramusino, P. and Martellini, M.},
  title   = {{T}hree-dimensional {BF} {T}heories and the {A}lexander-{C}onway {I}nvariant of {K}nots},
  journal = {Nuclear Physics B},
  volume  = {474},
  number  = {3},
  pages   = {675--700},
  year    = {1996},
  note    = {arXiv:hep-th/9504068}
}

@article{Phillips20,
  author  = {Phillips, P. W.},
  title   = {{F}ifty years of {W}ilsonian renormalization and counting},
  journal = {Reviews of Modern Physics},
  volume  = {92},
  number  = {3},
  pages   = {031003},
  year    = {2020},
  note    = {arXiv:2004.09341 [cond-mat.str-el]}
}

@article{CattaneoSchiavinaSelliah22,
  author  = {Cattaneo, A. S. and Schiavina, M. and Selliah, I.},
  title   = {{BV}-{E}quivalence Between {T}riadic {G}ravity and {BF} {T}heory in {T}hree {D}imensions},
  journal = {Letters in Mathematical Physics},
  volume  = {112},
  number  = {3},
  pages   = {59},
  year    = {2022},
  doi     = {10.1007/s11005-022-01552-3},
  note    = {arXiv:2111.14810 [hep-th]}
}

@inproceedings{And92,
  author    = {Anderson, I. M.},
  title     = {Introduction to the variational bicomplex},
  booktitle = {Contemporary Mathematics},
  volume    = {132},
  pages     = {51--73},
  year      = {1992}
}

@incollection{DF99,
  author    = {Deligne, P. and Freed, D. S.},
  title     = {Classical field theory},
  booktitle = {Quantum Fields and Strings: A Course for Mathematicians},
  publisher = {AMS},
  volume    = {1},
  pages     = {137--225},
  year      = {1999}
}

@article{Hen90,
  author  = {Henneaux, M.},
  title   = {Lectures on the antifield-{BRST} formalism for gauge theories},
  journal = {Nuclear Physics B - Proceedings Supplements},
  volume  = {18},
  number  = {1},
  pages   = {47--105},
  year    = {1990}
}

@inproceedings{Sta98,
  author    = {Stasheff, J.},
  title     = {The (secret?) homological algebra of the {Batalin}-{Vilkovisky} approach},
  booktitle = {Contemporary Mathematics},
  volume    = {219},
  pages     = {195--210},
  year      = {1998}
}

@phdthesis{Del17,
  author = {Delgado, N. L.},
  title  = {Lagrangian field theories: ind/pro-approach and {$L_\infty$} algebra of local observables},
  school = {Max Planck Institute for Mathematics},
  year   = {2017}
}

@article{FR13,
  author  = {Fredenhagen, K. and Rejzner, K.},
  title   = {Batalin-{V}ilkovisky formalism in perturbative algebraic quantum field theory},
  journal = {Communications in Mathematical Physics},
  volume  = {317},
  number  = {3},
  pages   = {697--725},
  year    = {2013}
}

@article{Sch93,
  author  = {Schwarz, A.},
  title   = {Geometry of {Batalin}-{Vilkovisky} quantization},
  journal = {Communications in Mathematical Physics},
  volume  = {155},
  number  = {2},
  pages   = {249--260},
  year    = {1993}
}

@article{Severa06,
  author  = {{\v{S}}evera, P.},
  title   = {On the origin of the {BV} operator on odd symplectic supermanifolds},
  journal = {Letters in Mathematical Physics},
  volume  = {78},
  number  = {1},
  pages   = {55--59},
  year    = {2006}
}

@article{SchiavinaSchnitzer25,
  author  = {Schiavina, M. and Schnitzer, J.},
  title   = {Homotopies for Lagrangian field theory},
  journal = {arXiv preprint arXiv:2508.00133},
  year    = {2025}
}

@misc{Blohmann_LFT,
  author    = {Blohmann, C.},
  title     = {Lagrangian Field Theory},
  note      = {Unpublished manuscript (continuously updated), Max Planck Institute for Mathematics},
  url       = {https://people.mpim-bonn.mpg.de/blohmann/Lagrangian_Field_Theory.pdf},
  year      = {2022} 
}

@misc{costello2007rbv,
      title={Renormalisation and the Batalin-Vilkovisky formalism}, 
      author={K. Costello},
      year={2007},
      eprint={0706.1533},
      archivePrefix={arXiv},
      primaryClass={math.QA},
      url={https://arxiv.org/abs/0706.1533}, 
}

@book{Costello11,
  author    = {Costello, K.},
  title     = {Renormalization and Effective Field Theory},
  series    = {Mathematical Surveys and Monographs},
  volume    = {170},
  publisher = {American Mathematical Society},
  year      = {2011}
}

@article{Witten89,
  author  = {Witten, E.},
  title   = {{Q}uantum {F}ield {T}heory and the {J}ones {P}olynomial},
  journal = {Communications in Mathematical Physics},
  volume  = {121},
  number  = {3},
  pages   = {351--399},
  year    = {1989}
}

@article{BruningLesch1992,
  title={Hilbert complexes},
  author={Br{\"u}ning, J. and Lesch, M.},
  journal={Journal of Functional Analysis},
  volume={108},
  pages={88--132},
  year={1992},
  publisher={Academic Press}
}

@article{Mol26,
  author  = {Molinari, Giovanni},
  title   = {The {F}ried {C}onjecture for {M}orse--{S}male {F}lows: {A} {S}urvey on {R}ay--{S}inger and {M}ilnor {M}etrics},
  journal = {arXiv preprint arXiv:2607.16719},
  year    = {2026},
  eprint  = {2607.16719},
  archivePrefix = {arXiv}
}
\end{document}